\numberwithin{equation}{section}
\title[Local Algebras for Causal Fermion Systems]{Local Algebras for Causal Fermion Systems \\ in Minkowski Space}
\author[F.\ Finster]{Felix Finster}
\author[M.\ Oppio]{Marco Oppio \\ \\ April 2020}
\address{Fakult\"at f\"ur Mathematik \\ Universit\"at Regensburg \\ D-93040 Regensburg \\ Germany}
\email{finster@ur.de, marco.oppio@uni-r.de}
\newtheorem{Def}{Definition}[section]
\newtheorem{Thm}[Def]{Theorem}
\newtheorem{Prp}[Def]{Proposition}
\newtheorem{Lemma}[Def]{Lemma}
\newtheorem{Remark}[Def]{Remark}
\newtheorem{Corollary}[Def]{Corollary}
\newtheorem{Part}[Def]{Step}
\newcommand{\Thanks}{\vspace*{.5em} \noindent \thanks}
\newcommand{\beq}{\begin{equation}}
\newcommand{\eeq}{\end{equation}}
\newcommand{\Proof}{\begin{proof}}
\newcommand{\QED}{\end{proof} \noindent}
\newcommand{\la}{\langle}
\newcommand{\ra}{\rangle}
\newcommand{\Sl}{\mbox{$\prec \!\!$ \nolinebreak}}
\newcommand{\Sr}{\mbox{\nolinebreak $\succ$}}
\newcommand{\C}{\mathbb{C}}
\newcommand{\R}{\mathbb{R}}
\newcommand{\N}{\mathbb{N}}
\renewcommand{\O}{{\mathscr{O}}}
\DeclareMathOperator{\supp}{supp}
\renewcommand{\H}{\mathscr{H}}
\newcommand{\Lin}{\text{\rm{L}}}
\newcommand{\F}{{\mathscr{F}}}
\DeclareMathOperator{\im}{Im}
\newcommand{\scrM}{\mycal M}
\newcommand{\A}{\mycal A}
\newcommand{\x}{{\textit{\myfont x}}}
\newcommand{\y}{{\textit{\myfont y}}}
\newcommand{\bitem}{\begin{itemize}[leftmargin=2.5em]}
\newcommand{\eitem}{\end{itemize}}
\newcommand*{\myfont}{\fontfamily{ppl}\selectfont}		
\DeclareFontFamily{OT1}{rsfso}{}
\DeclareFontShape{OT1}{rsfso}{m}{n}{ <-7> rsfso5 <7-10> rsfso7 <10-> rsfso10}{}
\DeclareMathAlphabet{\mycal}{OT1}{rsfso}{m}{n}
\def\scL{L}
\def\scH{\mathscr{H}}
\def\cF{\mathcal{F}}
\def\gR{\mathfrak{R}}
\def\bI{\mathbb{I}}
\def\gR{\mathfrak{R}}
\def\cD{\mathcal{D}}
\def\gG{\mathfrak{G}}
\DeclareMathOperator{\ran}{ran}
\newcommand{\V}[1]{{\bf{#1}}}
\begin{document}

\maketitle

\begin{abstract}
A notion of local algebras is introduced in the theory of causal fermion systems. Their properties
are studied in the example of the regularized Dirac sea vacuum in Minkowski space.
The commutation relations are worked out, and the differences to the
canonical commutation relations are discussed.
It is shown that the spacetime point operators associated to a Cauchy surface satisfy a time slice axiom.
It is proven that the algebra generated by operators in an open set is irreducible as a consequence
of Hegerfeldt's theorem.
The light cone structure is recovered by analyzing expectation values
of the operators in the algebra in the limit when the regularization is removed.
It is shown that every spacetime point operator commutes with
the algebras localized away from its null cone, up to small corrections involving the regularization length.
\end{abstract}

\tableofcontents

\section{Introduction}
{ In relativity, spacetime is described by Minkowski space or, more generally,
by a Lorentzian manifold. Although this description has been highly successful, it is generally believed that in order to reconcile general relativity with quantum theory, the mathematical structure of spacetime should be modified on a microscopic scale (typically thought of as the Planck scale). 
The theory of causal fermion systems is a recent approach to fundamental physics where spacetime is no longer modelled by a Lorentzian manifold but may instead have a nontrivial, possibly discrete structure on a microscopic length scale. In this approach,}
spacetime and all the structures therein { (including causal relations, wave functions,
fields, etc.)} are described by a measure~$\rho$
on a set of linear operators~$\F$ on a Hilbert space~$\H$
(for an introduction and the physical context see the textbook~\cite{cfs} the survey articles~\cite{dice2014, review}
or the website~\cite{cfsweblink}; for the basic definitions see Section~\ref{seccfs} below).
{\em{Spacetime}}~$M$ is defined as the support of this measure,
\[ M := \supp \rho \subset \Lin(\H) \:. \]
Thus the spacetime points are linear operators on~$\H$.
{The physical equations are then formulated via a variational principle, the \textit{causal action principle}, which singles out the measures of physical interest.
In \cite{cfs} it is shown that in a specific limiting case, the so-called \textit{continuum limit}, spacetime goes over to
Minkowski space, and the Euler-Lagrange equations arising from the causal action principle give rise to the classical field equations of the standard model and linearized gravity. 
For these reasons, causal fermion systems are a promising candidate for a unified physical theory.
}

{ Although connections to quantum field theory have already been explored
(see \cite{qft, fockbosonic} or the survey article~\cite{srev}),
the relation between causal fermion systems and the algebraic formulation of quantum theory has not yet been studied.
The objective of this paper is to 
get a first contact between these two approaches. 
More concretely, the general goal is to form operator algebras generated by
sets of spacetime operators and to analyze their
properties, with a special focus on the causal relations among them.
As the first step in this program, we here restrict attention to systems in Minkowski space.

 In a nutshell, the procedure adopted in this work consists in collecting} 
all the operators in an open subset~$\Omega \subset M$ of spacetime,
\[  {\mycal X}_\Omega := \{\x \:|\: \x \in \Omega \} \:, \]
and smear these operators with
continuous test functions of compact support inside~$\Omega$,
\beq \label{smearing}
A_f := \int_\Omega \x\: f(\x)\: d\rho(\x) \:,\qquad \text{$f \in C^0_0(\Omega, \C)$}
\eeq
(the smearing is preferable for our applications in mind).
We then consider the $^*$-algebra~$\A_\Omega$ generated by all these operators.
The questions of interest are: What are the commutation relations between algebras localized in different spacetime regions? Is the causal structure
encoded in the algebras? How can it be retrieved?

We work out the commutation relations and show that, in contrast to the
situation for the algebras of observables in axiomatic quantum field theory, the commutators
are in general non-zero even for space-like separation (see Propositions~\ref{prpcomm}
and~\ref{prpcommmink}).
This result is not surprising because the causal structure of a causal fermion system
is not defined in terms of commutators of spacetime points,
but instead by spectral properties of the operator products (see Definition~\ref{def2}).

In order to study the connection between the local algebras and the causal structure
in more detail, we consider the example of causal fermion systems 
in Minkowski space referred to as {\em{regularized Dirac sea vacua}} (see 
Definition~\ref{defregDir} in Section~\ref{subsectionCFSM}). 
In these examples, the Hilbert space~$(\H, \la \cdot | \cdot \ra)$ is formed of all
the negative-energy solutions of the Dirac equation in Minkowski space.
The causal fermion systems depend on a parameter~$\varepsilon>0$
which describes the ultraviolet regularization. More precisely, to every point~$x$
in Minkowski space we associate the corresponding 
{\em{local correlation operator}}~$F^\varepsilon(x) \in \F$ (for details see~\eqref{defF}). 
Then the measure~$\rho$ on $\F$ is introduced as the push-forward through~$F^\varepsilon$ of the Lebesgue measure~$d\mu=d^4x$ of Minkowski space,
\[ \rho := (F^\varepsilon)_* \mu \:. \]
In these examples, we prove that the operators~$F^\varepsilon(x)$ with~$x$ on a constant-time Cauchy surface
fulfill the time slice axiom (see the end of Section~\ref{sec41}).

Using that the resulting mapping~$F^\varepsilon$ defines a closed homeomorphism to its image (see Theorem \ref{listpropertiesLCF}),
the open subsets of $\supp F^\varepsilon_*(\mu)$ can be identified naturally with the open subsets of Minkowski space. 
In order to clarify the dependence
on the regularization, we denote the corresponding local algebras by~${\mycal A}^\varepsilon_\Omega$, where $\Omega$ now is an open subset of Minkowski space.
The smearing in~\eqref{smearing}, now performed with smooth test functions~$f \in C^\infty_0(\Omega, \C)$,
has the effect of improving the properties of the
algebras in the limit~$\varepsilon \searrow 0$. Indeed, we can even leave out the regularization
and introduce the {\em{unregularized algebra}}~${\mycal A}^\circ_\Omega$,
although the individual spacetime operators $F^\varepsilon(x)$ with $x\in\Omega$
diverge as~$\varepsilon \searrow 0$.

The algebras~$\A^\varepsilon_\Omega$ and~$\A^\circ_\Omega$ turn
out to be irreducible as a consequence of Hegerfeldt's theorem
(see Proposition~\ref{prphegerfeldt}, Proposition~\ref{identification}
and Theorem~\ref{irreducibilityunreg}).
In other words, the von Neumann algebra generated by the smeared operators 
coincides with~$\Lin(\H)$. This means that, taking the weak closure of~$\A^\varepsilon_\Omega$,
all the information on the set~$\Omega$ gets lost. 
Taking the uniform closure of the algebras (i.e.\ the closure in the $\sup$-norm) gives
sensible $C^*$-algebras. However, as it is not obvious how to extend all our results to the uniform closure,
we shall always restrict attention to the $^*$-algebras~$\A^\varepsilon_\Omega$ and~$\A^\circ_\Omega$.

In order to recover the light cone structure, we study the $\varepsilon$-dependence of
expectation values of elements of the algebras~${\mycal A}^\varepsilon_\Omega$ and~${\mycal A}^\circ_\Omega$
with vectors~$u^\varepsilon_{x,\chi}$ which
are ``concentrated'' near a spacetime point~$x$ of Minkowski space and
develop singularities on the null cone centered at~$x$ as~$\varepsilon \searrow 0$
(for details see~\eqref{uchidef} and the explanation thereafter).
Taking the matrix elements with these vectors,
\[ \la u^\varepsilon_{x,\chi} | A u^\varepsilon_{x,\zeta} \ra \:, \]
for the unregularized algebras~$\A^\circ_\Omega$ we derive the following bounds
(see Theorems~\ref{teoremalmite2} and~\ref{theoremblowup}):
\bitem
\item[(i)] If~$\Omega$ intersects the null cone centered at~$x$,
for some~$A \in \A^\circ_\Omega$ and some~$\chi, \zeta \in \C^4$ there is~$c>0$ such that, for sufficiently
small~$\varepsilon>0$,
\[ \big|\la u^\varepsilon_{x,\chi} | A u^\varepsilon_{x,\zeta} \ra \big| \geq
\frac{1}{c\: \varepsilon^2} \:. \]
\item[(ii)] Conversely, if~$\Omega$ does not intersect the null cone centered at~$x$,
then for any~$A \in \A^\circ_\Omega$ there is a constant~$c(A)$ such that for all $\varepsilon>0$,
\[ \big|\la u^\varepsilon_{x,\chi} | A u^\varepsilon_{x,\zeta} \ra\big| \leq
c(A)\:|\chi|\: |\zeta| \qquad \text{for all~$\chi, \zeta \in \C^4$} \:. \]
\eitem
We also derive similar estimates for the regularized algebras
(see Section \ref{seclc2}).

We also prove that every spacetime operator~$F^\varepsilon(x)$ commutes with the algebra~$\A_\Omega^\circ$
if~$\Omega$ does not intersect the null cone centered at~$x$,
up to small corrections involving the regularization length.
More precisely, for every~$A \in \A^\circ_\Omega$ there is a constant~$c(A)>0$ such that
(see Proposition~\ref{Prpcommunreg})
$$
\big\| [A, F^\varepsilon(x) ] \big\| \le c(A)\, \varepsilon^\frac{3}{2}\, \|F^\varepsilon(x)\| \:.
$$
We also derive corresponding estimates for the regularized algebras
(see Theorem~\ref{prpiotaes}). This estimate is motivated by the notion of ``events''
in the ETH formulation of quantum theory~\cite{froehlich2019review, froehlich2019relativistic}.
This connection is explored further in~\cite{eth-cfs}.

The paper is organized as follows.
Section~\ref{secprelim} provides the necessary preliminaries on causal fermion systems in Minkowski space.
In Section~\ref{secalgebras} local algebras for causal fermion systems are introduced abstractly,
and the commutation relations are derived.
In Section~\ref{secregalg} these algebras are worked out more concretely in the example
of regularized Dirac sea vacua. 
In Section~\ref{secunregularized} unregularized local algebras are defined and analyzed.
Finally, in order not to distract from the main ideas, the more technical proofs are given in
the appendices. Finally, in Section~\ref{outlook} we draw the conclusions of the paper and address potential future perspectives which may be undertaken as natural continuations of this work.

\section{Preliminaries} \label{secprelim}

\subsection{Basics on Causal Fermion Systems} \label{seccfs}
We now give a brief summary of the basic mathematical objects in the theory of causal fermion systems.
Since we are here interested in Dirac systems in Minkowski space, we restrict attention to the case of
spin dimension two.

\begin{Def}\label{defCFS} Given a separable complex Hilbert space~$(\mathscr{H}, \la .|. \ra)$, we let~$\mathscr{F} \subset \Lin(\H)$ be the set of all selfadjoint operators of finite rank which - counting multiplicities - have at most two positive and two negative eigenvalues.
\end{Def} \noindent
Note that the set~$\F$ is not a linear space, because the sum of two operators in~$\F$ in general
will have rank larger than four. But~$\F$ has the structure of a {\em{closed double cone}}, meaning
that~$\F$ is closed in the $\sup$-norm topology and that for every~$A \in \F$, the ray~$\R A$ is also
contained in~$\F$. 

Next, we let~$\rho$ be a Borel set on~$\F$, where by a Borel measure we always mean a measure on the Borel
algebra on~$\F$ (with respect to the $\sup$-norm topology) which is finite on every compact set.

\begin{Def}\label{definitioncfs}
The triple~$(\mathscr{H}, \F, \varrho)$ is referred to as a \bf{causal fermion system}.
\end{Def}

A causal fermion system describes a spacetime together with all structures and objects therein.
We only recall those structures which will be needed in this paper
(for a more complete account see~\cite[Section~1.1]{cfs}). 
{\em{Spacetime}}, denoted by~$M$, is defined as the support of~$\rho$,
\[ M := \text{supp}\, \rho \subset \F \:. \]
Endowed with the $\sup$-norm topology, $M$ is a topological space.
The fact that the spacetime points are operators gives rise to additional structures:
For every~$\x \in \F$ we define the {\em{spin space}}~$S_\x$ by~$S_\x= \x(\H)$; it 
is a subspace of~$\H$ of dimension at most four. It is endowed with the {\em{spin scalar product}} defined by
\beq \label{ssp}
\Sl \cdot | \cdot \Sr_\x := -\la \,\cdot \,|\, \x \,\cdot\, \ra \::\: S_\x \times S_\x \rightarrow \C \:,
\eeq
which is an indefinite inner product of signature~$(p,q)$ with~$p,q \leq 2$. Moreover, 
for every~$u \in \H$, we introduce the {\em{physical wave function}}~$\psi^u$ by
projecting the vector~$u$ to the corresponding spin spaces,
\[ 
\psi^u \::\: M \rightarrow \H\:,\qquad \psi^u(\x) := \pi_\x u \in S_\x \:. \]
(where~$\pi_\x$ is the orthogonal projection on the subspace~$S_x \subset \H$).
In this way, similar to a section of a vector bundle, to any spacetime point we associate
a vector in the corresponding spin space. Next, for any~$\x,\y \in M$ we define the
{\em{kernel of the fermionic projector}}~$\mathrm{P}(\x,\y)$ by
\beq \label{Pxydef}
\mathrm{P}(\x,\y) = \pi_\x \,\y|_{S_\y} \::\: S_\y \rightarrow S_\x \:.
\eeq
The kernel of the fermionic projector is a mapping from one spin space to another, thereby
inducing relations between different spacetime points.
Finally, the {\em{closed chain}}~$\mathrm{A}_{\x\y}$ is defined as the product
\beq \label{Axydef}
\mathrm{A}_{\x\y}= \mathrm{P}(\x,\y)\, \mathrm{P}(\y,\x) \::\: S_\x\rightarrow S_\x\:.
\eeq
The spectrum of the closed chain gives rise to the following notion of causality:
\begin{Def} \label{def2} 
For any~$\x, \y \in M$, we denote the eigenvalues
of the closed chain~$A_{\x\y}$ (counting algebraic multiplicities) by~$\lambda^{\x\y}_1, \ldots, \lambda^{\x\y}_{2n}$.
The points~$\x$ and~$\y$ are
called {\bf{spacelike}} separated if all the~$\lambda^{\x\y}_j$ have the same absolute value.
They are said to be {\bf{timelike}} separated if the~$\lambda^{\x\y}_j$ are all real and do not all 
have the same absolute value.
In all other cases (i.e.\ if the~$\lambda^{\x\y}_j$ are not all real and do not all 
have the same absolute value),
the points~$\x$ and~$\y$ are said to be {\bf{lightlike}} separated.
\end{Def} \noindent

All the above objects and structures are {\em{inherent}} in the sense that we only use information already encoded
in the causal fermion system.
The correspondence to the usual notions in Minkowski space has been
worked out in~\cite[Section~1.2]{cfs}.
In what follows, we mainly restrict our attention to causal fermion systems in Minkowski space.
In order to make the paper self-contained, we now give the necessary background.

\subsection{The Dirac Equation in Minkowski Space}\label{sectiondiracequation}
In the present work we mainly focus on causal fermion systems obtained by regularizing
the vacuum Dirac sea in Minkowski space~$\scrM$ as analyzed in detail in~\cite{oppio}.
For notational simplicity, we work in a fixed reference frame and identify Minkowski
space with~$\R^{1,3}$, endowed with the standard Minkowski inner product with signature
convention $(+,-,-,-)$, which we denote by~$u \!\cdot\! v$. 
We denote spacetime indices by~$i,j \in \{0, \ldots, 3\}$ and spatial indices by~$\alpha, \beta \in \{1,2,3\}$.
We use natural units $\hbar = c = 1$.
The Minkowski metric gives rise to a {\em{light cone structure}}:
The null vectors form the double cone $L_0 =
\{ \xi \in \scrM \,|\, \xi \cdot \xi =0\}$, referred to as the
{\em{null cone}}. Physically, the null cone is formed of all light
rays through the origin of~$\scrM$. Similarly, the timelike vectors correspond to velocities
slower than the speed of light; they form the {\em{interior light cone}}
$I_0 = \{ \xi \in \scrM \,|\, \xi \cdot \xi > 0 \}$.
Finally, we introduce the {\em{closed light cone}}
$J_0 = \{ \xi \in \scrM \,|\, \xi \cdot \xi \geq 0 \}$.
With our conventions, the null cone is the boundary of the closed or interior light cones.
By translation, we obtain corresponding cones centered at
any spacetime point $x$. They will be denoted by $L_x,I_x$ and $J_x$, respectively.

Our starting point is the vector space of all smooth solutions of the Dirac equation $i \gamma^j \partial_j \psi=m\psi$ with spatially compact support $\scH_m^{\rm{sc}} \subset C^\infty(\R^{1,3}, \C^4)$.
These Dirac solutions can be described by their initial data at time $t=0$.
More precisely, there is a linear isomorphism
$$
\mathrm{E}:C_0^\infty(\R^3,\C^4)\rightarrow\scH_m^{\rm{sc}}
$$
which propagates the compactly supported initial data, given for example at time $t=0$,
to the whole spacetime.
The linear space $\scH_m^{\rm{sc}}$ can be given a pre-Hilbert space structure by equipping it with the $L^2$ scalar product of the initial data (due to current conservation, integrating over any
surface~$\{t=\text{const}\}$ would give the same result),
$$
(  f \,|\,g ) := \int_{\R^3} f(0, \V{x})^\dagger g(0, \V{x})\: d^3\V{x}\qquad \mbox{for all }f,g\in\scH_m^{\rm{sc}} \:,
$$ 
where the dagger means complex conjugation and transposition.
This makes the mapping~$\mathrm{E}$ to a linear isometry.
The one-particle Hilbert space~$(\scH_m,(\cdot|\cdot))$ is introduced
as the Hilbert space completion of $\scH_m^{\rm{sc}}$. It coincides with the topological completion of $\scH_m^{\rm{sc}}$ within $\scL^2_{\text{loc}}(\R^{1,3},\C^4)$. As a consequence, the isomorphism  $\mathrm{E}$ extends continuously to a unitary operator on $\scL^2(\R^3,\C^4)$, again denoted by $\mathrm{E}$. Finally, we remark that the elements of $\scH_m^-$ admit weak derivatives on $\R^{1,3}$ and satisfy the Dirac equation in the weak sense.

By means of the Fourier transform, to every function  $\psi\in\scL^2(\R^3,\C^4)$ we associate a three-momentum distribution $\cF(\psi)\in\scL^2(\R^3,\C^4)$ and vice versa. Typically, it is more convenient to work in
momentum space. In particular, we use the operator
\begin{equation}\label{ehat}
\hat{\mathrm{E}}:=\mathrm{E}\circ\cF^{-1} \:. 
\end{equation}
Finally, the Hilbert space $\scH_m$ can be decomposed into an orthogonal direct sum of $\scH_m^+$ and $\scH_m^-$, the positive and negative energy subspaces. This is easily formulated in momentum space by means of the following projection operators,
\beq \label{Pmpdef}
\hat{P}_\pm\psi:=p_{\pm}\cdot \psi,\quad \psi\in\scL^2(\R^3,\C^4),\quad p_\pm(\V{k}):=\frac{\slashed{k}+m}{2k^0}\:
\gamma^0 \Big|_{k^0=\pm \omega(\V{k})} \:,
\eeq
where~$\omega(\V{k}):= \sqrt{\V{k}^2+m^2}$ and~$\slashed{k} := k_j \gamma^j$.
A convenient space to work with are the Schwartz functions~${\mathcal{S}}(\R^3, \C^4)$.
It follows from~\eqref{Pmpdef} that the operators~$\hat{P}_\pm$ map Schwartz functions to themselves
and that their image is dense in $\scH_m^\pm$ (for details see~\cite[Lemma 2.17]{oppio}).
On these functions, the action of the operator \eqref{ehat} reads
\begin{equation}\label{expressionEonS}
\hat{\mathrm{E}}(\psi)(t,\V{x})=\int_{\R^3}\frac{d^3\V{k}}{(2\pi)^{3/2}}\left(\psi_+(\V{k})\,e^{-i(\omega(\V{k})t-\V{k}\cdot\V{x})}+\psi_-(\V{k})\,e^{-i(-\omega(\V{k})t-\V{k}\cdot\V{x})}\right)
\end{equation} 
where $\psi_\pm:=\hat{P}_\pm(\psi)$ (see \cite[Proposition 2.19]{oppio}). Note that $\hat{\mathrm{E}}(\psi)\in\H_m\cap C^\infty(\R^{1,3},\C^4)$.

We also point out that the space of solutions $\scH_m$ is equipped with a strongly-continuous unitary representation of the group of translations in spacetime:
\begin{equation}
\mathrm{U}_a:\scH_m\ni u\mapsto u(\,\cdot\,+a)\in\scH_m,\quad a\in\R^{1,3} \:. \label{Udef}
\end{equation}
These operators also preserve the sign of the energy, i.e.\ $\mathrm{U}_a(\scH_m^\pm)\subset\scH_m^\pm$ for all $a\in\R^{1,3}$.

\subsection{Regularization by a Smooth Cutoff in Momentum Space}
In the context of causal fermion systems, in order to take into account the presence of a minimal length scale, a regularization is introduced. This length parameter~$\varepsilon$ can vary in an interval $(0,\varepsilon_{\max})$. 
For technical simplicity, we here regularize by multiplying in momentum space 
by a convergence-generating factor~$e^{-\varepsilon \omega(\V{k})}$ (in~\cite[\S2.4.1]{cfs} this regularization
method is referred to as the $i \varepsilon$-regularization).
This slightly differs from the regularization scheme used in~\cite{oppio}, where a mollification in
spacetime was used. Nevertheless, most of the results of this paper could be extended in a straightforward way
to the regularizations by mollification.
\begin{Def}\label{defreg}
	The {\bf{$i\varepsilon$-regularization operator}} is defined for every $\varepsilon\in (0,\varepsilon_{max})$ by
	$$
	\gR_{\varepsilon} : \scH_m\ni \hat{\mathrm{E}}(\psi)\mapsto \hat{\mathrm{E}}(\mathfrak{g}_\varepsilon \,\psi)\in\scH_m \:,
	$$
	where the regularization function is  $\mathfrak{g}_\varepsilon(\V{k}):=e^{-\varepsilon \omega(\V{k})}$.
\end{Def}
As a consequence of the  boundedness of the cutoff function,
the regularization operators are well-defined continuous linear mappings. Moreover, they map solutions to 
smooth solutions. Finally they preserve the sign of the energy and in the limit $\varepsilon \searrow 0$
converge strongly to the identity. In more detail, we have the following result:

\begin{Prp}\label{propositionregularization}
The $i\varepsilon$-regularization operators are bounded linear functions and have the following properties:
\begin{itemize}[leftmargin=2.5em]
	\vspace{0.4em}
		\item[{\rm{(i)}}] $\gR_\varepsilon(\scH_m^\pm)\subset\scH_m^\pm\cap C^\infty(\R^{1,3},\C^4).$\\[-0.5em]
		\item[{\rm{(ii)}}] $\gR_{\varepsilon} $ is selfadjoint. \\[-0.5em]
		\item[{\rm{(iii)}}] $\ker\gR_{\varepsilon}=\{0\}$ and $\|\gR_{\varepsilon}\|\le 1$\\[-0.5em]
		\item[{\rm{(iv)}}] $\gR_{\varepsilon}u\to u$  as $\varepsilon\searrow 0$ for every $u\in\scH_m$.\\[-0.5em]
		\item[{\rm{(v)}}] If $u\in\hat{\mathrm{E}}(\mathcal{S}(\R^3,\C^4))$, then $\gR_{\varepsilon}u\to u$ uniformly on compact sets  as $\varepsilon\searrow 0$.
\end{itemize}
\end{Prp}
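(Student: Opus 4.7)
The proof rests on the unitary equivalence $\gR_\varepsilon = \hat{\mathrm{E}} \circ M_{\mathfrak{g}_\varepsilon} \circ \hat{\mathrm{E}}^{-1}$, where $M_{\mathfrak{g}_\varepsilon}$ denotes multiplication by the real, strictly positive, bounded scalar function $\mathfrak{g}_\varepsilon$ on $L^2(\R^3, \C^4)$. Because $\hat{\mathrm{E}}$ is unitary, questions of norm, kernel, selfadjointness, and strong convergence transfer literally from $M_{\mathfrak{g}_\varepsilon}$ to $\gR_\varepsilon$. From $0 < \mathfrak{g}_\varepsilon \leq 1$ one reads off $\|\gR_\varepsilon\| \leq 1$ and $\ker \gR_\varepsilon = \{0\}$, giving (iii); the fact that $\mathfrak{g}_\varepsilon$ is real gives (ii). For the subspace-invariance half of (i), the matrix-valued projectors $\hat{P}_\pm$ from \eqref{Pmpdef} are themselves multiplication operators that commute pointwise with the scalar multiplication by $\mathfrak{g}_\varepsilon$, so $\gR_\varepsilon$ preserves $\scH_m^\pm$.

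The smoothness half of (i) is the only step that requires genuine analysis, since an arbitrary $u \in \scH_m$ is only a distributional Dirac solution. The key observation is that for every $\psi \in L^2(\R^3, \C^4)$ and every $n \in \N$ the weighted function $\omega^n\,\mathfrak{g}_\varepsilon\,\psi$ lies in $L^1(\R^3, \C^4)$: by Cauchy--Schwarz,
\[
\|\omega^n \mathfrak{g}_\varepsilon \psi\|_{L^1} \leq \|\omega^n \mathfrak{g}_\varepsilon\|_{L^2}\, \|\psi\|_{L^2},
\]
and the first factor is finite thanks to the exponential decay $e^{-\varepsilon\omega}$. Consequently the integral representation \eqref{expressionEonS}, originally stated for Schwartz inputs, extends to $\gR_\varepsilon u = \hat{\mathrm{E}}(\mathfrak{g}_\varepsilon \psi)$, and spacetime differentiation can be brought inside the integral because each derivative produces at worst a factor polynomial in $(\omega, \V{k})$. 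This yields $\gR_\varepsilon u \in C^\infty(\R^{1,3}, \C^4)$, finishing (i).

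For (iv), the unitary equivalence reduces the claim to $M_{\mathfrak{g}_\varepsilon}\psi \to \psi$ in $L^2(\R^3, \C^4)$, which follows from Lebesgue dominated convergence with pointwise limit $\mathfrak{g}_\varepsilon(\V{k}) \to 1$ and majorant $4|\psi(\V{k})|^2$. For (v), fix $u = \hat{\mathrm{E}}(\psi)$ with $\psi \in \mathcal{S}(\R^3, \C^4)$; the remarks after \eqref{Pmpdef} ensure that $\psi_\pm = \hat{P}_\pm \psi$ are also Schwartz. Applying \eqref{expressionEonS} to $\gR_\varepsilon u - u$ and invoking the elementary inequality $|1 - e^{-\varepsilon\omega}| \leq \varepsilon\omega$ yields the spacetime-uniform bound
\[
\bigl|\gR_\varepsilon u(t, \V{x}) - u(t, \V{x})\bigr| \leq \frac{\varepsilon}{(2\pi)^{3/2}} \int_{\R^3} \omega(\V{k})\, \bigl(|\psi_+(\V{k})| + |\psi_-(\V{k})|\bigr)\, d^3\V{k},
\]
whose right-hand side is finite and independent of $(t, \V{x})$; this in fact gives uniform convergence on all of $\R^{1,3}$, which is strictly stronger than (v). The only genuinely delicate point is thus the smoothness argument in (i); everything else reduces mechanically to properties of a scalar multiplication operator on $L^2$.
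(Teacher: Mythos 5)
Your overall strategy matches the paper's: both treat $\gR_\varepsilon$ as $\hat{\mathrm{E}}\circ M_{\mathfrak{g}_\varepsilon}\circ\hat{\mathrm{E}}^{-1}$ and read off (ii), (iii), (iv) and the energy-sign invariance from elementary properties of the scalar multiplier. Two remarks.

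The one place where you compress too much is the smoothness claim in (i). You write that ``the integral representation \eqref{expressionEonS}, originally stated for Schwartz inputs, extends to $\gR_\varepsilon u = \hat{\mathrm{E}}(\mathfrak{g}_\varepsilon\psi)$.'' This is exactly the point the paper proves rather than asserts: $\hat{\mathrm{E}}(\mathfrak{g}_\varepsilon\psi)$ is defined abstractly as an element of the Hilbert-space completion (an equivalence class in $L^2_{\rm loc}$), while the absolutely convergent integral
\[
v_\psi(x)=\int_{\R^3}\frac{d^3\V{k}}{(2\pi)^{3/2}}\,\mathfrak{g}_\varepsilon(\V{k})\psi(\V{k})\,e^{-ik\cdot x}
\]
defines a concrete continuous function. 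Identifying the two requires an argument --- the paper approximates $\psi$ by Schwartz functions $\psi_n$, uses current conservation to convert $\H_m^-$-convergence into $L^2$-convergence on time strips, and extracts an a.e.\ pointwise convergent subsequence to conclude $v_\psi=\hat{\mathrm{E}}(\mathfrak{g}_\varepsilon\psi)$ a.e. In your write-up this identification is taken for granted; it is true (it is essentially the statement that the $L^1$- and $L^2$-inverse Fourier transforms agree on $L^1\cap L^2$, applied slice by slice), but at the paper's level of rigor it deserves at least a sentence, and your phrase ``extends'' does not supply one. Once the identification is in hand, your differentiation-under-the-integral argument for smoothness is the same as the paper's.

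In (v) you take a genuinely different route: instead of dominated convergence on $\int|\mathfrak{g}_\varepsilon-1|\,|\psi|$, you use the explicit inequality $|1-e^{-\varepsilon\omega}|\le\varepsilon\omega$ to get a bound of the form $C\varepsilon$ with $C=\int\omega(|\psi_+|+|\psi_-|)$, which is finite because $\psi$ is Schwartz. This buys you both a convergence rate and uniformity on all of $\R^{1,3}$, strictly stronger than the statement. The trade-off is that it exploits the specific exponential form of $\mathfrak{g}_\varepsilon$; the paper's dominated-convergence argument works for any cutoff satisfying $0<\mathfrak{g}_\varepsilon\le 1$ and $\mathfrak{g}_\varepsilon\to 1$ pointwise, which is closer in spirit to its later remark that the exponential can be replaced by a suitable Schwartz function.
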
 \noindent
The proof is given in~Appendix~\ref{secappendix}.

As a final comment, we note that the regularization can be seen as the restriction to the mass shell of the function 
$
e^{-\varepsilon|k^0|}
$
in the four-momentum space.
Since this function is always evaluated on the mass shell, it is possible to replace it by a suitable Schwartz
function~$\mathfrak{G}_\varepsilon$, which coincide with the exponential factor on the hyperboloid (for details see~\cite[Proposition 3.6]{oppio}). With this in mind, in what follows we always assume that the
cutoff function~$\mathfrak{G}_\varepsilon$ is a Schwartz function.

\subsection{The Kernel of the Fermionic Projector}
In what follows, we are mainly interested in the negative-energy spectrum. Therefore we focus our attention on $\scH_m^-$. The restriction of the inner product of $\scH_m$ to $\scH_m^-$ is denoted by $\langle\cdot|\cdot\rangle:=(\cdot|\cdot)\!\restriction_{\scH_m^-\times \scH_m^-}$.
 We introduce the following regularized distribution,
\begin{equation}\label{bidistributionP}
\begin{split}
P^{n\varepsilon}(x,y)&:=\int_{\R^4}\frac{d^4 k}{(2\pi)^4} \:(\slashed{k}+m) \:\delta(k^2-m^2)\:\Theta(-k_0)\:\mathfrak{G}_\varepsilon(k)^n\: e^{-i k\cdot (x-y)}=\\
&\:= -\int_{\R^3}\frac{d^3\V{k}}{(2\pi)^4}\, p_-(\V{k})\:\gamma^0\,\mathfrak{g}_\varepsilon(\V{k})^n \:e^{-i(-\omega(\V{k})(t_x-t_y)-\V{k}\cdot(\V{x-y}))} \:.
\end{split}
\end{equation}
In our case of a $i\varepsilon$-regularization, the $n^\text{th}$ power of the cutoff factor is equivalent to the substitution $\varepsilon\mapsto n\varepsilon$.
We are only interested in $n=0,1,2$. The case $n=0$
is well-defined
in the distributional sense, while the remaining two cases give well-defined smooth functions.
For simplicity, in the case $n=0$ the superscript referring to $\varepsilon$ will be omitted. Also, we introduce the symbol
$$
\hat{P}(k):=\:(\slashed{k}+m) \:\delta(k^2-m^2)\:\Theta(-k_0).
$$

In the next proposition we collect a few properties of these kernels.
In particular, we clarify how the regularized kernels converge to the unregularized
kernel as the cutoff is removed. 

\begin{Prp}\label{propositioncontinuityP1} The following statements hold: \\[-0.2cm]
\begin{itemize}[leftmargin=2.5em]
	\item[\rm{(i)}] For any~$f\in {\mathcal{S}}(\R^{1,3},\C^4)$, $\varepsilon\in (0,\varepsilon_{max})$ and $n=0,1,2$,
	the integral
	\begin{equation}\label{defPf}
	\begin{split}
	\qquad P^{n\varepsilon}(x,f) \,&\!:=\int_{\R^4}d^4x\,P^{n\varepsilon}(x,y)\,f(y) \\
	&=\int_{\R^4}\frac{d^4 k}{(2\pi)^2}\:\hat{P}(k)\:\mathfrak{G}_\varepsilon(k)^n\: \cF(f)(k)\:e^{-ik\cdot x}=\\
	&= -\int_{\R^3}\frac{d^3\V{k}}{(2\pi)^2}\, p_-(\V{k})\:\gamma^0\,\mathfrak{g}_\varepsilon(\V{k})^n\: \cF(f)(-\omega(\V{k}),\V{k}) \:e^{-i(-\omega(\V{k})t-\V{k}\cdot\V{x})} \:.
	\end{split}
	\end{equation}
	is well-defined and gives a smooth solution of the Dirac equation.\\[-0.1cm]
	\item[\rm{(ii)}] Varying~$f$ in~${\mathcal{S}}(\R^{1,3},\C^4)$, the functions $P(\,\cdot\,,f)$
	span a dense subspace of $\scH_m^-$.\\[-0.1cm]
	\item[\rm{(iii)}] For any $\chi\in\C^4$ and $y\in\R^{1,3}$,
\[ 
	P^{\varepsilon}(x,y)\, \chi=P \big( x,T_y(h_\varepsilon \chi) \big) \:, \]
	where $h_\varepsilon:=(2\pi)^{-2}\cF^{-1}(\mathfrak{G}_\varepsilon)$ and $T_y(f)(x):=f(x-y)$ is the translation operator.\\[-0.1cm]
	\item[\rm{(iv)}]  For any $f,g\in {\mathcal{S}}(\R^{1,3},\C^4)$, $\varepsilon\in (0,\varepsilon_{max})$ and $n=0,1,2$, the integral
	\begin{equation*}
	P^{n\varepsilon}(f,g) :=\int_{\R^4}d^4x\int_{\R^4}d^4y\, f(x)^\dagger P^{n\varepsilon}(x,y)\, g(y)=\int_{\R^4}d^4x\, f(x)^\dagger P^{n\varepsilon}(x,g) 
	\end{equation*}
	is well-defined. \\[-0.1cm]
	\item[\rm{(v)}] There are $c,k>0$ such that for all $f,g\in\mathcal{S}(\R^{1,3},\C^4)$, $x\in\R^{1,3}$, $\varepsilon\in (0,\varepsilon_{\max})$ and $n=0,1,2$,
	$$
	|P^{n\varepsilon}(x,g)|\le k\,\|g\|_{6,4}\qquad\mbox{and}\qquad |P^{n\varepsilon}(f,g)|\le c\,\|f\|_{6,0} \,\|g\|_{6,4}\:,
	$$
	(where~$\| . \|_{p,q}$ are the usual Schwartz norms\footnote{We adopt the convention~$\|\phi\|_{p,q}:=\sum_{|\alpha|\le p}\sum_{|\beta|\le q}\|x^\alpha\,D^\beta \phi\|_\infty$.}
	).\\[-0.1cm]
	\item[\rm{(vi)}]  For any $f,g\in\mathcal{S}(\R^{1,3},\C^4)$, $x\in\R^{1,3}$ and $n=0,1,2$,
	\[ 
	\lim_{\varepsilon \searrow 0}P^{n\varepsilon}(x,g)=P(x,g)\qquad\mbox{and}\qquad \lim_{\varepsilon \searrow 0}P^{n\varepsilon}(f,g)=P(f,g) \:. 
	\]
\end{itemize}	
\end{Prp}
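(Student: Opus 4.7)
The plan is to treat the six claims in order, reducing everything to the explicit three-dimensional mass-shell representation on the second line of \eqref{defPf}. The main technical step will be (v); items (i), (iii) amount to elementary Fourier calculus, (ii) relies on the density result for $\hat{P}_-$ recalled in Section~\ref{sectiondiracequation}, and (iv), (vi) then follow from (v) by Fubini and dominated convergence. For (i), I would integrate out the mass-shell delta $\delta(k^2-m^2)\Theta(-k_0)$ in \eqref{bidistributionP} to pass to the $\V{k}$-integral. Since $\cF(f)$ is Schwartz and $|p_-(\V{k})\gamma^0\mathfrak{g}_\varepsilon(\V{k})^n|$ is uniformly bounded in $\V{k}$ and in $\varepsilon$, the integrand is absolutely integrable; differentiation under the integral sign yields a smooth function of $x$, and the Dirac equation follows because $(\slashed{k}+m)$ on the mass shell is annihilated by $i\slashed{\partial}-m$ acting on the plane wave $e^{-ik\cdot x}$. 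Claim (iii) is a direct Fourier computation: $\cF(T_y(h_\varepsilon\chi))(k)=(2\pi)^{-2}e^{-ik\cdot y}\mathfrak{G}_\varepsilon(k)\chi$, and substituting this into \eqref{defPf} with $n=1$ reproduces $P^\varepsilon(x,y)\chi$.

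For (ii), I would match \eqref{defPf} at $n=0$ against the explicit form \eqref{expressionEonS} of $\hat{\mathrm{E}}$: the map $f\mapsto P(\,\cdot\,,f)$ factors, up to a constant, as $f\mapsto \gamma^0\cF(f)(-\omega(\V{k}),\V{k})\mapsto\hat{P}_-(\gamma^0\cF(f)(-\omega,\cdot))\mapsto \hat{\mathrm{E}}(\cdot)$. As $f$ varies in $\mathcal{S}(\R^{1,3},\C^4)$, the mass-shell restrictions $\cF(f)(-\omega(\V{k}),\V{k})$ exhaust $\mathcal{S}(\R^3,\C^4)$, because any $\varphi\in\mathcal{S}(\R^3,\C^4)$ extends to a Schwartz function on $\R^{1,3}$ (for instance by $\varphi(\V{k})\,e^{-k_0^2}$). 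Combining with the density of $\hat{P}_-(\mathcal{S}(\R^3,\C^4))$ in $\scH_m^-$ recalled in Section~\ref{sectiondiracequation} then yields the claim.

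The main obstacle is (v), where uniformity in both $x$ and $\varepsilon$ and explicit Schwartz-seminorm dependence must be extracted. Starting from the $\V{k}$-integral, the plane wave contributes modulus one, and $|\mathfrak{g}_\varepsilon|\le 1$ together with the uniform boundedness of $p_-\gamma^0$ reduce the task to bounding $\int_{\R^3}d^3\V{k}\,|\cF(g)(-\omega(\V{k}),\V{k})|$. Multiplying and dividing by the integrable weight $(1+|\V{k}|^2)^{-2}$, it suffices to bound $(1+|\V{k}|^2)^2|\cF(g)(-\omega,\V{k})|$ pointwise by a seminorm of $g$. Using $k^\beta\cF(g)=i^{-|\beta|}\cF(D^\beta g)$ for $|\beta|\le 4$, together with $\|\cF h\|_\infty\le(2\pi)^{-2}\|h\|_{L^1(\R^4)}$ and $\|D^\beta g\|_{L^1(\R^4)}\le C\sum_{|\alpha|\le 6}\|x^\alpha D^\beta g\|_\infty$ (coming from the integrability of $(1+|x|)^{-5}$ on $\R^4$), produces precisely the norm $\|g\|_{6,4}$.

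Given (v), items (iv) and (vi) follow quickly: (iv) is obtained by integrating the uniform pointwise bound on $P^{n\varepsilon}(x,g)$ against $|f|$, for which only polynomial decay is required, giving $\|f\|_{6,0}$; (vi) is dominated convergence applied to the $\V{k}$-integral, using that $\mathfrak{g}_\varepsilon(\V{k})^n\to 1$ pointwise as $\varepsilon\searrow 0$ and that the majorant constructed in (v) is $\varepsilon$-independent.
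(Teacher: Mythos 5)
Your proposal is correct and follows essentially the same route as the paper: pass to the mass-shell $\V{k}$-integral, extract an integrable weight to obtain the Schwartz-seminorm bound in (v), then deduce (iv) and (vi) by integrating against $|f|$ and by dominated convergence. The only cosmetic differences are that the paper simply cites \cite{oppio} for (i)--(iii) (where you give short self-contained sketches) and invokes Friedlander's lemma for the estimate $\|\cF(g)\|_{4,0}\le K\|g\|_{6,4}$ (which you re-derive by hand via $\|\cF h\|_\infty\lesssim\|h\|_{L^1}$ and the integrability of $(1+|x|)^{-5}$ on $\R^4$); both versions land on the same seminorm $\|g\|_{6,4}$. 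One small sign slip: with the paper's conventions the translation should produce a factor $e^{ik\cdot y}$ in (iii), not $e^{-ik\cdot y}$, but this is immaterial to the argument.
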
 \noindent
The proof is given in~Appendix~\ref{secappendix}.

The kernels with different exponents of the cutoff functions are related to each other by the following result. The proof is straightforward.
\begin{Prp}\label{prpexponentsP}
	For any~$x\in\R^{1,3}$ and~$\chi\in \C^4$, the following statements hold:
	\begin{itemize}[leftmargin=2.5em]
		\vspace{0.05cm}
	\item[{\rm{(i)}}] $\gR_\varepsilon \big(P(\,\cdot\,,f)\,\chi \big)=P^{\varepsilon}(\,\cdot\,,f)\,\chi\ $\\[-0.5em]
	\item[{\rm{(ii)}}] $\gR_\delta \big(P^{\varepsilon}(\,\cdot\,,x)\,\chi \big)=P^{\varepsilon+\delta}(\,\cdot\,,x)\,\chi$
	\end{itemize}
\end{Prp}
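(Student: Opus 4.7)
The plan is to prove both identities by passing to momentum space, where the regularization operator $\gR_\varepsilon$ acts as pointwise multiplication by the exponential factor $\mathfrak{g}_\varepsilon(\V{k}) = e^{-\varepsilon\omega(\V{k})}$, and where the semigroup identity $\mathfrak{g}_\varepsilon \cdot \mathfrak{g}_\delta = \mathfrak{g}_{\varepsilon+\delta}$ is immediate.

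For part (i), I would first express $P(\,\cdot\,,f)$ in the form $\hat{\mathrm{E}}(\psi)$ for a suitable $\psi \in \scH_m^-$. Comparing the third line of formula \eqref{defPf} in Proposition \ref{propositioncontinuityP1} (specialised to $n=0$) with the general representation \eqref{expressionEonS} of $\hat{\mathrm{E}}$ on Schwartz data, one reads off that $\psi$ is supported on the negative-energy mass shell and is, up to a fixed constant and the spinor factor $-p_-(\V{k})\gamma^0$, the restriction of $\cF(f)$ to $k^0 = -\omega(\V{k})$. Applying $\gR_\varepsilon$ then amounts by Definition \ref{defreg} to inserting an extra factor $\mathfrak{g}_\varepsilon(\V{k})$ under the same $d^3\V{k}$-integral, which is precisely the $n=1$ version of \eqref{defPf}, i.e.\ $P^\varepsilon(\,\cdot\,,f)$.

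For part (ii), the most direct route is the analogous momentum-space computation applied to $P^\varepsilon(\,\cdot\,,y)\chi$. For $\varepsilon > 0$, this is a smooth negative-energy solution (by Proposition \ref{propositionregularization}(i) combined with Proposition \ref{propositioncontinuityP1}(iii), or directly from the bounded, rapidly decaying integrand in \eqref{bidistributionP}), so it lies in $\scH_m^-$ and hence in the domain of $\gR_\delta$. Writing $P^\varepsilon(\,\cdot\,,y)\chi = \hat{\mathrm{E}}(\psi_{\varepsilon,y,\chi})$, inspection of \eqref{bidistributionP} shows that $\psi_{\varepsilon,y,\chi}(\V{k}) = \mathfrak{g}_\varepsilon(\V{k})\,\psi_{0,y,\chi}(\V{k})$. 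Applying $\gR_\delta$ multiplies by $\mathfrak{g}_\delta$, and the exponential identity $\mathfrak{g}_\varepsilon \mathfrak{g}_\delta = \mathfrak{g}_{\varepsilon+\delta}$ converts the result into $\hat{\mathrm{E}}(\psi_{\varepsilon+\delta,y,\chi}) = P^{\varepsilon+\delta}(\,\cdot\,,y)\chi$, which is the claim.

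An alternative derivation of (ii) that avoids re-opening \eqref{bidistributionP} is to chain together (i) of this proposition with Proposition \ref{propositioncontinuityP1}(iii): the latter rewrites $P^\varepsilon(\,\cdot\,,y)\chi = P(\,\cdot\,,T_y(h_\varepsilon\chi))$, and then (i) yields $\gR_\delta P(\,\cdot\,,T_y(h_\varepsilon\chi)) = P^\delta(\,\cdot\,,T_y(h_\varepsilon\chi))$; one last momentum-space check, using $\cF(T_y(h_\varepsilon\chi))(k) \propto e^{ik\cdot y}\mathfrak{G}_\varepsilon(k)\chi$, identifies the right-hand side with $P^{\varepsilon+\delta}(\,\cdot\,,y)\chi$. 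I do not foresee a genuine obstacle in either route; the content is really just the multiplicativity of the exponential cutoff in momentum space, and the only item requiring care is the book-keeping of the $(2\pi)$ factors and Fourier sign conventions when matching the explicit formulas.
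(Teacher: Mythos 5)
The paper itself gives no proof beyond remarking ``the proof is straightforward,'' and your momentum-space argument is exactly the straightforward verification that is intended: writing the relevant solutions as $\hat{\mathrm{E}}(\psi)$, noting from Definition~\ref{defreg} that $\gR_\varepsilon$ acts as multiplication by $\mathfrak{g}_\varepsilon(\V{k})$ on the momentum-space data, and using $\mathfrak{g}_\varepsilon\,\mathfrak{g}_\delta=\mathfrak{g}_{\varepsilon+\delta}$. Both your primary route and your alternative route for (ii) via Proposition~\ref{propositioncontinuityP1}(iii) are correct, and the proposal correctly identifies the only points that require care (matching the $(2\pi)$ normalizations against \eqref{defPf}, \eqref{bidistributionP} and \eqref{expressionEonS}, and checking that $P(\,\cdot\,,f)$ and $P^\varepsilon(\,\cdot\,,x)\chi$ lie in $\scH_m^-$ so that the regularization operator may be applied, which Proposition~\ref{propositioncontinuityP1}(i)--(iii) supplies).
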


It is worth mentioning that the function $P^\varepsilon(\,\cdot\,,y)$ can actually be calculated explicitly for sufficiently small (but finite) $\varepsilon$
as follows (the details can be found in \cite[Section~1.2.5]{cfs}). We first pull out the Dirac matrices,
\begin{equation}\label{expressionP}
P^{\varepsilon}(x,y)=\big(i\slashed{\partial}_x+m\big)T_{m^2}^\varepsilon(x,y) \:,
\end{equation}
where the function $T^\varepsilon_{m^2}$ is smooth and defined by the modified Bessel function as
$$
T_{m^2}^\varepsilon(x,y):=\frac{m}{(2\pi)^3}\,\frac{K_1\left(m\sqrt{-((y-x)-i\varepsilon e_0)^2}\right)}{\sqrt{-((y-x)-i\varepsilon e_0)^2}}\quad\mbox{for all }x,y\in\R^{1,3},
$$
where  $K_1$ and the square root are defined as usual using a branch cut along $(-\infty,0]$.
In what follows, we also make use of the Bessel functions of first and second kind~$J_1$ and~$Y_1$ (see again \cite[Section~1.2.5]{cfs}) which are analytic in a neighborhood of the positive real line $\R^0_++i(-\delta,\delta)$.

Now, focus for simplicity on $y=0$ and consider two bounded open sets 
$$
\Omega_{\rm{t}}, \Omega_{\rm{s}}
\subset\R^{1,3}\qquad \text{(where $\rm{t}$ stands for ``time-like'' and~$\rm{s}$ for ``space-like'')} \:,
$$
whose closures lie inside the interior light cone~$I_0$ and outside the closed light cone~$J_0$, respectively.
For $\varepsilon$ small enough, we have
\[
\begin{rcases}
+(x+i\varepsilon e_0)^2=+x^2-\varepsilon^2+2ix_0\varepsilon& \text{if }x\in\Omega_{\rm{t}}\\[0.3em] 
-(x+i\varepsilon e_0)^2=-x^2+\varepsilon^2-2i x_0\varepsilon & \text{if }x\in\Omega_{\rm{s}}
\end{rcases} \in \R^0_++i(-\delta,\delta) \]
(where $\{e_j\:|\: j\in\{0,1,2,3\}\}$ denotes the canonical basis of $\R^{1,3}$). In these cases it is possible to write (for details see~\cite[Lemma 1.2.9]{cfs})
\vspace{0.2cm}
\beq \label{Tm2explicit}
T_{m^2}^{\varepsilon}(x,0)=
\begin{cases}
\dfrac{m}{16\pi^2}\,\dfrac{Y_1(m\sqrt{(x+i\varepsilon e_0)^2}}{\sqrt{(x+i\varepsilon e_0)^2}}+\dfrac{im}{16\pi^2}\,\dfrac{J_1(m\sqrt{(x+i\varepsilon e_0)^2}}{\sqrt{(x+i\varepsilon e_0)^2}}\,\varepsilon(-x^0)&\!\!\mbox{if $x\in\Omega_{\rm{t}}$} \\[1.2em]
\dfrac{m}{8\pi^3}\,\dfrac{K_1(m\sqrt{-(x+i\varepsilon e_0)^2}}{\sqrt{-(x+i\varepsilon e_0)^2}}&\!\!\mbox{if $x\in\Omega_{\rm{s}}$}.
\end{cases}
\eeq
\vspace{-0.3cm}

\noindent
This function is well-defined and smooth in the corresponding regions,
even in the limiting case $\varepsilon=0$ (in this case, the function is simply denoted by~$T_{m^2}$).

Now, going back to the general case $y\in\R^{1,3}$, using the above formula, Taylor's theorem, the fact that
the involved functions are analytic and the identity
$$
T_{m^2}^\varepsilon(x,y)=T_{m^2}^\varepsilon(x-y,0)\quad\mbox{for all }x,y\in\R^{1,3},
$$ 
one sees that the functions~$T_{m^2}^\varepsilon(\,\cdot\,,y)$ converge uniformly on compact subsets $K\subset \Omega_{\rm{t,s}}+y$ to the smooth function $T_{m^2}(\,\cdot\,,y)$, as~$\varepsilon\searrow 0$,
$$
T_{m^2}^\varepsilon(\,\cdot\,,y)\!\restriction_K\, \stackrel{\mbox{\tiny{\tiny \rm{\!loc}}}}{\rightrightarrows} T_{m^2}(\,\cdot\,,y)\!\restriction_K.
$$ 
Similar arguments can be used for $P^{\varepsilon}(\,\cdot\,,y)$, taking into account formula~\eqref{expressionP}
(in fact, this can be done for derivatives of any order).
In particular, it follows that~$P(\,\cdot\,,y)$ is smooth in~$\R^{1,3}\setminus L_y$ and coincides with
the smooth function~$\big(i\slashed{\partial}+m\big)T_{m^2}(\,\cdot\,,y)$.
We thus obtain the following result.

\begin{Lemma} \label{lemma26}
	Let $y\in\R^{1,3}$. In the limit~$\varepsilon \searrow 0$,
	the functions~$P^{\varepsilon}(\,\cdot\,,y)$ converge uniformly on any compact subset $K$ of $\R^{1,3}\setminus L_y$ to the smooth
	function~$P(\,\cdot\,,y)$,
	\[
	P^{\varepsilon}(\,\cdot\,,y)\!\restriction_K\ \, \stackrel{\mbox{\tiny{\tiny \rm{\!loc}}}}{\rightrightarrows}\  P(\,\cdot\,,y)\!\restriction_K \:. 
	\]
	The same holds true for the partial derivatives of any order.
\end{Lemma}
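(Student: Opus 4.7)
The plan is to reduce the assertion to the case $y=0$ via translation invariance and then exploit the explicit representation \eqref{expressionP}, which reads $P^\varepsilon(x,y)=(i\slashed{\partial}_x+m)\,T_{m^2}^\varepsilon(x,y)$ with $T_{m^2}^\varepsilon(x,y)=T_{m^2}^\varepsilon(x-y,0)$. Since the Dirac operator $i\slashed{\partial}_x+m$ has constant coefficients, locally uniform convergence of $T_{m^2}^\varepsilon(\cdot,0)$ to $T_{m^2}(\cdot,0)$ on compact subsets of $\R^{1,3}\setminus L_0$ together with all $x$-derivatives will translate directly into the analogous statement for $P^\varepsilon(\cdot,y)$, which is what the lemma asserts.

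The next step is to handle $T_{m^2}^\varepsilon$ on a fixed compact set $K\subset\R^{1,3}\setminus L_0$. I would decompose $K=K_{\rm t}\cup K_{\rm s}$ with $K_{\rm t}\subset I_0$ and $K_{\rm s}\subset \R^{1,3}\setminus J_0$ and choose bounded open neighborhoods $\Omega_{\rm t}\supset K_{\rm t}$ and $\Omega_{\rm s}\supset K_{\rm s}$ whose closures lie in $I_0$ and outside $J_0$, respectively, so that the explicit formula \eqref{Tm2explicit} is applicable for $\varepsilon$ sufficiently small. Compactness yields lower bounds $\inf_{x\in K_{\rm t}}x^2>0$ and $\inf_{x\in K_{\rm s}}(-x^2)>0$, so the arguments $\pm(x+i\varepsilon e_0)^2$ lie in a fixed compact subset of the strip $\R^0_++i(-\delta,\delta)$ on which $J_1$, $Y_1$, $K_1$ and the principal branch of the square root are jointly analytic. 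By uniform continuity of these compositions, $T_{m^2}^\varepsilon(\cdot,0)$ converges uniformly on $K$ to $T_{m^2}(\cdot,0)$ as $\varepsilon\searrow 0$.

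To upgrade this convergence to all partial derivatives, the key point is holomorphic dependence on $\varepsilon$. For fixed $x\in\Omega_{\rm t}$ (respectively $\Omega_{\rm s}$), the map $\varepsilon\mapsto T_{m^2}^\varepsilon(x,0)$ extends holomorphically to a complex disc $|\varepsilon|<r$, uniformly in $x$ on compacta. Hence $G(x,\varepsilon):=T_{m^2}^\varepsilon(x,0)$ is real-analytic on a neighborhood of $K\times\{0\}$, and Cauchy's integral formula applied in the $\varepsilon$-variable represents $G(x,\varepsilon)-G(x,0)$ as a contour integral whose integrand is smooth in $x$; differentiating under this integral shows that $\partial^\alpha_x T_{m^2}^\varepsilon(\cdot,0)$ converges uniformly on $K$ to $\partial^\alpha_x T_{m^2}(\cdot,0)$ for every multi-index $\alpha$. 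Applying $i\slashed{\partial}_x+m$ and translating back to general $y$ completes the argument. The main obstacle is essentially bookkeeping: one has to verify that the branch cuts of the square root and of $Y_1$ do not enter the chosen $\varepsilon$-disc for $x$ ranging over $K$, but the region of analyticity $\R^0_++i(-\delta,\delta)$ identified before \eqref{Tm2explicit} already ensures this once $\varepsilon$ and the diameter of $K$ are appropriately controlled.
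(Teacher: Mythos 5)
Your proof is correct and follows essentially the same route as the paper: translation invariance to reduce to $y=0$, the explicit Bessel‐function representation \eqref{Tm2explicit}, compactness to keep the arguments bounded away from the branch cut, and joint analyticity in $(x,\varepsilon)$ to get locally uniform convergence together with all derivatives. The only cosmetic difference is that where you invoke Cauchy's integral formula in the $\varepsilon$-variable to bound the derivatives, the paper appeals to Taylor's theorem (in $\varepsilon$, with the remainder controlled by analyticity) — two essentially interchangeable tools here.
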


We next derive a few properties of the regularized fermionic projector which will turn
out to be useful later on. The smooth function $P^{2\varepsilon}(y,x)$  can be written as
\begin{equation}\label{espressioneP}
\begin{split}
P^{2\varepsilon}(x,y)=\sum_{j=0}^3v_j(x-y)\: \gamma^j+\beta(x-y) \:,
\end{split}
\end{equation}
where we introduced the functions
\vspace{0.05cm}
\begin{equation}\label{functionsvbeta}
	\begin{split}
		v_0(z)&=-\frac{1}{2}\int_{\R^3}\frac{d^3\V{k}}{(2\pi)^4}\,\mathfrak{g}_{\varepsilon}(\V{k})^2\, e^{-i(-\omega(\V{k})t_z-\V{k}\cdot\V{z})}\\
		v_\alpha(z)&= \frac{1}{2}\int_{\R^3}\frac{d^3\V{k}}{(2\pi)^4}\frac{k_\alpha}{\omega(\V{k})}\,\mathfrak{g}_{\varepsilon}(\V{k})^2\, e^{-i(-\omega(\V{k})t_z-\V{k}\cdot\V{z})}\\
		\beta(z)&=\frac{1}{2}\int_{\R^3}\frac{d^3\V{k}}{(2\pi)^4}\frac{m}{\omega(\V{k})}\,\mathfrak{g}_{\varepsilon}(\V{k})^2\, e^{-i(-\omega(\V{k})t_z-\V{k}\cdot\V{z})} \:.
	\end{split}
\end{equation}
In the next lemma we collect a few basic properties of these functions. We denote the complex conjugate of a complex number $a$ by $\overline{a}$.
\begin{Lemma}\label{lemmavbeta}
	The following statements hold:
\begin{itemize}[leftmargin=2.5em]
	\vspace{0.2em}
		\item[\rm{(i)}] The functions $\beta$ and~$v_\mu$ belong to the class~$C^\infty(\R^{1,3},\C)$,\\[-0.4em]
		\item[\rm{(ii)}] For every $z\in\R^{1,3}$ and~$\alpha \in \{1,2,3\}$,
\[ \overline{\beta(x)}=\beta(-x)\:,\quad \overline{v_\alpha(x)}=v_\alpha(-x) \quad \text{and} \quad
v_\alpha(0)=0\:. \]
		\item[\rm{(iii)}] $0<\beta(0)<|v_0(0)|$.
\end{itemize}
\end{Lemma}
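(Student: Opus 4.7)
The entire lemma reduces to inspecting the three oscillatory integrals in \eqref{functionsvbeta}, whose integrands share two decisive features: the cutoff factor $\mathfrak{g}_\varepsilon(\V{k})^2 = e^{-2\varepsilon\omega(\V{k})}$ provides exponential decay in $|\V{k}|$, and the pre-exponential factors $m/\omega(\V{k})$, $k_\alpha/\omega(\V{k})$, $1$ are all real. My plan is to extract (i), (ii), (iii) one after the other from these observations, with no ingredient beyond dominated convergence and the evenness of $\omega(\V{k})$ in each $k_\alpha$.

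For (i), I would differentiate under the integral sign. Each partial derivative with respect to $t_z$ or $z^\alpha$ pulls down a factor of $i\omega(\V{k})$ or $ik_\alpha$ from the phase $e^{i(\omega(\V{k})t_z+\V{k}\cdot\V{z})}$, so the $n$th derivative is bounded in absolute value by a polynomial in $|\V{k}|$ times $e^{-2\varepsilon\omega(\V{k})}/\omega(\V{k})$ (or times $e^{-2\varepsilon\omega(\V{k})}$ for $v_0$). This envelope is locally uniform in $z$ and integrable on $\R^3$, so the standard theorem on differentiation under the integral sign applies to any order, yielding $\beta, v_\mu \in C^\infty(\R^{1,3},\C)$.

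For (ii), I would first conjugate the defining integral: since $\mathfrak{g}_\varepsilon(\V{k})^2$ and the rational factors $m/\omega(\V{k})$, $k_\alpha/\omega(\V{k})$ are real, complex conjugation only flips the sign in the phase. Rewriting the resulting exponential $e^{-i(\omega(\V{k})t_z + \V{k}\cdot\V{z})}$ as $e^{-i(-\omega(\V{k})t_{-z} - \V{k}\cdot\V{z}_{-})}$ matches exactly the definitions of $\beta(-z)$ and $v_\alpha(-z)$; no change of variables is needed. For $v_\alpha(0)$ the phase collapses to $1$ and the integrand becomes $k_\alpha/\omega(\V{k})\cdot e^{-2\varepsilon\omega(\V{k})}$, which is odd under $k_\alpha \mapsto -k_\alpha$ because $\omega$ depends only on $|\V{k}|$. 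The integral over $\R^3$ therefore vanishes by the usual symmetry argument.

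For (iii), setting $z=0$ removes the oscillations and leaves manifestly positive integrands:
$$
\beta(0) = \tfrac{1}{2}\int_{\R^3}\frac{d^3\V{k}}{(2\pi)^4}\,\frac{m}{\omega(\V{k})}\,e^{-2\varepsilon\omega(\V{k})},
\qquad
|v_0(0)| = \tfrac{1}{2}\int_{\R^3}\frac{d^3\V{k}}{(2\pi)^4}\,e^{-2\varepsilon\omega(\V{k})}.
$$
Both integrals are finite and strictly positive, so $\beta(0)>0$. The inequality $\beta(0)<|v_0(0)|$ then follows because $\omega(\V{k})=\sqrt{\V{k}^2+m^2}\geq m$, with strict inequality off the measure-zero set $\{\V{k}=0\}$; the pointwise inequality $m/\omega(\V{k})\leq 1$ between the two integrands, strict on a set of positive measure, produces strict inequality after integration. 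There is no real obstacle in this lemma; the only thing I would be careful about is justifying differentiation under the integral uniformly in $z$ in a neighborhood of each point, which the exponential decay of $\mathfrak{g}_\varepsilon^2$ handles automatically.
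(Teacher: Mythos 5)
Your proof is correct and follows essentially the same route as the paper's (differentiation under the integral sign justified by the exponentially decaying cutoff for (i), conjugation plus odd symmetry of the integrand for (ii), and the pointwise bound $m/\omega(\V{k})<1$ away from $\V{k}=0$ for (iii)); you have simply spelled out the details that the paper leaves to references and to the reader.
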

\begin{proof}
	Point (i) can be proved using Lebesgue's dominated convergence theorem (or more precisely \cite[Theorem 1.88]{moretti-book}) and \cite[Lemma 8.1]{oppio}. Point (ii) follows by direct inspection, noting that the integrands of $v_\alpha(0)$ are odd with respect to $\V{k}$. Point (iii) follows from the fact that $m/\omega(\V{k})<1$ on $\R^3\setminus\{0\}$. 
\end{proof}

\begin{Remark}\label{remarkbehavior}
	As a consequence of point {\rm{(ii)}} of the above lemma, we see that the diagonal elements read
	$$
	P^{2\varepsilon}(x,x)=\frac{1}{2(2\pi)^4}\left(-\|\mathfrak{g}_\varepsilon^2\|_{\scL^1}\gamma^0+m\left\|\frac{\mathfrak{g}_\varepsilon^2}{\omega}\right\|_{\scL^1}\bI_4\right).
	$$
	The eigenvalues of this matrix are given by
	$$
	\nu^\pm(\varepsilon):=\frac{1}{2(2\pi)^4}\left(\pm\|\mathfrak{g}_\varepsilon^2\|_{\scL^1}+m\left\|\frac{\mathfrak{g}^2_\varepsilon}{\omega}\right\|_{\scL^1}\right).
	$$
	and fulfill the bounds
	$$
	\nu^+(\varepsilon)>0,\  \nu^-(\varepsilon)<0\ \mbox{ and }\ -\nu^-(\varepsilon)< \nu^+(\varepsilon).$$ 
	The dependence on the parameter $\varepsilon$ can be made more explicit  as follows (see Appendix~\ref{secappendix}). There exist functions $f,g\in {\mathscr{O}}(1)$ which are strictly positive together with their limits as~$\varepsilon\searrow 0$, such that
	\begin{equation}\label{eigenvaluesform}
	\nu^\pm(\varepsilon)=\pm \frac{1}{\varepsilon^3}\,g(\varepsilon)+\frac{m}{\varepsilon^2}\,f(\varepsilon).
	\end{equation}
	In particular, the leading order of both eigenvalues scales~$\sim \varepsilon^{-3}.$ It is worth mentioning that a different choice of the regularization operators would only affect the form of the functions $f,g$, but lead to the same scaling behavior as in~\eqref{eigenvaluesform}.
\end{Remark}
\vspace{0.1cm}

For the following analysis, it is useful to introduce the wave functions
\beq \label{uchidef}
u^\varepsilon_{x,\chi} := P^\varepsilon(\,\cdot\,,x)\: \chi \qquad \text{with} \qquad \chi \in \C^4\ \mbox{ and }\ x=(x^0,\V{x})\in\R^{1,3}\:.
\eeq
Clearly, these wave functions are solutions of the Dirac equation. More precisely, they belong to $\H_m^-\cap C^\infty(\R^{1,3},\C^4)$ (see
Proposition~\ref{propositioncontinuityP1}~(i)-(iii)).
Without a regularization, these wave functions are singular on the null cone
centered at~$x$. Indeed, they are the distributional solutions of the Cauchy problem
for initial data on the Cauchy surface $\{t=x^0\}$ obtained by projecting the distribution~$-\gamma^0 \chi \,\delta^{(3)}_{\V{x}}$ to the negative energy subspace
(for a few more details see~\cite[Remark~2.28~(v)]{oppio}).
In this sense, they can be regarded as the Dirac solutions of negative energy which are as far as possible
``concentrated'' at the spacetime point~$x$.
With a regularization, this qualitative picture remains valid, except that the solution is ``smeared out''
on the scale~$\varepsilon$.

We proceed by proving a useful few mathematical identities.
We denote the usual indefinite inner product on spinors by
\beq \label{sspMink}
\Sl \psi | \phi \Sr := \psi^\dagger \gamma^0 \phi
\eeq and refer to it as the {\em{spin scalar product}}.
\begin{Prp}\label{formulaprodottoscalarelocalizzati}
	The following statements hold:
	\vspace{0.1cm}
	\begin{itemize}[leftmargin=2.5em]
		\item[{\rm (i)}] 	For any $\chi,\zeta\in\C^4$ and $x, y\in\R^{1,3}$,
		$$\\[0.2em]
		2\pi\,\langle u_{x,\chi}^\varepsilon\:\big|\:u_{y,\zeta}^\varepsilon\ra=-
		\Sl \chi\,|\, P^{2\varepsilon}(x,y)\, \zeta \Sr \:. $$
		\item[{\rm (ii)}] In particular, for every $\chi\in\C^4$ and $x\in\R^{1,3}$, the norm is bounded by
			$$
		2\pi\,\|u_{x,\chi}^\varepsilon\|^2
		=-\,\Sl \chi \,|\, P^{2\varepsilon}(x,x)\, \chi \Sr \:.$$
		Moreover, the following bounds hold,
\begin{equation}\label{estimatenormlocstate}
		\sqrt{-\nu^-(\varepsilon)}\,|\chi|\le \sqrt{2\pi}\,\|u_{x,\chi}^\varepsilon\|\le \sqrt{\nu^+(\varepsilon)}\,|\chi|
\end{equation}
		(where~$|\chi|$ is the norm in~$\C^4$ of the spinor~$\chi$ and $\nu^\pm(\varepsilon)$ as defined in \eqref{eigenvaluesform})
	\end{itemize}

\end{Prp}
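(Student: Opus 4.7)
The plan is to compute the $\scH_m^-$ inner product appearing on the left of~(i) by passing to momentum space, where it reduces to the usual $\scL^2$ scalar product and the integrand of $P^\varepsilon(\,\cdot\,,x)\chi$ can be read off directly from the second line of~\eqref{bidistributionP}. The isometry $\hat{\mathrm{E}}:\scL^2(\R^3,\C^4)\to\scH_m$ is the key tool: it converts the time-independent scalar product of two Dirac solutions (which by current conservation may be evaluated on any Cauchy slice $\{t=t_z\}$) into a three-momentum integral to which the explicit momentum-space form of $P^\varepsilon$ is perfectly adapted.

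For (i), I would restrict to a fixed slice $\{t=t_z\}$ and apply Plancherel to the spatial Fourier transform. The momentum-space representative of $u^\varepsilon_{y,\zeta}(t_z,\,\cdot\,)$ is, up to a constant built out of the $2\pi$-factors in~\eqref{bidistributionP}, the spinor $p_-(\V{k})\gamma^0\mathfrak{g}_\varepsilon(\V{k})e^{i\omega(\V{k})(t_z-t_y)-i\V{k}\cdot\V{y}}\zeta$, and analogously for $u^\varepsilon_{x,\chi}$. Two simplifications then occur when the two representatives are contracted: the $t_z$-dependent phases cancel, and the Dirac-matrix sandwich collapses via $p_-^\dagger = p_-$ together with $p_-^2 = p_-$ (both following from the orthogonal-projector character of $\hat{P}_-$ on $\scL^2(\R^3,\C^4)$). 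What remains is exactly the momentum-space integral in~\eqref{bidistributionP} with $n=2$, i.e.\ it equals $-(2\pi)^4 P^{2\varepsilon}(x,y)\gamma^0$. Collecting the prefactors and combining the leftover $\gamma^0$ with the one hidden in the spin scalar product~\eqref{sspMink} then produces~(i).

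For (ii), specialising (i) to $y=x$, $\zeta=\chi$ already yields $2\pi\|u_{x,\chi}^\varepsilon\|^2 = -\Sl\chi|P^{2\varepsilon}(x,x)\chi\Sr$. For the two-sided bound I would invoke Remark~\ref{remarkbehavior}, which provides the explicit form
$$-\gamma^0 P^{2\varepsilon}(x,x)=\frac{1}{2(2\pi)^4}\Big(\|\mathfrak{g}_\varepsilon^2\|_{\scL^1}\bI_4-m\|\mathfrak{g}_\varepsilon^2/\omega\|_{\scL^1}\gamma^0\Big)\:.$$
Being a real-linear combination of $\bI_4$ and $\gamma^0$, this matrix is simultaneously diagonalisable with $\gamma^0$, and a direct comparison with the definition of $\nu^\pm(\varepsilon)$ shows its eigenvalues to be precisely $-\nu^-(\varepsilon)$ and $\nu^+(\varepsilon)$, each of multiplicity two. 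Since Remark~\ref{remarkbehavior} also records $0<-\nu^-(\varepsilon)<\nu^+(\varepsilon)$, the standard estimate of a selfadjoint matrix in terms of its extreme eigenvalues gives~\eqref{estimatenormlocstate} upon taking square roots. The whole argument is essentially a direct computation; the only delicate point is the bookkeeping of the various $2\pi$-powers and $\gamma^0$-matrices, since the $\gamma^0$ appearing in the spin scalar product~\eqref{sspMink} must be reconciled with the $\gamma^0$ built into the definition of $p_-$, and the Fourier conventions have to be matched consistently with the normalisation in~\eqref{expressionEonS}. Once these are aligned, the identity falls out by inspection.
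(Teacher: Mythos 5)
Your proof is correct and takes essentially the same route as the paper: both pass to momentum space via the isometry $\hat{\mathrm{E}}$ (your restriction to a slice plus Plancherel is precisely how that isometry is established), collapse $p_-^\dagger p_-$ to $p_-$, and recognize the remaining integral as a constant times $P^{2\varepsilon}(x,y)$; for (ii) your eigenvalue estimate is equivalent to the paper's explicit spinor-component expansion. The only slip is cosmetic: the leftover $\gamma^0$ should appear on the left of $P^{2\varepsilon}(x,y)$, not the right, so that $\chi^\dagger\gamma^0 P^{2\varepsilon}(x,y)\zeta$ assembles into $\Sl\chi|P^{2\varepsilon}(x,y)\zeta\Sr$ as you intend.
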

\begin{proof}
We begin with the proof of~(i). In preparation, note that $u^\varepsilon_{z,\phi}\in\hat{\rm E}(\mathcal{S}(\R^3,\C^4))$. Indeed, as an immediate consequence of~\eqref{expressionEonS} and~\eqref{bidistributionP},
	$$
	u^\varepsilon_{z,\phi}=\hat{\mathrm{E}}(\varphi),\quad 
\varphi(\V{k}):=-(2\pi)^{-5/2}\,\mathfrak{g}_\varepsilon\,p_-\,\gamma^0\,\phi\,\,e^{-i(\omega(\V{k})t_z+\V{k}\cdot\V{z})}
	$$ 
	with $\varphi\in\mathcal{S}(\R^3,\C^4)$.
Using that the map $\hat{\mathrm{E}}$ is an isometry of Hilbert spaces, we have
	(see also~\cite[Lemma~1.2.8]{cfs})
	\begin{equation*}\label{azioneAf}
		\begin{split}
\big\la P^{\varepsilon}(\,\cdot\,,x)\,\chi \,|\, P^\varepsilon(\,\cdot\,,y)\,\zeta\big\ra
			&=(2\pi)^{-5}\,\chi^\dagger\int_{\R^3}\, \mathfrak{g}_\varepsilon(\V{k})^2\, \gamma^0\, p_-(\V{k})\, \gamma^0\zeta\ e^{-ik\cdot (x-y)}\, d^3\V{k}=\\
			&=-(2\pi)^{-1}\,\chi^\dagger\gamma^0\, P^{2\varepsilon}(x,y)\, \zeta \:.
		\end{split}
	\end{equation*}
	This concludes the proof of (i). Since the first statement of~(ii) is a direct consequence of~(i), let us prove the final inequalities.
Writing the norm explicitly, we obtain
\begin{align*} 
\|P^{\varepsilon}(\,\cdot\,,x)\chi\|^2&= -(2\pi)^{-1}\,\chi^\dagger\gamma^0\, P^{2\varepsilon}(x,x)\, \chi = \\
&= {(2\pi)^{-1}\,(-\nu^-(\varepsilon)})\,(\chi_0^2+\chi_1^2)+(2\pi)^{-1}\,\nu^+(\varepsilon)\,(\chi_2^2+\chi_3^2) \:.
\end{align*}
The inequalities in point (ii) follow immediately by noticing that $-\nu^-(\varepsilon)< \nu^+(\varepsilon)$.
\end{proof}
\begin{Remark}
The inequalities of Proposition \ref{formulaprodottoscalarelocalizzati}~(ii) show that the norm of the vector $u_{x,\chi}^\varepsilon$ diverges in the limit $\varepsilon \searrow 0$. The order of this divergence is $\sim \varepsilon^{-3/2}$,
as already noted in Remark~\ref{remarkbehavior}.	
\end{Remark}

In the following sections, we will show how the above divergence
can be exploited for the detection of the light cone. Acting on these vectors with operators localized away from the null cone, this blow-up can be suppressed. On the other hand, 
this is no longer possible in a neighborhood of the null cone. This will give a characterization of the light cone in
terms of vectors and algebras.

\subsection{Basics on the Continuum Limit Analysis with $i \varepsilon$-Regularization}
The {\em{formalism of the continuum limit}} gives a systematic procedure for analyzing
composite expressions in the regularized kernel~$P^\varepsilon(x,y)$ in the limit~$\varepsilon \searrow 0$.
We now explain a few basic ideas and results from this analysis
(for the general context and details see~\cite[Chapter~4]{pfp} or~\cite[Sections~2.4 and 3.5]{cfs}).
The first step is the {\em{light-cone expansion}}, where one expands~$P(x,y)$ in orders of the singularity
on the light cone (starting from the most singular contribution, then the next lower singularity, etc.).
Then one regularizes each term of this expansion.
Since every term in the resulting {\em{regularized light cone expansion}} is smooth, one can form
composite expressions (like the close chain its eigenvalues and eigenvectors, etc.).
In the limit~$\varepsilon \searrow 0$, these expressions typically diverge. But with the formalism
of the continuum limit one can determine and compute the leading singular behavior on the light
cone asymptotically in this limit.

In the Minkowski vacuum, the regularized light cone expansion reads
\[ 
P^\varepsilon(x,y)=\frac{i\gamma^j\xi_j^\varepsilon}{2}\sum_{n=0}^\infty\frac{m^{2n}}{n!}\,T_\varepsilon^{(-1+n)}(\xi)+\sum_{n=0}^\infty\frac{m^{2n+1}}{n!}T_\varepsilon^{(n)}(\xi)+f(\varepsilon,\xi) \:, \]
where we set~$\xi:=y-x=(\xi_0,\boldsymbol{\xi})$ and~$\xi^\varepsilon:=\xi-i\varepsilon\,e_0$,
and~$f$ is a smooth function on $(-\varepsilon_{\max},\varepsilon_{\max})\times\R^{1,3}$.
Here the functions~$T_\varepsilon^{(n)}$ are pointwise divergent in the limit $\varepsilon\searrow 0$. The order of the divergence decreases as $n$ increases.
In particular, the two most divergent terms are (see~\cite[equations~(2.4.7) and~(2.4.9)]{cfs})
\begin{equation}\label{explicitformT}
	\begin{split}	
		T_\varepsilon^{(-1)}(\xi)&:=-\frac{1}{2\pi^3}\frac{1}{(\xi^\varepsilon\cdot\xi^\varepsilon)^2}=-\frac{1}{2\pi^3}\frac{1}{\big((\xi_0-i\varepsilon)^2-\boldsymbol{\xi}^2\big)^2}\,,\\
		T_\varepsilon^{(0)}(\xi)&:=-\frac{1}{8\pi^3}\frac{1}{\xi^\varepsilon\cdot\xi^\varepsilon}=-\frac{1}{8\pi^3}\frac{1}{(\xi_0-i\varepsilon)^2-\boldsymbol{\xi}^2}\,.
	\end{split}
\end{equation}
Now one can form the product~$A_{xy}^\varepsilon:=P^\varepsilon(x,y)P^\varepsilon(y,x)$ and
expand.
In particular, expanding up to the linear terms in the mass, we obtain
$$
A_{xy}^\varepsilon=a_0^\varepsilon(\xi)+m\,a_1^\varepsilon(\xi)
$$
where the functions $a_0$ and $a_1$ are given by
\begin{equation*}
\begin{split}
a_0^\varepsilon(\xi)&:=+\frac{\gamma^j\xi_j^\varepsilon\,\gamma^i\overline{\xi_i^\varepsilon}}{4}\,T^{(-1)}_\varepsilon(\xi)\,\overline{T^{(-1)}_\varepsilon(\xi)}\\
a_1^\varepsilon(\xi)&:= \frac{i\gamma^j}{2} \left[\xi^\varepsilon_j\,T^{(-1)}_\varepsilon(\xi)\, \overline{T^{(0)}_\varepsilon(\xi)}-\overline{\xi^\varepsilon_j}\,T_\varepsilon^{(0)}(\xi)\,\overline{T^{(-1)}_\varepsilon(\xi)}\,\right] .
\end{split}
\end{equation*}
These terms have the same scaling behavior in~$\varepsilon$.
In order to justify that the higher orders in the mass can be omitted,
a clean and systematic procedure is to use that in the formalism of the continuum
limit, these contributions are of lower degree on the light cone (for details see~\cite[Section~2.4]{cfs}).
For the purpose of this paper, it suffices to note that all contributions of higher order in the mass
involve at least one scaling factor~$\varepsilon m$
and can thus be absorbed into the error term in Theorem~\ref{theoremasymptotics} below.

Moreover, for our purposes it suffices to consider the matrix
element~$\mathfrak{e}_1^\dagger\, A_{xy}^\varepsilon\,\mathfrak{e}_3$
(where $\{\mathfrak{e}_\mu\:|\: \mu=1,2,3,4\}$ denotes the canonical basis of $\C^4$). 
Thus our goal is to estimate the $\varepsilon$-behavior of integrals of the form
\begin{equation}\label{integral}
\int_{\R^4} f(y)\, \mathfrak{e}_1^\dagger\, A_{0y}^\varepsilon\,\mathfrak{e}_3\, d^4y,
\end{equation}
for functions $f\in C_0^\infty(\R^{1,3},\R)$ which are supported in a neighborhood of a non-zero null vector.
Making use of the definitions \eqref{explicitformT} and noting that 
$
\mathfrak{e}_1\, \gamma^j\,\mathfrak{e}_3=\delta^j_{3}$
and
$\mathfrak{e}_1^\dagger\,\gamma^j\gamma^i\,\mathfrak{e}_3=\delta_{0}^j\,\delta_3^i-\delta_3^j\,\delta_0^i,
$
a direct computation yields
\begin{equation}\label{expressionstoevaluate}
	\begin{split}
		\mathfrak{e}_1^\dagger\, a_0^\varepsilon(\xi)\,\mathfrak{e}_3&=-\frac{i}{8\pi^6}\frac{\xi_3\,\varepsilon}{|(\xi_0-i\varepsilon)^2-\boldsymbol{\xi}^2|^4}	\\
	\mathfrak{e}_1^\dagger\, a_1^\varepsilon(\xi)\,\mathfrak{e}_3&=-\frac{1}{8\pi^6}\frac{\xi_0\,\xi_3\,\varepsilon}{|(\xi_0-i\varepsilon)^2-\boldsymbol{\xi}^2|^4}\:.	
\end{split}	
\end{equation}
We see that the two terms involving~$a_0^\varepsilon(\xi)$ and~$a_1^\varepsilon(\xi)$ are 
can be distinguished because the first term is real, whereas the second is purely imaginary.
This makes it possible to evaluate the corresponding contributions to the integral~\eqref{integral} separately.

In order to analyze the resulting integrals, it is convenient to begin with the following general result. 
\begin{Lemma}\label{teoremaandamentoasintotico}
Let $x_0\in L_0\setminus\{0\}$ and $\delta>0$ such that $0\not\in \overline{B_\delta(x_0)}$. Moreover,
let $f\in C_0^\infty(B_r(x_0),\R^+)$ such that $f(x_0)\neq 0$. Then
	$$
	\int_{\R^4} f(x)\, \frac{1}{|(t-i\varepsilon)^2-\V{x}^2|^4}\, d^4x= h(\varepsilon)\,\frac{1}{\varepsilon^3} + \O(\varepsilon^{-2}) \:,
	$$
	where $h\in \O(1)$ and $\lim_{\varepsilon\searrow 0}h(\varepsilon)\neq 0$.
\end{Lemma}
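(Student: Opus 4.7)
The strategy is to introduce coordinates adapted to the null cone on the support of $f$, reducing the four-dimensional integral to one over $\V{x}$ preceded by an explicit one-dimensional integral that produces the $\varepsilon^{-3}$ divergence.

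Since $x_0\in L_0\setminus\{0\}$ forces both $t_0\neq 0$ and $\V{x}_0\neq 0$, after possibly shrinking $r$, on $B_r(x_0)$ one may use the coordinates $(\sigma,\V{x})$ with $\sigma:=t^2-\V{x}^2$ and $t=\sign(t_0)\sqrt{\sigma+\V{x}^2}$; this change of variables gives $d^4x=d\sigma\,d^3\V{x}/(2|t|)$. A direct computation using $t^2=\sigma+\V{x}^2$ yields the crucial algebraic identity
\[
\big|(t-i\varepsilon)^2-\V{x}^2\big|^2=(\sigma+\varepsilon^2)^2+4\V{x}^2\varepsilon^2,
\]
so that the denominator of the integrand depends on the original variables only through $\sigma$ and $\V{x}^2$. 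Setting $g(\sigma,\V{x}):=f(t(\sigma,\V{x}),\V{x})/(2|t(\sigma,\V{x})|)$, which is smooth and compactly supported in $(\sigma,\V{x})$, the integral takes the form
\[
\int d^3\V{x}\int d\sigma\,\frac{g(\sigma,\V{x})}{\big((\sigma+\varepsilon^2)^2+4\V{x}^2\varepsilon^2\big)^2}.
\]

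I would then split $g(\sigma,\V{x})=g(0,\V{x})+\sigma\,R(\sigma,\V{x})$ with $R$ smooth and compactly supported. For the leading piece, the change of variable $u=\sigma+\varepsilon^2$, the extension to all of $\R$, and the elementary formula $\int_\R du/(u^2+b^2)^2=\pi/(2b^3)$ with $b=2|\V{x}|\varepsilon$ give the main contribution
\[
\frac{\pi}{16\varepsilon^3}\int d^3\V{x}\,\frac{g(0,\V{x})}{|\V{x}|^3}=\frac{\pi}{32\varepsilon^3}\int d^3\V{x}\,\frac{f\big(\sign(t_0)|\V{x}|,\V{x}\big)}{\V{x}^4}.
\]
The right-hand integrand is nonnegative, continuous, and strictly positive in a neighborhood of $\V{x}=\V{x}_0$ since $f(x_0)\neq 0$, so this coefficient is strictly positive; it will serve as $\lim_{\varepsilon\searrow 0}h(\varepsilon)$.

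The error analysis splits into two parts. Extending the $\sigma$-integration from the finite support of $g(0,\cdot)$ to all of $\R$ contributes a tail bounded by $\|g(0,\cdot)\|_\infty\int_{|u|\ge c}du/u^4=\O(1)$. The Taylor remainder is controlled using $|\sigma|\le|u|+\varepsilon^2$ together with the elementary identities $\int_\R|u|(u^2+b^2)^{-2}du=1/b^2$ and $\int_\R(u^2+b^2)^{-2}du=\pi/(2b^3)$: with $b=2|\V{x}|\varepsilon$ these yield a pointwise bound of order $1/(\V{x}^2\varepsilon^2)+\O(\varepsilon^{-1})$, and integrating against $\|R\|_\infty$ over the compact $\V{x}$-support (on which $|\V{x}|$ stays bounded below) gives $\O(\varepsilon^{-2})$.

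The main technical point is avoiding a spurious $\varepsilon^{-3}$ contribution from the Taylor remainder; this is secured by the odd-parity cancellation $\int u(u^2+b^2)^{-2}du=0$, which is precisely why the centered variable $u=\sigma+\varepsilon^2$ is the natural one. Everything else reduces to routine bookkeeping of smooth remainders against an explicit rational kernel.
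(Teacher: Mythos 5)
Your proof is correct, and it takes a genuinely different route than the paper's. The paper keeps the $t$ coordinate, factors the denominator by partial fractions as $\tfrac{1}{((t-i\varepsilon)-r)^2((t+i\varepsilon)-r)^2}\,h_\varepsilon(t,\V{x})$ with $r=|\V{x}|$, absorbs the smooth $h_\varepsilon$ into $f$, Taylor-expands in $t$ around $t=r$, and evaluates $I_0=\int_\R\frac{dt}{((t-i\varepsilon)-r)^2((t+i\varepsilon)-r)^2}$ and $I_1$ by residues (where $I_1=0$ supplies the cancellation of the linear term). You instead pass to the null coordinate $\sigma=t^2-\V{x}^2$, and the identity $|(t-i\varepsilon)^2-\V{x}^2|^2=(\sigma+\varepsilon^2)^2+4\V{x}^2\varepsilon^2$ turns the denominator into a clean rational function of $u=\sigma+\varepsilon^2$, so the $\varepsilon^{-3}$ prefactor comes out of the elementary formula $\int_\R du/(u^2+b^2)^2=\pi/(2b^3)$ with $b=2|\V{x}|\varepsilon$, no contour integration needed. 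What the coordinate change buys you is that the entire $\varepsilon$- and light-cone dependence is made explicit in one algebraic identity, and the leading coefficient $\frac{\pi}{32}\int\frac{f(\sign(t_0)|\V{x}|,\V{x})}{|\V{x}|^4}\,d^3\V{x}$ matches the paper's limit $\int\frac{f(|\V{x}|,\V{x})}{16|\V{x}|^4}\,d^3\V{x}$ once the paper's $I_0$ is corrected from $4\pi/\varepsilon^3$ to $\pi/(2\varepsilon^3)$ (a harmless numerical slip there, since only positivity of the limit is used).

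Two small expository points in your write-up, neither a real gap. First, with $R(\sigma,\V{x}):=(g(\sigma,\V{x})-g(0,\V{x}))/\sigma$, the function $R$ is smooth (Hadamard) but \emph{not} compactly supported in $\sigma$ — for large $|\sigma|$ outside $\supp g$ it equals $-g(0,\V{x})/\sigma$. This does not matter because the $\sigma R$-term is only integrated over the compact $\sigma$-support of the original integrand, where $R$ is bounded; you should just say ``bounded on the integration domain'' rather than ``compactly supported.'' Second, the parenthetical remark attributing the avoidance of a spurious $\varepsilon^{-3}$ to the odd-parity cancellation $\int u(u^2+b^2)^{-2}\,du=0$ is a bit of a red herring: the bound you actually apply, $|\sigma|\le|u|+\varepsilon^2$ together with $\int_\R|u|(u^2+b^2)^{-2}\,du=b^{-2}$, never invokes the sign cancellation; it uses only that the extra factor $|u|$ trades one power of $b\sim\varepsilon$ against the $b^{-3}$ of the bare kernel. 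The odd-parity cancellation would become relevant only if one pushed the Taylor expansion to second order to sharpen the error to $\O(\varepsilon^{-1})$, which the lemma does not require.
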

\begin{proof}
	For simplicity of notation, we denote the norm $|\V{x}|$ by $r$. As a first step, we restate the denominator in the integrand in a more convenient way. Also, without loss of generality we can assume that $x_0$ belongs to the future-directed half of $L_0$. Then
	\begin{equation*}
	\begin{split}
		&\frac{1}{|(t-i\varepsilon)^2-r^2|^4}=\frac{1}{((t-i\varepsilon)^2-r^2)^2\,((t+i\varepsilon)^2-r^2)^2}=\\
		&=\frac{1}{(2r)^4}\left(\frac{1}{(t-i\varepsilon)-r}-\frac{1}{(t-i\varepsilon)+r}\right)^2\,\left(\frac{1}{(t+i\varepsilon)-r}-\frac{1}{(t+i\varepsilon)+r}\right)^2=\\
		&=\frac{1}{(2r)^4}\left[\frac{1}{(t-i\varepsilon)-r}\left(1-\frac{(t-i\varepsilon)-r}{(t-i\varepsilon)+r}\right)\right]^2\,\left[\frac{1}{(t+i\varepsilon)-r}\left(1-\frac{(t+i\varepsilon)-r}{(t+i\varepsilon)+r}\right)\right]^2=\\
		&=\frac{1}{((t-i\varepsilon)-r)^2\,((t+i\varepsilon)-r)^2}\,\underbrace{\frac{1}{(2r)^4}\left[\left(1-\frac{(t-i\varepsilon)-r}{(t-i\varepsilon)+r}\right)^2 \left(1-\frac{(t+i\varepsilon)-r}{(t+i\varepsilon)+r}\right)^2\right]}_{:=h_\varepsilon(t,\V{x})} .
	\end{split}
	\end{equation*}
Now, notice that for any $\varepsilon\in [0,\varepsilon_{\max})$, the function $h_\varepsilon$ is infinitely differentiable on $B_r(x_0)$ and does not vanish on $B_r(x_0)\cap L_0$, i.e. at $t=r$.
In particular we can include the factor $h_\varepsilon$ into the function $f$. The product $f_\varepsilon:=fh_\varepsilon$ is still an element of  $C_0^\infty(B_\delta(x_0),\R)$ and $fh_\varepsilon(x_0)\neq 0$. Moreover, the pointwise limit $\varepsilon\searrow0$ of this function together with all its derivatives is well-defined. In particular, for every $\V{x}\in B_\delta(\V{x}_0)$ we have
$$
f_\varepsilon(|\V{x}|,\V{x})\to \frac{f(|\V{x}|,\V{x})}{16|\V{x}|^4}\ge 0\quad\mbox{as }\varepsilon\searrow 0,
$$ 
which is non-zero at $x_0$.
Finally, notice that 
\begin{equation}\label{boundderivatives}
\sup_{\varepsilon\in [0,\varepsilon_{\max})}\|\partial^\alpha f_\varepsilon\|_\infty<\infty\quad\mbox{for any multi-index $\alpha$.}
\end{equation} 
Therefore, we can focus on the integral
\[ 
S_\varepsilon:=\int_{\R^4} f_\varepsilon(x)\, \frac{1}{((t-i\varepsilon)-r)^2\,((t+i\varepsilon)-r)^2}\, d^4x \:. \]
This integral can be factorized into a time and a spatial integral according to Fubini's theorem,
\begin{equation}\label{integraltoevaluate}
S_\varepsilon=\int_{B_\delta(\V{x_0})}d^3\V{x}\int_{\R}dt\,\frac{f_\varepsilon(t,\V{x})}{((t-i\varepsilon)-r)^2\,((t+i\varepsilon)-r)^2} \:,
\end{equation}
where we chose as effective domain of integration the tube $\R\times B_\delta(\V{x_0})$. 
For any $\V{x}\in B_\delta(\V{x_0})$, the intersection $\R\times\{\V{x}\}\cap L_0$ determines a unique point  $(r,\V{x})\in L_0$, where again $r:=|\V{x}|$. We now consider the Taylor expansion of $f_\varepsilon$ in the time variable on $\R$ up to the first order around the point $r$.
So, for every $(t,\V{x})\in \R\times\{\V{x}\}$ there exists $c(\varepsilon, t,\V{x})\in \R$ such that
\[ 
f_\varepsilon(t,\V{x})=f_\varepsilon(r,\V{x})+\partial_0 f_\varepsilon(r,\V{x})(t-r)+ \frac{1}{2}\,\partial_{0}^2f_\varepsilon(c(\varepsilon, t,\V{x}),\V{x})\,(t-r)^2 \:. \]
The last summand is the remainder, and we denote it by~$R_\varepsilon(t,\V{x})$. Note that the expression $\partial_{0}^2f_\varepsilon(c(\varepsilon,t,\V{x}),\V{x})$ can be bounded uniformly from above by some constant $M$ which is independent of $\V{x}$ and $\varepsilon$, as follows from \eqref{boundderivatives}. Using this expansion in~\eqref{integraltoevaluate}, we obtain
\begin{equation*}
\begin{split}
	\int_{\R}\frac{f_\varepsilon(t,\V{x})}{((t-i\varepsilon)-r)^2\,((t+i\varepsilon)-r)^2}\, dt=f_\varepsilon(r,\V{x})\, I_0(r,\varepsilon)+\partial_0 f_\varepsilon(r,\V{x})\, I_1(r,\varepsilon)+ R_\varepsilon(\V{x})
\end{split}
\end{equation*}
where $R_\varepsilon(\V{x})$ is the part of the integral which contains the remainder, while
\begin{align*}
I_0(r,\varepsilon)&:=\int_{\R}\frac{1}{((t-i\varepsilon)-r)^2\,((t+i\varepsilon)-r)^2}\, dt=\frac{4\pi}{\varepsilon^3} \\
I_1(r,\varepsilon)&:=\int_{\R}\frac{t-r}{((t-i\varepsilon)-r)^2\,((t+i\varepsilon)-r)^2}\, dt=0 \:,
\end{align*}
where the last integrals were computed with residues.

It remains to estimate the remainder. Given $\delta>0$ small enough,
\begin{align*}
|R_\varepsilon(\V{x})|\le&\,\int_\R|R_\varepsilon(t,\V{x})|\,dt\le \frac{M}{2}\int_\R \frac{(t-r)^2}{|((t-i\varepsilon)-r)\,((t+i\varepsilon)-r)|^2}\, dt=\\[0.1em]
=&\ \frac{M}{2}\int_{r-\delta}^{r+\delta} \frac{(t-r)^2}{((t-r)^2+\varepsilon^2)^2}\, dt+ \frac{M}{2}\int_{\R\setminus [r-\delta,r+\delta]}\frac{(t-r)^2}{((t-r)^2+\varepsilon^2)^2}\, dt\le \\[0,1em]
\le&\ \frac{M}{2}\int_{r-\delta}^{r+\delta} \frac{1}{(t-r)^2+\varepsilon^2}\, dt+ \frac{M}{2}\int_{\R\setminus [r-\delta,r+\delta]}\frac{1}{(t-r)^2+\varepsilon^2}\, dt\le \\[0,1em]
\le&\ \frac{M\delta}{\varepsilon^2}+\frac{M}{2}\int_{\R\setminus [r-\delta,r+\delta]}\frac{1}{(t-r)^2}\, dt\le \frac{M\delta}{\varepsilon^2}+\frac{M}{\delta} \:.
\end{align*}
Using these results in the integral \eqref{integraltoevaluate}, we get
\[ S_\varepsilon= \frac{4\pi}{\varepsilon^3} \int_{B_\delta(\V{x}_0)}f_\varepsilon(|\V{x}|,\V{x})\,d^3\V{x}+\int_{B_\delta(\V{x}_0)} R_\varepsilon(\V{x})\, d^3\V{x} \:, \]
where the second integral can be bounded from above by a term proportional to $\varepsilon^{-2}$, while
$$
\int_{B_\delta(\V{x}_0)}f_\varepsilon(|\V{x}|,\V{x})\,d^3\V{x}\to \int_{B_\delta(\V{x}_0)}\frac{f(|\V{x}|,\V{x})}{16|\V{x}|^4}\,d^3\V{x}>0\quad \mbox{as }\varepsilon\searrow 0 \:.
$$ 
This concludes the proof.
\end{proof}

This lemma can be applied to the integral~\eqref{integral}, using the expressions \eqref{expressionstoevaluate}. Note that in \eqref{expressionstoevaluate} an additional $\varepsilon$ factor appears, which lowers the order of the divergence. We thus obtain the following result.
\begin{Thm}\label{theoremasymptotics}
	Let $y_0\in L_0\setminus\{0\}$ with~$y_3\neq 0$. Let $\delta>0$ such that $0\not\in \overline{B_\delta(y_0)}$. Finally, let $f\in C_0^\infty(B_r(y_0),\R_+)$ such that $f(y_0)\neq 0$. Then
	$$
\int_{\R^4} f(y)\, \mathfrak{e}_1^\dagger\, A_{0y}^\varepsilon\,\mathfrak{e}_3\, d^4y= h(\varepsilon)\frac{1}{\varepsilon^2} + \O(\varepsilon^{-1}) \:,
	$$
	where $h\in \O(1)$ and $\lim_{\varepsilon\searrow 0}h(\varepsilon)\neq 0$.
\end{Thm}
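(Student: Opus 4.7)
The plan is to reduce Theorem~\ref{theoremasymptotics} to Lemma~\ref{teoremaandamentoasintotico} by inserting the explicit matrix elements \eqref{expressionstoevaluate}. Writing $A_{0y}^\varepsilon=a_0^\varepsilon(y)+m\,a_1^\varepsilon(y)+R_m^\varepsilon(y)$, with $R_m^\varepsilon$ collecting all contributions of quadratic or higher order in the mass, the two leading pieces of the integral take the form
\[
-\frac{i\,\varepsilon}{8\pi^6}\int_{\R^4}\frac{f(y)\,y_3}{|(y_0-i\varepsilon)^2-\V{y}^2|^4}\,d^4y \ - \ \frac{m\,\varepsilon}{8\pi^6}\int_{\R^4}\frac{f(y)\,y_0\,y_3}{|(y_0-i\varepsilon)^2-\V{y}^2|^4}\,d^4y,
\]
the first purely imaginary and the second purely real. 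The crucial feature is the explicit $\varepsilon$ factor in the numerators of \eqref{expressionstoevaluate}, which turns the $1/\varepsilon^3$ behavior supplied by Lemma~\ref{teoremaandamentoasintotico} into the desired $1/\varepsilon^2$.

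The next step is to absorb the polynomial prefactors into modified test functions $\tilde f_0(y):=f(y)\,y_3$ and $\tilde f_1(y):=f(y)\,y_0\,y_3$. The third spatial component of the base point is nonzero by hypothesis, and its time component is also nonzero because $y_0\in L_0\setminus\{0\}$ forces the modulus of the time component to equal the nonzero spatial radius; hence, after shrinking the support radius $r$ if necessary, both coordinate functions $y\mapsto y_3$ and $y\mapsto y_0$ retain a fixed sign on $\supp f$. The resulting $\tilde f_j$ are then sign-definite $C_0^\infty$ functions that are nonzero at the base point. Applying Lemma~\ref{teoremaandamentoasintotico} to each — its non-negativity hypothesis enters only at the final step to guarantee nonvanishing of the leading coefficient, a conclusion which is equally valid for sign-definite test functions — and multiplying by the overall $\varepsilon$ prefactor then produces contributions of the form $c_j\,h_j(\varepsilon)/\varepsilon^2+\O(\varepsilon^{-1})$ with $h_j\in\O(1)$ and $\lim_{\varepsilon\searrow 0}h_j(\varepsilon)\neq 0$.

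It remains to combine the two pieces and to dispose of $R_m^\varepsilon$. Because the $a_0^\varepsilon$-contribution is purely imaginary while the $a_1^\varepsilon$-contribution is purely real, the two nonzero leading coefficients cannot cancel, so the combined $h(\varepsilon)$ still has a nonzero limit. The higher-mass remainder is handled by the observation already recorded immediately before the theorem: every further factor of $m^2$ in the light-cone expansion brings in an additional scaling factor of $\varepsilon m$, so after integration against $f$ all omitted contributions fit into the $\O(\varepsilon^{-1})$ error. I expect the main obstacle to be precisely this last step — giving a clean, rigorous bookkeeping argument that the mass-corrected terms are genuinely subdominant — since this is where one must invoke the continuum-limit formalism rather than proceed by direct computation.
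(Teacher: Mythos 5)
Your approach coincides with the paper's, which is itself no more than the paragraph immediately preceding the statement: insert the explicit matrix elements~\eqref{expressionstoevaluate} into the integral~\eqref{integral} and invoke Lemma~\ref{teoremaandamentoasintotico}, with the extra factor of $\varepsilon$ in~\eqref{expressionstoevaluate} lowering the divergence from $\varepsilon^{-3}$ to $\varepsilon^{-2}$. The details you supply — absorbing the polynomial prefactors into sign-definite modified test functions, noting that Lemma~\ref{teoremaandamentoasintotico} uses non-negativity of $f$ only to guarantee the leading coefficient is nonzero (a conclusion equally valid for sign-definite $f$), exploiting the real/imaginary split so the $a_0^\varepsilon$ and $a_1^\varepsilon$ contributions cannot cancel, and delegating the higher-mass remainder to the continuum-limit bookkeeping recorded just before the statement — correctly fill in exactly what the published sketch glosses over; the one caveat is that ``shrinking the support radius $r$'' is strictly a modification of the hypothesis rather than a move available in the proof, so what the theorem as stated really presumes is that $\supp f$ is already small enough that $y_3$ and $y_0$ have fixed signs there.
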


\subsection{Causal Fermion Systems in Minkowski Space}\label{subsectionCFSM}
From now on, we always restrict attention to the negative energy subspace $\scH_m^-$. Once the regularization $\gR_\varepsilon$ has been introduced, it is possible to define a function
$$
F^\varepsilon:\R^{1,3}\rightarrow\Lin(\scH_m^-)
$$
which encodes information on the local behavior of the wave functions in~$\scH_m^-$ at any point $x\in\R^{1,3}$
via the identity
\begin{equation}\label{defF}
	\langle u| F^\varepsilon(x)v\rangle=- \Sl \gR_{\varepsilon}u(x) \:|\:\mathfrak{R}_\varepsilon v(x) \Sr\quad\mbox{for any }u,v\in\scH_m^- \:.
\end{equation}
The function~$F^\varepsilon$ is referred to as the \textit{local correlation map}.
The construction and a few properties are summarized in the following theorem.
\begin{Thm}\label{listpropertiesLCF}
	There exists a unique function $F^\varepsilon:\R^{1,3}\rightarrow\Lin(\scH_m^-)$ which fulfills~\eqref{defF}. Moreover the following statements are true.
\vspace{0.2em}
\begin{itemize}[leftmargin=2.5em]
		\item[\rm{(i)}] The image $F^\varepsilon(\R^{1,3})$ is closed. Moreover, the mapping $F^\varepsilon$ is a homeomorphism to its image.\\[-0.2cm]
		\item[\rm{(ii)}] The operator~$F^\varepsilon(x)$ is selfadjoint and
		$$
		F^\varepsilon(x)\,u=2\pi\,P^{\varepsilon}(\,\cdot\,,x)\,\big(\gR_{\varepsilon} u\big)(x) \:.
     	$$\\[-0.9cm]
		\item[\rm{(iii)}] Its image~$\ran F^\varepsilon(x) \subset \H_m^-$ is four-dimensional and 
		$$
	    \ran F^\varepsilon(x)=\mathrm{span}\{	u^\varepsilon_{x,\chi}\:|\:\chi\in \C^4\}\:.
		$$
		\item[\rm{(iv)}] The operator~$ F^\varepsilon(x)|_{\ran F^\varepsilon(x)}$ has two two-fold degenerate eigenvalues given by $\nu^\pm(\varepsilon)$,
		with corresponding eigenvectors\footnote{For simplicity of notation, we here denote the vector $u^\varepsilon_{x,\mathfrak{e}_\mu}$ by $u^\varepsilon_{x,\mu}$.}
\begin{align*} 
		u^\varepsilon_{x,\mu}:&=P^{\varepsilon}(\,\cdot\,,x)\,\mathfrak{e}_\mu,\quad \mu\in\{1,2,3,4\}\\ 
		F^\varepsilon(x)u^\varepsilon_{x,\mu}&=
		\begin{cases}
			2\pi\,\nu^-(\varepsilon)\,u^\varepsilon_{x,\mu} & \mu=1,2\\ 
			2\pi\,\nu^+(\varepsilon)\,u^\varepsilon_{x,\mu} & \mu=3,4
		\end{cases}
\end{align*}
(where $\{\mathfrak{e}_\mu\:|\: \mu\in \{1,2,3,4\}\}$ denotes the canonical basis of $\C^4$). In particular, 
		$$
		\|F^\varepsilon(x)\|=2\pi\,\nu^+(\varepsilon) \:.
		$$\\[-0.8cm]
		\item[\rm{(v)}] The operator $F^\varepsilon(x)$ belongs to $\F$.\\[-0.2cm]
		\item[\rm{(vi)}] The unitary operators~$\mathrm{U}_a$ defined in~\eqref{Udef}
		describe translations of the operators~$F^\varepsilon(x)$ by
\[ F^\varepsilon(x+a)=\mathrm{U}^\dagger_a\,  F^\varepsilon(x)\, \mathrm{U}_a \:. \]
	\end{itemize}
\end{Thm}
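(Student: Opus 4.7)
My plan proceeds in dependency order, postponing the homeomorphism claim~(i) to the end. For existence, uniqueness, selfadjointness and the formula in~(ii), I interpret the right-hand side of~\eqref{defF} as a sesquilinear form on $\H_m^-\times\H_m^-$ and invoke Riesz representation; its boundedness reduces to continuity of the point evaluation $u\mapsto\gR_\varepsilon u(x)\in\C^4$, which follows from the momentum-space formula~\eqref{expressionEonS} and Cauchy--Schwarz against $\mathfrak{g}_\varepsilon\in L^2$. Selfadjointness of $F^\varepsilon(x)$ is immediate from Hermiticity of $\gamma^0$. The explicit formula follows by computing
\begin{equation*}
2\pi\,\la u^\varepsilon_{x,\chi}\,|\,u\ra\,=\,-\Sl\chi\,|\,\gR_\varepsilon u(x)\Sr
\end{equation*}
in the same way as in the proof of Proposition~\ref{formulaprodottoscalarelocalizzati} (isometry of $\hat{\mathrm{E}}$, passage to momentum space, and $p_-\psi=\psi$ on $\H_m^-$); complex conjugating and setting $\chi=\gR_\varepsilon v(x)$ matches~\eqref{defF} and identifies $F^\varepsilon(x)v=2\pi\,u^\varepsilon_{x,\gR_\varepsilon v(x)}$.

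Parts (iii)--(vi) follow directly from~(ii). The inclusion $\ran F^\varepsilon(x)\subset\mathrm{span}\{u^\varepsilon_{x,\chi}:\chi\in\C^4\}$ is immediate, while the reverse requires that $v\mapsto\gR_\varepsilon v(x)$ be surjective onto $\C^4$: any non-zero $\chi$ in the spin-orthogonal complement of the image would force $\la u^\varepsilon_{x,\chi}|u\ra=0$ for every $u$, contradicting~\eqref{estimatenormlocstate}. For~(iv), applying~(ii) to $v=u^\varepsilon_{x,\mu}$ combined with Proposition~\ref{prpexponentsP}(ii) gives $(\gR_\varepsilon u^\varepsilon_{x,\mu})(x)=P^{2\varepsilon}(x,x)\mathfrak{e}_\mu$; the explicit diagonal form from Remark~\ref{remarkbehavior} yields the eigenvalues $2\pi\nu^\pm(\varepsilon)$ on the stated eigenvectors, and $\|F^\varepsilon(x)\|=2\pi\nu^+(\varepsilon)$ because $\nu^+>-\nu^->0$. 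Property~(v) is then automatic: $F^\varepsilon(x)$ has rank four with two positive and two negative eigenvalues counted with multiplicity. Property~(vi) holds because $\mathrm{U}_a$ and $\gR_\varepsilon$ commute as multipliers in momentum space, so $(\gR_\varepsilon\mathrm{U}_a u)(x)=(\gR_\varepsilon u)(x+a)$, which inserted into~\eqref{defF} yields $\mathrm{U}_a^\dagger F^\varepsilon(x)\mathrm{U}_a=F^\varepsilon(x+a)$.

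The main obstacle is~(i). For norm-continuity of $F^\varepsilon$ on $\R^{1,3}$, the telescoping identity $u^\varepsilon_{x,\chi_x}-u^\varepsilon_{y,\chi_y}=u^\varepsilon_{x,\chi_x-\chi_y}+(u^\varepsilon_{x,\chi_y}-u^\varepsilon_{y,\chi_y})$ with $\chi_z:=\gR_\varepsilon v(z)$ gives a Lipschitz bound on $\|F^\varepsilon(x)v-F^\varepsilon(y)v\|$ uniform in $\|v\|\le 1$: the first summand is controlled via~\eqref{estimatenormlocstate} together with a uniform Lipschitz estimate on $\gR_\varepsilon v$ (Cauchy--Schwarz against $|\V{k}|\mathfrak{g}_\varepsilon\in L^2$), the second via smoothness of $P^{2\varepsilon}$ using Proposition~\ref{formulaprodottoscalarelocalizzati}(i). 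Injectivity is the subtlest step: if $F^\varepsilon(x)=F^\varepsilon(y)$, equality of the sesquilinear forms~\eqref{defF} together with surjectivity of $u\mapsto\gR_\varepsilon u(\cdot)$ yields a pseudo-unitary $T\in\Lin(\C^4)$ with $\gR_\varepsilon u(x)=T\gR_\varepsilon u(y)$ for every $u$; translation covariance promotes this to the functional equation $(\gR_\varepsilon u)(z)=T(\gR_\varepsilon u)(z+b)$ with $b:=y-x$ for every $z$ and every $u$; passing to momentum space and using that $\mathfrak{g}_\varepsilon>0$ forces $T\chi=e^{ik\cdot b}\chi$ for every $k$ on the negative mass shell and every $\chi\in\mathrm{Im}\,p_-(k)$, and since $T$ has at most four eigenvalues while $k\mapsto e^{ik\cdot b}$ takes infinitely many values for $b\neq 0$, we conclude $b=0$ and hence $x=y$. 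Finally, closedness of the image and continuity of the inverse follow from a properness argument: if $F^\varepsilon(x_n)\to A$ in operator norm but $|x_n|\to\infty$, then for $v,w\in\hat{\mathrm{E}}(\mathcal{S}(\R^3,\C^4))$ the matrix elements $\la w|F^\varepsilon(x_n)v\ra=-\Sl\gR_\varepsilon w(x_n)|\gR_\varepsilon v(x_n)\Sr$ vanish in the limit by spatial and dispersive decay of Schwartz-initial-data Dirac solutions; by density together with the uniform norm bound $\|F^\varepsilon(x_n)\|=2\pi\nu^+(\varepsilon)$ this yields $F^\varepsilon(x_n)\to 0$ weakly, contradicting $\|A\|=2\pi\nu^+(\varepsilon)>0$. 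Bolzano--Weierstrass combined with continuity and injectivity then identifies $A=F^\varepsilon(x)$ for a unique $x$, and the same properness argument yields continuity of the inverse.
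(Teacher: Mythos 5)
Your proposal is correct, and it is necessarily an independent route: the paper does not prove Theorem~\ref{listpropertiesLCF} at all, but defers to \cite[Section~1.2]{cfs} and \cite[Sections~4.2, 5.4, 6.2.1]{oppio}, so what you have written is a self-contained substitute for that citation. Parts (ii)--(vi) follow the standard path (Riesz representation for the bounded Hermitian form built from the point evaluation $u\mapsto \gR_\varepsilon u(x)$, the identity $2\pi\la u^\varepsilon_{x,\chi}|u\ra=-\Sl\chi|\gR_\varepsilon u(x)\Sr$, Proposition~\ref{prpexponentsP} and Remark~\ref{remarkbehavior} for the eigenvalues), and your derivation of surjectivity of $v\mapsto\gR_\varepsilon v(x)$ from non-degeneracy of the spin scalar product plus \eqref{estimatenormlocstate} is a nice way to obtain regularity without citing it separately. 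The two places where genuine work is required are both in (i), and your arguments there hold up: for injectivity, the intertwiner $T$ is well defined because surjectivity of $\Phi_x,\Phi_y$ and non-degeneracy of $\Sl\cdot|\cdot\Sr$ force $\ker\Phi_x=\ker\Phi_y$, and the eigenvalue-counting argument ($T$ would need the continuum of eigenvalues $e^{ik\cdot b}$, $k$ on the mass shell, unless $b=0$) is clean and arguably more transparent than the explicit estimates on $P^{2\varepsilon}(x,y)$ used in \cite{oppio}; for closedness and continuity of the inverse, the properness argument via the uniform norm $\|F^\varepsilon(x_n)\|=2\pi\nu^+(\varepsilon)$ together with spatial decay and the Klein--Gordon dispersive estimate for Schwartz-data solutions is in the same spirit as the reference. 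The only points you should flesh out in a full write-up are routine: the ``a.e.\ in $\V{k}$'' versus ``every $k$'' passage when extracting $T\chi=e^{ik\cdot b}\chi$ (fixed by testing with continuous $\psi$), and the uniform-in-$(t,\V{x})$ decay as $|x|\to\infty$, which needs the splitting into $|t|\ge R/2$ (dispersive decay) and $|t|<R/2$ (Schwartz decay in $\V{x}$ with seminorms polynomial in $t$).
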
 
\begin{proof}
The proof of all the statements can be carried out by adapting the results in \cite[Section 1.2]{cfs} and in \cite[Sections 4.2, 5.4 and 6.2.1]{oppio} to the cutoff function chosen in this paper.
\end{proof}

From point (iv) we see that every $ F^\varepsilon(x)$ has the spectral decomposition
\begin{equation}\label{decompositionF}
F^\varepsilon(x)= F^\varepsilon_+(x)+ F_-^\varepsilon(x):=2\pi\,\nu^+(\varepsilon)\pi_x^++2\pi\,\nu^-(\varepsilon)\pi_x^-
\:,
\end{equation}
where $\pi_x^\pm$ are the projection operators on the
corresponding two-dimensional eigen\-spa\-ces. In particular, we obtain
\beq \label{trvacdef}
\mbox{tr}_{\rm{vac}}^\varepsilon :=\mbox{tr}( F^\varepsilon(x))= 4\pi(\nu^+(\varepsilon)+\nu^-(\varepsilon))=\frac{2 m}{(2\pi)^{3}}\left\|\frac{\mathfrak{g}^2_\varepsilon}{\omega}\right\|_{\scL^1}= 8\pi\,\frac{m}{\varepsilon^{2}}\,f(\varepsilon) \:.
\eeq
The trace of the local correlation operators is independent of the spacetime point. This is an obvious consequence of translation invariance as made explicit by point (vi) above.

By means of the topological identification of $\R^{1,3}$ and its image under~$F^\varepsilon$,
it is possible to introduce a causal fermion system by taking the push-forward of the Lebesgue measure of $\R^{1,3}$,
$$\rho_{\rm{vac}}:=F^\varepsilon_*(\mu).$$
\begin{Def} \label{defregDir}
	The causal fermion system $(\scH_m^-, \F, \rho_{\text{vac}})$ is referred to as the
	{\bf{regularized Dirac sea vacuum}}.
\end{Def}
In particular, thanks to point (i) of Theorem \ref{listpropertiesLCF}, we see that the mapping~$F^\varepsilon$
gives a one-to-one correspondence between points in Minkowski space
and points in the support of the measure~$\rho_{\rm{vac}}$,
\beq \label{id1}
M_{\rm{vac}}:=\supp\rho_{\rm{vac}}=F^\varepsilon(\R^{1,3}) \:.
\eeq
Moreover, for every $x\in\R^{1,3}$ there is a canonical identification of the space of Dirac spinors
with the spin space at the corresponding point of~$F^\varepsilon(x) \in M_{\rm{vac}}$
\beq \label{id2}
\Phi_x : S_{F^\varepsilon(x)}=\ran F^\varepsilon(x) \ni u\mapsto \gR_{\varepsilon} u(x)\in  \C^4
\eeq
(for details see~\cite[Proposition~1.2.6]{cfs} or~\cite[Theorem~4.16]{oppio}).
This identification preserves the spin scalar products as given by~\eqref{ssp} and~\eqref{sspMink};
cf.~\eqref{defF}. 
Every spin space $S_{F^\varepsilon(x)}$ is four-dimensional, hence it has maximal rank as an element of $\F$ (see Definition \ref{defCFS}). This fact is referred to as the \textit{regularity} of the causal fermion system. Equivalently, this property can be restated as follows (see \cite[Lemma 4.17 and Section 5.1]{oppio}):
\begin{center}
	\textit{For every $x\in\R^{1,3}$ and for every $\mu\in\{1,2,3,4\}$ there exists  \\[0.2em] a smooth $u_\mu\in \scH_m^-$
		such that $u_\mu(x)=\mathfrak{e}_\mu\:,$}
\end{center}
where~$\{\mathfrak{e}_\mu,\ \mu\in \{1,2,3,4\}\}$ denotes again the canonical basis of~$\C^4$.

As a final remark, we point out that the identification~\eqref{id2} allows for an explicit realization of the kernel of the fermionic projector as defined abstractly in~\eqref{Pxydef} in terms of the regularized bi-distribution introduced in~\eqref{bidistributionP} (for details see for example~\cite[Theorem 5.18]{oppio}),
\begin{equation}\label{id3}
\Phi_x\, \mathrm{P}\big(F^\varepsilon(x), F^\varepsilon(y) \big)\,\Phi_y^{-1}=2\pi\,P^{2\varepsilon}(x,y) \:.
\end{equation}
This identity will be exploited in the next section in the proof of irreducibility of the local algebras.
Indeed, this formula allows us to translate the irreducibility of the Dirac matrices on $\C^4$ which appear in $P^{2\varepsilon}(x,y)$ to  irreducibility of the operators $ F^\varepsilon(x)$ on~$\scH_m^-$.

The identification~\eqref{id3} also gives a corresponding realization of the
closed chain~$\mathrm{A}_{\mathrm{x}\mathrm{y}}$ as defined in~\eqref{Axydef} in terms of the product of the two regularized bi-distributions,
\begin{equation}\label{id4}
\begin{split}
A_{xy}^\varepsilon&:=P^{2\varepsilon}(x,y)P^{2\varepsilon}(y,x)=(2\pi)^{-2}\,\Phi_x\, \mathrm{P}\big(F^\varepsilon(x), F^\varepsilon(y) \big)\,\mathrm{P}\big(F^\varepsilon(y), F^\varepsilon(x) \big)\,\Phi_x^{-1}=\\
&=\Phi_x\,\big((2\pi)^{-2}\, \mathrm{A}_{F^\varepsilon(x)F^\varepsilon(y)}\big)\,\Phi_x^{-1}.
\end{split}
\end{equation}
Computing the eigenvalues of this matrix, one finds that
the causal structure of Definition~\ref{def2} agrees with the causal structure of Minkowski space
in the limit~$\varepsilon \searrow 0$
(see~\cite[Proposition~1.2.10]{cfs}).

When working in Minkowski space, we will often make use of the above identifications~$x \simeq F^\varepsilon(x)$ and~$S_x \simeq \C^4$.
For example, the orthogonal projector onto the image of~$F^\varepsilon(x)$ will be denoted simply by $\pi_x$ instead of $\pi_{F^\varepsilon(x)}$. Likewise, the the spin spaces will simply be denoted by~$S_x$.

\section{Local Algebras for Causal Fermions Systems} \label{secalgebras}
\subsection{Definition of the Local Algebras}
Consider a causal fermion system $(\scH,\F,\rho)$ and a open set $\Omega\subset M:=\supp\rho$.
In view of the later constructions in Minkowski space, it is preferable to
denote all the operators corresponding to spacetime points in~$\Omega$ by
\[  {\mycal X}_\Omega := \{x \:|\: x \in \Omega \} \:. \]
In order to construct an algebra generated by these operators, it is of technical advantage
to smear out the operators by continuous functions with compact support within~$\Omega$,
\[ 
A_f := \int_M\: f(x)\, x\: d\rho(x) \quad f\in C_0^0(\Omega,\C)
\]
(in the sense of a Bochner integral).
It is actually not necessary to consider the whole class of compactly supported continuous function on $\Omega$. In general, it suffices to consider a subset of~$C^0_0(\Omega, \C)$ which is rich enough to allow for a reconstruction of the operators $x$ as limit points, as we now explain.
\begin{Def}
Given~$x_0\in M$, a {\bf{Dirac sequence}} at $x_0$ is a sequence~$(f_n)_{n\in\N}$
of functions in~$C^0_0(M,\R^+)$ such that $\|f_n\|_{\scL^1}=1$ and $\supp f_n\subset B_{1/n}(x_0)$. 
\end{Def} \noindent
It is always possible to construct such an object. Indeed, choosing a mollifier $h\in C^\infty_0((-1,1))$, 
the function
\[ 
f_n(x) := \frac{h\left( n \: \|x-x_0\| \right)}{\int_M h(n\|x-x_0\|)\,d\rho(x)}
\]
is a Dirac sequence at~$x_0$.

We choose a subspace~$C^\flat_0(\Omega, \C)$ in~$C^0_0(\Omega, \C)$ 
(which later on will be the smooth functions in Minkowski space) which contains a Dirac sequence
at every point of $\Omega$ and introduce the set
$$
L_\Omega := \{ A_f \:|\: f \in C^\flat_0(\Omega, \C) \} \:.
$$
We now take the $^*$-algebra generated by this set, defined for a generic subset $\mathcal{A}\subset \Lin(\scH)$
by
	$$
	\langle \mathcal{A} \rangle := \left\{ \sum_{k=1}^N\sum_{i_1,\dots,i_k=1}^{n_k}\lambda ^{i_1,\dots,i_k}A_{i_1}\cdots A_{i_k}\ \bigg|\  N,n_k\in\N,\ A_i\in\mathcal{A}\right\}.
	$$

\begin{Remark}\label{remarknonunital}
	Note that the identity operator is not included in this definition of generated $^*$-algebra $\langle \mathcal{A} \rangle$. This will be crucial later in the analysis of the connection between the local algebras and the light-cone structure in Minkowski space (see for example Remark \ref{remarkreasonnidentity}).
\end{Remark}
\begin{Def}
Let $\Omega\subset M$ be open, then the $^*$-algebra
\[ \A_\Omega := \la L_\Omega \ra \]
is referred to as the {\bf{local algebra}} corresponding to~$C^\flat_0(\Omega, \C) \subset C^0_0(\Omega, \C)$.
\end{Def} \noindent

We now prove that the uniform closure of this local algebra contains all spacetime operators of $\mycal{X}_\Omega$.
\begin{Prp}\label{closurecoincide} The uniform closures of~$\mycal{A}_\Omega$ contains~$\mycal{X}_\Omega$.
Even more,
\[ \overline{\mycal{A}_\Omega} = \overline{\la \mycal{X}_\Omega \ra} \;\subset\; K(\H)\:, \]
where~$K(\H)$ are the compact operators on~$\H$.
\end{Prp}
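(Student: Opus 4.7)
The plan is to establish three items: (a) $\mycal{X}_\Omega \subset \overline{\A_\Omega}$; (b) $\overline{\A_\Omega} \subset \overline{\la \mycal{X}_\Omega \ra}$; and (c) $\la \mycal{X}_\Omega \ra \subset K(\H)$. Since $\overline{\A_\Omega}$ is automatically a $^*$-subalgebra of $\Lin(\H)$, item (a) gives $\la \mycal{X}_\Omega \ra \subset \overline{\A_\Omega}$, hence one inclusion of closures; (b) gives the other. Finally (c) together with the closedness of $K(\H)$ in $\Lin(\H)$ yields $\overline{\la \mycal{X}_\Omega \ra} \subset K(\H)$.

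For (a), fix $x_0 \in \Omega$ and, using that $C^\flat_0(\Omega,\C)$ contains a Dirac sequence at every point of $\Omega$, pick $(f_n)_{n\in\N} \subset C^\flat_0(\Omega,\C)$ at $x_0$ (for $n$ large, $B_{1/n}(x_0) \subset \Omega$ since $\Omega$ is open in $M$). Since $f_n \geq 0$ and $\|f_n\|_{\scL^1}=1$, one has $\int_M f_n \, d\rho = 1$, so
\[
A_{f_n} - x_0 \;=\; \int_M f_n(x)\, (x - x_0)\, d\rho(x),
\]
and the standard sup-norm bound for Bochner integrals gives
\[
\|A_{f_n} - x_0\| \;\leq\; \int_M f_n(x)\, \|x - x_0\|\, d\rho(x) \;\leq\; \sup_{x \in B_{1/n}(x_0)}\|x - x_0\|,
\]
which tends to zero as $n \to \infty$ because the inclusion $M \hookrightarrow \Lin(\H)$ is continuous. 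Hence $x_0 = \lim_n A_{f_n} \in \overline{\A_\Omega}$.

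For (b), it suffices to approximate each generator $A_f$ in sup-norm by elements of $\la \mycal{X}_\Omega \ra$. Let $K := \supp f \subset \Omega$, which is compact in the sup-norm topology. Cover $K$ by finitely many balls of sup-norm radius $\delta/2$, disjointify them into Borel pieces $E_1,\dots,E_N$ of diameter $\leq \delta$, pick representatives $x_i \in E_i$, and form the Riemann-type sum
\[
S_\delta \;:=\; \sum_{i=1}^N f(x_i)\, \rho(E_i)\, x_i \;\in\; \la \mycal{X}_\Omega \ra.
\]
A direct computation gives
\[
A_f - S_\delta \;=\; \sum_{i=1}^N \int_{E_i} \bigl( f(y)\, y - f(x_i)\, x_i \bigr)\, d\rho(y),
\]
and uniform continuity of $f$ and of the identity $y \mapsto y$ on the compact set $K$ makes the integrand uniformly small in sup-norm as $\delta \to 0$. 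Using $\rho(K) < \infty$ (since $\rho$ is finite on compacts), this yields $\|A_f - S_\delta\| \leq \varepsilon\, \rho(K)$.

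Item (c) is immediate: every $x \in \mycal{X}_\Omega$ has rank at most four by Definition~\ref{defCFS} and is therefore compact; since the compact operators form a $^*$-ideal in $\Lin(\H)$, arbitrary finite sums and products of such operators remain compact. The main technical step is the Riemann-sum approximation in (b); its two nontrivial ingredients are the continuity of the inclusion $M \hookrightarrow \Lin(\H)$ (so that $y \mapsto y$ is uniformly continuous on $K$) and the local finiteness of $\rho$ on compact sets that keeps the error bound $\varepsilon\, \rho(K)$ under control.
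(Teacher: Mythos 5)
Your proof is correct and follows essentially the same route as the paper: Dirac sequences to get $\mycal{X}_\Omega \subset \overline{L_\Omega}$, a partition-of-the-support Riemann-sum argument to get $L_\Omega \subset \overline{\la \mycal{X}_\Omega \ra}$, and finite-rank $\Rightarrow$ compact for the last inclusion. The only difference is cosmetic: in step (b) you extract an explicit $\varepsilon\,\rho(K)$ error bound directly from the uniform continuity of $x\mapsto f(x)\,x$ on the compact support, whereas the paper observes the same uniform continuity but then routes through pointwise convergence of step functions and Lebesgue's dominated convergence theorem; your version is slightly more streamlined but mathematically equivalent.
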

\begin{proof} We first show that~$\mycal{X}_\Omega \subset \overline{L_\Omega}$. This will imply that~$\la {\mycal X}_\Omega \ra \subset \langle\, \overline{L_\Omega}\,\rangle\subset \overline{\mycal{A}_\Omega}$, where the last inclusion follows from the fact that the closure of a $^*$-algebra is again a $^*$-algebra.
For any~$x_0 \in \Omega$, we choose a Dirac sequence~$f_n$. Then
\begin{equation*}
		\begin{split}
			\left\|A_{f_n}- x_0\right\|&=\left\|\int_\Omega f_n(x)\, x\, d\rho(x)-\int_\Omega f_n(x)\,x_0\, d\rho(x)\right\|=\\
			&=\left\|\int_\Omega f_n(x)( x-x_0)\, d\rho(x)\right\|\le \int_\Omega f_n(x)\| x-x_0 \|\, d\rho(x).
		\end{split}
	\end{equation*}
Since the function $f$ is supported in $B(x_0,1/n)$, it is easy to see that the right-hand side above converges to zero
as $n \rightarrow \infty$. This proves that $ x_0\in \overline{L_\Omega}$. 
	
The opposite inclusion follows immediately, once we have shown that $L_\Omega\subset \overline{ \la \mycal{X}_\Omega \ra}$. So, let $f\in C_0^\flat(\Omega,\C)$. For simplicity of notation, let us denote the (compact) support of $f$ by $R$.  Since the function $F: M\ni  x\mapsto f(x)\, x\in \F$ is continuous on the compact set $R$, it is also uniformly continuous, i.e. for any $\varepsilon>0$ there exists $\delta>0$ such that for any $x,y\in R$, $\|x-y\|\le \delta$ implies $\|F(x)-F(y)\|\le \varepsilon$. Now, choose for any $n\in\N$ a set of points $\{x_i\}_{i=1,\dots,N_n}\subset R$ such that $\{B(x_i,1/n)\}_{i=1,\dots,N_n}$ covers $R$ (this can always be done, thanks to the compactness of $R$). Define the measurable sets $C_{n,i}:= R\cap B(x_i,1/n)\setminus\bigcup_{k=1}^{i-1}B(x_k,1/n)$. These sets give a partition of $R$ into disjoint measurable sets. At this point we define the measurable step function on $M$:
$$
S_n:= \sum_{i=1}^{N_n} F(x_i) \: \chi_{C_{n,i}} \:.
$$
We claim that $S_n$ converges pointwise to $F$. Indeed, if $x\not\in R$, then both $S_n(x)$ and $F(x)$ vanish, so there is nothing to prove. Suppose instead that $x\in R$ and fix $\varepsilon$. Let $\delta>0$ given as above and choose $\bar{n}$ in a way that $1/\bar{n}<\delta$. Then $x\in C_{\bar{n},i_0}$ for some index $i_0$. Then $\|x-x_{i_0}\|\le 1/\bar{n}<\delta$ and so:
$$
\|S_{\bar{n}}(x)-F(x)\|=\|F(x_{i_0})-F(x)\|\le \varepsilon.
$$
The same is of course true for any $n\ge \bar{n}$, showing that $S_n(x)$ converges to $F(x)$.
Finally, notice that $\sup_{x\in R}\|S_n(x)\|\le \sup_{x\in R}\|F(x)\|< \infty $, again by continuity of $F$ and compactness of $R$. Since $R$ is compact and $\rho$ is finite on compact sets,
we can apply Lebesgue's dominated convergence theorem and prove that 
$$
\sum_{i=1}^{N_n}x_i\, f(x_i)\,\rho(C_{n,i})=\sum_{i=1}^{N_n}F(x_i)\rho(C_{n,i})=\int_\Omega S_n(x)\, d\rho(x)\to \int_\Omega F(x)\, d\rho(x)=A_f
$$
Since the elements on the left-hand side belong to $\la \mycal{X}_\Omega \ra$, the result follows.

It remains to show that the operators in~$\overline{\la {\mycal X}_\Omega \ra}$ are compact.
All operators in~${\mycal X}_\Omega$ have finite rank.
Therefore, also the operators in~$\la \mycal{X}_\Omega \ra$ have finite rank.
Taking their closure, we obtain compact operators.
\end{proof}

\begin{Remark}\label{remarkidentity}
We point out that, if~$\H$ is infinite-dimensional, then the closures of the $^*$-algebras~$\mycal{A}_\Omega$
and~$\la \mycal{X}_\Omega \ra$ do not contain the identity operator. 
This follows immediately from the fact that the identity is not a compact operator.
\end{Remark}

\subsection{The Commutation Relations}\label{commutatinrelationsection}

In this short section we compute the abstract commutation relations between different spacetime points of $M$.

\begin{Prp} \label{prpcomm}
For any~$u \in \H$ and~$\x,\y \in M$,
\[ \la u \,|\, [\x,\y] \, u \ra = -2 i\, \im \Sl \psi^u(\x) \,|\, \mathrm{P}(\x,\y)\, \psi^u(\y) \Sr_{\x} \:. \]
\end{Prp}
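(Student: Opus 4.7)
The plan is to unwind both sides independently and match them. The key identity I would establish first is that for any selfadjoint operator $\x \in \F$, since $\ker \x = S_\x^\perp$ (from selfadjointness and the definition $S_\x = \x(\H)$), one has $\x\, \pi_\x = \x$ and $\pi_\x\, \x = \x$. This kills essentially all projectors appearing in the computation.

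For the left-hand side, I would use selfadjointness of $\x$ and $\y$ to write
\[
\la u \,|\, \y \x\, u \ra = \la \y u \,|\, \x u \ra = \overline{\la \x u \,|\, \y u \ra} = \overline{\la u \,|\, \x \y\, u \ra},
\]
so that $\la u | [\x,\y] u \ra = \la u | \x \y u \ra - \overline{\la u | \x \y u \ra} = 2i \im \la u | \x \y u \ra$.

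For the right-hand side, I would plug in the definitions \eqref{ssp} and \eqref{Pxydef} of the spin scalar product and the kernel of the fermionic projector, together with $\psi^u(\x) = \pi_\x u$, to obtain
\[
\Sl \psi^u(\x) \,|\, \mathrm{P}(\x,\y)\, \psi^u(\y) \Sr_\x
= -\la \pi_\x u \,|\, \x\, \pi_\x\, \y\, \pi_\y\, u \ra .
\]
Now I would collapse the projectors: selfadjointness of $\pi_\x$ moves the outer one to the right, where it is absorbed because $\x \pi_\x \y \pi_\y u$ lies in the range of $\x$ and therefore in $S_\x$; the inner $\pi_\x$ disappears via $\x \pi_\x = \x$; and similarly $\y \pi_\y u = \y u$. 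This leaves exactly $-\la u \,|\, \x \y\, u \ra$. Taking imaginary parts and combining with the first step gives the claim.

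I do not expect any real obstacle here — the statement is a direct unfolding of definitions once one recognizes the projector-absorption identity $\x \pi_\x = \x$ and its companion $\pi_\x \x = \x$. The only care needed is bookkeeping of complex conjugations (the spin scalar product is anti-linear in its first slot), and checking that one genuinely gets the imaginary part and not just twice the matrix element. No nontrivial input beyond selfadjointness of the spacetime point operators is required.
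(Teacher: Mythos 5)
Your proof is correct and follows essentially the same route as the paper: identify $\la u | \x\y u\ra$ with $-\Sl \psi^u(\x) \,|\, \mathrm{P}(\x,\y)\,\psi^u(\y)\Sr_\x$ by unwinding the definitions and absorbing projectors via $\x\pi_\x = \pi_\x\x = \x$ (which holds because $\x$ is selfadjoint with $\ran\x = S_\x$), and then use selfadjointness of $\x,\y$ to see $\la u|\y\x u\ra$ is the complex conjugate, so subtraction gives $-2i$ times the imaginary part. The paper's proof simply records the two equalities and subtracts, while you simplify the commutator to $2i\,\im\la u|\x\y u\ra$ first; this is the same computation, just in a slightly different order.
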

\Proof The identity follows from the computation
\begin{align*}
\la u \,|\, \x \y \, u \ra &= -\Sl \pi_{\x} u \,|\, \pi_{\x} \y\, \pi_{\y} u \Sr_{\x}
= -\Sl \psi^u(\x) \,|\, \mathrm{P}(\x,\y)\, \psi^u(\y) \Sr_{\x} \\
\la u \,|\, \y\x \, u \ra &= \la \x\y \,u \,|\, u \ra
= \overline{ \la u \,|\, \x\y \,u \ra }
= \overline{-\Sl \psi^u(\x) \,|\, \mathrm{P}(\x,\y)\, \psi^u(\y) \Sr_{\x}} \:.
\end{align*}
\QED

In the next sections we analyze the local algebras and the commutation relations
in the example of the regularized Dirac sea vacuum.
In this example, we shall see that the operators corresponding to spacelike separated
points do in general not commute.
This illustrates that {\em{our local algebras are not to be interpreted as algebras of local observables}}
as considered in quantum field theory.
A major difference between our algebras and the usual algebras of local observables
is that our algebras act on the one-particle Hilbert space instead of the Fock space.
As we shall see, despite these major differences, also our local algebras encode
information on the causal structure. However, this information is not retrieved by looking
at the commutation relations, but instead by considering suitable expectation values.

We remark that the usual algebras of local observables can also
be introduced in the setting of causal fermion systems starting
from the Fock space constructions in~\cite{fockbosonic}.
For bosonic fields, one gets the usual canonical commutation relations,
which vanish for spacelike separated points (see~\cite[Section~7.2]{fockbosonic}).
The generalization to include fermionic fields is currently under investigation.
However, these constructions have a different meaning and purpose
than the algebras constructed here.

\section{Regularized Local Algebras in Minkowski Space} \label{secregalg}
Using the identifications~\eqref{id1} and~\eqref{id2}, 
for the case of the regularized Dirac sea vacuum 
we can work directly in Minkowski space~$\R^{1,3}$.
Therefore, in this concrete example the set $\Omega$ will always refer to an open subset of $\R^{1,3}$. Bearing this identification in mind, we keep the notation $\mycal{X}_\Omega$ and $\mycal{A}_{\Omega}$.

\subsection{The Local Set of Spacetime Operators and the Time Slice Axiom} \label{sec41}
Following our general construction, we collect all the local correlation operators
which belong to a given set $\Sigma \subset \R^{1,3}$,
\[ 
{\mycal X}_\Sigma^\varepsilon :=\{F^\varepsilon(x)\:|\: x\in\Sigma \}  \subset \Lin(\scH_m^-) \:. \]
We are interested in understanding
how much information of the system can be extracted from this collection of operators. 
What could one expect?
Recall that the negative-energy solutions of the Dirac equation have the
strong common property that they vanish almost nowhere:
\begin{Prp} \label{prphegerfeldt}
	Let $\Sigma\subset\R^{1,3}$ be an open set. Let~$u\in\scH_m$ vanish on~$\Sigma$, then~$u$ vanishes identically (up to sets of measure zero),
\beq \label{usigzero}
u\!\restriction_\Sigma\,=0 \qquad \Longrightarrow \qquad u = 0\:.
\eeq
In particular, if two solutions $u,v\in\scH_m^-$ coincide on $\Sigma$, then they coincide everywhere.
The statement~\eqref{usigzero} holds true if $\Sigma$ is a Cauchy surface and $u\in \scH_m\cap C^\infty(\R^{1,3},\C^4)$.
\end{Prp}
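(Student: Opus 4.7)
The plan is to address the three claims of Proposition~\ref{prphegerfeldt} by a common circle of ideas: elements of $\scH_m^\pm$ are boundary values in the time variable $t$ of functions holomorphic in opposite half-planes, and combining this holomorphy with standard uniqueness theorems (Hegerfeldt, edge-of-the-wedge) will yield the required rigidity.

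I would first treat the main case $u \in \scH_m^-$ with $\Sigma$ open. Shrinking $\Sigma$ if necessary, one may assume $\Sigma \supset I \times V$ with $I \subset \R$ an open time interval and $V \subset \R^3$ an open ball. The estimate $|e^{i\omega(\V{k})(t+is)}| = e^{-\omega(\V{k}) s}$ for $s > 0$, applied to the representation~\eqref{expressionEonS} with $\psi_+ = 0$, simultaneously yields a holomorphic extension of $t \mapsto u(t, \V{x})$ to the upper half-plane and analyticity of $\V{x} \mapsto u(t+is, \V{x})$ on a complex tube of width $s$ around $\R^3$. The key ingredient is then Hegerfeldt's theorem, applied to the orthogonal projection $P_V$ onto spatial $L^2$-functions supported in $V$: since the generator $-H$ is bounded below by $m$, the non-negative function $t \mapsto \|P_V u(t)\|^2$ is the boundary trace of a bounded holomorphic function on $\{\im t > 0\}$, so vanishing on the interval $I$ forces vanishing on all of $\R$. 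Thus $u(t, \V{x}) = 0$ for a.e.\ $\V{x} \in V$ and every $t \in \R$.

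Next I would promote this to $u \equiv 0$ using analyticity: for fixed $s > 0$ the map $\V{x} \mapsto u(t+is, \V{x})$ is real-analytic on $\R^3$, and by the previous step its boundary trace on the real $t$-line (as a holomorphic function of $t$) vanishes identically for $\V{x} \in V$, hence $u(t+is, \V{x}) = 0$ for all $t \in \R$, $s > 0$ and $\V{x} \in V$. Real-analyticity then forces $u(t+is, \V{x}) = 0$ on all of $\R^3$ for every $s > 0$; passing to the boundary via $L^2$-continuity as $s \searrow 0$ yields $u = 0$. For general $u \in \scH_m$, I would decompose $u = u_+ + u_-$; the relation $u_- = -u_+$ on $\Sigma$, combined with the opposite-half-plane holomorphic extensions of $u_\pm$, yields via the edge-of-the-wedge theorem a joint holomorphic extension through $\Sigma$, and applying the spatial-analyticity step to this extension propagates $u_- = -u_+$ to all of $\R^{1,3}$. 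Orthogonality of $\scH_m^\pm$ then gives $u_\pm = 0$. The Cauchy-surface statement is independent and much simpler: for smooth $u$, vanishing on a Cauchy surface is vanishing of the Cauchy data, and uniqueness of the Cauchy problem for the symmetric hyperbolic Dirac system (standard energy estimate) gives $u \equiv 0$.

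The main obstacle I anticipate is the rigorous handling of the $L^2_{\rm loc}$-regularity: $u$ is only defined up to null sets in general, so one cannot directly speak of pointwise values $u(t, \V{x})$. Hegerfeldt's device of replacing pointwise vanishing on $I \times V$ by vanishing of the smeared quantity $\|P_V u(t)\|^2$ turns the hypothesis into a statement about a genuinely smooth function of $t$ admitting a bona fide holomorphic extension; pushing this back through the boundary limit $s \searrow 0$ to control the a.e.\ $L^2_{\rm loc}$ values of $u$, and verifying the hypotheses of the edge-of-the-wedge theorem for the pair $(u_+, -u_-)$ in the same distributional sense, is where most of the care will be required.
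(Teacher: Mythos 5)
Your treatment of the core case ($u\in\scH_m^-$, $\Sigma$ open) takes a genuinely different route from the paper. The paper mollifies $u$ so that its momentum density becomes integrable, writes the mollified solution as the Fourier transform of a complex measure supported on the past mass shell inside $J_0^-$, and then invokes a black-box uniqueness theorem for Fourier transforms of cone-supported measures (\cite[Corollary~3.6]{CB}); injectivity of the mollification finishes the argument. You instead run a hands-on one-complex-variable argument: Hegerfeldt's $H^\infty$-uniqueness in the time variable to get $P_V u(t)=0$ for all $t$, then spatial analyticity of $u(t+is,\cdot)$ for $s>0$ to propagate the vanishing from $V$ to all of $\R^3$. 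Both rest on the same forward-tube holomorphy and your route is viable, but one step needs repair: from ``$u=0$ a.e.\ on $\R\times V$'' you cannot pass to ``$u(t+is,\V{x})=0$ for $\V{x}\in V$'' by speaking of the boundary trace of $\tau\mapsto u(\tau,\V{x})$ at fixed $\V{x}$, because for $L^2$ data $u$ has no pointwise values on the real boundary and $u(t+is,\cdot)\to u(t,\cdot)$ only in $L^2(\R^3)$. The fix is to smear in space first: for $\phi\in C_0^\infty(V)$ the function $g(\tau)=\int u(\tau,\V{x})\,\phi(\V{x})\,d^3\V{x}$ is bounded holomorphic on $\{\im\tau>0\}$, continuous up to $\R$ (note $u(t+is,\cdot)=\gR_s u(t,\cdot)$ and $\|\gR_s u(t)-u(t)\|$ is independent of $t$), with vanishing boundary values, hence $g\equiv 0$; only then does one get pointwise vanishing of $u(t+is,\cdot)$ on $V$, after which real-analyticity takes over. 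You correctly flag this as the delicate point. (A minor imprecision: $\|P_V u(t)\|^2$ is a sum $\sum_n|h_n(t)|^2$ with each $h_n$ bounded holomorphic in the half-plane, not itself the boundary trace of a holomorphic function; the standard Hegerfeldt argument is applied to each $h_n$.)

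The edge-of-the-wedge step for general $u\in\scH_m$ is a genuine error, because the claim it aims at is false: any solution with compactly supported Cauchy data vanishes, by finite propagation speed, on the nonempty open set $\{|\V{x}|>R+|t|\}$ without vanishing identically. The paper itself states, in the paragraph following the proposition, that the claim fails once the restriction to negative energies is dropped, and its proof treats only $\scH_m^-$ (the ``$u\in\scH_m$'' in the statement is evidently a slip for $\scH_m^-$). Concretely, the edge-of-the-wedge theorem only yields a single holomorphic function on the union of the forward tube, the backward tube and a complex neighborhood of $\Sigma$; this domain meets $\R^{1,3}$ only near $\Sigma$, so the identity theorem does not force the boundary values $u_-$ and $-u_+$ to agree outside $\Sigma$, and the relation $u_-=-u_+$ does not propagate. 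The Cauchy-surface case is handled exactly as in the paper.
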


\begin{proof}
In the case that~$\Sigma$ is a Cauchy surface,
the result follows from the existence and uniqueness of global solutions of linear hyperbolic systems.
The case of~$\Sigma$ being an open set follows from a generalization of
a result by Hegerfeldt in~\cite{hegerfeldt1974remark}. Indeed, Hegerfeldt proves that
any element $u\in\scH_m^-$ which is supported in a \textit{bounded} set must vanish identically. Here we use a more general statement saying that it suffices for the function to be supported in the complement of a non-empty open set in order to vanish identically. 
Nevertheless, we will keep referring to this result as \textit{Hegerfeldt's theorem}. For the proof, we make use of well-known results in Fourier theory and complex analysis. More precisely, we need the following statement (see \cite[Corollary 3.6]{CB}):
\begin{quote}
\label{propositionBeck}
	{\em{Let $\lambda$ be a complex Borel measure on $\R^4$ with support in the past closed light cone $J_0^-:=\{k\in J_0\:|\: k^0\le 0 \}$. Consider the function
	$$
	f(x):=\int_{\R^4}e^{-ik\cdot x}\, d\lambda(k) \:.
	$$
	If $f$ vanishes on an open connected subset $\Omega\subset \R^4$, then $f \equiv 0$ on~$\R^4$.}}
\end{quote}
\noindent We now apply this statement to our proof. 
Following the discussion in \cite[Section 3.2]{oppio},
 we first regularize the elements of $\scH_m^-$ by convolution with a suitable mollifier $h\in C^\infty_0(B(0,\delta),\R_+)$. This gives rise to a bounded injective operator
\[ 
\scH_m^-\ni v\mapsto h*v\in\scH_m^-\cap C^\infty(\R^{1,3},\C^4) \:. \]
Moreover, this mapping can be reformulated in momentum space similar as in Definition \ref{defreg}.
This gives rise to a function $\mathfrak{g}\in\mathcal{S}(\R^3,\C^4)$ such that
$$
\scH_m^-\ni\hat{\mathrm{E}}(\psi)\mapsto \hat{\mathrm{E}}(\mathfrak{g}\,\psi)\in\scH_m^-\cap C^\infty(\R^{1,3},\C^4)
$$
(cf.\ the end of Section~\ref{sectiondiracequation} for the definition of the operator~$\hat{\mathrm{E}}$). Using that $\mathfrak{g}\,\psi\in L^1(\R^3,\C^4)$, an argument similar as in the proof of Proposition \ref{propositionregularization}~(i) below shows that $\hat{\mathrm{E}}(\mathfrak{g}\,\psi)$ can be written explicitly as 
\begin{equation}\label{measureintegral}
\hat{\mathrm{E}}(\mathfrak{g}\,\psi)(x):=\int_{\R^3}\frac{d^3\V{k}}{(2\pi)^{3/2}}\, \mathfrak{g}(\V{k})\,\psi(\V{k})\,e^{-ik\cdot x}\quad \mbox{(with $k^0=-\omega(\V{k})$)} \:.
\end{equation}
Now, assume there exists $\psi\in\hat{P}_-(\scL^2(\R^3,\C^4))$  such that $\hat{\mathrm{E}}(\psi)\in\scH_m^-$ vanishes almost everywhere on an open set $\Omega\subset\R^{1,3}$. Because the regularization is given as a convolution with a mollifier supported in $B(0,\delta),$ the smooth solution $\hat{\mathrm{E}}(\mathfrak{g}\,\psi)$ vanishes on   the open set
$
\Omega_\delta:=\{x\in\Omega\:|\: d(x,\partial \Omega)>\delta \}\subset\Omega.
$
Now, it is possible to restate the identity~\eqref{measureintegral} as
$$
\hat{\mathrm{E}}(\mathfrak{g}\,\psi)(x)=\int_{\R^4}\,e^{-ik\cdot x}\,d^4\lambda(k) \:,
$$
where $\lambda$ is defined for all Borel subsets $\Delta\subset\R^4$ by
$$
\lambda(\Delta):=\int_{\pi_{\R^3}(\Delta\cap \Omega^-_m)}\frac{\mathfrak{g}(\V{k})\,\psi(\V{k})}{(2\pi)^{3/2}}\, d^4\V{k}\,,
$$
with $\pi_{\R^3}:\R^4\mapsto \R^3$ the canonical projection $(k^0,\V{k})\mapsto \V{k}$ and 
$$
\Omega_m^-:=\{k\in\R^4\:|\:k^2=m^2,\,k^0\le 0\:\}\subset J_0^-
$$  the mass hyperboloid of negative energy. 
Each spinorial component of $\lambda$ defines a complex measure on $\R^4$ with support in $J_0^-$. Therefore, as $\hat{\mathrm{E}}(\mathfrak{g}\,\psi)$ vanishes on $\Omega_\delta$, 
the statement of~\cite[Corollary 3.6]{CB} given above
ensures that $\hat{\mathrm{E}}(\mathfrak{g}\,\psi)=0$ identically. Since the chosen regularization is injective, we
conclude that~$\hat{\mathrm{E}}(\psi)=0$, giving the result.
\end{proof} 
\begin{Remark}
Chapter 3 in \cite{CB} provides a thorough compendium of the main results which concern the impossibility of proper localization schemes in relativistic quantum mechanics.  
A nice analysis on the localization problem for the Dirac equation can be found in~\cite[Section~1.8]{thaller}.
\end{Remark}

\noindent
Proposition \ref{prphegerfeldt} shows that the knowledge of negative-energy solutions of the Dirac equation on any fixed open set
suffices to characterize them globally. This statement is no longer true if the restriction to negative-energy solutions
is dropped. The statement in the case that~$\Sigma$ is a Cauchy surface, however, remains true
for any subspace of Dirac solutions.

The next two results show that the above property is also true on the operator level when a regularization is involved. 
\begin{Lemma}\label{lemmasuffiienza}
Let $\Sigma\subset\R^{1,3}$ be either an open set or the Cauchy surface~$\{t=\text{const}\}$. Then
the spin spaces on~$\Sigma$ span a dense subspace of~$\scH_m^-$,
\[ \overline{ \text{\rm{span}} \bigcup_{x\in\Sigma}S_x } = \scH_m^- \:. \]
\end{Lemma}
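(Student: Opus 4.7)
The plan is to prove density by showing that the orthogonal complement in $\scH_m^-$ of the subspace $\mathrm{span}\bigcup_{x\in\Sigma}S_x$ is trivial. Suppose $v\in\scH_m^-$ satisfies $v\perp S_x$ for every $x\in\Sigma$.

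First I would reduce orthogonality to a pointwise vanishing condition. Since $S_x=\ran F^\varepsilon(x)$ by Theorem~\ref{listpropertiesLCF}~(iii) and $F^\varepsilon(x)$ is selfadjoint by Theorem~\ref{listpropertiesLCF}~(ii), $v\perp S_x$ is equivalent to $F^\varepsilon(x)\,v=0$. Using the explicit formula $F^\varepsilon(x)\,v=2\pi\,P^{\varepsilon}(\,\cdot\,,x)\,\mathfrak{R}_\varepsilon v(x)$ from Theorem~\ref{listpropertiesLCF}~(ii), and noting that the linear map $P^\varepsilon(\,\cdot\,,x)\colon\C^4\to S_x$, $\chi\mapsto u^\varepsilon_{x,\chi}$ is surjective onto the four-dimensional space $S_x$ (Theorem~\ref{listpropertiesLCF}~(iii)) and therefore bijective by dimension counting, I conclude that $F^\varepsilon(x)\,v=0$ if and only if $\mathfrak{R}_\varepsilon v(x)=0$. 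Thus the assumption on $v$ forces $\mathfrak{R}_\varepsilon v$ to vanish pointwise on all of $\Sigma$.

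Second, I would invoke the uniqueness/propagation results already available. By Proposition~\ref{propositionregularization}~(i) the function $\mathfrak{R}_\varepsilon v$ lies in $\scH_m^-\cap C^\infty(\R^{1,3},\C^4)$, i.e.\ it is a smooth negative-energy Dirac solution. If $\Sigma$ is an open set, Hegerfeldt's theorem (Proposition~\ref{prphegerfeldt}) gives $\mathfrak{R}_\varepsilon v\equiv 0$; if $\Sigma$ is a constant-time Cauchy surface, the final clause of Proposition~\ref{prphegerfeldt} (smoothness plus Cauchy-surface vanishing) yields the same conclusion. Finally, since $\ker \mathfrak{R}_\varepsilon=\{0\}$ by Proposition~\ref{propositionregularization}~(iii), it follows that $v=0$, as required.

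The only subtlety, and the step that is worth spelling out carefully in the write-up, is the bijectivity of $P^\varepsilon(\,\cdot\,,x)\colon\C^4\to S_x$: this is what allows the replacement of the four scalar conditions $\Sl\chi\,|\,P^{2\varepsilon}(x,x)\mathfrak{R}_\varepsilon v(x)\Sr=0$ (which, because the spin scalar product is indefinite, are not immediately equivalent to $\mathfrak{R}_\varepsilon v(x)=0$) by the genuine pointwise vanishing statement. Everything else is a direct concatenation of previously established facts, so no serious obstacle is expected.
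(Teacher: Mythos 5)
Your proof is correct and follows essentially the same route as the paper: both reduce $v\perp S_x$ to the pointwise statement $\gR_\varepsilon v(x)=0$ on $\Sigma$, then apply Hegerfeldt's theorem (resp.\ Cauchy uniqueness) to the smooth solution $\gR_\varepsilon v\in\scH_m^-\cap C^\infty$, and finish with $\ker\gR_\varepsilon=\{0\}$. The only cosmetic difference is how regularity is invoked in the first step --- the paper plugs solutions $u_\mu$ with $\gR_\varepsilon u_\mu(x)=\mathfrak{e}_\mu$ into the pairing $\langle v\,|\,F^\varepsilon(x)u\rangle=-\Sl\gR_\varepsilon v(x)|\gR_\varepsilon u(x)\Sr$, whereas you rewrite $v\perp\ran F^\varepsilon(x)$ as $F^\varepsilon(x)v=0$ and use the bijectivity of $\chi\mapsto P^\varepsilon(\,\cdot\,,x)\chi$; both are equivalent expressions of the four-dimensionality of $S_x$.
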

\begin{proof}
Let $S_\Sigma$ denote the closure of the span of the spin spaces on $\Sigma$ and suppose there is $v\in
(S_\Sigma)^\perp$. Then for every $x\in\Sigma$ and every $ u\in\scH_m^- $, we have
$0=(v| F^\varepsilon(x) u)=-\Sl \gR_{\varepsilon} v(x) | \, \gR_{\varepsilon}u(x)\Sr$.
Since the regularized Dirac see vacuum is regular (see the end of Section \ref{subsectionCFSM}), for every~$x\in\Sigma$ there exist~$u_\mu\in\scH_m^-$ with~$\gR_\varepsilon u_\mu (x)=\mathfrak{e}_\mu$ for $\mu\in \{1,2,3,4\}$. Plugging these vectors in the identity above we see that $\gR_\varepsilon v(x)=0$ for all $x\in\Sigma$. Now, since $v\in\scH_m^-$ we have $\gR_{\varepsilon} v\in\scH_m^-\cap C^\infty(\R^{1,3},\C^4)$ (see Proposition \ref{propositionregularization}~(i)) and therefore $\gR_{\varepsilon} v$ is a smooth solution of the Dirac equation. Since it vanishes on a Cauchy surface or on a open set, it must vanish everywhere in spacetime and thus $\gR_{\varepsilon} v=0$. Finally, since $\ker\gR_{\varepsilon}=\{0\}$
(see Proposition~\ref{propositionregularization}~(iii)), it follows that~$v=0$, concluding the proof.
\end{proof}

In the language of operator algebras, this result can be stated in terms of
local correlation operators as follows.
\begin{Thm}\label{maintheorem}
Let $\Sigma\subset\R^{1,3}$ be either an open set or the Cauchy surface~$~\{t=\text{const}\}$.
Then the set ${\mycal X}_\Sigma^\varepsilon $ is irreducible, i.e. $({\mycal X}_\Sigma^\varepsilon)'=\C\, \bI$.
\end{Thm}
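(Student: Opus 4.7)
The plan is to show that every $T \in ({\mycal X}_\Sigma^\varepsilon)'$ must be a scalar multiple of the identity. Since the commutant of a family of self-adjoint operators is a $^*$-closed algebra, it suffices to treat the case $T = T^*$ (a general $T$ decomposes as $A + iB$ with both summands in the commutant). The first observation is that $T F^\varepsilon(x) = F^\varepsilon(x) T$ forces $T$ to preserve the four-dimensional range $S_x = \ran F^\varepsilon(x)$ for every $x \in \Sigma$, because $F^\varepsilon(x)(Tu) \in S_x$ for any $u$. Via the spinor identification $\Phi_x : S_x \to \C^4$ of~\eqref{id2}, I introduce the matrices $\tilde T_x := \Phi_x\, T|_{S_x}\, \Phi_x^{-1} \in M_4(\C)$. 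By the spectral decomposition~\eqref{decompositionF}, each $\tilde T_x$ commutes with $M_x := 2\pi P^{2\varepsilon}(x,x)$.

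The crucial step is to encode the commutation of $T$ with the operators at different spacetime points as an intertwining relation. For $u \in S_x$, a direct computation using Theorem~\ref{listpropertiesLCF}(ii) and Proposition~\ref{prpexponentsP} gives
\[
\Phi_y\,F^\varepsilon(y)|_{S_x}\,\Phi_x^{-1} \;=\; M_y\,\big(2\pi P^{2\varepsilon}(y,x)\big)\,M_x^{-1} \:,
\]
which can also be seen from~\eqref{id3}. The relation $T|_{S_y}\circ F^\varepsilon(y)|_{S_x} = F^\varepsilon(y)|_{S_x}\circ T|_{S_x}$ then reads $\tilde T_y M_y Q(y,x) M_x^{-1} = M_y Q(y,x) M_x^{-1} \tilde T_x$ with $Q(y,x) := 2\pi P^{2\varepsilon}(y,x)$. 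Because $\tilde T_x$ and $\tilde T_y$ commute with $M_x$ and $M_y$, the factors $M_x, M_y$ cancel and yield the clean identity
\[
\tilde T_y \, P^{2\varepsilon}(y,x) \;=\; P^{2\varepsilon}(y,x)\, \tilde T_x \qquad \text{for all } x,y \in \Sigma.
\]

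The heart of the proof is to exploit this identity together with the explicit expansion $P^{2\varepsilon}(y,x) = \sum_j v_j(y-x)\gamma^j + \beta(y-x)\bI$ from~\eqref{espressioneP}. Setting $y = x$, using $v_\alpha(0)=0$ from Lemma~\ref{lemmavbeta}, shows that $\tilde T_x$ commutes with $v_0(0)\gamma^0+\beta(0)\bI$; in the standard Dirac representation this forces $\tilde T_x = \mathrm{diag}(A_x, B_x)$ with $A_x,B_x \in M_2(\C)$. Next, I would take $y = x+z$ with $z$ a small spatial vector (available for both an open $\Sigma$ and a constant-time Cauchy surface) and use the parity of the integrands in~\eqref{functionsvbeta}: when $z$ points along $e_\alpha$, among the spatial components only $v_\alpha(z)$ is nonzero. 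The diagonal blocks of the intertwining identity then give $A_{x+z} = A_x$ and $B_{x+z} = B_x$ (since $v_0(z)\pm\beta(z)\neq 0$ for small $z$), while the off-diagonal blocks, read off using $\gamma^\alpha = \bigl(\begin{smallmatrix} 0 & \sigma^\alpha \\ -\sigma^\alpha & 0\end{smallmatrix}\bigr)$, yield $A_x \sigma^\alpha = \sigma^\alpha B_x$ for each $\alpha \in \{1,2,3\}$. A short Pauli-matrix calculation (expanding $A_x = c_0\bI + c_i\sigma^i$ and comparing $\sigma^\alpha A_x \sigma^\alpha$ for different $\alpha$) then forces $c_1=c_2=c_3=0$, so $A_x = B_x = \lambda(x)\bI_2$ and $\tilde T_x = \lambda(x)\bI_4$.

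Feeding this back into the intertwining identity gives $\bigl(\lambda(y) - \lambda(x)\bigr)P^{2\varepsilon}(y,x) = 0$. Since $P^{2\varepsilon}$ is continuous and $P^{2\varepsilon}(x,x)\neq 0$, the matrix is nonzero for $y$ near $x$, so $\lambda$ is locally constant on $\Sigma$; this suffices for a Cauchy surface, and extends to open $\Sigma$ with possibly several components because $P^{2\varepsilon}(y,x)$ (a smooth nontrivial translation-invariant kernel) cannot vanish on an open set. Thus $T$ acts as a single scalar $\lambda$ on each $S_x$, and by Lemma~\ref{lemmasuffiienza} the spin spaces span a dense subspace of $\scH_m^-$, so by continuity $T = \lambda\,\bI$. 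The main obstacle I anticipate is precisely the matrix computation pinning down $\tilde T_x$ as a scalar: one must be sure that the parity-based selection of individual $v_\alpha$'s is actually available in each geometry (it is, for both an open set and a spatial hyperplane, since all three spatial directions are accessible), and that the Dirac/Pauli identities are assembled carefully enough to rule out every non-scalar block-diagonal $\tilde T_x$.
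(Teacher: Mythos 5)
Your proposal is correct in outline and follows essentially the same route as the paper's proof in Appendix~\ref{secappreg}: reduction to selfadjoint $T$, the intertwining identity $\tilde{T}_y\,P^{2\varepsilon}(y,x)=P^{2\varepsilon}(y,x)\,\tilde{T}_x$ (which you reach by a slightly more roundabout but valid cancellation of $M_x,M_y$; the paper gets it directly from $T\pi_x=\pi_x T$ and~\eqref{id3}), the expansion~\eqref{espressioneP}, the parity argument isolating a single spatial component $v_\alpha$ along a displacement in the $e_\alpha$-direction, irreducibility of the spinor representation (your explicit Pauli computation is equivalent to the paper's appeal to irreducibility of the $\gamma^j$, which it reaches instead via an adjoint/symmetrization trick separating the $\gamma^0$- and $\gamma^\alpha$-parts), and finally density of the spin spaces (Lemma~\ref{lemmasuffiienza}).

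The one load-bearing step you do not justify is that $v_\alpha(\lambda e_\alpha)\neq 0$ for some small $\lambda\neq 0$. Since $v_\alpha(0)=0$ (Lemma~\ref{lemmavbeta}), continuity gives you nothing here, and without this non-vanishing the off-diagonal block identity $v_\alpha(z)\,\big(A_{x+z}\sigma^\alpha-\sigma^\alpha B_x\big)=0$ is vacuous, so you cannot conclude $A_x\sigma^\alpha=\sigma^\alpha B_x$ and the Pauli argument never gets off the ground. The paper closes exactly this point in Step~5 of Appendix~\ref{secappreg} by computing $\frac{d}{d\lambda}v_\alpha(\lambda e_\alpha)\big|_{\lambda=0}$, which up to a nonzero constant equals $\int_{\R^3}(k^\alpha)^2\,\omega(\V{k})^{-1}\mathfrak{g}_\varepsilon(\V{k})^2\,d^3\V{k}\neq 0$, so that $v_\alpha(\lambda e_\alpha)\neq 0$ for all sufficiently small $\lambda\neq 0$. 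With this short addition your argument is complete. Your treatment of a disconnected open $\Sigma$ (via non-vanishing of $P^{2\varepsilon}(\cdot,x)$ on open sets, a Hegerfeldt-type statement) also works but is more than is needed: it suffices to observe, as the paper does, that $({\mycal X}_\Sigma^\varepsilon)'\subset({\mycal X}_B^\varepsilon)'$ for any ball $B\subset\Sigma$, since the spin spaces over $B$ alone already span a dense subspace.
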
 \noindent
The proof is given in Appendix~\ref{secappreg}.

As a direct corollary, we conclude that the von Neumann algebra generated by an open set or a Cauchy surface $\Sigma\subset\R^{1,3}$ is maximal. 
Before stating the result, we introduce the following notation.  For a given subset $\mathcal{A}\subset\Lin(\scH)$ we define
$$
\mathcal{A}(\scH):=\{Au\:|\: A\in\mathcal{A},\ u\in\scH \}=\bigcup_{A\in \mathcal{A}}\ran A(\scH)
$$

\begin{Corollary}\label{corollaryvonNeum}
Let $\Sigma\subset\R^{1,3}$ be either an open set or the Cauchy surface~$~\{t=\text{const}\}$.
Then
\begin{itemize}[leftmargin=2.5em]
	\item[{\rm{(i)}}] $\scH_m^-=\overline{ \text{\rm{span}} \; {\mycal X}_\Sigma^\varepsilon (\scH_m^-)  }$\\[-0.2cm]
	\item[{\rm{(ii)}}] $({\mycal X}_\Sigma^\varepsilon)''=\overline{  \la \mycal{X}_\Sigma^\varepsilon \ra}^s=\overline{ \la \mycal{X}_\Sigma^\varepsilon \ra}^w=\Lin(\scH_m^-)$\:.
\end{itemize}
\end{Corollary}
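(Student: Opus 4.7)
The plan is to reduce both assertions to Theorem~\ref{maintheorem} (which supplies $(\mycal{X}_\Sigma^\varepsilon)' = \C\,\bI$) together with von~Neumann's bicommutant theorem. I anticipate (i) to be an almost immediate consequence of Lemma~\ref{lemmasuffiienza}, whereas (ii) will follow once the non-degeneracy hypothesis of the bicommutant theorem has been verified — and this hypothesis is exactly what (i) provides.

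For (i), I would first use point~(iii) of Theorem~\ref{listpropertiesLCF} to identify the range of each local correlation operator with the corresponding spin space, so that
$$
\mycal{X}_\Sigma^\varepsilon(\scH_m^-) \;=\; \bigcup_{x\in\Sigma}\ran F^\varepsilon(x) \;=\; \bigcup_{x\in\Sigma} S_x \:.
$$
Taking the closed linear span and invoking Lemma~\ref{lemmasuffiienza} then concludes (i).

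For (ii), I would proceed in three short steps. First, since each $F^\varepsilon(x)$ is selfadjoint, the set $\mycal{X}_\Sigma^\varepsilon$ is a selfadjoint subset of $\Lin(\scH_m^-)$, so the $^*$-algebra $\la\mycal{X}_\Sigma^\varepsilon\ra$ has the same commutant as its generating set. Combined with Theorem~\ref{maintheorem} this yields
$\la\mycal{X}_\Sigma^\varepsilon\ra'' = (\C\,\bI)' = \Lin(\scH_m^-)$. Second, in order to apply the bicommutant theorem and identify this double commutant with the strong (and hence weak) closures, I need the $^*$-algebra $\la\mycal{X}_\Sigma^\varepsilon\ra$ to act non-degenerately on $\scH_m^-$; this is precisely what part~(i) provides, since already the subset $\mycal{X}_\Sigma^\varepsilon \subset \la\mycal{X}_\Sigma^\varepsilon\ra$ generates a dense subspace of vectors in $\scH_m^-$. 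Third, the bicommutant theorem then closes the chain
$$
\overline{\la\mycal{X}_\Sigma^\varepsilon\ra}^w \;=\; \overline{\la\mycal{X}_\Sigma^\varepsilon\ra}^s \;=\; \la\mycal{X}_\Sigma^\varepsilon\ra'' \;=\; \Lin(\scH_m^-) \:.
$$

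The only point requiring caution is the non-degeneracy hypothesis, since $\la\mycal{X}_\Sigma^\varepsilon\ra$ does not contain the identity (see Remark~\ref{remarkidentity}), and the most elementary textbook statement of the bicommutant theorem is for unital $^*$-subalgebras. However, (i) renders this issue essentially cost-free, so I do not anticipate any serious obstacle beyond correctly citing the non-unital but non-degenerate version of the theorem.
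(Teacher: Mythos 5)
Your proof is correct and follows essentially the same route as the paper: part (i) is Lemma~\ref{lemmasuffiienza} combined with the identification $\ran F^\varepsilon(x)=S_x$, and part (ii) is Theorem~\ref{maintheorem} plus the non-unital bicommutant/density theorem (the paper cites Corollary~1 on p.~45 of Dixmier for exactly this), with (i) supplying the required non-degeneracy. The only cosmetic difference is that you spell out the reduction $(\mycal{X}_\Sigma^\varepsilon)'=\la\mycal{X}_\Sigma^\varepsilon\ra'$ as a separate step, which the paper treats implicitly.
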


\begin{Remark}
We remind the reader that the identity operator is not included in the definition of  generated $^*$-algebra (see also Remark \ref{remarknonunital}).
Otherwise, point (ii) of the above result would be an immediate consequence of von Neumann's double commutant theorem. Nevertheless, point (i)  shows that the identity lies in the strong closure of the $^*$-algebra generated by the local operators.
\end{Remark}
\begin{proof}[Proof of Corollary \ref{corollaryvonNeum}]
	By construction, $\langle  {\mycal X}_\Omega^\varepsilon \rangle$ is a $^*$-algebra. Applying Lemma \ref{lemmasuffiienza}, we obtain
\[ \scH_m^-=\overline{ \text{\rm{span}} \bigcup_{x\in\Sigma}\ran F^\varepsilon(x) }=\overline{ \text{\rm{span}} \; {\mycal X}_\Omega^\varepsilon (\scH_m^-)  } \subset \overline{ \text{\rm{span}}\,\langle  {\mycal X}_\Omega^\varepsilon\rangle  (\scH_m^-) } \:. \]
At this point, we can apply Corollary 1 on page 45 in~\cite{dixmier}
to infer that $({\mycal X}_\Omega^\varepsilon)''=\langle  {\mycal X}_\Omega^\varepsilon \rangle''=\overline{\langle {\mycal X}_\Omega^\varepsilon \rangle}^s=\overline{\langle {\mycal X}_\Omega^\varepsilon \rangle}^w$. This concludes the proof.
\end{proof}

In the case of a Cauchy surface, the above results
can be understood as a version of the\\[0.2cm]
{\bf{Time Slice Axiom}}: \textit{The von Neumann algebra generated by a Cauchy surface $\Sigma$ 
(or an open neighborhood~$U$ thereof) coincides with the algebra generated by the whole spacetime:}
\begin{equation*}
	\big(({\mycal X}_U^\varepsilon)''=\big)\,({\mycal X}_\Sigma^\varepsilon)''=({\mycal X}_{\R^{1,3}}^\varepsilon)''=\Lin(\scH_m^-).
\end{equation*}

\noindent
This version of the time slice axiom remains true if the Hilbert space is extended to also contain all positive-energy solutions. On the other hand, in this case the algebra associated with a generic open set no longer needs to be irreducible.

\subsection{The Commutation Relations} \label{seccommmink}
In this section we go back to the analysis of the commutation relations started in Section \ref{commutatinrelationsection} and study it in the concrete example of the regularized Dirac sea vacuum.
More precisely, given two points $x,y\in\R^{1,3}$, we want to evaluate the expectation values of the commutator of the corresponding operators $F^\varepsilon(x)$ and~$F^\varepsilon(y)$ making use of 
Proposition \ref{prpcomm}.

As a first step, we make the form the physical wave functions more explicit. As one might expect, in this concrete example the abstract physical wave functions $\psi^u$ should coincide with the vectors $u$, modulo the action of the regularization operator. In order to compare them, it is useful to consider again the identification mapping $\Phi_z$ introduced in \eqref{id2}. A proof of the following result can be found in \cite[Proposition 1.2.6]{cfs}.
\begin{Prp}
	Let $u\in\scH_m^-$, then for every $x\in\R^{1,3}$,
	$$
	\psi^u(F^\varepsilon(x))=(\Phi_x)^{-1}\big(\gR_{\varepsilon} u(x)\big).
	$$
\end{Prp}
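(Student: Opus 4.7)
The plan is to unpack the definition $\psi^u(\x) = \pi_\x u$ at the point $\x = F^\varepsilon(x)$ and show that $\Phi_x$ sends $\pi_{F^\varepsilon(x)} u$ to $\gR_\varepsilon u(x)$. Since $\Phi_x \colon S_{F^\varepsilon(x)} \to \C^4$ is defined by $\Phi_x(v) = \gR_\varepsilon v(x)$ (see~\eqref{id2}), the identity to be proven is equivalent to
\[
\gR_\varepsilon\!\left(\pi_{F^\varepsilon(x)} u\right)\!(x) \;=\; \gR_\varepsilon u(x)\:.
\]

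To establish this, I would decompose $u$ orthogonally as $u = \pi_{F^\varepsilon(x)} u + v$ with $v \in \ker F^\varepsilon(x)$ (using selfadjointness of $F^\varepsilon(x)$, so that $S_{F^\varepsilon(x)} = \ran F^\varepsilon(x) = (\ker F^\varepsilon(x))^\perp$), and then show that $\gR_\varepsilon v(x) = 0$ for every $v \in \ker F^\varepsilon(x)$. From Theorem~\ref{listpropertiesLCF}(ii), the condition $F^\varepsilon(x) v = 0$ reads
\[
2\pi\, P^\varepsilon(\,\cdot\,,x)\,\big(\gR_\varepsilon v\big)(x) \;=\; 0 \quad \text{in } \scH_m^-\:.
\]
The key step is then to observe that the map $\C^4 \ni \chi \mapsto P^\varepsilon(\,\cdot\,,x)\,\chi \in \scH_m^-$ is injective: applying Proposition~\ref{formulaprodottoscalarelocalizzati}(i) gives
\[
\big\la P^\varepsilon(\,\cdot\,,x)\,\zeta \,\big|\, P^\varepsilon(\,\cdot\,,x)\,\chi\big\ra \;=\; -\tfrac{1}{2\pi}\,\Sl \zeta \,|\, P^{2\varepsilon}(x,x)\,\chi \Sr \qquad \text{for all } \zeta \in \C^4\:,
\]
and the matrix $P^{2\varepsilon}(x,x)$ is invertible because its eigenvalues $\nu^\pm(\varepsilon)$ are non-zero (cf.\ Remark~\ref{remarkbehavior}). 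Hence $\chi := \gR_\varepsilon v(x) = 0$, as required.

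Putting the pieces together: by linearity of the pointwise evaluation and of $\gR_\varepsilon$,
\[
\gR_\varepsilon u(x) \;=\; \gR_\varepsilon\!\left(\pi_{F^\varepsilon(x)} u\right)\!(x) + \gR_\varepsilon v(x) \;=\; \gR_\varepsilon\!\left(\pi_{F^\varepsilon(x)} u\right)\!(x) \;=\; \Phi_x\!\left(\pi_{F^\varepsilon(x)} u\right) \;=\; \Phi_x\!\left(\psi^u(F^\varepsilon(x))\right)\:.
\]
Applying $\Phi_x^{-1}$ yields the claim. The only non-routine ingredient is the injectivity of $\chi \mapsto P^\varepsilon(\,\cdot\,,x)\chi$, which I expect to be the conceptual heart of the argument; everything else is bookkeeping with the identifications set up in Section~\ref{subsectionCFSM}.
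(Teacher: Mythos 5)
Your proof is correct. Since the paper just cites {\cite[Proposition~1.2.6]{cfs}} without spelling out an argument, there is no internal proof to compare against; but your route via the orthogonal decomposition $u = \pi_{F^\varepsilon(x)}u + v$ and the vanishing of $\gR_\varepsilon v(x)$ for $v\in\ker F^\varepsilon(x)$ is the natural one, and the injectivity of $\chi\mapsto P^\varepsilon(\,\cdot\,,x)\chi$ (which you correctly identify as the crux) is established soundly via the invertibility of $P^{2\varepsilon}(x,x)$. Two small observations: first, that injectivity is also an immediate consequence of Theorem~\ref{listpropertiesLCF}(iii), since the map is linear from $\C^4$ onto the four-dimensional space $\ran F^\varepsilon(x)$, so by rank--nullity its kernel is trivial; second, one can bypass $P^\varepsilon(\,\cdot\,,x)$ entirely by going back to the defining relation~\eqref{defF}: $F^\varepsilon(x)v = 0$ forces $\Sl \gR_\varepsilon w(x) \,|\, \gR_\varepsilon v(x) \Sr = 0$ for all $w\in\scH_m^-$, and regularity (end of Section~\ref{subsectionCFSM}) lets the vectors $\gR_\varepsilon w(x)$ exhaust $\C^4$, so non-degeneracy of the spin scalar product gives $\gR_\varepsilon v(x) = 0$ directly. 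Either shortcut trims the argument, but what you wrote is complete and correct.
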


At this point, we can apply Proposition \ref{prpcomm} 
to any $u\in\scH_m^+$ and get:
\begin{equation*}\label{expectationvalue}
\begin{split}
&\langle u\big|\big[F^\varepsilon(x),F^\varepsilon(y)\big]u\rangle = -2 i\, \im \Sl \psi^u(\x) \,|\, \mathrm{P}(\x,\y)\, \psi^u(\y) \Sr_{\x}=\\
&=-2i\,\im \Sl (\Phi_x)^{-1}\big(\gR_{\varepsilon} u(x)\big) \,|\, \mathrm{P}(F^\varepsilon(x),F^\varepsilon(y))\, (\Phi_y)^{-1}\big(\gR_{\varepsilon} u(y)\big) \Sr_{\x}=\\
&=-2(2\pi)i\,\im \Sl \gR_{\varepsilon} u(x) \,|\, P^{2\varepsilon}(x,y)\, \gR_{\varepsilon} u(y) \Sr
\end{split}
\end{equation*}
where we used~\eqref{id3} and the fact that the function $\Phi_z$ is a unitary operator with respect to the corresponding indefinite inner products. 
Our results are summarized as follows.
\begin{Prp} \label{prpcommmink}
	For any $u\in\scH_m^-$ and for any $x,y\in\R^{1,3}$,
	$$
	\langle u\big|\big[F^\varepsilon(x),F^\varepsilon(y)\big]u\rangle=-2(2\pi)i\,\im \Sl \gR_{\varepsilon} u(x) \,|\, P^{2\varepsilon}(x,y)\, \gR_{\varepsilon} u(y) \Sr.
	$$
\end{Prp}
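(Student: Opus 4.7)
The proof is a direct translation of the abstract identity in Proposition \ref{prpcomm} to the concrete Minkowski setting, using the dictionary already established in Section \ref{subsectionCFSM}. The plan is to apply Proposition \ref{prpcomm} with $\x := F^\varepsilon(x)$ and $\y := F^\varepsilon(y)$, and then rewrite each factor on the right-hand side via the identifications~\eqref{id2} and~\eqref{id3}.

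First I would invoke Proposition \ref{prpcomm} to obtain
\[
\langle u\,|\,[F^\varepsilon(x),F^\varepsilon(y)]\,u\rangle
= -2i\,\im\,\Sl\psi^u(F^\varepsilon(x))\,|\,\mathrm{P}(F^\varepsilon(x),F^\varepsilon(y))\,\psi^u(F^\varepsilon(y))\Sr_{F^\varepsilon(x)}.
\]
Next I would use the explicit expression for the physical wave function stated just before the proposition, namely $\psi^u(F^\varepsilon(z))=\Phi_z^{-1}(\gR_\varepsilon u(z))$, to rewrite both arguments of the spin scalar product inside $S_{F^\varepsilon(x)}$.

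Then the key step is to transport the spin scalar product from $S_{F^\varepsilon(x)}$ to $\C^{4}$ via the identification $\Phi_x$. Since $\Phi_x$ preserves the spin scalar products (compare~\eqref{ssp} with~\eqref{sspMink}; see the sentence following~\eqref{id2}), we have
\[
\Sl\Phi_x^{-1}(\gR_\varepsilon u(x))\,|\,\mathrm{P}(F^\varepsilon(x),F^\varepsilon(y))\,\Phi_y^{-1}(\gR_\varepsilon u(y))\Sr_{F^\varepsilon(x)}
=\Sl\gR_\varepsilon u(x)\,|\,\Phi_x\,\mathrm{P}(F^\varepsilon(x),F^\varepsilon(y))\,\Phi_y^{-1}\,\gR_\varepsilon u(y)\Sr.
\]
Inserting the identity~\eqref{id3}, which says $\Phi_x\,\mathrm{P}(F^\varepsilon(x),F^\varepsilon(y))\,\Phi_y^{-1}=2\pi\,P^{2\varepsilon}(x,y)$, the right-hand side becomes $2\pi\,\Sl\gR_\varepsilon u(x)\,|\,P^{2\varepsilon}(x,y)\,\gR_\varepsilon u(y)\Sr$. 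Pulling the factor $2\pi$ out of the imaginary part yields the claimed identity.

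There is essentially no obstacle here: every ingredient is already available, and the proof is a short substitution. The only point requiring a bit of care is to make sure that $\Phi_x$ really does intertwine $\Sl\cdot|\cdot\Sr_{F^\varepsilon(x)}$ with $\Sl\cdot|\cdot\Sr$ on $\C^4$; this follows from the defining property~\eqref{defF} together with~\eqref{id2}, as noted in the paper. Consequently the proof can be kept to a few lines, essentially reproducing the computation displayed before the proposition statement.
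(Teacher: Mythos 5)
Your proposal is correct and follows exactly the paper's own argument: apply Proposition \ref{prpcomm} to $\x=F^\varepsilon(x)$, $\y=F^\varepsilon(y)$, insert $\psi^u(F^\varepsilon(z))=\Phi_z^{-1}(\gR_\varepsilon u(z))$, and use the unitarity of $\Phi_x$ with respect to the spin scalar products together with the identification~\eqref{id3}. No differences worth noting.
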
 \noindent
From~\eqref{expressionP} and~\eqref{Tm2explicit} one sees that~$P^{2\varepsilon}(x,y)$ does not vanish
in spacelike direction. As a consequence, the above commutator is in general non-zero,
even if~$x$ and~$y$ are spacelike separated and have distance much larger than~$\varepsilon$.
At least, using the asymptotics of the Bessel functions in~\eqref{Tm2explicit}, one sees that
the commutator decays exponentially for large distances like
\[ P(x,y) \sim \exp(-m \sqrt{|(y-x)^2|}) \:. \]
Clearly, this exponential tail is again a manifestation of Hegerfeldt's theorem.

We finally remark that, using the method in the proof of Proposition~\ref{prpcomm} iteratively,
one could also compute the commutator between arbitrary elements of 
the local algebras.

\subsection{Where is the Light Cone?} \label{seclc1}
In preparation of the analysis of the algebras, we prove that and explain why the underlying light-cone structure is already encoded in the local correlation operators.
Due to translation invariance (see Theorem~\ref{listpropertiesLCF} (vi)), there is no loss of generality in restricting our analysis to the origin of $\R^{1,3}$. The question is: is it possible to reconstruct the null cone centered at the origin by looking at the matrix elements of the selfadjoint operator~$F^\varepsilon(x)$
on the spin space~$S_0$.
The vectors of $S_0$, i.e. the states 
\beq \label{localize}
u^\varepsilon_{0,\chi}:=P^{\varepsilon}(\,\cdot\,,0)\chi\ \mbox{ for }\chi\in\C^4
\eeq
are the wave functions which are as far as possible localized at the origin and as such they will propagate
almost with the speed of light (see the discussion after Remark \ref{remarkbehavior}).
As a consequence, their contribution to the local correlation operators $F^\varepsilon(x)$
corresponding to spacetime points $x$ which lie away from the null cone~$L_0$ is expected to be negligible compared to
their contribution on~$L_0$.
Of course, we do not expect these contributions to be identically zero for points $x$ which are spatially separated from the origin, due to fact that a regularization parameter is introduced and that we restrict our attention to negative-energy solutions, which have infinite tails as a consequence of Hegerfeldt's theorem
(see Proposition \ref{prphegerfeldt} and Proposition \ref{prpcommmink}). 
Finally, we point out that the system under consideration has strictly positive mass $m>0$. Therefore, the Huygens
principle applies only approximately, giving rise to small but non-vanishing contributions away from the null cone.
This qualitative picture is made more precise by considering the following cases:
\vspace{0.3cm}

\begin{itemize}[leftmargin=2.5em]
\item[{\rm{(a)}}] \textit{The spacetime point $x$ does not belong to null cone~$L_0$.} \\[0.2em]
	\noindent As mentioned in the discussion before Lemma~\ref{lemma26}, the function $P(\,\cdot\,,0)$ is smooth on the complement of $L_0$. Moreover, $P^{2\varepsilon}(\,\cdot\,,0)$ converges uniformly to $P(\,\cdot\,,0)$ on any compact set $B\subset \R^{1,3}\setminus L_0$ in the limit $\varepsilon \searrow 0$ (see Lemma \ref{lemma26}).  Therefore, there exists a constant $c_B$ which does not depend on $\varepsilon$ such that, for any $x\in B$,
	$$
	|\langle u^\varepsilon_{0,\chi}| F^\varepsilon(x) \, u^\varepsilon_{0,\zeta}\rangle |\le |P^{2\varepsilon}(x,0)\chi||P^{2\varepsilon}(x,0)\zeta|\le c_B \:|\chi|\,|\zeta| \:,
	$$
	where we used the definition of $F^\varepsilon(x)$ and Proposition \ref{prpexponentsP}.
	If the regularization is removed, the right side remains finite, although each individual factor on the left
	diverges,
	$$
	\|u_{0,\chi}^\varepsilon\|,\|u_{0,\zeta}^\varepsilon\|\to \infty\quad\mbox{and}\quad\|F^\varepsilon(x)\|\to \infty\quad\mbox{as }\varepsilon\searrow 0
	$$
	This can be inferred from Theorem \ref{listpropertiesLCF}, Proposition \ref{formulaprodottoscalarelocalizzati} and Remark \ref{remarkbehavior}.
	\vspace{0.4cm}
	
	\item[{\rm{(b)}}] \textit{The spacetime $x$ point does belong to the null cone~$L_0$ and $x_3\neq 0$.} \\[0.2em]
In this case, the matrix elements can be written in terms of the closed chain as
\beq \label{melement} \begin{split}
\langle u^\varepsilon_{0,\chi} | F^\varepsilon(x) u^\varepsilon_{0,\zeta} \rangle 
&= -\Sl \chi \,|\, P^{2\varepsilon}(0,x)\, P^{2\varepsilon}(x,0) \, \zeta \Sr = \\
&=-\Sl \chi \,|\,A_{0x}^{2\varepsilon}\, \zeta \Sr \:,
\end{split}
\eeq
where we exploited the definition of the closed chain in \eqref{id4}.
	By choosing $a=\mathfrak{e}_1$ and $b=\mathfrak{e}_3$ (where~$\mathfrak{e}_\mu$ always denotes the canonical basis of~$\C^4$) we can estimate the $\varepsilon$-behavior of this quantity by means of the identities \eqref{expressionstoevaluate}. These two expressions have the same $\varepsilon$-behavior for points on the null cone. Moreover they are purely real and purely imaginary, respectively, so there is no risk of cancellation effects. Therefore, as $x_3\neq 0$, there exists a constant $c(x)>0$ and for $\varepsilon$ small enough,
	\begin{equation*}
	\begin{split}
	|\Sl \mathfrak{e}_1| A_{0x}^{2\varepsilon}\,\mathfrak{e}_4\Sr|=|\mathfrak{e}_1^\dagger\, A_{0x}^{2\varepsilon}\,\mathfrak{e}_4|&\ge \frac{c(x)\, \varepsilon}{|(t-i\varepsilon)^2-\V{x}^2|^4}=\frac{c(x) \varepsilon}{|t^2-\V{x}^2-\varepsilon^2-2i\varepsilon t|^4}=\\
	&
	=\frac{c(x)\, \varepsilon}{|\varepsilon^2+2i\varepsilon t|^4}\ge c'(x)\frac{1}{\varepsilon^3}
	\end{split}
	\end{equation*}
where in the last identity we used the fact that $x\in L_0$ and where $c'(x)$ is some suitable strictly positive constant.
We conclude that, for a suitable choice of the spinors~$\chi$ and~$\zeta$, 
the matrix element~\eqref{melement} diverges as~$\varepsilon \searrow 0$. 
\end{itemize}
\vspace{0.2cm}

The above computations show that, by considering the matrix elements
of the operator~$F^\varepsilon(x)$ on vectors of~$S_0$ (which are of the form~\eqref{localize})
and analyzing the limit~$\varepsilon \searrow 0$, we can detect whether the point~$x$ lies
on the null cone~$L_0$ or not.

\subsection{The Regularized Algebra}
A similar but more practical way to consider the problem is to smear out the local correlation operators on the set $\Omega$ by means of continuous compactly supported functions,
\begin{equation}\label{defsmeared}
A^\varepsilon_f:=\int_{\R^{1,3}}f(x)\,  F^\varepsilon(x)\, d^4x\in\Lin(\scH_m),\quad f\in C_0^0(\Omega,\C),
\end{equation}
as was carried out abstractly in Section~\ref{secalgebras}.
In this way, it is possible to avoid some divergences and to extend the analysis to the unregularized case. This will be discussed in what follows. In the case of functions that are positive valued and with unit $\scL^1$-norm the integral~\eqref{defsmeared} can be interpreted as an average of the local correlation operators on $\supp f$.

In this concrete case we work with the space $$C^\flat_0(\Omega,\C):=C_0^\infty(\Omega,\C).$$
The operator $A_f^\varepsilon$ can be uniquely determined in terms of its matrix elements as follows.
\begin{Prp}
	For any $f\in C_0^\infty(\Omega,\C)$, the operator $A_f^\varepsilon$ is the unique operator~$\Lin(\scH_m^-)$ such that for all $u,v\in\scH_m^-$,
\[ 
	\langle u|A^\varepsilon_f v\ra =\int_{\R^4} f(x) \la u| F^\varepsilon(x) v\ra \, d^4x=-\int_{\R^4} f(x)\,
	\Sl \gR_{\varepsilon} u(x) \,|\, \gR_{\varepsilon} u (x)\Sr \, d^4 x\:. \]
\end{Prp}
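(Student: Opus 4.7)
The plan is to verify three things in turn: (i) the Bochner integral defining $A_f^\varepsilon$ is a well-defined element of $\Lin(\scH_m^-)$; (ii) its matrix elements are given by the stated expression; and (iii) these matrix elements determine it uniquely.

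For (i), I would first observe that the map $x \mapsto F^\varepsilon(x)$ is continuous from $\R^{1,3}$ into $\Lin(\scH_m^-)$ with its sup-norm topology. This is Theorem~\ref{listpropertiesLCF}~(i), combined with the fact that the topology on $\F$ as a subset of $\Lin(\scH_m^-)$ is the sup-norm topology. Moreover, by Theorem~\ref{listpropertiesLCF}~(iv), $\|F^\varepsilon(x)\| = 2\pi\,\nu^+(\varepsilon)$ is constant in $x$. Since $f \in C_0^\infty(\Omega,\C)$ is continuous with compact support, the map $x \mapsto f(x)\,F^\varepsilon(x)$ is norm-continuous, compactly supported, and dominated by the integrable function $|f(x)|\cdot 2\pi\,\nu^+(\varepsilon)$. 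It is therefore Bochner integrable, and $A_f^\varepsilon \in \Lin(\scH_m^-)$ with
\[ \|A_f^\varepsilon\| \le 2\pi\,\nu^+(\varepsilon)\,\|f\|_{\scL^1}\:. \]

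For (ii), I would use the standard property of Bochner integrals that bounded linear functionals commute with integration. For fixed $u,v \in \scH_m^-$, the map $T \mapsto \la u | T v \ra$ is a bounded linear functional on $\Lin(\scH_m^-)$, so
\[ \la u \,|\, A_f^\varepsilon\, v \ra = \int_{\R^4} f(x)\,\la u \,|\, F^\varepsilon(x) v \ra\, d^4x \:. \]
Applying the defining identity~\eqref{defF} of the local correlation operator pointwise in $x$ to the integrand yields the second equality in the statement. For (iii), uniqueness is automatic: a bounded operator on a Hilbert space is determined by its matrix elements, so any $B \in \Lin(\scH_m^-)$ with the same sesquilinear form as $A_f^\varepsilon$ must coincide with it.

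There is essentially no obstacle in this proof; everything reduces to checking the hypotheses for Bochner integration and then using continuity of the integrand in the sup-norm topology. The only point worth care is the distinction between pointwise strong convergence of $F^\varepsilon(x)$ as a function of $x$ versus norm-continuity: the latter is needed to make the Bochner integral available directly from continuity without resorting to separate strong-measurability arguments, and is guaranteed by Theorem~\ref{listpropertiesLCF}~(i).
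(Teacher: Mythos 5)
Your proof is correct and follows the same route as the paper, whose entire argument is the one-line remark that the identity holds "thanks to the properties of the Bochner integral" and that uniqueness follows from the arbitrariness of $u$ and $v$; you have simply supplied the details (norm-continuity of $x\mapsto F^\varepsilon(x)$ from Theorem~\ref{listpropertiesLCF}, the constant bound $\|F^\varepsilon(x)\|=2\pi\,\nu^+(\varepsilon)$, commuting the bounded functional with the integral, and determination of a bounded operator by its sesquilinear form). No issues.
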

\begin{proof}
	The above property is fulfilled by $A_f^\varepsilon$, thanks to the properties of the Bochner integral. Uniqueness follows immediately from the arbitrariness of $u$ and~$v$.
\end{proof}
It is useful to collect a few properties of these operators. Take any $f\in C_0^\infty(\R^{1,3},\C)$.  Since the operators $ F^\varepsilon(x)$ have finite rank and all have the same trace (as they are unitarily equivalent, see Theorem \ref{listpropertiesLCF}~(vi)), one might expect that also the integral $A_f^\varepsilon$, which involves a smooth, compactly supported function, is trace-class. Also, the trace should coincide with the trace of the local correlation operators, multiplied by the integral of the function. This is indeed true, as shown in the following proposition.
\begin{Prp}\label{proprietaAf}
	For all~$f,g\in C_0^\infty(\Omega,\C)$ and $a,b\in\C$, the following statements~hold:\\[-0.8em]
	\begin{itemize}
		\item[{\rm{(i)}}] $A^\varepsilon_{af+bg}=aA_f^\varepsilon+bA_g^\varepsilon$ and $(A^\varepsilon_{f})^*=A^\varepsilon_{\overline{f}}$,\\[-0.5em]
		\item[{\rm{(ii)}}] If $\supp f\subset  \{(t, \V{x})\:|\: |t| \leq T \}$, then $\|A_f^\varepsilon\|\le 4T\|f\|_\infty$,\\[-0.5em]
		\item[{\rm{(iii)}}] The operator $A_f^\varepsilon$ is trace-class and 
\[ \mbox{\rm{tr}}\,(A^\varepsilon_f)=\mbox{\rm{tr}}_{\rm{vac}}^\varepsilon\: \int_{\R^4}f(x)\, d^4x \]
(with~$\mbox{\rm{tr}}_{\rm{vac}}^\varepsilon$ as defined in~\eqref{trvacdef}).
	\end{itemize}
\end{Prp}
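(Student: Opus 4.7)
The plan is to prove (i) and (ii) by direct verification from the matrix-element representation of $A_f^\varepsilon$, and to prove (iii) by upgrading the Bochner integral defining $A_f^\varepsilon$ from the operator-norm to the trace-norm topology.

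For (i), linearity in $f$ is inherited from linearity of the Bochner integral, and the adjoint property follows because each $F^\varepsilon(x)$ is selfadjoint by Theorem~\ref{listpropertiesLCF}(ii): taking adjoints inside the integral affects only the scalar factor $f(x)$, which gets conjugated. Both identities can alternatively be checked at the level of matrix elements using the formula stated just above the proposition. For (ii), I would begin from $\langle u|A_f^\varepsilon v\rangle = -\int f(x)\,\Sl \gR_\varepsilon u(x)\,|\,\gR_\varepsilon v(x)\Sr\, d^4x$. Since $\gamma^0$ is unitary as a $4\times 4$ matrix, we have the pointwise bound $|\Sl\psi|\phi\Sr| \le |\psi|\,|\phi|$. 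Restricting the integration to the slab $\{|t|\le T\}$ containing $\supp f$ and applying Cauchy--Schwarz on each spatial slice reduces the estimate to $\int_{-T}^{T}\|\gR_\varepsilon u(t,\cdot)\|_{L^2(\R^3)}\,\|\gR_\varepsilon v(t,\cdot)\|_{L^2(\R^3)}\,dt$. By current conservation for Dirac solutions, the spatial $L^2$-norm on any time slice equals the Hilbert norm of $\gR_\varepsilon u$, which is bounded by $\|u\|$ by Proposition~\ref{propositionregularization}(iii). Pulling out $\|f\|_\infty$ then delivers the claim (in fact with a sharper constant than $4T$).

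Part (iii) is the substantive step. Each $F^\varepsilon(x)$ has rank four with eigenvalues $2\pi\nu^\pm(\varepsilon)$ of multiplicity two by Theorem~\ref{listpropertiesLCF}(iv), hence is automatically trace-class with $\|F^\varepsilon(x)\|_1 = 4\pi(\nu^+(\varepsilon)-\nu^-(\varepsilon))$, independent of $x$. The idea is to realize $A_f^\varepsilon$ as a Bochner integral in the Banach space $\mathcal{S}_1(\scH_m^-)$ of trace-class operators. Granted that realization, the integral converges in $\mathcal{S}_1$, so $A_f^\varepsilon$ is trace-class, and since $\mathrm{tr}:\mathcal{S}_1\to\C$ is a continuous linear functional it passes through the integral to yield $\mathrm{tr}(A_f^\varepsilon) = \int f(x)\,\mathrm{tr}(F^\varepsilon(x))\,d^4x = \mathrm{tr}_{\mathrm{vac}}^\varepsilon \int f(x)\,d^4x$ by~\eqref{trvacdef}.

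The main obstacle is establishing continuity of $x\mapsto F^\varepsilon(x)$ in the trace norm, which is precisely what makes the integrand of $A_f^\varepsilon$ an $\mathcal{S}_1$-valued continuous compactly supported function of $x$. The translation relation $F^\varepsilon(x+a) = U_a^*F^\varepsilon(x)U_a$ from Theorem~\ref{listpropertiesLCF}(vi) reduces the task to continuity near a single point, but $\{U_a\}$ is only strongly continuous, so a naive conjugation argument fails. I would instead use the finite-rank factorization $F^\varepsilon(x) = -L(x)^*\gamma^0 L(x)$, where $L(x):\scH_m^-\to\C^4$ is the evaluation map $u\mapsto(\gR_\varepsilon u)(x)$: the explicit momentum-space representation of $\gR_\varepsilon$ from Definition~\ref{defreg} combined with a Cauchy--Schwarz estimate yields $\|L(x+h)-L(x)\|_{\mathrm{op}} \le \|(e^{-ik\cdot h}-1)\,\mathfrak{g}_\varepsilon\|_{L^2(d^3\V{k})}$, which tends to zero as $h\to 0$ by dominated convergence, so $L(x)$ is operator-norm continuous; the uniform rank bound of four then upgrades this for free to trace-norm continuity of $F^\varepsilon(x)$, closing the argument.
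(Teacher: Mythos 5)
Your proposal is correct, and parts (ii) and (iii) take genuinely different routes from the paper. For (ii), the paper first reduces to real-valued $f$, where $A_f^\varepsilon$ is selfadjoint so that $\|A_f^\varepsilon\|=\sup_{\|u\|=1}|\la u|A_f^\varepsilon u\ra|$, and then splits a complex $f$ into real and imaginary parts, which is where the factor $4T$ comes from; your direct Cauchy--Schwarz estimate on the sesquilinear form avoids the splitting and indeed yields the sharper constant $2T$ (this is exactly the estimate the paper uses later when constructing the unregularized operators $A_f^\circ$). For (iii), the paper never touches trace-norm continuity: it invokes the criterion that $T$ is trace-class iff $\sum_n|\la e_n|Te_n\ra|$ converges for every Hilbert basis, bounds $\sum_n|\la e_n|F^\varepsilon(x)e_n\ra|$ uniformly in $x$ by $4\pi(\nu^+(\varepsilon)-\nu^-(\varepsilon))$ via the spectral decomposition $F^\varepsilon=F_+^\varepsilon+F_-^\varepsilon$, and concludes with Beppo Levi; the trace formula then follows by dominated convergence applied to the partial sums. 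Your route — factorizing $F^\varepsilon(x)=-L(x)^*\gamma^0 L(x)$ through the evaluation map, proving operator-norm continuity of $L$ from the momentum-space representation, upgrading to trace-norm continuity via the bound $\|T\|_1\le\operatorname{rank}(T)\,\|T\|$ with rank at most $8$ for the difference, and then integrating in the Banach space $\mathcal{S}_1$ — requires this one extra (but elementary) continuity lemma, and in exchange delivers more: the integral converges in trace norm, so you get the bound $\|A_f^\varepsilon\|_1\le\|f\|_{L^1}\,\|F^\varepsilon(0)\|_1$ for free and the trace passes through the integral by mere continuity of $\operatorname{tr}$ on $\mathcal{S}_1$, with no separate limiting argument. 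The only point worth making explicit is that the $\mathcal{S}_1$-valued Bochner integral coincides with the operator defined in~\eqref{defsmeared}, which follows since the inclusion $\mathcal{S}_1\hookrightarrow\Lin(\scH_m^-)$ is a continuous linear map and hence commutes with Bochner integration.
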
 \noindent
The proof is given in Appendix~\ref{secappendix}.

Before proceeding with further properties of these operators, it is interesting to see how they act on the
dense set of smooth solutions (see Proposition~\ref{propositioncontinuityP1}),
$$
P(\,\cdot\,,\varphi):\R^{1,3}\ni x\mapsto \int_{\R^4}\frac{d^4k}{(2\pi)^2}\,\hat{P}(k)\,\hat{\varphi}(k)\, e^{-ik\cdot x}\in \C^4,\quad \varphi\in {\mathcal{S}}(\R^{1,3},\C^4).
$$

\begin{Thm}\label{theoremeactionAf}
For any~$f\in C_0^\infty(\R^{1,3},\C)$ and~$\varphi\in {\mathcal{S}}(\R^{1,3},\C^4)$,
\[ A_f^\varepsilon \big(P(\,\cdot\,,\varphi)\big) 
=2\pi\ P^\varepsilon \big(\,\cdot\,,f\,P^\varepsilon(\,\cdot\,,\varphi) \big) \:. \]
\end{Thm}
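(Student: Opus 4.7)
My plan is to unfold $A_f^\varepsilon(P(\,\cdot\,,\varphi))$ via the Bochner integral definition~\eqref{defsmeared}, insert the explicit formula for $F^\varepsilon(x)$ from Theorem~\ref{listpropertiesLCF}(ii), and recognize the resulting expression as the defining formula~\eqref{defPf} for $P^\varepsilon(\,\cdot\,,g)$ with the right choice of test function $g$.

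First I would write
\[ A_f^\varepsilon\big(P(\,\cdot\,,\varphi)\big) = \int_{\R^{1,3}} f(x)\, F^\varepsilon(x)\, P(\,\cdot\,,\varphi)\, d^4x \]
as a Bochner integral in~$\scH_m^-$, which is well defined since the integrand is norm-continuous in $x$ (by continuity of $F^\varepsilon$, see Theorem~\ref{listpropertiesLCF}(i)) and compactly supported. To evaluate the integrand, I would apply Theorem~\ref{listpropertiesLCF}(ii) with $u = P(\,\cdot\,,\varphi)$, giving
\[ F^\varepsilon(x)\, P(\,\cdot\,,\varphi) = 2\pi\, P^\varepsilon(\,\cdot\,,x)\,\big(\gR_\varepsilon P(\,\cdot\,,\varphi)\big)(x)\:. \]

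The next step is to identify $\gR_\varepsilon P(\,\cdot\,,\varphi) = P^\varepsilon(\,\cdot\,,\varphi)$. Proposition~\ref{prpexponentsP}(i) gives exactly this identity for spinorial sources of the form $f\,\chi$ with $\chi\in\C^4$; decomposing the general Schwartz function $\varphi\in\mathcal{S}(\R^{1,3},\C^4)$ componentwise in the canonical basis of $\C^4$ and invoking linearity of $\gR_\varepsilon$ extends it to vector-valued $\varphi$. Since $P^\varepsilon(\,\cdot\,,\varphi)$ is smooth by Proposition~\ref{propositioncontinuityP1}(i), its pointwise value $P^\varepsilon(x,\varphi)\in\C^4$ is meaningful, and the integrand simplifies to
\[ f(x)\, F^\varepsilon(x)\, P(\,\cdot\,,\varphi) = 2\pi\, f(x)\, P^\varepsilon(\,\cdot\,,x)\, P^\varepsilon(x,\varphi) \:. \]

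Finally, the product $g := f\, P^\varepsilon(\,\cdot\,,\varphi)$ is smooth and compactly supported, hence lies in $\mathcal{S}(\R^{1,3},\C^4)$, so the defining formula~\eqref{defPf} gives $P^\varepsilon(y,g) = \int P^\varepsilon(y,x)\, g(x)\, d^4x$ for every $y\in\R^{1,3}$, which matches the Bochner integrand above. The only technical point requiring care is the interchange between the $\scH_m^-$-valued Bochner integral over $x$ and the pointwise realization at $y$ used in~\eqref{defPf}. I would handle this by pairing both sides against an arbitrary $v\in\scH_m^-$ and applying scalar Fubini, whose hypotheses follow from the Schwartz estimates in Proposition~\ref{propositioncontinuityP1}(v) together with the compact support of $f$; alternatively, both sides are smooth solutions of the Dirac equation and agree as functions on spacetime, hence as elements of $\scH_m^-$. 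I expect this interchange justification to be the only nontrivial step; the remainder is a direct substitution chain.
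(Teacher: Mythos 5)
Your proposal is correct, but it follows a genuinely different route from the paper. The paper never touches the Bochner integral directly: it computes the matrix elements $\la u_\psi\,|\,A_f^\varepsilon(P(\,\cdot\,,\varphi))\ra$ against the dense family $u_\psi=\hat{\mathrm{E}}(\psi)$ with $\psi\in\hat{P}_-(\mathcal{S}(\R^3,\C^4))$, carries out the whole computation in momentum space (the key step being the convolution identity $\cF(P^\varepsilon(\,\cdot\,,\varphi))*\cF(f)=(2\pi)^2\,\cF\big(f\,P^\varepsilon(\,\cdot\,,\varphi)\big)$), and concludes by Parseval and density. You instead work in position space: you insert the pointwise formula $F^\varepsilon(x)u=2\pi\,P^\varepsilon(\,\cdot\,,x)(\gR_\varepsilon u)(x)$ of Theorem~\ref{listpropertiesLCF}~(ii) into the Bochner integral, use $\gR_\varepsilon P(\,\cdot\,,\varphi)=P^\varepsilon(\,\cdot\,,\varphi)$ (which is indeed the content of Proposition~\ref{prpexponentsP}~(i), extended componentwise), and recognize the first line of~\eqref{defPf} with $g=f\,P^\varepsilon(\,\cdot\,,\varphi)\in C_0^\infty\subset\mathcal{S}$. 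Your version is shorter and makes the mechanism of the formula more transparent; its one genuine technical burden is exactly the one you flag, namely identifying the $\scH_m^-$-valued Bochner integral with the pointwise integral defining $P^\varepsilon(\,\cdot\,,g)$. Your pairing-plus-Fubini plan works, but note that the absolute integrability needed there does not follow from boundedness of the kernel alone (a generic $v\in\scH_m^-$ is only $L^2$ on a time slice); one should use that $P^\varepsilon((0,\,\cdot\,),x)\chi$ restricted to a Cauchy surface is Schwartz, uniformly for $x$ in the compact support of $f$, and then apply Cauchy--Schwarz before Fubini. The paper's momentum-space route sidesteps this issue entirely and, as a by-product, exhibits the result explicitly as an element of $\hat{\mathrm{E}}(\hat{P}_-(\mathcal{S}(\R^3,\C^4)))$; you obtain the same membership for free from Proposition~\ref{propositioncontinuityP1}~(i) applied to $g$. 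Both arguments ultimately rest on the same integrability estimates, so the choice between them is one of presentation rather than substance.
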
 \noindent
The proof is given in Appendix~\ref{secappendix}.

At this point, similar as in Section~\ref{secalgebras},
we collect all the operators $A_f^\varepsilon$ supported within a given set $\Omega$,
$$
L^\varepsilon_\Omega:=\left\{A^\varepsilon_f,\ f\in C_0^\infty(\Omega,\C)  \right\} , $$
where the superscript~$\varepsilon$ clarifies the dependence on the regularization.
We remark that the set~$L^\varepsilon_\Omega$ is a complex linear space and a
$^*$-representation of $C_0^\infty(\Omega,\C)$ thanks to Proposition~\ref{proprietaAf}~(i). 

\begin{Def}
For any open subset~$\Omega\subset M$, the $^*$-algebra
\[ \mycal{A}_\Omega^\varepsilon:= \langle L_\Omega^\varepsilon\rangle \]
is referred to as the {\bf{regularized local algebra}} corresponding to~$C^\infty_0(\Omega, \C)$.
\end{Def}

\begin{Remark}\label{remarkregidentity}
	As already pointed out in Remark \eqref{remarkidentity}, these $^*$-algebras, as well as their closures in the $\sup$-norm topology, do not contain the identity operator. 
\end{Remark}
We now collect a few basic properties of these algebras.
\begin{Prp}\label{propositionrelationsreg}
	The following statements hold:
	\vspace{0.1cm}
	\begin{itemize}[leftmargin=2.5em]
		\item[{\rm{(i)}}] Let $\Omega_1\subset\Omega_2\subset\R^{1,3}$ be open sets, then $\A^\varepsilon_{\Omega_1}\subset\A^\varepsilon_{\Omega_2}$.\\[-0.5em]
		\item[{\rm{(ii)}}] Let $\Omega\subset\R^{1,3}$ be an open set and $a\in\R^{1,3}$, then $\A^\varepsilon_{\Omega+a}=(\mathrm{U}_a)^\dagger\, \A^\varepsilon_\Omega\, \mathrm{U}_a$.\\[-0.5em]
		\item[{\rm{(iii)}}] Let $\Omega\subset\R^{1,3}$ be an open set, then
\[
		\A^\varepsilon_\Omega=\left\langle\bigcup_{\Omega'\in\mycal{O}(\Omega)}\A^\varepsilon_{\Omega'}\right\rangle \:, \]
		where $\mycal{O}(\Omega)$ denotes the family of open bounded subsets of $\Omega$.
	\end{itemize}
\end{Prp}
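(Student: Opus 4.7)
The plan is to treat the three statements in order, each reducing quickly to properties that have already been established. Throughout, the key observation is that a regularized local algebra is built from a linear generating set $L^\varepsilon_\Omega$ of smeared operators, so every claim about $\A^\varepsilon_\Omega$ can be checked at the level of these generators.

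For (i), I would observe that any $f\in C_0^\infty(\Omega_1,\C)$ extends by zero to an element of $C_0^\infty(\Omega_2,\C)$ without changing the integral defining $A^\varepsilon_f$. Hence $L^\varepsilon_{\Omega_1}\subset L^\varepsilon_{\Omega_2}$, and taking the $^*$-algebra generated on both sides preserves the inclusion, giving $\A^\varepsilon_{\Omega_1}\subset\A^\varepsilon_{\Omega_2}$.

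For (ii), the main input is the covariance identity $F^\varepsilon(x+a)=\mathrm{U}_a^\dagger F^\varepsilon(x)\mathrm{U}_a$ from Theorem~\ref{listpropertiesLCF}~(vi). Combined with a change of variables $y=x-a$ inside the Bochner integral \eqref{defsmeared}, this yields
\[
A^\varepsilon_f \;=\; \int_{\R^{1,3}} f(y+a)\,F^\varepsilon(y+a)\,d^4y \;=\; \mathrm{U}_a^\dagger\!\left(\int_{\R^{1,3}} f(y+a)\,F^\varepsilon(y)\,d^4y\right)\!\mathrm{U}_a \;=\; \mathrm{U}_a^\dagger\,A^\varepsilon_{T_{-a}f}\,\mathrm{U}_a
\]
for every $f\in C_0^\infty(\Omega+a,\C)$, where $T_{-a}f(y):=f(y+a)$ lies in $C_0^\infty(\Omega,\C)$. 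This shows $L^\varepsilon_{\Omega+a}=\mathrm{U}_a^\dagger L^\varepsilon_\Omega \mathrm{U}_a$, and since the unitary conjugation $X\mapsto \mathrm{U}_a^\dagger X\mathrm{U}_a$ is a $^*$-isomorphism of $\Lin(\scH_m^-)$, it commutes with the operation of generating a $^*$-algebra. The identity $\A^\varepsilon_{\Omega+a}=\mathrm{U}_a^\dagger\A^\varepsilon_\Omega\mathrm{U}_a$ follows.

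For (iii), the inclusion ``$\supseteq$'' is immediate from (i), since every $\Omega'\in\mycal{O}(\Omega)$ satisfies $\Omega'\subset\Omega$. For the converse, given any $f\in C_0^\infty(\Omega,\C)$, its support $K:=\supp f$ is compact in $\Omega$, so there is a bounded open set $\Omega'$ with $K\subset\Omega'\subset\overline{\Omega'}\subset\Omega$; then $f\in C_0^\infty(\Omega',\C)$ and therefore $A^\varepsilon_f\in L^\varepsilon_{\Omega'}\subset\A^\varepsilon_{\Omega'}$. Hence every generator of $\A^\varepsilon_\Omega$ already lies in $\bigcup_{\Omega'\in\mycal{O}(\Omega)}\A^\varepsilon_{\Omega'}$, which yields the reverse inclusion once one passes to generated $^*$-algebras.

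I do not anticipate a serious obstacle: the proof is essentially bookkeeping built on Theorem~\ref{listpropertiesLCF}~(vi), the linearity and compact support of the smearing functions, and the fact that the map $f\mapsto A^\varepsilon_f$ is a linear $^*$-representation (Proposition~\ref{proprietaAf}~(i)). The only subtle point worth being careful about is, in (ii), making sure that the change of variables is justified inside the Bochner integral --- this is standard given the strong continuity of translations and the boundedness of $F^\varepsilon$ on compact sets guaranteed by Theorem~\ref{listpropertiesLCF}.
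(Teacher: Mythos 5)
Your proposal is correct and follows essentially the same route as the paper: (i) and (iii) are the same bookkeeping the paper treats as obvious, and your proof of (ii) — change of variables in the Bochner integral combined with the covariance $F^\varepsilon(x+a)=\mathrm{U}_a^\dagger F^\varepsilon(x)\mathrm{U}_a$ — is the operator-level version of the paper's matrix-element computation $\la u|(\mathrm{U}_a)^\dagger A^\varepsilon_f \mathrm{U}_a v\ra=\la u|A^\varepsilon_{f_{-a}}v\ra$, followed by the same observation that conjugation by a unitary commutes with generating a $^*$-algebra.
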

\begin{proof}
		Points (i) and~(iii) are obvious. Let us prove point (ii). For any~$u,v\in\scH_m^-$ and $f\in C_0^0(\R^{1,3},\C)$, 
		we have (note that $\gR_{\varepsilon}$ and $\mathrm{U}_a$ commute)
\begin{align*}
		\la u|(\mathrm{U}_a)^\dagger A^\varepsilon_f \mathrm{U}_a v\ra &= \la\mathrm{U}_a u| A^\varepsilon_f \mathrm{U}_a v\ra = -\int_{\R^4} f(x)\,
		\Sl \gR_{\varepsilon}u(x+a) | \gR_{\varepsilon}v(x+a) \Sr\, d^4x=\\
		&= -\int_{\R^4} f(x-a)\, \Sl \gR_{\varepsilon}u (x) | \gR_{\varepsilon}v(x)\Sr \, d^4x = \la u| A^\varepsilon_{f_{-a}} v\ra \:,
\end{align*}
		where $f_{-a}(x):=f(x-a)$. The function $f_{-a}$ is supported within $\Omega +a$. 
		The claim follows by taking products and linear combinations and using 
		that~$\mathrm{U}_a$ is unitary.
\end{proof}

We now study the von Neumann algebra generated by the operators in $L_\Omega^\varepsilon$. Note that, in view of Remark \ref{remarkregidentity}, point (ii) of the next statement is not obvious.

\begin{Prp}\label{identification}
For any~$\Omega\subset\R^{1,3}$ the following statements hold:
\vspace{0.1em}
\begin{itemize}[leftmargin=2.5em]
	\vspace{0.05cm}
\item[{\rm{(i)}}]  $\scH_m^-=\overline{ \text{\rm{span}} \,L^\varepsilon_\Omega (\scH_m^-)}$,\\[-0.5em]
\item[{\rm{(ii)}}] $(L_\Omega^\varepsilon)''=\overline{\mycal{A}_\Omega^\varepsilon}^s=\overline{ \mycal{A}_\Omega^\varepsilon}^w=\Lin(\scH_m^-)$, \\[-0.5em]
\item[{\rm{(iii)}}] $\overline{\la \mycal{X}^\varepsilon_\Omega \ra}=\overline{\mycal{A}_\Omega^\varepsilon}.$\\[-0.9em]
\end{itemize}
In particular, both $L_\Omega^\varepsilon$ and the $^*$-algebra $\mycal{A}_\Omega^\varepsilon$ are irreducible.
\end{Prp}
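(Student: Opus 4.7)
The plan is to reduce all three statements to what is already known for the pointwise family $\mycal{X}_\Omega^\varepsilon$ (namely Theorem~\ref{maintheorem} and Corollary~\ref{corollaryvonNeum}), using a smooth Dirac-sequence approximation to bridge between the smeared operators $A_f^\varepsilon$ and the unsmeared operators $F^\varepsilon(x_0)$. A key preliminary observation is that $L_\Omega^\varepsilon$ is closed under adjoints by Proposition~\ref{proprietaAf}~(i), so it has the same commutant as the generated $^*$-algebra $\mycal{A}_\Omega^\varepsilon$, giving $(L_\Omega^\varepsilon)''=(\mycal{A}_\Omega^\varepsilon)''$ for free.

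For point (i), I would argue by orthogonal complements. If $u\in\scH_m^-$ is orthogonal to $\mathrm{span}\,L_\Omega^\varepsilon(\scH_m^-)$, then for every $f\in C_0^\infty(\Omega,\C)$ and $v\in\scH_m^-$,
\[
0=\la u\,|\,A_f^\varepsilon v\ra=\int_\Omega f(x)\,\la u\,|\,F^\varepsilon(x)v\ra\,d^4x \:.
\]
The integrand is continuous in $x$ by Theorem~\ref{listpropertiesLCF}, and $f$ ranges over all complex smooth test functions on $\Omega$; hence $\la u\,|\,F^\varepsilon(x)v\ra=0$ for every $x\in\Omega$ and $v\in\scH_m^-$, so $u\perp\mycal{X}_\Omega^\varepsilon(\scH_m^-)$. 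By Corollary~\ref{corollaryvonNeum}~(i) this span is dense in $\scH_m^-$, forcing $u=0$.

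For point (ii), point (i) guarantees that $\mycal{A}_\Omega^\varepsilon$ acts non-degenerately on $\scH_m^-$, so Corollary~1 on p.~45 of~\cite{dixmier} (already invoked in Corollary~\ref{corollaryvonNeum}) yields $(\mycal{A}_\Omega^\varepsilon)''=\overline{\mycal{A}_\Omega^\varepsilon}^{\,s}=\overline{\mycal{A}_\Omega^\varepsilon}^{\,w}$. It remains to identify this with $\Lin(\scH_m^-)$, i.e.\ to show $(L_\Omega^\varepsilon)'=\C\,\bI$. Given $T\in(L_\Omega^\varepsilon)'$ and any $x_0\in\Omega$, I would select a smooth Dirac sequence $f_n\in C_0^\infty(\Omega,\R^+)$ at $x_0$ and repeat the estimate from the proof of Proposition~\ref{closurecoincide} on the Minkowski side via the identification $\rho_{\rm{vac}}=F^\varepsilon_*\mu$:
\[
\|A_{f_n}^\varepsilon-F^\varepsilon(x_0)\|\le \int f_n(x)\,\|F^\varepsilon(x)-F^\varepsilon(x_0)\|\,d^4x\;\longrightarrow\;0 \:,
\]
using continuity of $F^\varepsilon$ (Theorem~\ref{listpropertiesLCF}~(i)). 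Norm-continuity of multiplication then propagates $[T,A_{f_n}^\varepsilon]=0$ to $[T,F^\varepsilon(x_0)]=0$ for every $x_0\in\Omega$, and Theorem~\ref{maintheorem} concludes $T\in\C\,\bI$.

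For point (iii), both inclusions fall out of the same two approximations. The Dirac-sequence limit above places each $F^\varepsilon(x_0)\in\mycal{X}_\Omega^\varepsilon$ inside $\overline{L_\Omega^\varepsilon}\subset\overline{\mycal{A}_\Omega^\varepsilon}$, and since $\overline{\mycal{A}_\Omega^\varepsilon}$ is a $^*$-algebra this gives $\overline{\la\mycal{X}_\Omega^\varepsilon\ra}\subset\overline{\mycal{A}_\Omega^\varepsilon}$. The reverse inclusion is the Riemann-sum construction from the second half of Proposition~\ref{closurecoincide}: for $f\in C_0^\infty(\Omega,\C)$ with compact support $R$, a partition of $R$ into small balls $C_{n,i}$ and the uniform continuity of $x\mapsto f(x)F^\varepsilon(x)$ yield step-function approximations
$
\sum_{i}f(x_i)\,F^\varepsilon(x_i)\,\mu(C_{n,i})\in\la\mycal{X}_\Omega^\varepsilon\ra
$
converging to $A_f^\varepsilon$ in operator norm by Lebesgue dominated convergence, whence $A_f^\varepsilon\in\overline{\la\mycal{X}_\Omega^\varepsilon\ra}$.

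I expect the only real obstacle to be checking carefully that the proof of Proposition~\ref{closurecoincide} transfers from its abstract $C_0^0$/$\rho$ formulation to the concrete $C_0^\infty$/Lebesgue formulation used here; this is immediate since $\rho_{\rm{vac}}=F^\varepsilon_*\mu$ and since smooth mollifiers provide Dirac sequences inside $C_0^\infty(\Omega,\R^+)$. Once that is in hand, the entire proposition is essentially a repackaging of Theorem~\ref{maintheorem} plus the Dixmier double-commutant-type result for non-degenerate $^*$-subalgebras.
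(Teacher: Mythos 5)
Your proposal is correct and follows essentially the same route as the paper: point (i) via the orthogonal-complement/test-function argument reducing to the density of the spin spaces, point (ii) via Dixmier's corollary once $(L_\Omega^\varepsilon)'=\C\,\bI$ is established, and point (iii) via the Dirac-sequence and Riemann-sum approximations underlying Proposition~\ref{closurecoincide}. The only cosmetic difference is that the paper deduces irreducibility of $L_\Omega^\varepsilon$ from point (iii) combined with Theorem~\ref{maintheorem} (equal norm closures have equal commutants), whereas you obtain it directly from the Dirac-sequence norm approximation of $F^\varepsilon(x_0)$ — the same mechanism.
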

\begin{proof}
Point (iii) follows directly from Proposition~\ref{closurecoincide}. In particular, Theorem \ref{maintheorem} implies that $L_\Omega^\varepsilon$ (and also its generated algebra) is irreducible.
	Point (ii) follows directly from the irreducibility of $L_\Omega^\varepsilon$, once we have proved
	that the identity lies in the strong closure of $\A_\Omega^\varepsilon$. More precisely, we must show that	$\overline{\text{span}\, \A_\Omega^\varepsilon (\scH_m^-)}=\scH_m^-$ and apply again Corollary~1 on page~45 of~\cite{dixmier}. To this end, assume that there is a vector $v$ which is orthogonal to $\overline{ \text{span} \,L^\varepsilon_\Omega (\scH_m^-)}$. Then, by definition of the smeared operators,
	$$
	0=-\int_{\R^4} f(x)\: \Sl \gR_{\varepsilon} u(x) \:|\: \gR_{\varepsilon} v(x) \Sr\: d^4x\quad\mbox{for all } f\in C_0^\infty(\Omega,\C) \:.
	$$
The theorem by Du Bois-Reymond ensures that $\Sl \gR_{\varepsilon} u(x) | \gR_{\varepsilon} v(x) \Sr=0$ for every $u\in\scH_m^-$ and $x\in\Omega$. At this point, we can proceed just as in the proof of Lemma \ref{lemmasuffiienza}
to infer that~$v=0$. This concludes the proof of point (i).
\end{proof}

\subsection{Once Again: Where is the Light Cone?}\label{seclc2}
Working with the regularized local algebras, we can ask the same question as in Section~\ref{seclc1}:
Where is the light cone? How can we detect it by looking at the elements of ${\mycal A}^\varepsilon_\Omega$?
In order to analyze this question, we fix an open set $\Omega\subset\R^{1,3}$.
Again, the idea is to evaluate the matrix elements of the operators~$A \in {\mycal A}^\varepsilon_\Omega$ 
on the spin space~$S_0$ and to study their dependence on $\varepsilon$. The elements of $S_0$ are given 
again by the wave functions 
\beq\label{localize2}
u^\varepsilon_{0,\chi}:=P^{\varepsilon}(\,\cdot\,,0)\chi\ \mbox{ for }\chi\in\C^4.
\eeq
We now state our result; the proof (which also uses results on the unregularized algebra
to be proved in Section~\ref{secunregularized} below) is given in Appendix~\ref{secappendix}.
By translation invariance, we restrict attention to the light cone centered at the origin.
In order to relate the algebras for different values of~$\varepsilon$ to each other,
we introduce the free algebra~$\A(C^\infty_0(\Omega))$ generated by $C^\infty_0(\Omega)$
and define the linear mapping
\beq \label{iotaepsdef}
\iota^\varepsilon : \A(C^\infty_0(\Omega)) \rightarrow\mycal{A}_\Omega^\varepsilon 
\quad \text{defined on monomials by} \quad
f_{i_1}\cdots f_{i_k} \mapsto A_{f_{i_1}}^\varepsilon \cdots A_{f_{i_k}}^\varepsilon \:.
\eeq

\begin{Prp} \label{prpregularized} The following statements hold:
\begin{itemize}[leftmargin=2.5em]
	\vspace{0.05cm}
\item[{\rm{(i)}}] If the set $\Omega$ does not intersect the null cone~$L_0$,
	for any~$a \in \A(C^\infty_0(\Omega))$ there is a constant~$c(a)>0$ such that
	for all~$\chi,\zeta \in \C^4$,
\[ |\langle u^\varepsilon_{0,\chi}\, |\,\iota^\varepsilon(a)\, u^\varepsilon_{0,\zeta}\rangle| \le c(a)\: |\chi|\:
|\zeta| \:, \]
uniformly in~$\varepsilon>0$.\\[-0.5em]
\item[{\rm{(ii)}}] If the set $\Omega$ intersects the null cone~$L_0$, there
are an element~$a \in \A(C^\infty_0(\Omega))$, a constant~$c>0$ and spinors~$\chi,\zeta \in \C^4$ such that
for sufficiently small~$\varepsilon>0$,
\[	|\langle u^\varepsilon_{0,\chi}\, |\,\iota^\varepsilon(a)\, u^\varepsilon_{0,\chi}\rangle|\geq \frac{1}{c\: \varepsilon^2}
\:. \]
\end{itemize}
\end{Prp}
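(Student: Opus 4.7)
The plan is to first derive a single closed-form expression for the matrix element and then to treat the two cases separately. Iterating the identity $F^\varepsilon(x)u=2\pi P^\varepsilon(\,\cdot\,,x)(\gR_\varepsilon u)(x)$ from Theorem~\ref{listpropertiesLCF}(ii) and using Proposition~\ref{prpexponentsP}(ii) to turn $\gR_\varepsilon$ applied to each inner $P^\varepsilon$ into $P^{2\varepsilon}$, one finds
\[
F^\varepsilon(x_1)\cdots F^\varepsilon(x_k)\,u^\varepsilon_{0,\zeta}=(2\pi)^k\,P^\varepsilon(\,\cdot\,,x_1)\,P^{2\varepsilon}(x_1,x_2)\cdots P^{2\varepsilon}(x_{k-1},x_k)\,P^{2\varepsilon}(x_k,0)\,\zeta.
\]
A single application of Proposition~\ref{formulaprodottoscalarelocalizzati}(i) to the outer pairing with $u^\varepsilon_{0,\chi}$ then gives
\[
\la u^\varepsilon_{0,\chi}\,|\,F^\varepsilon(x_1)\cdots F^\varepsilon(x_k)\,u^\varepsilon_{0,\zeta}\ra=-(2\pi)^{k-1}\Sl\chi\,|\,P^{2\varepsilon}(0,x_1)P^{2\varepsilon}(x_1,x_2)\cdots P^{2\varepsilon}(x_k,0)\,\zeta\Sr.
\]
Smearing against $f_{i_1}\cdots f_{i_k}$ and extending linearly to $a\in\A(C^\infty_0(\Omega))$ yields the master representation of $\la u^\varepsilon_{0,\chi}\,|\,\iota^\varepsilon(a)\,u^\varepsilon_{0,\zeta}\ra$ used in both parts.

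For part (i), the assumption $\Omega\cap L_0=\emptyset$ makes the two outermost kernels regular: since $\supp f_1$ and $\supp f_k$ are compact subsets of $\R^{1,3}\setminus L_0$, Lemma~\ref{lemma26} provides uniform-in-$\varepsilon$ $C^N$-control of $P^{2\varepsilon}(\,\cdot\,,0)\zeta$ on these sets for every $N\in\N$. I would collapse the multiple integral from the inside out by setting
\[
\xi_k(x):=P^{2\varepsilon}\bigl(x,\,f_k\,P^{2\varepsilon}(\,\cdot\,,0)\zeta\bigr),\qquad \xi_j(x):=P^{2\varepsilon}\bigl(x,\,f_j\,\xi_{j+1}\bigr)\ \text{for}\ j=k-1,\ldots,2,
\]
so that the matrix element reduces to $-(2\pi)^{k-1}\Sl\chi\,|\,P^{2\varepsilon}(0,f_1\xi_2)\Sr$. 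Proposition~\ref{propositioncontinuityP1}(v) combined with the identity $\partial_x^\alpha P^{2\varepsilon}(x,g)=P^{2\varepsilon}(x,\partial^\alpha g)$ (valid by translation invariance and integration by parts) then passes a uniform $C^{N+|\alpha|}$-bound on $\xi_{j+1}$ to a uniform $C^{N}$-bound on $\xi_j$, so a finite induction propagates the initial uniform bounds coming from Lemma~\ref{lemma26} all the way out and yields $|\la u^\varepsilon_{0,\chi}|\iota^\varepsilon(a)u^\varepsilon_{0,\zeta}\ra|\le c(a)\,|\chi|\,|\zeta|$. The factor $|\zeta|$ is tracked through its linear appearance in the innermost layer $P^{2\varepsilon}(\,\cdot\,,0)\zeta$, and the factor $|\chi|$ from the outer spin scalar product.

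For part (ii), I specialize to $k=1$ and $a=f$, in which case the master formula becomes
\[
\la u^\varepsilon_{0,\chi}\,|\,A_f^\varepsilon\,u^\varepsilon_{0,\zeta}\ra=-\int_{\R^4}f(x)\,\Sl\chi\,|\,A_{0x}^{2\varepsilon}\,\zeta\Sr\,d^4x
\]
with $A_{0x}^{2\varepsilon}=P^{2\varepsilon}(0,x)P^{2\varepsilon}(x,0)$ as in \eqref{id4}. I pick $y_0\in(\Omega\cap L_0)\setminus\{0\}$ and, if necessary, perform a rotation of the spatial axes so that $(y_0)_3\neq 0$; then I take $f\in C^\infty_0(\Omega,\R_+)$ to be a bump function supported in a small neighborhood of $y_0$ avoiding the origin, with $f(y_0)>0$, and choose $\chi,\zeta\in\C^4$ so that $\Sl\chi\,|\,A\,\zeta\Sr=\mathfrak{e}_1^\dagger A\,\mathfrak{e}_3$ for every $4\times 4$ matrix $A$. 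Theorem~\ref{theoremasymptotics}, applied with the cutoff parameter rescaled by the harmless factor~$2$ (which preserves the $\varepsilon^{-2}$ scaling and merely alters constants), then produces the required leading contribution of order $\varepsilon^{-2}$ with non-vanishing limit coefficient.

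The main technical difficulty lies in part (i). Even when $\Omega\cap L_0=\emptyset$, two distinct points $x_i,x_{i+1}\in\Omega$ can lie on each other's null cones, so the \emph{inner} kernels $P^{2\varepsilon}(x_i,x_{i+1})$ are genuinely singular as $\varepsilon\searrow 0$ and admit no pointwise bound uniform in $\varepsilon$. The resolution is never to estimate these inner kernels pointwise in both arguments but always to pair the second argument with the smooth compactly supported data $f_j\xi_{j+1}$ and invoke the uniform pointwise estimate of Proposition~\ref{propositioncontinuityP1}(v); partial derivatives in the first argument are then transferred onto the Schwartz test function through integration by parts, which is precisely what drives the inductive bookkeeping above.
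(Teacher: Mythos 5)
Your argument is correct, but it takes a genuinely different route from the paper. The paper's proof is a two-line reduction to the unregularized results of Section~\ref{secunregularized}: writing $A_f^\varepsilon=\gR_{\varepsilon}\,A^\circ_f\,\gR_{\varepsilon}$ (Theorem~\ref{properreg}~(i)) and $\gR_\varepsilon u^\varepsilon_{0,\chi}=u^{2\varepsilon}_{0,\chi}$, it converts each matrix element into $\la u^{2\varepsilon}_{0,\chi}|A^\circ_{f}\,u^{2\varepsilon}_{0,\zeta}\ra$ and then simply quotes Theorem~\ref{teoremalmite2} and Lemma~\ref{teoremalmite} for part~(i) (handling monomials of length $n\ge 2$ by peeling off the two outer factors and bounding the middle ones by the operator norms $\|A^\varepsilon_{f_i}\|\le 4T\|f_i\|_\infty$ of Proposition~\ref{proprietaAf}~(ii)), and Theorem~\ref{theoremblowup} for part~(ii). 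You instead derive a closed-form kernel representation of the full monomial and run an inductive Schwartz-norm bookkeeping through all layers, resting on Proposition~\ref{propositioncontinuityP1}~(v), the identity $\partial_x^\alpha P^{2\varepsilon}(x,g)=P^{2\varepsilon}(x,\partial^\alpha g)$, and Lemma~\ref{lemma26} at the innermost and outermost layers. This is sound (your diagnosis of the inner-kernel singularities and their resolution by always pairing with compactly supported smooth data is exactly the mechanism underlying the paper's Lemma~\ref{teoremalmite}), and in effect you re-prove the relevant unregularized estimates rather than citing them; the price is the heavier derivative count (each layer costs four more derivatives at the next layer in), the benefit is that you never need the operator-norm bound for the middle factors and you obtain an explicit formula for the matrix element. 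Your part~(ii) coincides in substance with the paper's: both reduce to Theorem~\ref{theoremasymptotics} applied to $\int f\,\mathfrak{e}_1^\dagger A^{2\varepsilon}_{0x}\mathfrak{e}_3$, the replacement $\varepsilon\mapsto 2\varepsilon$ being harmless.

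One small point in part~(ii): rather than rotating the spatial axes to arrange $(y_0)_3\neq 0$ (which would require tracking how the matrix element $\mathfrak{e}_1^\dagger A\,\mathfrak{e}_3$ and the hypotheses of Theorem~\ref{theoremasymptotics} transform under the induced spinor rotation), it is simpler to note that $\Omega\cap L_0$ is relatively open and nonempty in $L_0$, so it always contains a point $y_0\neq 0$ with $(y_0)_3\neq 0$; this is also what the paper implicitly does.
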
 \noindent
This proposition shows that, by considering the matrix elements
of operators in the algebra on vectors of~$S_0$ (which are of the form~\eqref{localize2})
and analyzing the limit~$\varepsilon \searrow 0$, we can detect whether $\Omega$
intersects the null cone~$L_0$ or not.

In Section~\ref{secunregularized} we shall introduce and analyze local algebras without regularization.
We will see that the above proposition holds analogously for these unregularized local algebras.

\subsection{The Commutator of Spacetime Points and Local Algebras}\label{commutators}
The goal of this section is to show that every spacetime point operator commutes with
the algebras supported away from the corresponding null cone, up to small corrections involving the regularization length. 
More generally, given a point $x\in\R^{1,3}$ we estimate the $\varepsilon$-behavior of the commutators
$$
[F^\varepsilon(x),A]\quad\ \mbox{for }\  A\in\A_\Omega^\varepsilon\ \mbox{ and }\ \Omega\subset\R^{1,3}.
$$ 
As we shall see, the scaling of this commutator depends on whether $\Omega$ does or does not intersect the null cone
centered at~$x$, making it possible to recover the light-cone structure from the local algebras.
Again, by translation invariance, it suffices to focus on the origin of Minkowski space.

We now state our first estimate. Since the proof uses results of Section~\ref{secunregularized}, it is given in
Appendix~\ref{secappendix}.
\begin{Thm} \label{prpiotaes}
	Let $\Omega$ be an open set which does not intersect the null cone~$L_0$. Then
	for any $a\in \A(C^\infty_0(\Omega))$ there is a constant $c(a)>0$ such that for all
	sufficiently small~$\varepsilon>0$,
	$$
	\|F^\varepsilon(0)\,\iota^\varepsilon(a)\|\le c(a)\, \varepsilon^{\frac{3}{2}}\, \|F^\varepsilon(0)\| \:.
	$$
	In particular, the commutator of the spacetime point operator and the algebra element fulfills the inequality
	$$
	\|\,[F^\varepsilon(0),\iota^\varepsilon(a)]\,\|\le  2 c(a)\, \varepsilon^{\frac{3}{2}}\,\|F^\varepsilon(0)\| \:.
	$$
\end{Thm}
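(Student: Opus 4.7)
The plan is to exploit the rank-four structure of $F^\varepsilon(0)$, so that $F^\varepsilon(0)\iota^\varepsilon(a)$ is effectively controlled by how $\iota^\varepsilon(a)^*$ acts on the four-dimensional range of $F^\varepsilon(0)$; the announced $\varepsilon^{3/2}$ gain will come out of balancing the blow-up $\|F^\varepsilon(0)\|\sim\varepsilon^{-3}$ against the decay $\|u^\varepsilon_{0,\mu}\|^{-2}\sim\varepsilon^{3}$, with the uniform matrix-element bound of Proposition~\ref{prpregularized}~(i) providing the rest.

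First I fix the orthonormal eigenbasis $e_\mu:=u^\varepsilon_{0,\mu}/\|u^\varepsilon_{0,\mu}\|$ of $\ran F^\varepsilon(0)$; pairwise orthogonality of $\{u^\varepsilon_{0,\mu}\}$ follows from Proposition~\ref{formulaprodottoscalarelocalizzati}~(i) combined with the diagonal form of $P^{2\varepsilon}(0,0)$ in Remark~\ref{remarkbehavior}, while the associated eigenvalues $\lambda_\mu\in\{2\pi\nu^\pm(\varepsilon)\}$ are read off from Theorem~\ref{listpropertiesLCF}~(iv). For any unit $v\in\H_m^-$ the spectral decomposition of $F^\varepsilon(0)$ gives, by Cauchy--Schwarz in each summand,
\[ \|F^\varepsilon(0)\iota^\varepsilon(a)v\|^2=\sum_{\mu=1}^{4}\lambda_\mu^2\,\big|\langle\iota^\varepsilon(a)^*e_\mu\,|\,v\rangle\big|^2\le\|F^\varepsilon(0)\|^2\sum_{\mu=1}^{4}\|\iota^\varepsilon(a)^*e_\mu\|^2\,, \]
reducing matters to an $\varepsilon^{3}$ bound on the last sum. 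Since $(A^\varepsilon_f)^*=A^\varepsilon_{\bar f}$ by Proposition~\ref{proprietaAf}~(i), the map $\iota^\varepsilon$ is a $*$-homomorphism of $\A(C^\infty_0(\Omega))$ (equipped with the involution $f^*:=\bar f$) into $\mycal{A}^\varepsilon_\Omega$, so $\iota^\varepsilon(a)\iota^\varepsilon(a)^*=\iota^\varepsilon(aa^*)$ and Proposition~\ref{prpregularized}~(i) applied with $b:=aa^*$ and $\chi=\zeta=\mathfrak{e}_\mu$ yields
\[ \|\iota^\varepsilon(a)^*u^\varepsilon_{0,\mu}\|^2=\langle u^\varepsilon_{0,\mu}\,|\,\iota^\varepsilon(aa^*)\,u^\varepsilon_{0,\mu}\rangle\le c(aa^*) \]
uniformly in $\varepsilon$. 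The normalization factor $\|u^\varepsilon_{0,\mu}\|^{-2}\le C\varepsilon^3$ for sufficiently small $\varepsilon$ is extracted from the lower bound in~\eqref{estimatenormlocstate} together with the scaling~\eqref{eigenvaluesform}; summing the four $\mu$-terms gives $\sum_\mu\|\iota^\varepsilon(a)^*e_\mu\|^2\le 4Cc(aa^*)\,\varepsilon^3$, and substituting back produces the first estimate of the theorem with $c(a):=2\sqrt{Cc(aa^*)}$.

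For the commutator inequality I write $[F^\varepsilon(0),\iota^\varepsilon(a)]=F^\varepsilon(0)\iota^\varepsilon(a)-\iota^\varepsilon(a)F^\varepsilon(0)$ and note that $\|\iota^\varepsilon(a)F^\varepsilon(0)\|=\|F^\varepsilon(0)\iota^\varepsilon(a^*)\|$ by selfadjointness of $F^\varepsilon(0)$ and the $*$-structure on $\A(C^\infty_0(\Omega))$; applying the first estimate to $a^*$ and passing to $\max(c(a),c(a^*))$ absorbs the difference into the stated factor of $2$. The main obstacle will be to route the divergence balance correctly: one must work with the \emph{normalized} eigenvectors $e_\mu$ so that exactly one factor $\|u^\varepsilon_{0,\mu}\|\sim\varepsilon^{-3/2}$ is absorbed together with $\|F^\varepsilon(0)\|\sim\varepsilon^{-3}$, leaving the remaining $\|u^\varepsilon_{0,\mu}\|^{-1}\sim\varepsilon^{3/2}$ to supply the gain, and this only works because the rank-four structure of $F^\varepsilon(0)$ reduces the norm computation to matrix elements on precisely the vectors $u^\varepsilon_{0,\mu}$ that Proposition~\ref{prpregularized}~(i) is designed to control.
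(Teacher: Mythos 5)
Your proof is correct, and it rests on the same two pillars as the paper's: the uniform boundedness of algebra elements evaluated on the vectors $u^\varepsilon_{0,\chi}$ away from the null cone, and the $\varepsilon^{-3/2}$ blow-up of $\|u^\varepsilon_{0,\chi}\|$ from~\eqref{estimatenormlocstate}, balanced against $\|F^\varepsilon(0)\|\sim\varepsilon^{-3}$. The packaging, however, differs. The paper bounds $\|A\,F^\varepsilon(0)\|/\|F^\varepsilon(0)\|$ by $\sup_{u}\|A F^\varepsilon(0)u\|/\|F^\varepsilon(0)u\|=\sup_{w\in S_0}\|Aw\|/\|w\|$, reduces to monomials $A=A^\varepsilon_{f_1}\cdots A^\varepsilon_{f_n}$, and feeds the last factor into Lemma~\ref{teoremalmite} through the identity $A^\varepsilon_f=\gR_\varepsilon A^\circ_f\gR_\varepsilon$ and $\gR_\varepsilon u^\varepsilon_{0,\chi}=u^{2\varepsilon}_{0,\chi}$. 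You instead use the rank-four spectral decomposition of $F^\varepsilon(0)$ in the orthonormal eigenbasis (your orthogonality check via the diagonal form of $\gamma^0P^{2\varepsilon}(0,0)$ is valid), apply Cauchy--Schwarz, and then convert $\|\iota^\varepsilon(a)^*u^\varepsilon_{0,\mu}\|^2$ into the matrix element $\la u^\varepsilon_{0,\mu}|\iota^\varepsilon(aa^*)u^\varepsilon_{0,\mu}\ra$ via the $^*$-homomorphism property of $\iota^\varepsilon$, so that Proposition~\ref{prpregularized}~(i) applies directly. This buys you two things: you never need to reduce to monomials (general $a$ is handled in one stroke), and you avoid invoking the unregularized machinery of Section~\ref{secunregularized} explicitly — though of course Proposition~\ref{prpregularized}~(i) is itself proved from Lemma~\ref{teoremalmite} and Theorem~\ref{teoremalmite2}, so there is no gain in logical dependencies and no circularity either, since that proposition does not rely on Theorem~\ref{prpiotaes}. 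Your treatment of the commutator via $\|\iota^\varepsilon(a)F^\varepsilon(0)\|=\|F^\varepsilon(0)\iota^\varepsilon(a^*)\|$ and passing to $\max(c(a),c(a^*))$ is fine and in fact slightly more careful than the paper's one-line remark.
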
 \noindent
The factor~$\|F^\varepsilon(0)\|$ is needed in order for the estimate to be
invariant under scalings of~$F^\varepsilon$.

Instead, if $\iota^\varepsilon(a)$ instead belong to the algebra localized on an open set which intersect the null cone, we get different results:

\begin{Prp}\label{prpiotaes2}
Let $\Omega$ be an open set which intersects the null cone~$L_0$. Then
then there exists $a\in \A(C^\infty_0(\Omega))$ and a constant $c>0$ such that
for all sufficiently small $\varepsilon$,
$$
	\big\|\,[F^\varepsilon(0),\iota^\varepsilon(a)]\,\big\|\ge c\:\varepsilon\,\|F^\varepsilon(0)\| \:.
$$
\end{Prp}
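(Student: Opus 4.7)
The plan is to take $a$ to be a single test function $f \in C^\infty_0(\Omega)$, so that $\iota^\varepsilon(a) = A_f^\varepsilon$, and to produce the lower bound by evaluating one matrix element of the commutator on two vectors $u^\varepsilon_{0,\mathfrak{e}_1}$ and $u^\varepsilon_{0,\mathfrak{e}_3}$ which, by Theorem~\ref{listpropertiesLCF}(iv), are eigenvectors of $F^\varepsilon(0)$ lying in the \emph{different} eigenspaces with eigenvalues $2\pi\nu^-(\varepsilon)$ and $2\pi\nu^+(\varepsilon)$, respectively. Selfadjointness of $F^\varepsilon(0)$ then yields
\begin{equation*}
\bigl\langle u^\varepsilon_{0,\mathfrak{e}_1}\bigm|[F^\varepsilon(0), A_f^\varepsilon]\, u^\varepsilon_{0,\mathfrak{e}_3}\bigr\rangle = 2\pi\bigl(\nu^-(\varepsilon) - \nu^+(\varepsilon)\bigr)\,\bigl\langle u^\varepsilon_{0,\mathfrak{e}_1}\bigm|A_f^\varepsilon\, u^\varepsilon_{0,\mathfrak{e}_3}\bigr\rangle,
\end{equation*}
so the task reduces to bounding the two scalar factors from below.

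For the eigenvalue gap, formula~\eqref{eigenvaluesform} gives $|\nu^+(\varepsilon) - \nu^-(\varepsilon)| = 2g(\varepsilon)/\varepsilon^3$ and $\|F^\varepsilon(0)\| = 2\pi\nu^+(\varepsilon)$, both of leading order $\varepsilon^{-3}$ with strictly positive limits, so there is $c_2 > 0$ with $|\nu^+(\varepsilon) - \nu^-(\varepsilon)| \ge c_2 \|F^\varepsilon(0)\|$ for small enough $\varepsilon$. To handle the matrix element, I would pick a non-zero null vector $y_0 \in \Omega \cap L_0$ with $y_{0,3} \neq 0$ (after a relabeling of the spatial directions if necessary, which only permutes the $\mathfrak{e}_\mu$ without changing the eigenspace decomposition of $F^\varepsilon(0)$), and choose $f \in C^\infty_0(\Omega, \R_+)$ with $f(y_0) > 0$ supported in a small ball around $y_0$ that excludes the origin. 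Using \eqref{melement} and $\mathfrak{e}_1^\dagger \gamma^0 = \mathfrak{e}_1^\dagger$ in the Dirac representation,
\begin{equation*}
\bigl\langle u^\varepsilon_{0,\mathfrak{e}_1}\bigm|A_f^\varepsilon\, u^\varepsilon_{0,\mathfrak{e}_3}\bigr\rangle = -\int_{\R^4} f(x)\, \mathfrak{e}_1^\dagger\, A_{0x}^{2\varepsilon}\, \mathfrak{e}_3\, d^4x,
\end{equation*}
and Theorem~\ref{theoremasymptotics} (applied with $2\varepsilon$ in place of $\varepsilon$) produces $c_1 > 0$ such that the absolute value of this integral exceeds $c_1/\varepsilon^2$ for all sufficiently small $\varepsilon>0$.

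Combining these two bounds via Cauchy--Schwarz together with the norm estimate $\|u^\varepsilon_{0,\mathfrak{e}_\mu}\|^2 \le \nu^+(\varepsilon)/(2\pi) \le c_3/\varepsilon^3$ from Proposition~\ref{formulaprodottoscalarelocalizzati}(ii) (and Remark~\ref{remarkbehavior}) yields
\begin{equation*}
\bigl\|[F^\varepsilon(0), A_f^\varepsilon]\bigr\| \,\ge\, \frac{\bigl|\langle u^\varepsilon_{0,\mathfrak{e}_1}\bigm|[F^\varepsilon(0), A_f^\varepsilon]\, u^\varepsilon_{0,\mathfrak{e}_3}\rangle\bigr|}{\|u^\varepsilon_{0,\mathfrak{e}_1}\|\,\|u^\varepsilon_{0,\mathfrak{e}_3}\|} \,\ge\, \frac{2\pi\, c_1 c_2\, \|F^\varepsilon(0)\|\,\varepsilon^{-2}}{c_3\,\varepsilon^{-3}} \,=\, c\,\varepsilon\,\|F^\varepsilon(0)\|,
\end{equation*}
with $c := 2\pi c_1 c_2 / c_3 > 0$, which is the claimed inequality.

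The main subtlety is choosing the pair of spinors so that two opposing requirements are met simultaneously: they must lie in \emph{distinct} eigenspaces of $F^\varepsilon(0)$, because otherwise the eigenvalue gap in the commutator identity collapses and the matrix element gives no information, and at the same time the corresponding off-diagonal entry of $A_{0y}^{2\varepsilon}$ must witness the $\varepsilon^{-2}$ light-cone singularity of Theorem~\ref{theoremasymptotics}. The choice $(\mathfrak{e}_1, \mathfrak{e}_3)$ satisfies both conditions thanks to the explicit formulae \eqref{expressionstoevaluate}, which identify precisely these off-diagonal components as the ones carrying the singular behaviour near the null cone.
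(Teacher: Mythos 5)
Your argument is correct and follows essentially the same route as the paper's own proof: take $a=f$ a single test function, evaluate the commutator on the eigenvector pair $u^\varepsilon_{0,\mathfrak{e}_1},u^\varepsilon_{0,\mathfrak{e}_3}$ of $F^\varepsilon(0)$ to pull out the eigenvalue gap $2\pi(\nu^+-\nu^-)\sim\varepsilon^{-3}$, invoke Theorem~\ref{theoremasymptotics} for the $\varepsilon^{-2}$ lower bound on $\langle u^\varepsilon_{0,\mathfrak{e}_1}|A_f^\varepsilon u^\varepsilon_{0,\mathfrak{e}_3}\rangle$, and divide by the norm bound $\|u^\varepsilon_{0,\mathfrak{e}_\mu}\|^2\lesssim\varepsilon^{-3}$ from \eqref{estimatenormlocstate}. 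Your added care about selecting $y_0\in\Omega\cap L_0$ with $y_{0,3}\neq 0$ is a point the paper glosses over, but it changes nothing of substance.
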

\begin{proof}
Let $f$ a smooth function with compact support in $\Omega$. Then
	\begin{align*}
	\la u^\varepsilon_{0,1}\,|\,[A_f^\varepsilon,F^\varepsilon(0)]\, u^\varepsilon_{0,3})\ra&=\la u^\varepsilon_{0,1}\,|\,A_f^\varepsilon\,F^\varepsilon(0)\,u^\varepsilon_{0,3}\ra-\la u^\varepsilon_{0,1}\,|\,F^\varepsilon(0)\,A_f^\varepsilon\, u^\varepsilon_{0,3}\ra=\\
	&=\la u^\varepsilon_{0,1}\,|\,A_f^\varepsilon\,F^\varepsilon(0)\,u^\varepsilon_{0,3}\ra-\la F^\varepsilon(0)\,u^\varepsilon_{0,1}\,|\,A_f^\varepsilon\, u^\varepsilon_{0,3}\ra=\\
	&=2\pi\big(\nu^+(\varepsilon)-\nu^-(\varepsilon)\big)\la u^\varepsilon_{0,1}\,|\,A_f^\varepsilon \,u^\varepsilon_{0,3}\ra =\\
	&=4\pi\,g(\varepsilon)\,\varepsilon^{-3}\int_{\R^4} f(x)\, \mathfrak{e}_1^\dagger\, A_{0x}^{2\varepsilon}\, \mathfrak{e}_3\, d^4 x \:,
	\end{align*}
where we used Remark \ref{remarkbehavior} and Theorem \ref{listpropertiesLCF}. At this point, using the fact that the operator norm can be expressed as $\|A\|=\sup_{u,v}\frac{|(u|Av)|}{\|u\|\|v\|}$ 
together with the inequalities in~\eqref{estimatenormlocstate}, we get
\begin{equation}\label{equazionestima}
\begin{split}
\big\|[A_f^\varepsilon,F^\varepsilon(0)]\big\|\ge (2\pi)^2\, \frac{|2\,g(\varepsilon)\,\varepsilon^{-3}|}{|\nu^+(\varepsilon)|}\int_{\R^4} f(x)\, \mathfrak{e}_1^\dagger\, A_{0x}^{2\varepsilon}\, \mathfrak{e}_3\, d^4 x \:.
\end{split}
\end{equation}
Noting that $2\,g(\varepsilon)\,\varepsilon^{-3}\sim\nu^+(\varepsilon)\sim\varepsilon^{-3}$, 
we can choose a point $x\in \Omega\cap L_0\setminus\{0\}$ and a function~$f$ as in
Theorem \ref{theoremasymptotics}. With this choice, the integral on the right-hand side of \eqref{equazionestima}
scales $\sim \varepsilon^{-2}$. The proof follows by noticing that $\|F^\varepsilon(x)\|\sim \varepsilon^{-3}$.
\end{proof}

The $\varepsilon$-behaviors of the two estimates of Theorem~\ref{prpiotaes} and Proposition~\ref{prpiotaes2}, respectively, are incompatible in the limit $\varepsilon\searrow 0$. On the null cone centered at $x$, the vectors $P^{\varepsilon}(\,\cdot\,,x) \,\chi$ develop singularities, and this becomes manifest on the algebra level in the commutators.

These results can also be interpreted in terms of the loss of information carried away on waves propagating
``almost with the speed of light.'' The reader interested in this interpretation
and in the connection to the ETH formulation of quantum theory is referred to~\cite{eth-cfs}.

\section{Unregularized Local Algebras in Minkowski Space} \label{secunregularized}
In this section we turn our attention to the following question: which of the above structures
remain well-defined if the regularization is removed? 
Clearly, the spacetime operators~$F^\varepsilon(x)$ diverge in the limit~$\varepsilon \searrow 0$, because
(see Remark \ref{remarkbehavior})
$$
\| F^\varepsilon(x)\|=2\pi\nu^+(\varepsilon)=\frac{2\pi m}{\varepsilon^2}\,f(\varepsilon)+\frac{2\pi}{\varepsilon^3}\,g(\varepsilon)\stackrel{\varepsilon\searrow 0}{\longrightarrow} +\infty \:.
$$
However, we will show that for the smeared operators in~\eqref{defsmeared} this limit
does exist, making it possible to introduce unregularized local algebras.

Our method is based on the following simple consideration:
We cannot work directly with~\eqref{defsmeared} because the integrand diverges as~$\varepsilon \searrow 0$.
The integrand is ill-defined pointwise even if we consider expectation values, because in
$$
\la u| F^\varepsilon(x) v\ra =-\Sl  \gR_{\varepsilon} u(x) | \gR_\varepsilon v(x) \Sr
$$
the limit does not exist if~$u$ or~$v$ are not continuous.
However, if we integrate
against a smooth, compactly supported function, the integral does converge as~$\varepsilon \searrow 0$.
We now enter the detailed constructions.

\subsection{The Unregularized Smeared Operators}
The unregularized smeared operator $A^\circ_f$ can be defined using the Riesz representation theorem.
\begin{Prp}
	For any $f\in C_0^\infty(\R^{1,3},\C)$ there exists a unique operator $A^\circ_f\in \Lin(\scH_m^-)$ such that for every $u,v\in\scH_m^-$,
	$$
	\langle u|A^\circ_f v\rangle= -\int_{\R^4} f(x)\,\Sl u(x) | v(x)\Sr\, d^4x \:.
	$$
\end{Prp}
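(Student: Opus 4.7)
The plan is to apply the Riesz representation theorem (in its form for bounded sesquilinear forms on a Hilbert space) to the map
\[
B : \scH_m^- \times \scH_m^- \to \C, \qquad B(u,v) := -\int_{\R^4} f(x)\, \Sl u(x) \,|\, v(x) \Sr\, d^4x \:.
\]
First I would check that $B$ is well-defined. Every $u \in \scH_m^-$ is an element of $\scL^2_{\text{loc}}(\R^{1,3},\C^4)$ (as recalled in Section~\ref{sectiondiracequation}), so that, since $f \in C_0^\infty(\R^{1,3},\C)$ has compact support and $\gamma^0$ is a bounded matrix, the pointwise bound $|\Sl u(x)|v(x)\Sr| \le |u(x)||v(x)|$ ensures that the integrand is integrable on the compact set $\supp f$, independently of the chosen representatives of $u$ and $v$. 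Sesquilinearity is immediate.

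The key step is to prove continuity of $B$. Suppose $\supp f \subset [-T,T]\times\R^3$. Since $u,v$ solve the Dirac equation weakly, conservation of the Dirac current implies that
\[
\int_{\R^3} u(t,\V{x})^\dagger\, v(t,\V{x})\, d^3\V{x} = (u|v) \qquad \text{for all } t \in \R,
\]
(first on the dense subspace $\scH_m^{\rm sc}$ by integration of the continuity equation, then extended by continuity to $\scH_m^-$). In particular, $\|u(t,\,\cdot\,)\|_{L^2(\R^3)} = \|u\|$ for every $t$. Combining this with the pointwise bound above, Fubini's theorem, and the Cauchy--Schwarz inequality in $L^2(\R^3)$ yields
\[
|B(u,v)| \le \|f\|_\infty \int_{-T}^{T}\! dt \int_{\R^3}\!\! |u(t,\V{x})|\,|v(t,\V{x})|\, d^3\V{x} \le 2T\,\|f\|_\infty\, \|u\|\, \|v\| \:.
\]
Thus $B$ is a bounded sesquilinear form on $\scH_m^-$, and the Riesz representation theorem produces a unique $A_f^\circ \in \Lin(\scH_m^-)$ satisfying $\la u | A_f^\circ v\ra = B(u,v)$ for all $u,v \in \scH_m^-$. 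Uniqueness of such an operator is an immediate consequence of the non-degeneracy of the Hilbert space inner product, by varying $u$.

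The only subtle point is the justification of current conservation for general $u \in \scH_m^-$, since these elements are only $L^2_{\text{loc}}$ functions satisfying the Dirac equation weakly. This extends the classical identity for smooth, spatially compactly supported solutions to the completion by the density of $\scH_m^{\rm sc}$ and the continuity of $u \mapsto \int_{\R^3}|u(t,\V{x})|^2 d^3\V{x}$ in the $\scH_m^-$-norm; this density argument is essentially the one already used to construct the Hilbert space $\scH_m$ itself in Section~\ref{sectiondiracequation}, so no genuine obstacle arises.
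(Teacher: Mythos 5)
Your proposal is correct and follows essentially the same route as the paper: bound the sesquilinear form on a time strip containing $\supp f$ using current conservation together with Cauchy--Schwarz (the paper applies H\"older over the whole strip $R_T$ rather than slice-by-slice, which gives the identical bound $2T\,\|f\|_\infty\,\|u\|\,\|v\|$), and then invoke the Riesz representation theorem. The subtlety you flag about extending current conservation to general elements of $\scH_m^-$ is handled in the paper by citing the corresponding lemma in~\cite{oppio}, and your density argument is the standard justification.
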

\begin{proof}
	Consider two vectors $u,v\in\scH_m^-$ and a function $f\in C_0^\infty(\R^{1,3},\C)$. By choosing $T>0$ suitably, we can assume that $f$ is supported inside the time strip
\[ R_T := \{(t, \V{x})\:|\: |t| \leq T \} \:. \]	
Using Hölder's inequality, we get
	\begin{align*}
	\int_{\R^4} &\big|f(x)\, \Sl u(x) | v(x)\Sr \big|\, d^4x=\int_{R_T} \big|f(x)\, \Sl u(x) | v(x)\Sr \big|\, d^4x \le\\
	&\le \left(\int_{R_T}|f(x)|^2|u(x)|^2\, d^4x\right)^{1/2}\left(\int_{R_T}|v(x)|^2\, d^4x\right)^{1/2}\le\\
	&\le \|f\|_\infty \left(\int_{R_T}|u(x)|^2\, d^4x\right)^{1/2}\left(\int_{R_T}|v(x)|^2\, d^4x\right)^{1/2} = \\
	&= 2T\, \|f\|_\infty\, \|u\|\|v\| <\infty \:,
	\end{align*}
where in the last step we used the theorem by Fubini-Tonelli and the fact that the spatial integral is time independent due to current conservation (see \cite[Lemma 2.7]{oppio}).
As a consequence, the sesquilinear form
	$$
	d_f:\scH_m^- \times\scH_m^- \rightarrow \C\:, \qquad (u,v) \mapsto  -\int_{\R^4} f(x)\,\Sl u(x) | v(x)\Sr \, d^4x
	$$
is well-defined and continuous. The Riesz Representation Theorem yields a unique operator $A^\circ_f$ with~$d_f(u,v)=(u|A^\circ_fv)$.
\end{proof}

We now collect a few properties of these operators. The proof is similar as in the regularized case
(see Proposition~\ref{proprietaAf}).
\begin{Prp}\label{proprietaAfunreg}
Let $\Omega\subset\R^{1,3}$ be open, $f,g\in C_0^\infty(\R^{1,3},\C)$ and $a,b\in\C$. Then
the following statements hold:
	\begin{itemize}[leftmargin=2.5em]
		\vspace{0.12em}
		\item[{\rm{(i)}}] $A^\circ_{af+bg}=aA^\circ_f+bA^\circ_g$ and $(A^\circ_{f})^*=A^\circ_{\overline{f}}$\\[-0.5em]	\item[{\rm{(ii)}}] If $\supp f\subset R_T$, then $\|A^\circ_f\|\le 4T\:\|f\|_\infty$
	\end{itemize}
\end{Prp}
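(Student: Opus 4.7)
The plan is to mirror the structure of Proposition~\ref{proprietaAf}, exploiting the fact that $A^\circ_f$ is defined as the unique bounded operator representing the sesquilinear form $d_f(u,v):=-\int_{\R^4} f(x)\,\Sl u(x)\,|\,v(x)\Sr\,d^4x$ via the Riesz Representation Theorem. Both identities will therefore be reduced to identities between sesquilinear forms, and uniqueness of the Riesz representative will close each argument.

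For part (i), I would first observe that for fixed $u,v$ the map $f\mapsto d_f(u,v)$ is $\C$-linear in $f$, so $d_{af+bg}=a\,d_f+b\,d_g$; since this combined form is also the one represented by $aA^\circ_f+bA^\circ_g$, uniqueness yields $A^\circ_{af+bg}=aA^\circ_f+bA^\circ_g$. For the adjoint identity, I would rewrite $\langle A^\circ_f u\,|\,v\rangle$ as $\overline{\langle v\,|\,A^\circ_f u\rangle}$, push the complex conjugation inside the integral, and use the Hermitian symmetry $\overline{\Sl v(x)\,|\,u(x)\Sr}=\Sl u(x)\,|\,v(x)\Sr$ of the spin scalar product~\eqref{sspMink} (which reduces to $(\gamma^0)^\dagger=\gamma^0$), together with $\overline{f(x)}$ replacing $f(x)$. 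The resulting expression is precisely $d_{\overline{f}}(u,v)=\langle u\,|\,A^\circ_{\overline{f}}v\rangle$, and uniqueness gives $(A^\circ_f)^*=A^\circ_{\overline{f}}$.

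For part (ii), I intend to reuse the Hölder estimate already carried out in the existence proof of $A^\circ_f$: bounding $|f(x)|$ by $\|f\|_\infty$ on its support inside the time strip $R_T$, applying Cauchy--Schwarz, and invoking current conservation to compute each spatial slice integral as $\|u\|^2$ respectively $\|v\|^2$, so that $\int_{R_T}|u(x)|^2\,d^4x=2T\,\|u\|^2$ and similarly for $v$. This yields $|\langle u\,|\,A^\circ_f v\rangle|\le 2T\,\|f\|_\infty\,\|u\|\,\|v\|$, which is in fact sharper than the claimed bound; taking the supremum over unit vectors gives $\|A^\circ_f\|\le 2T\,\|f\|_\infty\le 4T\,\|f\|_\infty$. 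I do not foresee any genuine obstacle: all analytical content was absorbed into the proof that $d_f$ is a bounded sesquilinear form, and the remaining steps amount to bookkeeping and an elementary symmetry check on $\Sl\cdot\,|\,\cdot\Sr$.
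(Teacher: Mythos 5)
Your proof is correct and takes essentially the same route as the paper, which simply defers to the regularized case (Proposition~\ref{proprietaAf}): part (i) reduces to the sesquilinear form and uniqueness of the Riesz representative, and part (ii) is the H\"older/current-conservation estimate on the time strip $R_T$. Your part (ii) is in fact marginally cleaner, since bounding the off-diagonal matrix elements directly yields $\|A^\circ_f\|\le 2T\,\|f\|_\infty$, whereas the paper's argument in the regularized analogue passes through the selfadjoint case and the decomposition of $f$ into real and imaginary parts, which costs the extra factor of $2$.
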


At this point, one may wonder if the operator~$A_f^\circ$ can be obtained as the limit $\varepsilon\searrow 0$ of the regularized operators $A_f^\varepsilon$. This is indeed possible, as shown in the following theorem.
\begin{Thm}\label{properreg}
For any~$f\in C_0^\infty(\R^{1,3},\C)$, the following statements hold:
	\begin{itemize}[leftmargin=2.5em]
		\vspace{0.2em}
		\item[{\rm{(i)}}] $A_f^\varepsilon = \gR_{\varepsilon}\, A^\circ_f\,\gR_{\varepsilon}$,\\[-0.6em]
		\item[{\rm{(ii)}}] $\lim_{\varepsilon\to 0}A_f^\varepsilon= A^\circ_f$ in the strong topology.
	\end{itemize}
\end{Thm}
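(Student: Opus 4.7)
The plan is to prove part (i) first by a direct matrix-element computation, and then to deduce (ii) from (i) together with the strong convergence $\gR_\varepsilon \to \bI$ and the uniform bound $\|\gR_\varepsilon\|\le 1$ from Proposition~\ref{propositionregularization}. I expect both steps to be essentially routine; the only mild subtlety lies in splitting the strong-convergence argument into two pieces so that both factors of $\gR_\varepsilon$ can be disposed of simultaneously.

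For \textbf{(i)}, I would test the operator identity against arbitrary vectors $u,v\in\scH_m^-$. On the one hand, the defining property of $A_f^\varepsilon$ (smearing~\eqref{defsmeared} together with~\eqref{defF}) gives
\[
\la u\,|\,A_f^\varepsilon v\ra = \int_{\R^4} f(x)\,\la u\,|\,F^\varepsilon(x)\,v\ra\,d^4x = -\int_{\R^4} f(x)\,\Sl \gR_\varepsilon u(x)\,|\,\gR_\varepsilon v(x)\Sr\,d^4x.
\]
On the other hand, using selfadjointness of~$\gR_\varepsilon$ from Proposition~\ref{propositionregularization}(ii) and the defining property of $A_f^\circ$,
\[
\la u\,|\,\gR_\varepsilon A_f^\circ \gR_\varepsilon v\ra = \la \gR_\varepsilon u\,|\,A_f^\circ\,\gR_\varepsilon v\ra = -\int_{\R^4} f(x)\,\Sl \gR_\varepsilon u(x)\,|\,\gR_\varepsilon v(x)\Sr\,d^4x.
\]
The two expressions coincide for all $u,v$, so the operator identity in (i) follows.

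For \textbf{(ii)}, fix $u\in\scH_m^-$. Using (i), I write
\[
A_f^\varepsilon u - A_f^\circ u = \gR_\varepsilon A_f^\circ \gR_\varepsilon u - A_f^\circ u = \gR_\varepsilon A_f^\circ (\gR_\varepsilon u - u) + \big(\gR_\varepsilon A_f^\circ u - A_f^\circ u\big).
\]
The first summand is controlled by $\|\gR_\varepsilon\|\,\|A_f^\circ\|\,\|\gR_\varepsilon u-u\|$, which tends to zero since $\|\gR_\varepsilon\|\le 1$ (Proposition~\ref{propositionregularization}(iii)), $A_f^\circ$ is bounded (Proposition~\ref{proprietaAfunreg}(ii)), and $\gR_\varepsilon u\to u$ by Proposition~\ref{propositionregularization}(iv). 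The second summand tends to zero by applying the same strong-convergence statement to the fixed vector $A_f^\circ u\in\scH_m^-$. This yields $A_f^\varepsilon u\to A_f^\circ u$ for every $u$, which is exactly strong convergence. The main point to be careful about is ensuring that the uniform bound $\|\gR_\varepsilon\|\le 1$ is invoked, because otherwise one cannot absorb the first term in the decomposition.
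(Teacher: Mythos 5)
Your proposal is correct and follows essentially the same route as the paper: part (i) by testing the operator identity on arbitrary $u,v$ using the selfadjointness of $\gR_\varepsilon$, and part (ii) by the same two-term decomposition $\gR_\varepsilon A_f^\circ(\gR_\varepsilon u-u)+(\gR_\varepsilon A_f^\circ u-A_f^\circ u)$, controlled by $\|\gR_\varepsilon\|\le 1$, boundedness of $A_f^\circ$, and strong convergence $\gR_\varepsilon\to\bI$. No gaps.
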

\begin{proof} Point (i) follows immediately from the fact that, for any ~$u,v\in\scH_m^-$,
$$
\langle u\,|\,\gR_{\varepsilon}\, A^\circ_f\, \gR_{\varepsilon} u\rangle=  \langle \gR_{\varepsilon} u \,|\, A^\circ_f\, \gR_{\varepsilon} u\rangle = -\int_{\R^4} f(x)\, \Sl \gR_{\varepsilon} u(x) | \gR_{\varepsilon}  u(x)\Sr\, d^4x = \langle u|A_f^\varepsilon v\rangle \:.
$$ 
In order to prove point (ii), note that, for any $u\in\scH_m^-$,
	\begin{equation*}
	\begin{split}
	\|A_f^\varepsilon u-A^\circ_f u\|&\le \|\gR_{\varepsilon} (A^\circ_f\, \gR_{\varepsilon} u)-\gR_{\varepsilon}(A^\circ_f u)\|+\|\gR_{\varepsilon}(A^\circ_f u)- A^\circ_f u\|\le \\
	&\le  \|A^\circ_f\, \gR_{\varepsilon} u-A^\circ_f u\|+\|\gR_{\varepsilon}(A^\circ_f u)- A^\circ_f u\| \:,
	\end{split}
	\end{equation*}
	where in the last inequality we used the bound~$\|\gR_{\varepsilon}\|\le 1$ together with~(i). The fact that $A^\circ_f$ is bounded and that $\gR_{\varepsilon}$ is strongly continuous concludes the proof. 
\end{proof}

As we did in the regularized case, we now study how these operators act on the dense set of smooth
solutions~$P(\,\cdot\,,\varphi)$. Notice that the following result is in agreement with Theorem~\ref{properreg} and with Theorem~\ref{theoremeactionAf}. The proof is analogous to that of Theorem~\ref{theoremeactionAf}, with obvious changes.
\begin{Thm}\label{theoremeactionAfunreg}
	Let $f\in C_0^\infty(\R^{1,3},\C)$ and $\varphi\in {\mathcal{S}}(\R^{1,3},\C^4)$. Then
\[	A^\circ_f \big(P(\,\cdot\,,\varphi)\big)=2\pi\,P(\,\cdot\,,fP(\,\cdot\,,\varphi)) \:. \]
\end{Thm}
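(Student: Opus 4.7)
The plan is to follow the structure of the proof of Theorem \ref{theoremeactionAf} (in the appendix), stripped of the regularization operators $\gR_\varepsilon$. Using the defining property of $A^\circ_f$, I first unfold the left-hand side for an arbitrary $u \in \scH_m^-$:
$$\la u \,|\, A^\circ_f \, P(\cdot,\varphi) \ra = -\int_{\R^4} f(x)\, u(x)^\dagger \gamma^0 \, P(x,\varphi)\, d^4x = -\int_{\R^4} u(x)^\dagger \gamma^0 \, g(x)\, d^4x,$$
where $g := f\,P(\cdot,\varphi)$ belongs to $C_0^\infty(\R^{1,3},\C^4) \subset \mathcal{S}(\R^{1,3},\C^4)$ because $f$ is smooth and compactly supported and $P(\cdot,\varphi)$ is smooth by Proposition \ref{propositioncontinuityP1}(i). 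The claim thereby reduces to the reproducing-type identity
$$\la u \,|\, P(\cdot, g) \ra = -\frac{1}{2\pi}\int_{\R^4} u(x)^\dagger \gamma^0\, g(x)\, d^4x \qquad \text{for all } u \in \scH_m^-,\ g \in \mathcal{S}(\R^{1,3},\C^4).$$

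To prove this reproducing identity, I would pass to momentum space. Writing $u = \hat{\mathrm E}(\psi)$ with $\psi = \hat{P}_-\psi$ supported on the negative-energy branch, and inserting the Fourier representation \eqref{bidistributionP} of $P(x,y)$ inside $P(x,g) = \int P(x,y)\,g(y)\,d^4y$, the mass-shell $\delta$-function together with the cutoff $\Theta(-k_0)$ collapse the $k$-integration to the negative mass hyperboloid. After simplifying via $p_-(\V k)\gamma^0 p_-(\V k) = p_-(\V k)$, one recognizes the Plancherel representation of the $L^2$ inner product of the $t=0$ initial data, which is by definition the Hilbert-space scalar product $\la u | P(\cdot,g) \ra$; the overall factor $-(2\pi)^{-1}$ is produced by the Fourier normalizations in \eqref{bidistributionP} and \eqref{expressionEonS}. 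This is exactly the Fourier bookkeeping already performed in the proof of Proposition \ref{formulaprodottoscalarelocalizzati}(i), there with an additional $\gR_\varepsilon$ factor.

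An alternative route proceeds by passing to the limit $\varepsilon \searrow 0$ in Theorem \ref{theoremeactionAf}. By Theorem \ref{properreg}(ii) the left-hand sides converge strongly to $A^\circ_f P(\cdot,\varphi)$, so it would suffice to prove $\|P^\varepsilon(\cdot, fP^\varepsilon(\cdot,\varphi)) - P(\cdot, fP(\cdot,\varphi))\|_{\scH_m^-}\to 0$. Setting $g_\varepsilon := fP^\varepsilon(\cdot,\varphi)$ and invoking Proposition \ref{prpexponentsP}(i) in the form $P^\varepsilon(\cdot,h) = \gR_\varepsilon P(\cdot,h)$, this difference splits as $\gR_\varepsilon P(\cdot, g_\varepsilon - g) + (\gR_\varepsilon - \mathrm{Id})\,P(\cdot,g)$; the second summand vanishes by Proposition \ref{propositionregularization}(iv), while the first is bounded by $\|P(\cdot, g_\varepsilon - g)\|_{\scH_m^-}$. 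The main obstacle in either route is the continuity of the map $h \mapsto P(\cdot,h)$ from $\mathcal{S}(\R^{1,3},\C^4)$ into $\scH_m^-$, i.e.\ a bound of the form $\|P(\cdot,h)\|_{\scH_m^-} \le C\,\|h\|_{p,q}$ for suitable Schwartz seminorms, so that (combined with the uniform convergence of $P^\varepsilon(\cdot,\varphi)$ and all its derivatives on $\supp f$ from Proposition \ref{propositioncontinuityP1}(v)-(vi)) one can conclude convergence.
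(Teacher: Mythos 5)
Your first route is essentially the paper's own proof: the paper simply states that the argument of Theorem~\ref{theoremeactionAf} carries over ``with obvious changes'', i.e.\ one unfolds the matrix elements against the dense set of solutions $u_\psi=\hat{\mathrm{E}}(\psi)$ with $\psi\in\hat P_-(\mathcal{S}(\R^3,\C^4))$, does the Fourier/Parseval bookkeeping, and identifies the result with $2\pi\,P(\,\cdot\,,fP(\,\cdot\,,\varphi))$, exactly as you describe. One correction: the identity you invoke, $p_-(\V k)\gamma^0 p_-(\V k)=p_-(\V k)$, is false; on the mass shell $(\slashed k+m)^2=2m(\slashed k+m)$ gives $p_-\gamma^0 p_-=(m/k^0)\,p_-=-(m/\omega(\V k))\,p_-$. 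What the computation actually needs is that $p_-$ is an orthogonal projection, so that $\psi^\dagger p_-=(p_-\psi)^\dagger=\psi^\dagger$ for $\psi$ in the negative-energy subspace; with that, $\psi^\dagger\bigl(p_-\gamma^0\cF(g)\bigr)=\psi^\dagger\gamma^0\cF(g)$ and your reproducing identity follows with the stated normalization. (Also, state the reproducing identity first for $\psi$ Schwartz and extend by density, since for general $u\in\scH_m^-$ the pointwise Fourier representation and the Fubini interchange need this.) Your second route --- taking $\varepsilon\searrow 0$ in Theorem~\ref{theoremeactionAf} --- is a genuinely different and viable alternative not taken by the paper; the missing ingredient you correctly isolate, namely $\|P(\,\cdot\,,h)\|_{\scH_m^-}\le C\,\|h\|_{p,q}$ together with $fP^\varepsilon(\,\cdot\,,\varphi)\to fP(\,\cdot\,,\varphi)$ in $\mathcal{S}(\R^{1,3},\C^4)$, is exactly the machinery the paper develops later in the proof of Lemma~\ref{teoremalmite}, so it buys nothing simpler here, whereas the direct momentum-space computation is self-contained.
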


As in the regularized case, we again collect all the operators $A_f$ supported within a given set $\Omega$,
$$
L^\circ_\Omega:=\left\{A^\circ_f,\ f\in C_0^\infty(\Omega,\C)  \right\} . $$
Again, the set~$L^\circ_\Omega$ is a complex linear space and a
$^*$-representation of $C_0^\infty(\Omega,\C)$ thanks to Proposition \ref{proprietaAfunreg}~(i). 
We are now ready to introduce the unregularized local algebras.
\begin{Def}
For any open subset~$\Omega\subset M$, the $^*$-algebra
\[ \mycal{A}^\circ_\Omega := \langle L^\circ_\Omega\rangle \]
is referred to as the {\bf{unregularized local algebra}} corresponding to~$C^\infty_0(\Omega, \C)$.
\end{Def}

\begin{Remark}\label{remarkmaybeidentityunreg}
At present, it is unknown whether the~$^*$-algebras $\A_\Omega^\circ$, as well as their closures in the $\sup$-norm topology, contain the identity operator or not. Arguments as in Remarks \ref{remarkidentity} or \ref{remarkregidentity} cannot be applied, because the unregularized operators $A_f^\circ$ are not known to be compact.
As a partial result, obtained with different methods, it will be shown in Proposition~\ref{propositionidentity} that the identity does not belong to $\A_\Omega^\circ$ if~$\Omega$ is contained in the interior light cone $I_x$ of some spacetime point $x\in\R^{1,3}$.
\end{Remark}

The unregularized algebra can be obtained
from the regularized algebra in the limit~$\varepsilon \searrow 0$, as we now make precise.
\begin{Remark}
Applying Proposition~\ref{properreg} inductively, it readily follows that every element of the algebra $\mycal{A}_\Omega^\varepsilon$ converges strongly to a corresponding element of the unregularized algebra,
	$$
	\sum_{k=1}^N\sum_{i_1,\dots,i_k=1}^{n_k}\lambda ^{i_1,\dots,i_k}A^\varepsilon_{f_{i_1}}\cdots A_{f_{i_k}}^\varepsilon\stackrel{s}{\longrightarrow} \sum_{k=1}^N\sum_{i_1,\dots,i_k=1}^{n_k}\lambda ^{i_1,\dots,i_k}
A^\circ_{f_{i_1}}\cdots A^\circ_{f_{i_k}} \:.
	$$
In terms of the operators~$\iota^\varepsilon$ introduced in~\eqref{iotaepsdef},
this can be restated that these operators converge pointwise, i.e.
\[ \iota^\varepsilon(a) \rightarrow \iota^\circ(a)\quad\mbox{if $\varepsilon\searrow 0$} \quad \text{for all~$a \in \A(C^\infty_0(\Omega))$} \:, \]
where~$\iota^\circ$ is defined in analogy to~\eqref{iotaepsdef} simply by replacing the indices~$\varepsilon$ with $\circ$.
\end{Remark}

The following basic properties are an immediate consequence of our definitions. 
\begin{Prp}\label{propositionrelations}
The following statements hold:
	\begin{itemize}[leftmargin=2.5em]
		\vspace{0.1cm}
		\item[{\rm{(i)}}] Let $\Omega_1\subset\Omega_2\subset\R^{1,3}$ be open sets. Then $\A^\circ_{\Omega_1}\subset\A^\circ_{\Omega_2}$,\\[-0.6em]
		\item[{\rm{(ii)}}] Let $\Omega\subset\R^{1,3}$ be an open set and $a\in\R^{1,3}$. Then $\A^\circ_{\Omega+a}=(\mathrm{U}_a)^\dagger\, \A^\circ_\Omega\, \mathrm{U}_a$\\[-0.6em]
		\item[{\rm{(iii)}}] For any open set~$\Omega\subset\R^{1,3}$, then
\[ \A^\circ_\Omega=\left\langle\bigcup_{\Omega'\in\mycal{O}(\Omega)}\A^\circ_{\Omega'}\right\rangle \:, \]
where $\mycal{O}(\Omega)$ denotes the family of all open bounded subsets of~$\Omega$.
	\end{itemize}
\end{Prp}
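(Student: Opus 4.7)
The plan is to follow exactly the same strategy used for Proposition~\ref{propositionrelationsreg}, since the proofs depend only on formal algebraic properties of the smearing map~$f \mapsto A_f^\circ$ rather than on the presence of the regularization~$\gR_\varepsilon$. All three points can be reduced to statements about test functions.

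For (i), I would simply note that if $\Omega_1\subset\Omega_2$ is an inclusion of open sets, then any $f\in C_0^\infty(\Omega_1,\C)$ extended by zero belongs to $C_0^\infty(\Omega_2,\C)$. Hence $L^\circ_{\Omega_1}\subset L^\circ_{\Omega_2}$, and taking the $^*$-algebra generated on each side gives $\A^\circ_{\Omega_1}\subset \A^\circ_{\Omega_2}$.

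For (ii), I would repeat the computation of Proposition~\ref{propositionrelationsreg}~(ii) without the regularization operators. Given $f\in C_0^\infty(\Omega,\C)$ and $u,v\in \scH_m^-$, using that $U_a$ acts by translation and that the spin scalar product is translation invariant,
\[
\la u\,|\,(\mathrm{U}_a)^\dagger A^\circ_f \,\mathrm{U}_a v\ra
= -\int_{\R^4} f(x)\,\Sl u(x+a)\,|\,v(x+a)\Sr\,d^4x
= -\int_{\R^4} f_{-a}(x)\,\Sl u(x)\,|\,v(x)\Sr\,d^4x,
\]
where $f_{-a}(x):=f(x-a)\in C_0^\infty(\Omega+a,\C)$. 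By the uniqueness part of the Riesz representation this equals $\la u\,|\,A^\circ_{f_{-a}}v\ra$. This yields $(\mathrm{U}_a)^\dagger L^\circ_\Omega \, \mathrm{U}_a = L^\circ_{\Omega+a}$, and passing to the generated $^*$-algebras (using that conjugation by the unitary $\mathrm{U}_a$ is a $^*$-homomorphism of $\Lin(\scH_m^-)$ which takes products to products) gives the claim.

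For (iii), the inclusion ``$\supset$'' is immediate from (i) applied to each $\Omega'\subset\Omega$. For ``$\subset$'', any $f\in C_0^\infty(\Omega,\C)$ has compact support $K\subset\Omega$; since $K$ is compact and $\Omega$ is open, there exists a bounded open set $\Omega'\in\mycal{O}(\Omega)$ with $K\subset\Omega'\subset\Omega$, so $f\in C_0^\infty(\Omega',\C)$ and hence $A^\circ_f\in\A^\circ_{\Omega'}\subset\langle \bigcup_{\Omega'\in\mycal O(\Omega)}\A^\circ_{\Omega'}\rangle$. Taking products and linear combinations of such generators shows every element of $\A^\circ_\Omega$ lies in the right-hand side. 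None of these steps presents any real obstacle; the proposition is essentially a bookkeeping result, stated for later reference and parallel to its regularized counterpart.
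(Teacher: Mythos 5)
Your proposal is correct and follows essentially the same route as the paper, which simply notes that the proof is analogous to the regularized case (Proposition~\ref{propositionrelationsreg}): points (i) and (iii) are bookkeeping about supports of test functions, and point (ii) is the same translation computation with the regularization operators omitted. No issues.
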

\begin{proof}
The proof is analogous to that of Proposition \ref{propositionrelationsreg}.
\end{proof}

Let us now consider the von Neumann algebra generated by~$L^\circ_\Omega$.
In analogy to Proposition~\ref{identification}~(i), it can be written as the
strong or weak closure of the $^*$-algebra generated by~$L^\circ_\Omega$.
The proof follows the same strategy as in the regularized case.
Nevertheless, we give the proof in detail, because there are subtleties concerning
the regularity of the Dirac wave functions. 

\begin{Prp}\label{propositionspan} For any open set~$\Omega\subset\R^{1,3}$, the following identities hold:
	\begin{itemize}[leftmargin=2.5em]
		\vspace{0.2em}
		\item[{\em{(i)}}] $\scH_m^-=\overline{\text{\rm{span}}\,  L^\circ_\Omega (\scH_m^-)}$, \\[-0.6em]
		\item[{\em{(ii)}}] $(L^\circ_\Omega)'' = \overline{\langle L^\circ_\Omega\rangle}^s=\overline{\langle L^\circ_\Omega\rangle}^w$.
	\end{itemize}
\end{Prp}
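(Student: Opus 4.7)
The plan is to follow the strategy of Proposition~\ref{identification}(i) while coping with the fact that, without the regularization $\gR_\varepsilon$, the elements of $\scH_m^-$ are only defined up to Lebesgue null sets. To establish (i) it suffices to show that if $v \in \scH_m^-$ satisfies $\la v \,|\, A_f^\circ u\ra = 0$ for every $f \in C_0^\infty(\Omega, \C)$ and $u \in \scH_m^-$, then $v = 0$. Since $(A_f^\circ)^* = A_{\bar f}^\circ$ by Proposition~\ref{proprietaAfunreg}(i), the orthogonality is equivalent to $A_f^\circ v = 0$ for every $f \in C_0^\infty(\Omega, \C)$, which spelled out reads
\[
\int_{\R^4} f(x)\,\Sl u(x) \,|\, v(x)\Sr \, d^4 x = 0 \qquad \text{for all } u \in \scH_m^- \text{ and } f \in C_0^\infty(\Omega, \C)\:.
\]

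For each fixed $u \in \scH_m^-$, the map $x \mapsto \Sl u(x)\,|\,v(x) \Sr$ is locally $L^1$ as the product of two locally $L^2$ functions, so the fundamental lemma of the calculus of variations furnishes a Lebesgue null set $N_u \subset \Omega$ with $\Sl u(x)\,|\,v(x)\Sr = 0$ for every $x \in \Omega \setminus N_u$. The main obstacle is now to ``exchange quantifiers'' and deduce $v(x) = 0$ for a.e.\ $x \in \Omega$, since $N_u$ depends on $u$ and the evaluation map $u \mapsto u(x)$ is not continuous on $\scH_m^-$. To overcome this, I would invoke the regularity of the causal fermion system recorded at the end of Section~\ref{subsectionCFSM}: for each $x_0 \in \Omega$ and every $\mu \in \{1,2,3,4\}$ there is a smooth $u^{x_0}_\mu \in \scH_m^-$ with $u^{x_0}_\mu(x_0) = \mathfrak{e}_\mu$. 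By continuity of smooth functions, the matrix $(u^{x_0}_\mu(x))_{\mu}$ equals the identity at $x_0$ and stays invertible on some open neighborhood $U_{x_0} \subset \Omega$ of $x_0$. Consequently, on the full-measure subset $U_{x_0} \setminus \bigcup_\mu N_{u^{x_0}_\mu}$ the four scalar identities $\Sl u^{x_0}_\mu(x)\,|\,v(x) \Sr = 0$ force $v(x) = 0$. Since $\Omega$ is second countable, it is covered by countably many such $U_{x_0}$, and hence $v$ vanishes almost everywhere on $\Omega$. Hegerfeldt's theorem (Proposition~\ref{prphegerfeldt}) then concludes that $v = 0$ identically.

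Part (ii) will be a formal consequence of (i) via Corollary~1 on page~45 of~\cite{dixmier}, applied to the $^*$-algebra $\langle L^\circ_\Omega \rangle$ exactly as in the proofs of Corollary~\ref{corollaryvonNeum} and Proposition~\ref{identification}(ii). The commutant is insensitive to passing to the generated $^*$-algebra, so $(L^\circ_\Omega)'' = \langle L^\circ_\Omega \rangle''$, and the inclusion $L^\circ_\Omega(\scH_m^-) \subset \langle L^\circ_\Omega \rangle(\scH_m^-)$ combined with (i) supplies the nondegeneracy hypothesis $\overline{\mathrm{span}\,\langle L^\circ_\Omega \rangle(\scH_m^-)} = \scH_m^-$ needed to conclude $\langle L^\circ_\Omega \rangle'' = \overline{\langle L^\circ_\Omega \rangle}^s = \overline{\langle L^\circ_\Omega \rangle}^w$. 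The hard part is really the ``pointwise a.e.'' argument for (i); once that is in place, (ii) is a routine application of Dixmier's corollary.
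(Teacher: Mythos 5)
Your proposal is correct and follows essentially the same route as the paper: use the regularity of the system to produce, around each point of $\Omega$, four smooth solutions forming a pointwise basis on a neighborhood, apply Du Bois-Reymond to get $\Sl u_\mu(x)\,|\,v(x)\Sr=0$ off a null set, pass to a countable cover to conclude $v=0$ a.e.\ on $\Omega$, and finish with Hegerfeldt's theorem and Dixmier's corollary for (ii). The only cosmetic difference is that you extract the countable cover via second countability (Lindel\"of), whereas the paper works with balls centered at a dense sequence; both serve the same purpose of keeping the union of exceptional null sets countable.
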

\begin{proof} let $(x_n)_{n \in \N}$ be a sequence which is dense in~$\Omega$. Given~$n$,
	we choose four smooth solutions $w_\mu \in \H_m^-$ with $w_\mu(x_n)=\mathfrak{e}_\mu$ for $\mu=1,2,3,4$
	(this is possible because our causal fermion system is regular; see the end of Section \ref{subsectionCFSM}). Then in particular, with obvious notation,
	$\det(w_0(x_n)|w_1(x_n)|w_2(x_n)|w_3(x_n))\neq 0$. By continuity of the functions $w_\mu$ and the determinant function, there is a neighborhood $B_{\varepsilon_n}(x_n)\subset\Omega$ on which the functions $w_\mu$ are
	pointwise linearly independent, and therefore define a basis of $\C^4$ at  any point $z\in B_{\varepsilon_n}(x_n)$.

	In order to prove~(i), it suffices to show that the orthogonal complement of the space~$L^\circ_\Omega(\scH_m^-)$
	is trivial. Given~$v\perp L^\circ_\Omega(\scH_m^-)$, we know that $v\perp A^\circ_f w$ for any $w\in\scH_m^-$ and $f\in C_0^\infty(\Omega)$. In particular, choosing~$w$ as one of the functions~$w_\mu$ constructed above,
\[ 
	0 = (v|A^\circ_fw_\mu)=\int_\Omega f(x)\, \Sl v(x) | w_\mu(x) \Sr\, d^4 x \:. \]
	The arbitrariness of $f$ implies that the functions $\Sl v | w_\mu \Sr$ vanish almost everywhere in $\Omega$.
	Since the vectors~$w_\mu(z)$ form a basis of the spinors at every point~$z\in B_{\varepsilon_n}(x_n)$, we conclude
	that the function~$v$ vanishes almost everywhere in~$B_{\varepsilon_n}(x_n)$.
	This statement holds for any~$n \in \N$. Since the open balls~$B_{\varepsilon_n}(x_n)$ provide a countable covering of~$\Omega$,
	we conclude that~$v$ vanishes in~$\Omega$ almost everywhere.
%
At this point,	Hegerfeldt's theorem (see Proposition \ref{prphegerfeldt}) implies that this function vanishes identically almost everywhere, i.e. $v=0$. 
	
	Part~(ii) is an immediate consequence of~(i) and
	Corollary 1 on page 45 in~\cite{dixmier}, just as explained in the proof of Corollary~\ref{corollaryvonNeum}.
\end{proof}

To conclude this section, we state a last important result. As in the regularized case (see Proposition \ref{identification}), the local algebra associated with any open subset of Minkowski space
is irreducible. Again, this is a manifestation of Hegerfeldt's theorem (see Proposition \ref{prphegerfeldt}) and translates its content from the level of wave functions to operators. The proof of this result can be found in
Appendix~\ref{secappunreg}. It develops on ideas and techniques different from those used in the regularized case. This was necessary because the notions like the spacetime point operator~$F(x)$ and the spin space $S_x$ cannot introduced if no regularization is present,
making it impossible to use again the strategy behind the proof of Theorem \ref{maintheorem}.
Also, several subtleties arise due to the fact that the functions in~$\scH_m^-$ are in general not
differentiable or even continuous.

\begin{Thm}\label{irreducibilityunreg}
For any open subset~$\Omega\subset\R^{1,3}$, the set~$L^\circ_\Omega$ is irreducible. In particular, the algebra $\A^\circ_\Omega$ is irreducible. 
\end{Thm}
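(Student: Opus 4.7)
The plan is to prove that the commutant $(L^\circ_\Omega)' \subset \Lin(\scH_m^-)$ is trivial. Since $L^\circ_\Omega$ is $*$-closed by Proposition~\ref{proprietaAfunreg}~(i), this is equivalent to showing that every closed $L^\circ_\Omega$-invariant subspace $M \subset \scH_m^-$ is either $\{0\}$ or all of $\scH_m^-$. So the proof starts from the assumption $\{0\} \ne M \subsetneq \scH_m^-$, picks nonzero $u_0 \in M$ and $v_0 \in M^\perp$, and records the tower of identities
\[
\la v_0 \,|\, A^\circ_{f_1} \cdots A^\circ_{f_n} u_0 \ra = 0 \qquad \text{for all } n \ge 1 \text{ and } f_i \in C_0^\infty(\Omega, \C),
\]
which follows from $L^\circ_\Omega$-invariance of both $M$ and $M^\perp$; from this a contradiction is to be derived.

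I would first reduce to the case of smooth $u_0, v_0 \in \scH_m^- \cap C^\infty(\R^{1,3}, \C^4)$. The key tool is Theorem~\ref{theoremeactionAfunreg}, which shows that on the dense subspace $\{P(\cdot, \varphi)\}_{\varphi \in \mathcal{S}}$ the operator $A_f^\circ$ has the explicit smoothing action $u \mapsto 2\pi\, P(\cdot, f u)$; this extends by continuity and a distributional argument exploiting the smoothing property of the bidistribution $P(x,y)$ against compactly supported factors, so that iterated applications certainly land in $\scH_m^- \cap C^\infty$. Moreover, one can always find $f \in C_0^\infty(\Omega, \C)$ with $A_f^\circ u_0 \ne 0$, since otherwise testing the defining identity of $A_f^\circ$ against a countable dense subset of $\scH_m^-$ and invoking the regularity of the CFS (which provides a smooth spinor basis of $\scH_m^-$ localized at every point of $\Omega$, cf.\ Section~\ref{subsectionCFSM}) would force $u_0 \equiv 0$ on $\Omega$ and hence $u_0 = 0$ by Hegerfeldt's theorem (Proposition~\ref{prphegerfeldt}). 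The same reasoning applied inside $M^\perp$ produces a smooth nonzero $v_0$.

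Having reduced to smooth $u_0, v_0$, I would iterate Theorem~\ref{theoremeactionAfunreg} to rewrite
\[
A^\circ_{f_1} \cdots A^\circ_{f_n} u_0 \,=\, (2\pi)^n\, P(\cdot, g_n), \qquad g_1 := f_1\, u_0,\ \ g_{k+1} := f_{k+1}\, P(\cdot, g_k),
\]
with $g_n \in C_0^\infty(\Omega, \C^4)$ smooth and compactly supported in $\Omega$. Feeding this into the vanishing pairings and applying the fundamental lemma of the calculus of variations produces pointwise constraints $\Sl v_0(x) \,|\, G_n(x) \Sr = 0$ on $\Omega$, where $G_n(x)$ ranges over the spinor values realized along the iteration. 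The objective is then to establish that, as $n$ and the $f_i$'s vary, the family $\{G_n(x_0)\}$ spans the full spinor space $\C^4$ at every $x_0 \in \Omega$; once this is shown, $v_0$ vanishes pointwise on $\Omega$ and hence $v_0 = 0$ by Hegerfeldt, contradicting the choice of $v_0$.

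The main obstacle is precisely this last step. At the first level $n=1$, multiplication by the scalar $f_1$ preserves the direction of $u_0(x_0)$, so the enrichment of the spinor directions has to come from the non-local propagation of the kernel $P(x, y)$ combined with the regularity of the CFS, which guarantees that any spinor at a given point is the value of some smooth Dirac solution; exploiting this will presumably require carefully choosing the $f_i$'s concentrated around cleverly selected points of $\Omega$. A second, more delicate issue is that generic elements of $\scH_m^-$ are neither continuous nor differentiable, so the iteration of Theorem~\ref{theoremeactionAfunreg} must be justified at each level via distributional pairings and approximation by smooth solutions $P(\cdot, \varphi)$ rather than by manipulation of pointwise values. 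This is exactly what prevents a direct adaptation of the regularized-case argument of Theorem~\ref{maintheorem}, whose proof relied on the pointwise-defined operators $F^\varepsilon(x)$ and the spin spaces $S_x$, neither of which has an unregularized counterpart.
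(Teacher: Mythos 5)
Your proposal is a plan rather than a proof: the step on which everything hinges --- that as $n$ and the $f_i$ vary, the spinor values $G_n(x_0)$ span all of $\C^4$ at every $x_0\in\Omega$ --- is exactly the hard part, and you leave it unestablished, saying only that it will ``presumably require carefully choosing the $f_i$'s.'' Nothing in the setup forces this spanning: at level $n=1$ the achievable directions at $x_0$ are limited to (limits of) $P(x_0,y)\,u_0(y)$ for $y\in\supp f_1$, and whether iterating the kernel enriches these to a full basis is precisely the question of irreducibility in disguise. There is also a secondary gap in your reduction to smooth $u_0,v_0$: Theorem~\ref{theoremeactionAfunreg} gives the formula $A^\circ_f u = 2\pi\,P(\,\cdot\,,fu)$ only for $u=P(\,\cdot\,,\varphi)$ with $\varphi$ Schwartz, and for a general $u_0\in\scH_m^-$ the product $f u_0$ is merely a compactly supported $L^2$ function, to which Proposition~\ref{propositioncontinuityP1} does not apply; the claimed smoothness of $A^\circ_f u_0$ and the legitimacy of the iteration $g_{k+1}=f_{k+1}P(\,\cdot\,,g_k)$ therefore both need proofs you do not supply. (Your reduction from the commutant to invariant subspaces is fine, since the commutant of a $^*$-closed set is a von Neumann algebra and is generated by its projections.)

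For comparison, the paper avoids the spanning problem entirely by working with a general selfadjoint $B\in(L^\circ_\Omega)'$ directly: using the regularity of the causal fermion system (a smooth pointwise spinor basis near any point) and Du Bois-Reymond, it shows that $B$ acts on some ball $U\subset\Omega$ as multiplication by a measurable, weakly differentiable, spin-symmetric matrix field $M(z)$. The decisive input is then that $Bu$ must again be a weak solution of the Dirac equation, which in momentum space yields the algebraic identity $\big(A^j(z)k_j+iB(z)\big)(\slashed{k}+m)=0$ on the mass shell, with $A^j=[\gamma^j,M]$ and $B=\slashed{\partial}M$. Irreducibility of the Dirac matrices then forces $M=b\,\bI_4+i\lambda\gamma^5$, and the Dirac constraint further gives $\lambda=0$ and $\partial_j b=0$, so $M=b_0\bI_4$; Hegerfeldt's theorem propagates $Bu=b_0u$ from $U$ to all of spacetime. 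If you want to salvage your route, you would need to replace the heuristic ``clever choice of $f_i$'' by an argument of comparable strength --- in effect, re-deriving the irreducibility of the $\gamma$-matrices through the kernel $P(x,y)$ --- at which point you are essentially reconstructing the paper's argument in a weaker form.
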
	

\begin{Remark}
Point~{\rm{(i)}} of Proposition \ref{propositionspan} shows in particular that
for every $x\in\R^{1,3}$, $\varepsilon\in (0,\varepsilon_{max})$ and $\chi\in\C^4$,
the states $u_{x,\chi}^\varepsilon:=P^{\varepsilon}(\,\cdot\,,x)\chi$ can be approximated by
vectors in the image of the smeared operators. 
Similarly, point~{\rm{(ii)}} and the irreducibility of $L^\circ_\Omega$ imply that
the regularization operators (and therefore also the regularized operators $A_f^\varepsilon$) could
be reconstructed starting from the knowledge on the unregularized algebra~$\mycal{A}_\Omega^\circ$.
\end{Remark}

In the next section we proceed as in the regularized case and show how the light-cone structure is encoded in the local algebras.

\subsection{Detecting the Light Cone}\label{detectionunreg}
In this section, we analyze how the causal structure can be retrieved by looking at suitable features of the unregularized
local algebras.
Again, due to translation invariance, it suffices to consider light cone centered at the origin of Minkowski space.

Let us start our analysis with an open set which does not intersect the null cone.
In this case, as already anticipated in the regularized case, the quantities of interest remain bounded in the limit $\varepsilon \searrow 0$. The key is the following estimate.

\begin{Lemma}\label{teoremalmite}
	Let $\Omega\subset\R^{1,3}$ be an open set which does not intersect the null cone~$L_0$.
	Then  for every $A\in\A^\circ_\Omega$ there exists a constant~$c(A)>0$ such that for every $\chi\in\C^4$,
	\begin{equation}\label{equazione}
	\sup_{\varepsilon\in (0,\varepsilon_{max})} \left\|A\,u_{0,\chi}^\varepsilon\right\|\le c(A)\:|\chi| \:.
	\end{equation}
\end{Lemma}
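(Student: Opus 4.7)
I would prove the lemma by reducing to monomial generators and splitting off the rightmost factor. Every $A\in\A_\Omega^\circ$ is a finite linear combination of monomials $A^\circ_{f_1}\cdots A^\circ_{f_k}$ with $f_j\in C_0^\infty(\Omega,\C)$, so by linearity and the triangle inequality it suffices to prove the bound for a single such monomial. Writing
\[
\|A^\circ_{f_1}\cdots A^\circ_{f_k}\, u_{0,\chi}^\varepsilon\|\le \|A^\circ_{f_1}\cdots A^\circ_{f_{k-1}}\|\cdot\|A^\circ_{f_k}\,u_{0,\chi}^\varepsilon\|,
\]
Proposition~\ref{proprietaAfunreg}(ii) gives the operator-norm estimate $\|A^\circ_{f_j}\|\le 4T_j\|f_j\|_\infty$, so the first factor is a finite constant independent of $\varepsilon$. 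The problem therefore reduces to proving the single-smearing estimate
\[
\|A^\circ_f\, u_{0,\chi}^\varepsilon\|\le C(f)\,|\chi|,\qquad f\in C_0^\infty(\Omega,\C),
\]
uniformly in $\varepsilon\in(0,\varepsilon_{\max})$.

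To prove this reduced estimate I would use duality. Starting from $\|A^\circ_f u_{0,\chi}^\varepsilon\| = \sup_{\|w\|=1}|\la w\,|\,A^\circ_f u_{0,\chi}^\varepsilon\ra|$, the defining identity of $A^\circ_f$ together with $u^\varepsilon_{0,\chi}(x)=P^\varepsilon(x,0)\chi$ yields
\[
\la w\,|\,A^\circ_f u_{0,\chi}^\varepsilon\ra = -\int_{\R^{1,3}} f(x)\, w(x)^\dagger \gamma^0\, P^\varepsilon(x,0)\,\chi\, d^4 x.
\]
Since $\supp f\subset\Omega$ is a compact subset of $\R^{1,3}\setminus L_0$, Lemma~\ref{lemma26} supplies (see the obstacle below) a finite constant $M(f):=\sup\{|P^\varepsilon(x,0)|:x\in\supp f,\,\varepsilon\in(0,\varepsilon_{\max})\}$. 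Using that $\gamma^0$ is a bounded matrix and applying the Cauchy--Schwarz inequality in $L^2(\R^{1,3})$ gives
\[
|\la w\,|\,A^\circ_f u_{0,\chi}^\varepsilon\ra|\le C_0\,M(f)\,|\chi|\,\|f\|_{L^2}\,\|w\|_{L^2(R_T)},
\]
where $R_T=\{|t|\le T\}$ is any time strip containing $\supp f$. By current conservation, $\|w\|_{L^2(R_T)}^2 = 2T\,\|w\|_{\scH_m^-}^2 = 2T$, so taking the supremum over unit-norm $w$ yields the desired bound $\|A^\circ_f u^\varepsilon_{0,\chi}\| \le C(f)\,|\chi|$.

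The main obstacle, and the only place where the hypothesis $\Omega\cap L_0=\emptyset$ enters, is the uniform control of $|P^\varepsilon(x,0)|$ on $\supp f$ for $\varepsilon\in(0,\varepsilon_{\max})$. Lemma~\ref{lemma26} provides only the convergence $P^\varepsilon\to P$ uniformly on compacta of $\R^{1,3}\setminus L_0$ as $\varepsilon\searrow 0$, so the regime in which $\varepsilon$ is bounded away from $0$ must be handled separately. There it suffices to observe that in the momentum-space integral~\eqref{bidistributionP} the cutoff $e^{-\varepsilon\omega(\V k)}$ is monotone decreasing in $\varepsilon$, so $|P^\varepsilon(x,0)|$ is pointwise dominated by its value at any fixed $\varepsilon^*>0$. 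Combining the two regimes produces the required finite $M(f)$ and closes the argument.
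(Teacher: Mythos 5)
Your proof is correct, but it takes a genuinely different and more elementary route than the paper. The paper regularizes the operator ($A^\circ_f=\lim_{\delta\searrow 0}A_f^\delta$ strongly), computes $\|A_f^\delta u_{0,\chi}^\varepsilon\|^2$ exactly as $-2\pi\,P^{2\delta}(\gamma^0 f\,\gR_\delta(u_{0,\chi}^\varepsilon),f\,\gR_\delta(u_{0,\chi}^\varepsilon))$, passes to the limit $\delta\searrow 0$ by establishing convergence in the test-function topology, and finally bounds the result by the Schwartz norms $\|f u^\varepsilon_{0,\mu}\|_{6,0}\,\|f u^\varepsilon_{0,\nu}\|_{6,4}$ via Proposition~\ref{propositioncontinuityP1}(v), whose uniform control in $\varepsilon$ requires the convergence of \emph{all derivatives} of $P^\varepsilon(\,\cdot\,,0)$ from Lemma~\ref{lemma26}. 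Your duality argument bypasses all of this: it needs only a $C^0$ bound on $P^\varepsilon(\,\cdot\,,0)$ over the compact set $\supp f\subset\R^{1,3}\setminus L_0$, plus Cauchy--Schwarz and current conservation (exactly the estimate the paper already uses to show $A^\circ_f$ is bounded). What you lose is the explicit identity $\|A^\circ_f u^\varepsilon_{0,\chi}\|^2=-2\pi P(\gamma^0 f u^\varepsilon_{0,\chi},f u^\varepsilon_{0,\chi})$, which the paper's computation yields as a by-product; what you gain is brevity and weaker regularity requirements. One small imprecision to fix: for $\varepsilon$ bounded away from zero it is not true that $|P^\varepsilon(x,0)|$ is dominated by $|P^{\varepsilon^*}(x,0)|$ --- the oscillatory integral itself is not monotone in $\varepsilon$. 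What is true, and suffices, is that the absolute value of the momentum-space integrand in~\eqref{bidistributionP} is dominated by $C(1+|\V{k}|/\omega(\V{k}))e^{-\varepsilon^*\omega(\V{k})}\in L^1(\R^3)$ for all $\varepsilon\ge\varepsilon^*$, giving $\sup_x\|P^\varepsilon(x,0)\|\le C(\varepsilon^*)$ uniformly on that regime; combined with Lemma~\ref{lemma26} near $\varepsilon=0$ this yields the finite $M(f)$ you need.
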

\begin{Remark}\label{remarkreasonnidentity}
	Before proceeding with the proof, we point out that estimate~\eqref{equazione} would not be true if the algebra $\A^\circ_\Omega$ was chosen to contain the identity operator. Indeed, in this case the element~$A=\bI$ (but also others) would fail to fulfill \eqref{equazione}, because the norm of the vectors $u_{0,\chi}^\varepsilon$ blows up in the limit $\varepsilon\searrow 0$ (see Proposition \ref{formulaprodottoscalarelocalizzati}(ii)).
\end{Remark}
\begin{proof}
We begin by considering the generating elements $A=A^\circ_f\in \A_\Omega$. Notice that $A^\circ_f=\lim_{\delta\to 0}A_f^\delta$ in the strong topology, as proven in Theorem \ref{properreg}. Then, we have
\begin{align}
	&\|A_f^\delta\, u_{0,\chi}^\varepsilon\|^2=\la A_f^\delta\, u_{0,\chi}^\varepsilon| A_f^\delta\, u_{0,\chi}^\varepsilon\ra=\notag \\
&=	\int_{\R^4} d^4x \int_{\R^4}d^4 y \,f(x)^*\,f(y)\, \left\langle F^\delta(x)u_{0,\chi}^\varepsilon\big| F^\delta(y)u_{0,\chi}^\varepsilon\right\rangle =\notag \\
	&=(2\pi)^2\int_{\R^4}\!d^4x\int_{\R^4}d^4 y \,f(x)^*\,f(y)\,\langle P^{\delta}(\,\cdot\,,x)\gR_\delta (u_{0,\chi}^\varepsilon)(x)\big|P^{\delta}(\,\cdot\,,y)\gR_\delta (u_{0,\chi}^\varepsilon)(y)\rangle =   \notag \\
	&=-2\pi\int_{\R^4}d^4x\int_{\R^4}d^4 y\; \Sl f(x)\,\gR_\delta(u_{0,\chi}^\varepsilon)(x)\:|\: P^{2\delta}(x,y)\, f(y)\, \gR_\delta(u_{0,\chi}^\varepsilon)(y) \Sr =\notag \\
	&=-2\pi\, P^{2\delta}(\gamma^0 f\,\gR_\delta (u_{0,\chi}^\varepsilon),f\,\gR_\delta(u_{0,\chi}^\varepsilon))
	\label{inequalitylimit}
\end{align}
(cf.~Theorem \ref{listpropertiesLCF} (ii) for the first equality and Proposition \ref{propositioncontinuityP1} for the definition of $P^{\sigma}(h,g)$).
The function~$u_{0,\chi}^\varepsilon$
	belongs to $\hat{\mathrm{E}}({\mathcal{S}}(\R^3,\C^4))$. Indeed, using~\eqref{bidistributionP},
	$$
	u_{0,\chi}^\varepsilon(x)=P^\varepsilon(x,0)\chi=-\int_{\R^3}\frac{d^3\V{k}}{(2\pi)^{4}}\,\mathfrak{g}_\varepsilon(\V{k})\,p_-(\V{k})\gamma^0\chi\, e^{-ik\cdot x} \:,
	$$ 
showing that~$u_{0,\chi}^\varepsilon=\hat{E}(-(2\pi)^{-5/2}\mathfrak{g}_\varepsilon\,p_-\gamma^0\chi)\in\hat{\mathrm{E}}(\mathcal{S}(\R^3,\C^4))$.
	Therefore, applying Proposition~\ref{propositionregularization}~(v) gives
	$$
	\gR_\delta\big(u_{0,\chi}^\varepsilon\big)\to u_{0,\chi}^\varepsilon\ \mbox{ uniformly on compact subsets of $\R^{1,3}$ if $\delta\searrow 0$} \:.
	$$
	 In particular, as the function $f$ is compactly supported within $\Omega$, we see that 
	 $
	 f\,\gR_\delta\big( u_{0,\chi}^\varepsilon\big)$  converges uniformly to $fu_{0,\chi}^\varepsilon$ if $\delta\searrow 0$.
	Of course, the same holds true for $\gamma^0\,f\,\gR_\delta \big( u_{0,\chi}^\varepsilon\big)$ which then converges uniformly to $\gamma^0\,f\, u_{0,\chi}^\varepsilon$. To summarize,
\[	\begin{cases}
	F_\delta&:=\gamma^0 f\,\gR_\delta\big(u_{0,\chi}^\varepsilon\big)\to \gamma^0f\,u_{0,\chi}^\varepsilon:= F\\[0.2em]
	G_\delta&:=f\,\gR_\delta\big(u_{0,\chi}^\varepsilon\big)\to f\,u_{0,\chi}^\varepsilon:= G
	\end{cases}
	\quad\mbox{in the uniform topology}\:. \]

We next verify that the above convergence holds in the stronger sense of the topology of the test function space  $\mathcal{D}(\Omega)$. As a first step, integration-by-parts shows gives
	$$
	\partial_\alpha \big(\gR_\delta\big(u_{0,\chi}^\varepsilon\big)\big) = \gR_\delta \big(\partial_\alpha \big(u_{0,\chi}^\varepsilon\big)\big)\quad\mbox{for any multi-index $\alpha\in\N^4$} \:.
	$$ 
Moreover, since the derivative in position space corresponds to a polynomial multiplication in momentum space it can be shown that (see also \cite[Lemma 8.1]{oppio}) 
$$
\partial_\alpha\big(u_{0,\chi}^\varepsilon\big)\in \hat{\mathrm{E}}( {\mathcal{S}}(\R^3,\C^4))\quad\mbox{for any multi-index $\alpha\in\N^4$}.
$$ Therefore, reasoning as above, it follows that for every multi-index $\alpha\in\N^4$
$$
\partial_\alpha \big(\gR_\delta\big(u_{0,\chi}^\varepsilon\big)\big)\to \partial_\alpha\big(u_{0,\chi}^\varepsilon\big)\ \mbox{uniformly on compact subsets of $\R^{1,3}$ if ~$\delta \searrow 0$} .
$$  
Using this result together with the Leibniz rule, it follows that,
	for any multi-index $\alpha$, the functions~$\partial_\alpha F_\delta$ and $\partial_\alpha G_\delta$ converge uniformly  to $\partial_\alpha F$ and~$\partial_\alpha G$, respectively. Moreover, 
	as the supports of the functions~$F_\delta$, $G_\delta$ and~$F$, $G$ are contained in the support of~$f$,
	we conclude that 
\[ F_\delta\to F\quad\mbox{and}\quad
G_\delta\to G
	\quad\mbox{in the topology of  $\ \mathcal{D}(\R^{1,3},\C^4)$} \:. \]
	Therefore, the convergence holds also in the topology of $\mathcal{S}(\R^{1,3},\C^4)$, as the embedding $\mathcal{D}(\R^{1,3},\C^4)\subset \mathcal{S}(\R^{1,3},\C^4)$ is continuous.

At this point, we have:
\begin{align*}
	&|P^{2\delta}(F_\delta, G_\delta)-P(F,G)|\le \notag \\
	&\le 	|P^{2\delta}(F_\delta, G_\delta)-P^{2\delta}(F_\delta, G)| + |P^{2\delta}(F_\delta,G)-P^{2\delta}(F,G)|+|P^{2\delta}(F,G)-P(F,G)|\le \notag \\
	&\le 	|P^{2\delta}(F_\delta, G_\delta-G)| + |P^{2\delta}(F_\delta-F,G)|+|P^{2\delta}(F,G)-P(F,G)|\le \notag \\
	&\le  c\:\|F_\delta\|_{6,0}\|G_\delta-G\|_{6,4}+c\|F_\delta-F\|_{6,0}\|G\|_{6,4}+|P^{2\delta}(F,G)-P(F,G)|\stackrel{\delta\searrow 0}{\longrightarrow} 0
\end{align*}
where we used Proposition \ref{propositioncontinuityP1}(v)-(vi).
We conclude that~$P^{2\delta}(F_\delta, G_\delta)$ converges to $P(F, G)$ as~$\delta\searrow 0$.
Going back to~\eqref{inequalitylimit}, we have
\begin{align*}
	\|A^\circ_f\, u_{0,\chi}^\varepsilon\|^2&=\lim_{\delta\to 0}\|A^\delta_f\, u_{0,\chi}^\varepsilon\|^2=-2\pi P(\gamma^0\,f\,u_{0,\chi}^\varepsilon,f\,u_{0,\chi}^\varepsilon)\le \\
	&\le 2\pi\sum_{\mu,\nu=1}^4  |\overline{\chi^\mu}\,\chi^\nu| \big|P(\gamma^0\,f\,u_{0,\mu}^\varepsilon,f\,u_{0,\nu}^\varepsilon)\big|\le\\
	&\le \left(2\pi\,c\,\sum_{\mu,\nu=1}^4  \left\|fu_{0,\mu}^\varepsilon\right\|_{6,0}\,\left\|fu_{0,\nu}^\varepsilon\right\|_{6,4}\right)|\chi|^2 \:,
\end{align*}
where we again used the notation $u_{0,\kappa}^\varepsilon:=P^\varepsilon(\,\cdot\,,0)\mathfrak{e}_\kappa$.
In order to conclude the proof in the case~$A=A^\circ_f$, it remains to show that the sum in the last line converges in the limit $\varepsilon \searrow 0$. To this end, we note that the function $P(\,\cdot\,,0)\mathfrak{e}_\kappa$ is smooth on $\R^{1,3}\setminus L_0$ (see the discussion before Lemma~\ref{lemma26}).  Moreover, $P^{\varepsilon}(\,\cdot\,,0)\mathfrak{e}_\kappa$ converges locally uniformly to $P(\,\cdot\,,0)\mathfrak{e}_\kappa$ on $\Omega$
(see Lemma~\ref{lemma26}).
Reasoning in a similar way as before and using Lemma \ref{lemma26}, we conclude that~$f\,P^{\varepsilon}(\,\cdot\,,0)\mathfrak{e}_\kappa\to f\,P(\,\cdot\,,0)\mathfrak{e}_\kappa$ in~$\cD(\R^{1,3},\C^4)$, and therefore also in~$ {\mathcal{S}}(\R^{1,3},\C^4)$.
By continuity of the Schwartz norms $\|\cdot\|_{p,q}$ it follows that the above sum converges in the limit $\varepsilon\searrow 0$.

Finally, we generalize the above proof to the case of arbitrary $A\in\mycal{A}^\circ_\Omega$.  By linearity, it suffices to consider 
monomials of operators $A_f$. We have
	$$\|A^\circ_{f_{1}}\cdots A^\circ_{f_{n}}\left(u_{0,\chi}^\varepsilon\right)\|\le \|A^\circ_{f_{1}}\|\cdots\|A^\circ_{f_{{n-1}}}\|\|A^\circ_{f_{n}}\, u_{0,\chi}^\varepsilon\| \:.
	$$
The first part of the proof implies that the last factor is uniformly bounded in $\varepsilon$ by $c(f_{k})|\chi|$. The other factors are finite constants, concluding the proof.
\end{proof}

We are now ready to specify how to detect the light cone.
The strategy is the same as in Sections~\ref{seclc1} and~\ref{seclc2}: we evaluate the matrix elements of the operators in $\A^\circ_\Omega$ on the space $S_0$ and then study the dependence on $\varepsilon$.

\begin{Thm}\label{teoremalmite2}
Let $\Omega\subset\R^{1,3}$ be an open set which does not intersect the null cone~$L_0$.
Then for every $A\in \A^\circ_\Omega$ there exists constant $c(A)>0$ such that for all $\chi,\zeta\in\C^4$,
	$$
	\sup_{\varepsilon\in(0,\varepsilon_{max})}\big|\langle u_{0,\chi}^\varepsilon|A\, u_{0,\zeta}^\varepsilon\rangle|\le c(A)|\chi||\zeta| \:.
	$$
\end{Thm}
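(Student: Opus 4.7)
The plan is to reduce everything to Lemma \ref{teoremalmite}, which already controls $\|A\, u_{0,\chi}^\varepsilon\|$ uniformly in $\varepsilon$ for arbitrary $A\in\A^\circ_\Omega$. Since every element of $\A^\circ_\Omega$ is a finite linear combination of monomials $A^\circ_{f_1}\cdots A^\circ_{f_n}$ with $n\ge 1$ (recall that the identity is excluded), by linearity and the triangle inequality it suffices to establish the bound on a single such monomial. The natural split is by the degree $n$.

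For monomials of degree $n\ge 2$, I would factor $A = A^\circ_{f_1}\cdot B$ with $B:=A^\circ_{f_2}\cdots A^\circ_{f_n}\in\A^\circ_\Omega$, and observe that the adjoint $(A^\circ_{f_1})^{*}=A^\circ_{\overline{f_1}}$ also lies in $L^\circ_\Omega\subset\A^\circ_\Omega$ by Proposition~\ref{proprietaAfunreg}~(i). Cauchy--Schwarz then gives
\[
\big|\langle u_{0,\chi}^\varepsilon\,|\,A\,u_{0,\zeta}^\varepsilon\rangle\big|
= \big|\langle (A^\circ_{f_1})^{*}u_{0,\chi}^\varepsilon\,|\,B\,u_{0,\zeta}^\varepsilon\rangle\big|
\le \|(A^\circ_{f_1})^{*}u_{0,\chi}^\varepsilon\|\cdot\|B\,u_{0,\zeta}^\varepsilon\| \:,
\]
and each factor is bounded uniformly in $\varepsilon$ by a multiple of $|\chi|$ resp.\ $|\zeta|$ by direct application of Lemma~\ref{teoremalmite} to $(A^\circ_{f_1})^{*}$ and to $B$. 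This takes care of all monomials of length at least two.

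The degree-one case $A=A^\circ_f$ has to be treated separately, and this is the main (mild) obstacle, because the naive Cauchy--Schwarz estimate now produces the factor $\|u_{0,\zeta}^\varepsilon\|$, which diverges like $\varepsilon^{-3/2}$ (Proposition~\ref{formulaprodottoscalarelocalizzati}~(ii)). Instead I would argue pointwise from the defining integral
\[
\langle u_{0,\chi}^\varepsilon\,|\,A^\circ_f\,u_{0,\zeta}^\varepsilon\rangle
= -\int_{\R^4} f(x)\,\Sl P^\varepsilon(x,0)\chi\,|\,P^\varepsilon(x,0)\zeta\Sr\,d^4x\:.
\]
Since $\supp f\subset\Omega$ is a compact subset of $\R^{1,3}\setminus L_0$, Lemma~\ref{lemma26} yields locally uniform convergence of $P^\varepsilon(\,\cdot\,,0)$ on $\supp f$ as $\varepsilon\searrow 0$; combined with joint continuity of $(x,\varepsilon)\mapsto P^\varepsilon(x,0)$ for $\varepsilon\in[\varepsilon_0,\varepsilon_{\max})$ and any $\varepsilon_0>0$, this produces a constant $M>0$ with $\|P^\varepsilon(x,0)\|\le M$ for all $x\in\supp f$ and $\varepsilon\in(0,\varepsilon_{\max})$. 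The integrand is then bounded in absolute value by $|f(x)|\,M^{2}\,|\chi|\,|\zeta|$, so
\[
\big|\langle u_{0,\chi}^\varepsilon\,|\,A^\circ_f\,u_{0,\zeta}^\varepsilon\rangle\big|\le M^{2}\|f\|_{L^1}\,|\chi|\,|\zeta| \:.
\]
Combining the two cases by linearity and the triangle inequality yields the desired uniform bound for every $A\in\A^\circ_\Omega$.
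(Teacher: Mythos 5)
Your proposal is correct and follows essentially the same route as the paper: reduce to monomials, handle degree $n\ge 2$ by moving $A^\circ_{f_1}$ to the left via $(A^\circ_{f_1})^*=A^\circ_{\overline{f_1}}$ and applying Cauchy--Schwarz together with Lemma~\ref{teoremalmite}, and treat the degree-one case separately via the explicit integral and the locally uniform convergence of $P^\varepsilon(\,\cdot\,,0)$ away from the null cone (Lemma~\ref{lemma26}). The only cosmetic difference is that you apply Lemma~\ref{teoremalmite} directly to the tail product $A^\circ_{f_2}\cdots A^\circ_{f_n}$, whereas the paper factors out the operator norms of the middle factors first; both are valid.
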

\begin{proof}
	It again suffices to consider monomials~$A^\circ_{f_1}\cdots A^\circ_{f_n}\in\A^\circ_\Omega$.
	In the case~$n\ge 2$,
\begin{align*}
	|\langle u_{0,\chi}^\varepsilon\,\big|\,A^\circ_{f_1}\cdots A^\circ_{f_n} u_{0,\zeta}^\varepsilon\rangle| &= |\langle A^\circ_{\overline{f_1}}\,u_{0,\chi}^\varepsilon\,\big|\,A^\circ_{f_2}\cdots A^\circ_{f_{n-1}} A^\circ_{f_n}u_{0,\zeta}^\varepsilon\rangle|\le\\
	&\le \big\|A^\circ_{\overline{f_1}}\,u_{0,\chi}^\varepsilon\big\|\|A^\circ_{f_2}\cdots A^\circ_{f_{n-1}}\|\big\|A^\circ_{f_n}\,u_{0,\zeta}^\varepsilon\big\|,
\end{align*}
	which is uniformly bounded in $\varepsilon$ by some $c(f_1,\dots,f_n)|\chi||\zeta|$ thanks to Lemma \ref{teoremalmite}. In the remaining case~$n=1$,
\begin{align*}
\left|\langle \,u_{0,\chi}^\varepsilon\,|A^\circ_{f_1} \,u_{0,\zeta}^\varepsilon\rangle\right|&\le\int_{\R^4} |f_1(x) \:
	\Sl P^{\varepsilon}(x,0)\chi | P^{\varepsilon}(x,0)\zeta\Sr| \, d^4x\le \\
	&\le \int_{\R^4}|f_1(x)|\big(\|P^\varepsilon{(0,x)}\|_2\big)^2\,|\chi||\zeta| \:.
\end{align*}
Again invoking Lemma \ref{lemma26}, the integral converges in the limit $\varepsilon \searrow 0$.
This concludes the proof.
\end{proof}

We next consider the case when~$\Omega$ intersects the null cone.
In this case, it is indeed possible to find a suitable function~$f$ supported near the null cone
in~$\Omega$ for which some of the above matrix elements do not go to zero in the limit~$\varepsilon \searrow 0$.

\begin{Thm}\label{theoremblowup}
Let $\Omega\subset\R^{1,3}$ be an open set which intersects the null cone~$L_0$.
Then there exists $f\in C_0^\infty(\Omega)$ and $\chi,\zeta\in\C^4$ such  that
	$$
	\sup_{\varepsilon\in (0,\varepsilon_{max})}|\langle u_{0,\chi}^\varepsilon\big|A^\circ_f\, u_{0,\zeta}^\varepsilon\rangle|=\infty \:.
	$$
	More precisely, there exists a positive constant $c$ such that for all sufficiently small $\varepsilon>0$,
	$$
	|\langle u_{0,\chi}^\varepsilon\big|A^\circ_f\, u_{0,\zeta}^\varepsilon\rangle|\ge \frac{1}{c\,\varepsilon^2} \:.
	$$
\end{Thm}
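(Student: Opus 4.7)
The approach is to reduce the matrix element to an integral of a component of the closed chain $A_{0x}^\varepsilon$ and then invoke Theorem~\ref{theoremasymptotics} directly. Starting from the defining relation for $A_f^\circ$ together with $u_{0,\chi}^\varepsilon(x) = P^{\varepsilon}(x,0)\chi$, we have
$$\langle u_{0,\chi}^\varepsilon | A_f^\circ u_{0,\zeta}^\varepsilon\rangle = -\int_{\R^4} f(x)\, \Sl P^{\varepsilon}(x,0)\chi \,|\, P^{\varepsilon}(x,0)\zeta\Sr\, d^4x.$$
Applying the Dirac adjoint identity $P^{\varepsilon}(x,0)^\dagger = \gamma^0 P^{\varepsilon}(0,x) \gamma^0$ rewrites the integrand as $\chi^\dagger \gamma^0 A_{0x}^\varepsilon \zeta$, where $A_{0x}^\varepsilon = P^{\varepsilon}(0,x) P^{\varepsilon}(x,0)$ is the closed chain introduced in Section~\ref{secprelim}.

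Since $\Omega$ is open and intersects the three-dimensional null cone $L_0$, while $L_0 \cap \{x_3 = 0\}$ is only a two-dimensional subvariety of $L_0$, one can select a nonzero null vector $y_0 \in \Omega \cap L_0$ with $y_{0,3} \neq 0$ (possibly after a spatial rotation of coordinates, under which the regularization and the Dirac equation are invariant). Choose $r>0$ so that $\overline{B_r(y_0)} \subset \Omega$ and $0 \notin \overline{B_r(y_0)}$, pick a nonnegative $f \in C_0^\infty(B_r(y_0),\R)$ with $f(y_0) \neq 0$, and select spinors $\chi,\zeta \in \C^4$ for which the bilinear form reduces to the component $\chi^\dagger \gamma^0 A_{0x}^\varepsilon \zeta = \mathfrak{e}_1^\dagger A_{0x}^\varepsilon \mathfrak{e}_3$ analyzed in~\eqref{expressionstoevaluate}. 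Theorem~\ref{theoremasymptotics} then yields
$$\int_{\R^4} f(x)\, \mathfrak{e}_1^\dagger A_{0x}^\varepsilon \mathfrak{e}_3\, d^4x = \frac{h(\varepsilon)}{\varepsilon^2} + \O(\varepsilon^{-1})$$
with $h \in \O(1)$ and $\lim_{\varepsilon \searrow 0} h(\varepsilon) \neq 0$, which at once gives $|\langle u_{0,\chi}^\varepsilon | A_f^\circ u_{0,\zeta}^\varepsilon\rangle| \geq 1/(c\varepsilon^2)$ for all sufficiently small $\varepsilon$ and some $c>0$.

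No substantial obstacle is expected, since the heart of the computation is already encapsulated in Theorem~\ref{theoremasymptotics}, whose proof carried out the delicate asymptotic analysis of an oscillatory integral concentrated near a null point. The remaining items in the plan are bookkeeping in nature: the Dirac adjoint rearrangement, the existence of a suitable null point $y_0$ with $y_{0,3}\neq 0$ in $\Omega$, and the choice of spinors reproducing the $(1,3)$ matrix component. All three are elementary consequences of the algebraic conventions and the spatial rotation symmetry of the regularized Dirac sea vacuum.
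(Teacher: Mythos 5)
Your proposal is correct and follows essentially the same route as the paper: rewrite the matrix element as $-\int f(x)\,\mathfrak{e}_1^\dagger A_{0x}^\varepsilon\,\mathfrak{e}_3\,d^4x$ via the symmetry of $P^\varepsilon$ with respect to the spin scalar product, then invoke Theorem~\ref{theoremasymptotics}. Your extra care in choosing a null point with $y_{0,3}\neq 0$ (via a spatial rotation if necessary) is a detail the paper's proof leaves implicit, but it is not a different argument.
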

\begin{proof}
Let $f\in C_0^\infty(\R^{1,3},\C)$ be arbitrary for now, and take $a=\mathfrak{e}_1$ and $b=\mathfrak{e}_3$. Then
\begin{align*}
	\langle u_{0,1}^\varepsilon\big|A^\circ_f\, u_{0,3}^\varepsilon\rangle&=-\int_{\R^4} f(x)\Sl P^{\varepsilon}(x,0)\mathfrak{e}_1| P^{\varepsilon}(x,0)\mathfrak{e}_3\Sr\, d^4x=\\
	&=-\int_{\R^4} f(x)\, \Sl \mathfrak{e}_1|P^{\varepsilon}(0,x)\, P^{\varepsilon}(x,0)\,\mathfrak{e}_2\Sr\,d^4x=\\
	&=-\int_{\R^4}f(x)\, \mathfrak{e}_1^\dagger\, A_{0x}^\varepsilon\, \mathfrak{e}_3\, d^4x \:.
\end{align*}
The result follows by choosing $f$ as in Theorem \ref{theoremasymptotics}~for some point $x\in\Omega\cap L_0\setminus\{0\}$.
\end{proof}

To conclude this section we prove that the unregularized algebras $\A_\Omega^\circ$ localized within interior light cones do indeed not contain the identity, as already anticipated in Remark \ref{remarkmaybeidentityunreg}.
\begin{Prp}\label{propositionidentity}
	Let $\Omega\subset\R^{1,3}$ be open and contained in the interior light cone $I_x$ of a spacetime point $x\in\R^{1,3}$. Then $\A_\Omega^\circ$ does not contain the identity operator.
\end{Prp}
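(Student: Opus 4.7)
The plan is to combine the uniform boundedness statement in Theorem~\ref{teoremalmite2} with the divergence of the norms $\|u_{0,\chi}^\varepsilon\|$ from Proposition~\ref{formulaprodottoscalarelocalizzati}(ii), after first using translation invariance to move the vertex of the light cone to the origin.

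First I would reduce to the case $x=0$. By Proposition~\ref{propositionrelations}(ii), $\A^\circ_{\Omega-x}=\mathrm{U}_{-x}^\dagger \A^\circ_\Omega \mathrm{U}_{-x}$, and since unitary conjugation maps $\bI$ to $\bI$, the identity belongs to $\A^\circ_\Omega$ if and only if it belongs to $\A^\circ_{\Omega-x}$. Replacing $\Omega$ by $\Omega-x$, we may therefore assume $x=0$, so that $\Omega\subset I_0$. Because the interior light cone $I_0$ and the null cone $L_0$ are disjoint, this gives $\Omega\cap L_0=\emptyset$, putting us precisely in the hypothesis of Theorem~\ref{teoremalmite2}.

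Next I would argue by contradiction. Assume $\bI\in\A^\circ_\Omega$. By Theorem~\ref{teoremalmite2} applied to $A=\bI$, there is a constant $c(\bI)>0$ such that, for every $\chi\in\C^4$ and every $\varepsilon\in(0,\varepsilon_{\max})$,
\[
\|u_{0,\chi}^\varepsilon\|^2 \;=\; \langle u_{0,\chi}^\varepsilon \,|\, \bI\, u_{0,\chi}^\varepsilon\rangle \;\le\; c(\bI)\,|\chi|^2 \:.
\]
On the other hand, Proposition~\ref{formulaprodottoscalarelocalizzati}(ii) yields the lower bound $2\pi\|u_{0,\chi}^\varepsilon\|^2 \ge -\nu^-(\varepsilon)|\chi|^2$, and by Remark~\ref{remarkbehavior} the eigenvalue $-\nu^-(\varepsilon)$ grows like $\varepsilon^{-3}$ as $\varepsilon\searrow 0$. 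Fixing any $\chi\neq 0$ and sending $\varepsilon\searrow 0$ contradicts the uniform upper bound, proving $\bI\notin\A^\circ_\Omega$.

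There is essentially no technical obstacle here: the argument is a direct combination of earlier results, and the only point worth highlighting is the structural reason behind it, namely that the vectors $u_{0,\chi}^\varepsilon$ are increasingly concentrated at the vertex of a light cone that $\Omega$ is forced to avoid whenever $\Omega\subset I_x$, so any operator localized in $\Omega$ must have matrix elements which do not feel the blow-up of $\|u_{0,\chi}^\varepsilon\|$ — a property the identity manifestly fails to enjoy (cf.\ Remark~\ref{remarkreasonnidentity}).
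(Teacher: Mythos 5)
Your proof is correct and follows essentially the same route as the paper's: assume $\bI\in\A^\circ_\Omega$, invoke the uniform boundedness of expectation values on $u^\varepsilon_{x,\chi}$ for algebras localized away from the null cone, and contradict the $\varepsilon^{-3/2}$ blow-up of $\|u^\varepsilon_{x,\chi}\|$ from~\eqref{estimatenormlocstate}. The only (immaterial) difference is that you quote Theorem~\ref{teoremalmite2} where the paper quotes Lemma~\ref{teoremalmite}; for $A=\bI$ these give the same bound, and your explicit translation-invariance reduction is a harmless elaboration of what the paper leaves implicit.
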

\begin{proof}
Assume by contradiction that $\bI\in\A_\Omega^\circ.$ Let~$x$ and~$\Omega$ as in the hypothesis, so that $L_x\cap \Omega=\varnothing$. Lemma \ref{teoremalmite} implies the existence of a constant $c>0$ such that, for any $\chi\in\C^4$,
	$$
	\|u_{x,\chi}^\varepsilon\|=\|\bI\,u_{x,\chi}^\varepsilon\|\le c|\chi|\quad\mbox{for any }\varepsilon>0 \:.
	$$
	This is a contradiction, because the norm of $u_{x,\chi}^\varepsilon$ diverges if $\varepsilon\searrow 0$ (see \eqref{estimatenormlocstate}).
\end{proof}

\subsection{The Commutator of Spacetime Points and Local Algebras} \label{seccommunreg}
Similar as in Section \ref{commutators}, we now analyze the limit $\varepsilon\searrow 0$ of commutators of the type
$$
[F^\varepsilon(x),A]\quad\ \mbox{for }\  A\in\A_\Omega^\circ\ \mbox{ and }\ \Omega\subset\R^{1,3}
$$ 
for some fixed $x\in\R^{1,3}$.
As in the regularized case, the scaling of this commutator depends on whether $\Omega$ does or does not intersect the null cone
centered at~$x$, making it possible to recover the light-cone structure from the local algebras. Again, it suffices to focus on the null cone centered at the origin of Minkowski space.
\begin{Prp}\label{Prpcommunreg}
	Let $\Omega$ be an open set which does not intersect the null cone~$L_0$. Then
	for any $A\in \A_\Omega^\circ$ there is a constant $c(A)>0$ such that for all~$\varepsilon>0$,
$$
\|\,[F^\varepsilon(0),A]\,\|\le  2\, c(A)\, \varepsilon^{\frac{3}{2}}\,\|F^\varepsilon(0)\| \:.
$$
On the other hand, if $\Omega$ does intersect the null cone~$L_0$, then
there exists $A\in \A_\Omega^\circ$ and a constant $c>0$ such that
$$
\big\|\,[F^\varepsilon(0),A]\,\big\|\ge c\:\varepsilon\,\|F^\varepsilon(0)\| \:.
$$
for sufficiently small $\varepsilon$.
\end{Prp}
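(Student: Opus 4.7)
The approach is to mirror the proofs of Theorem~\ref{prpiotaes} and Proposition~\ref{prpiotaes2} from the regularized setting, using the uniform matrix element bound of Lemma~\ref{teoremalmite} and the blow-up estimate of Theorem~\ref{theoremblowup} in place of their regularized analogues.

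For the upper bound, I would first reduce to estimating $\|F^\varepsilon(0)\, A\|$, since $\|[F^\varepsilon(0),A]\| \le \|F^\varepsilon(0)A\| + \|AF^\varepsilon(0)\| = \|F^\varepsilon(0)A\| + \|F^\varepsilon(0)A^*\|$ and $A^* \in \A_\Omega^\circ$. The key is to exploit that $F^\varepsilon(0)$ is selfadjoint of rank four with image equal to $S_0 = \mathrm{span}\{u^\varepsilon_{0,\mu}\}_{\mu=1}^4$, so that $F^\varepsilon(0) = F^\varepsilon(0)\,\pi_0$ and hence $\|F^\varepsilon(0)A\| \le \|F^\varepsilon(0)\|\cdot\|\pi_0 A\|$. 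The main calculation will be to show $\|\pi_0 A\| \lesssim c(A^*)\,\varepsilon^{3/2}$. Using Proposition~\ref{formulaprodottoscalarelocalizzati}(i) together with Remark~\ref{remarkbehavior} (which makes $\gamma^0 P^{2\varepsilon}(0,0)$ diagonal in the standard basis), the vectors $u^\varepsilon_{0,\mu}$ are mutually orthogonal with $\|u^\varepsilon_{0,\mu}\|^2 \sim \varepsilon^{-3}$ by \eqref{eigenvaluesform}, and therefore
\[
\|\pi_0 A v\|^2 \;=\; \sum_{\mu=1}^4 \frac{|\la u^\varepsilon_{0,\mu} \,|\, A v\ra|^2}{\|u^\varepsilon_{0,\mu}\|^2} \;=\; \sum_{\mu=1}^4 \frac{|\la A^* u^\varepsilon_{0,\mu} \,|\, v\ra|^2}{\|u^\varepsilon_{0,\mu}\|^2} \;\le\; \frac{4\,c(A^*)^2}{\min_\mu \|u^\varepsilon_{0,\mu}\|^2}\,\|v\|^2,
\]
where Cauchy--Schwarz and Lemma~\ref{teoremalmite} (applicable because $\Omega \cap L_0 = \varnothing$ and $A^* \in \A_\Omega^\circ$) give $\|A^* u^\varepsilon_{0,\mu}\| \le c(A^*)$ uniformly in $\varepsilon$. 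Since the minimum over $\mu$ scales as $\varepsilon^{-3}$, this produces the claimed $\varepsilon^{3/2}$ gain.

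For the lower bound I would repeat the argument of Proposition~\ref{prpiotaes2} verbatim, replacing $A_f^\varepsilon$ by $A_f^\circ$. Using that $u^\varepsilon_{0,1}$ and $u^\varepsilon_{0,3}$ are eigenvectors of $F^\varepsilon(0)$ with eigenvalues $2\pi\nu^-(\varepsilon)$ and $2\pi\nu^+(\varepsilon)$ (Theorem~\ref{listpropertiesLCF}(iv)), the same manipulation as in Proposition~\ref{prpiotaes2} gives
\[
\la u^\varepsilon_{0,1} \,|\, [A_f^\circ, F^\varepsilon(0)]\, u^\varepsilon_{0,3}\ra \;=\; 2\pi\,\bigl(\nu^+(\varepsilon)-\nu^-(\varepsilon)\bigr)\,\la u^\varepsilon_{0,1} \,|\, A_f^\circ\, u^\varepsilon_{0,3}\ra.
\]
Picking $f \in C_0^\infty(\Omega)$ as in Theorem~\ref{theoremblowup}, supported near a non-origin point of $\Omega\cap L_0$ with nonzero third spatial component, yields $|\la u^\varepsilon_{0,1}\,|\,A_f^\circ\, u^\varepsilon_{0,3}\ra| \ge (c\varepsilon^2)^{-1}$ for small $\varepsilon$. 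Combining with $\nu^+ - \nu^- \sim \varepsilon^{-3}$ from \eqref{eigenvaluesform}, $\|u^\varepsilon_{0,\mu}\| \sim \varepsilon^{-3/2}$ from Proposition~\ref{formulaprodottoscalarelocalizzati}(ii), and the elementary estimate $\|B\| \ge |\la u|Bv\ra|/(\|u\|\|v\|)$, one obtains $\|[F^\varepsilon(0), A_f^\circ]\| \gtrsim \varepsilon^{-2}$, which equals $\varepsilon \cdot \|F^\varepsilon(0)\|$ up to constants since $\|F^\varepsilon(0)\| = 2\pi\nu^+(\varepsilon) \sim \varepsilon^{-3}$.

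The main obstacle is the upper bound: a naive Cauchy--Schwarz on $|\la u^\varepsilon_{0,\mu}\,|\,Av\ra|$ would bring in the divergent factor $\|u^\varepsilon_{0,\mu}\| \sim \varepsilon^{-3/2}$ from the wrong side and produce no useful estimate. The crucial move is to shift $A$ to the left via adjointness, so that the divergent quantity is replaced by the uniformly bounded $\|A^* u^\varepsilon_{0,\mu}\|$ provided by Lemma~\ref{teoremalmite}; the $\varepsilon$-dependence then enters only through the normalizing factor $1/\|u^\varepsilon_{0,\mu}\|^2 \sim \varepsilon^3$, yielding exactly the desired gain of $\varepsilon^{3/2}$. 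A minor technical point will be to extend the constant $c(A)$ from the generators $A_f^\circ$ and their monomials (for which Lemma~\ref{teoremalmite} is stated) to arbitrary elements of $\A_\Omega^\circ$ by linearity.
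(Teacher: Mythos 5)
Your proposal is correct and follows essentially the same route as the paper, which proves this result by repeating the arguments of Theorem~\ref{prpiotaes} and Proposition~\ref{prpiotaes2} with $A_f^\circ$ in place of $A_f^\varepsilon$: the upper bound rests on Lemma~\ref{teoremalmite} combined with the lower bound $\|u^\varepsilon_{0,\chi}\|\gtrsim \varepsilon^{-3/2}|\chi|$ from \eqref{estimatenormlocstate}, and the lower bound on Theorems~\ref{theoremasymptotics} and~\ref{theoremblowup}. Your decomposition of $\pi_0 A$ over the orthogonal eigenbasis $\{u^\varepsilon_{0,\mu}\}$ is just a repackaging of the paper's estimate $\|AF^\varepsilon(0)\|/\|F^\varepsilon(0)\|\le \sup_{w\in S_0}\|Aw\|/\|w\|$, so the two arguments are the same in substance.
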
 \noindent
The proof is analogous to the unregularized case (see Theorem~\ref{prpiotaes} and Proposition~\ref{prpiotaes2})
and will be omitted.

As in the regularized case, the $\varepsilon$-behaviors of the two estimates in this proposition are incompatible
in the limit $\varepsilon\searrow 0$, making it possible to recover the light cone structure from
the commutators.

	\section{Conclusions and Outlook} \label{outlook}
This paper is a first step towards an algebraic description of the theory of causal fermion systems. It was 
shown, in the example of a regularized Dirac sea vacuum in Minkowski space, that it is possible to introduce  a meaningful notion of a local algebra in the theory of causal fermion systems.  These algebras
show similarities and share elementary features with the typical structures present in the modern algebraic approaches to quantum theory.  However, substantial differences arise in regard to the principle of causality, as the canonical commutation relations fail to hold. This is not a physical issue because these algebras should not be understood as formed by local observable quantities. Rather, these new structures provide distinguished sets of operators which carry information on the local structure of spacetime. We point out for clarity that, although the canonical commutation relations are not fulfilled, the light cone structure of spacetime can still be reconstructed with a different method, namely by studying specific expectation values of the elements of the local algebras.

In conclusion, the notion of a local algebra introduced here departs from the standard structure at the heart of the algebraic approaches to quantum theory and provides a distinguished way to encode the information on the local structure of spacetime in the language of operators.
 
 Despite the obvious physical limitations of the example considered here (which describes the Minkowski vacuum) the results in this work lay solid mathematical and conceptual foundations for future developments. 
Indeed, we expect that most of our results could be carried over to causal fermion systems
describing physical spacetimes of a more interesting nature, like in the presence of gravitational fields and matter. 
In this context, one could ask whether the algebra also encodes information on
the geometry and/or the matter fields (like particle densities, field strengths, curvature of spacetime, etc.).
Generally speaking, the operator algebras contain part of the information encoded in the causal fermion system.
It seems an interesting program to explore what information other than causality can be extracted
from them.

We finally remark that carrying out this program seems quite challenging. For example, here we made essential use
of the homogeneity of the fermionic projector of the Minkowski vacuum, making it possible to use Fourier methods.
Clearly, this is no longer possible in curved spacetime, making the whole analysis considerably harder.
Nevertheless, this program is of physical and mathematical relevance and seems worth being
pursued in the future.

\appendix
\section{Technical Proofs} \label{secappendix}

\begin{proof}[Proof of Proposition \ref{propositionregularization}]
We begin with point (i). The regularization operator is defined in three-momentum space by multiplication with a scalar function. Therefore, it does not affect the sign of the energy. 
Let us prove that $\gR_{\varepsilon} u$ is indeed smooth. 
For simplicity, we only consider the negative energy case. The proof can easily be extended to the whole space $\scH_m$.
	By construction, there exists $\psi\in \hat{P}_-(\scL^2(\R^3,\C^4))$ such that $u=\hat{\mathrm{E}}(\psi)$ and thus, by definition, $\gR_{\varepsilon} u = \hat{\mathrm{E}}(\mathfrak{g}_\varepsilon\psi)$.
By H\"{o}lder's inequality it follows that $\mathfrak{g}_\varepsilon\psi\in\scL^1(\R^3,\C^4)$. 
Thus we may define
	\begin{equation}\label{vpsi}
	v_\psi(x):= \int_{\R^3}  \frac{d^3\V{k}}{(2\pi)^{3/2}}\,\mathfrak{g}_\varepsilon(\V{k})\psi(\V{k})\, e^{-ik\cdot x}\quad\mbox{for all } x\in\R^{1,3} \:,
	\end{equation}
	where we used the compact notation $k\cdot x=-\omega(\V{k})t-\V{k}\cdot\V{x}$. For simplicity of notation, we also introduce
	$$
	{\varphi}_x:= \mathfrak{g}_\varepsilon\,\psi\, e^{-ik\cdot x}\in\scL^1(\R^3,\C^4)\quad\mbox{for any } x\in\R^{1,3}.
	$$
	Notice that the mapping $x\mapsto \varphi_x(\V{k})$ is smooth for every fixed $\V{k}\in\R^3$. Moreover, differentiating with respect to $x$, for any multi-index $\alpha \in\N^4$ we get
	\begin{equation}\label{identityderiv}
	|D^\alpha \varphi_x(\V{k})| = |k^\alpha \mathfrak{g}_\varepsilon\psi (\V{k})|\quad\mbox{for every }\V{k}\in\R^3,
	\end{equation}
	where $k^0=-\omega(\V{k})$. 
	Notice that for any multi-index $\alpha$, the function
	$k^\alpha \mathfrak{g}_\varepsilon$ is in~$ {\mathcal{S}}(\R^3,\C^4)$. This follows from the fact that the Schwartz space is closed under multiplication by polynomials and by \cite[Lemma 8.1]{oppio}.
	In particular, as $\psi$ is a $\scL^2$ function, we have $k^\alpha \mathfrak{g}_\varepsilon\, \psi \in\scL^1(\R^3,\C^4)$. From this fact and identity \eqref{identityderiv} Lebesgue's dominated convergence theorem (or more precisely \cite[Theorem 1.88]{moretti-book}) implies that the function $v_\psi$ is differentiable to every order and that partial derivatives and the integral may be interchanged.
We conclude that $v_\psi$ is a smooth solution of the Dirac equation. At this point, the proof of point (i) is concluded once we prove that $\hat{\mathrm{E}}(\mathfrak{g}_\varepsilon\psi)=v_\psi$.
	By denseness, there exists a sequence $\psi_n\in \hat{P}_-( {\mathcal{S}}(\R^3,\C^4))$ which converges to $\psi$ in the $\scL^2$-norm. Then, by continuity and boundedness of the cutoff function, we get simultaneously:
	\begin{equation}\label{twoconverg}
	\hat{\mathrm{E}}(\mathfrak{g}\psi_n)\to \hat{\mathrm{E}}(\mathfrak{g}\psi)\quad\mbox{and}\quad v_{\psi_n}(x)\to v_\psi(x)\ \mbox{for all }x\in\R^{1,3},
	\end{equation}
	where $v_{\psi_n}$ is defined analogously to  \eqref{vpsi}. 
	From \eqref{expressionEonS} we know that $v_{\psi_n}=\hat{\mathrm{E}}(\mathfrak{g}_\varepsilon\psi_n)$ for every $n\in\N$. Now, current conservation (see \cite[Lemma 2.7]{oppio}) implies that the limit
	$\hat{\mathrm{E}}(\mathfrak{g}_\varepsilon\psi_n)\to \hat{\mathrm{E}}(\mathfrak{g}_\varepsilon\psi)$ can be restated as a $L^2$- convergence on any stripe $R_T=\{(t,\V{x})\:|\: |t|\le T \}$:
	$$
	\int_{R_T} |\hat{\mathrm{E}}(\mathfrak{g}_\varepsilon\psi_n)(x)- \hat{\mathrm{E}}(\mathfrak{g}_\varepsilon\psi)(x)|^2=\sqrt{2T}\|\hat{\mathrm{E}}(\mathfrak{g}_\varepsilon\psi_n)- \hat{\mathrm{E}}(\mathfrak{g}_\varepsilon\psi)\|^2\to 0
	$$ 
As a consequence, choosing a countable exhaustion of $\R^{1,3}$ made of such stripes, it is possible to find a subsequence $\hat{\mathrm{E}}(\mathfrak{g}_\varepsilon\psi_{n_k})$ which converges pointwise to $\hat{\mathrm{E}}(\mathfrak{g}_\varepsilon\psi)$ almost everywhere on $\R^{1,3}$. The limits \eqref{twoconverg}, then, imply that 
 $v_\psi=\hat{\mathrm{E}}(\mathfrak{g}\psi)$ almost everywhere concluding the proof.

Point  (ii) can be proved directly using the fact that $\mathfrak{g}_\varepsilon$ is real-valued.

We now prove point (iii). The first statement follows from the definition of $\gR_{\varepsilon}$ and from the fact that $\mathfrak{g}_\varepsilon$ is strictly positive everywhere. The second statement can be proved by using  Parseval's identity and the fact that $|\mathfrak{g}_\varepsilon(\V{k})|\le 1$. 

Let us prove point (iv).
Let $u=\hat{\mathrm{E}}(\psi)$ with $\psi\in L^2(\R^3,\C^4)$ be any element of $\H_m^-$. Then  $|\mathfrak{g}_\varepsilon\psi-\psi|^2\in\scL^1(\R^3,\C^4)$ and $\mathfrak{g}_\varepsilon\psi\to \psi$ pointwise in the limit $\varepsilon\searrow 0$. Since $|\mathfrak{g}_\varepsilon|\le 1$, we can apply Lebesgue's dominated convergence theorem to infer that $\|\mathfrak{g}_\varepsilon\psi-\psi\|_{\scL^2}\to 0$. In other words $\gR_{\varepsilon}u\to u$. 

It remains to prove point (v). Again, it suffices to focus on the negative-energy subspace. So, take  $u=\hat{\mathrm{E}}(\psi)$ for some $\psi\in \hat{P}_-( {\mathcal{S}}(\R^3,\C^4))$. The functions $u$ and $\gR_{\varepsilon} u$ are both continuous (cf.~\eqref{expressionEonS}). Now, fix any compact subset $K\subset\R^{1,3}$. Using  that $\mathfrak{g}_\varepsilon\psi\in\mathcal{S}(\R^3,\C^4)$, we get
	\begin{equation*}
	\begin{split}
	\sup_{x\in K}|\gR_{\varepsilon}u(x)-u(x)|&=\sup_{x\in K}\left|\int_{\R^3}\frac{d^3\V{k}}{(2\pi)^{3/2}}\,\big( \mathfrak{g}_\varepsilon(\V{k})\psi(\V{k})- \psi(\V{k})\big) e^{-ik\cdot x} \right|\le\\
	&\le \int_{\R^3} \frac{d^3\V{k}}{(2\pi)^{3/2}}|\mathfrak{g}_\varepsilon(\V{k})-1||\psi(\V{k})| \:.
	\end{split}
	\end{equation*}
Now the result follows from Lebesgue's dominated convergence theorem, noting that $\psi\in\scL^1$, $\mathfrak{g}_\varepsilon(\V{k})\to 1$ as $\varepsilon\searrow 0$, and $|\mathfrak{g}_\varepsilon|\le 1$ uniformly in $\varepsilon$.
\end{proof}

\begin{proof}[Proof of Proposition \ref{propositioncontinuityP1}]
	The proof of points (i), (ii) and (iii) can be found in \cite[Propositions~2.23 and~3.19]{oppio} (with minor adjustments due to a different choice of the cutoff function). 
We now prove two statements that are needed for the proof of the remaining points. Fix any $g\in \mathcal{S}(\R^{1,3},\C^4)$. By definition of the Schwartz space, there exists a constant $A$ such that:
	$$
	(1+|k|^4) \:|\cF(g)(k)|\le A\:\|\cF(g)\|_{4,0}
	$$
	(where $\|\cdot\|_{p,q}$ are the usual Schwartz norms, see the footnote after Proposition \ref{propositioncontinuityP1}). 
	At this point, let $n=0,1,2$. Then (see \eqref{defPf})
	\begin{equation*}
	\begin{split}
	|P^{n\varepsilon}(x,g)|&\le \bigg|\int_{\R^4}\frac{d^4k}{(2\pi)^2}\, \delta(k^2-m^2)\Theta(-k^0)(\slashed{k}+m) \gG_\varepsilon(k)^n\cF(g)(k)\, e^{-ik\cdot x}\bigg|\le\\
	&\le\|\mathfrak{g}_\varepsilon^n\|_\infty\int_{\R^3}\frac{d^3\V{k}}{(2\pi)^2 2\omega(\V{k})}\left\|(\slashed{k}+m)\big|_{k^0=-\omega(\V{k})}\right\|_2 |\cF(g)(-\omega(\V{k}),\V{k})|\le \\
	&\le \left(A\|\mathfrak{g}_\varepsilon^n\|_\infty\int_{\R^3}\frac{d^3\V{k}}{(2\pi)^2}\frac{\omega(\V{k})\|\gamma^0\|_2+\sum_{i=1}^3|k^i|\|\gamma^i\|_2+m}{2\omega(\V{k})(1+(\omega(\V{k})^2+\V{k}^2)^2)} \right)\|\cF(g)\|_{4,0} \:.
	\end{split}
	\end{equation*}
	The integral between parentheses is well-defined and convergent. Moreover, notice that
	$
	\|\mathfrak{g}_\varepsilon^2\|_\infty\le1
	$
	uniformly in $\varepsilon$. 
	Now, exploiting \cite[Lemma 8.2.2 and eq.~(8.2.2)]{friedlander2} (where a different convention for the Schwartz norm is adopted), one finds that that
	$$
	\|\cF(g)\|_{4,0}\le K \:\|g\|_{6,4}
	$$
	for some constant $K$. Combining the above, we have proved that for some constant $k$,
	\begin{equation}\label{firstinequality}
	|P^{n\varepsilon}(x,g)|\le k\|g\|_{6,4} \ \mbox{ for any $x\in\R^{1,3},\;\varepsilon\in (0,\varepsilon_{max})$ and $n=0,1,2$}\:.
	\end{equation}
As a first consequence of this inequality, we now prove point (iv). Given~$f,g$ as in the hypotheses,
the previous estimates imply that
\begin{equation}\label{secondinequality}
 \int_{\R^4}|f(x)||P^{n\varepsilon}(x,g)|\,d^4 x\le D\|g\|_{6,4}\|f\|_{\scL^1}<\infty \:.
\end{equation}
This proves that the integral in point (iv) is well-defined.

We now prove point (v). The first inequality was proven in~\eqref{firstinequality}. The second inequality follows
from~\eqref{secondinequality} together with the general inequality~$\|f\|_{\scL^1}\le A\:\|f\|_{6,0}$
(valid for some constant $A>0$), which can be found in~\cite[eq.~(8.8.2)]{friedlander2}.

It remains to prove point~(vi). As a first step, Lebesgue's dominated convergence theorem implies that (see again \eqref{defPf})
	\begin{equation}\label{firstconvergence}
	\begin{split}
	P^{n\varepsilon}(x,g)&=-\int_{\R^3}\frac{d^3\V{k}}{(2\pi)^2}\,\mathfrak{g}_\varepsilon^n(\V{k})\, p_-(\V{k})\,\gamma^0\,\cF(g)(k)\,e^{-ik\cdot x} \stackrel{\varepsilon\searrow 0}{\longrightarrow}\\
	&\stackrel{\varepsilon\searrow 0}{\longrightarrow} -\int_{\R^3}\frac{d^3\V{k}}{(2\pi)^2}\, p_-(\V{k})\,\gamma^0\,\cF(g)(k)\,e^{-ik\cdot x}=P(x,g),
	\end{split}
	\end{equation}
which proves the first statement.
In order to prove the second statement, notice that $|P^{n\varepsilon}(x,g)-P(x,g)|\le C$ uniformly in $x$ and $\varepsilon$ for every $n=0,1,2$ as a consequence of \eqref{firstinequality}.
 This, together with \eqref{firstconvergence} as well as the fact that $f\in\scL^1$ and Lebesgue's dominated convergence theorem, we conclude that
	\begin{equation*}
	\begin{split}
	P^{n\varepsilon}(f,g)-P(f,g)&=\int_{\R^4}f(x)^\dagger(P^{n\varepsilon}(x,g)-P(x,g))\to 0.
	\end{split}
	\end{equation*}
The proof is complete.
\end{proof}

\begin{proof}[Proof of Remark \ref{remarkbehavior}.]
We only analyze the first summand (the proof for the second summand is analogous). Writing the integral $\|\mathfrak{g}_\varepsilon^2\|_{\scL1}$ in spherical coordinates we have
\begin{align*}
	\int_{\R^3} \mathfrak{g}_\varepsilon(\V{k})^2\, d^3\V{k}&=C\, \int_{0}^\infty e^{-2\varepsilon\sqrt{r^2+m^2}}r^2dr\stackrel{s=\varepsilon r}{=} \varepsilon^{-3}\,\underbrace{C\, \int_{0}^\infty e^{-2\sqrt{s^2+(\varepsilon m)^2}} s^2 ds}_{:=g(\varepsilon)}
\end{align*}
for a strictly positive constant~$C$. Obviously, $g(\varepsilon)$ converges to a strictly positive number in the limit $\varepsilon\searrow 0$. This concludes the proof.
\end{proof}

\begin{proof}[Proof of Proposition \ref{proprietaAf}]
The proof of point (i) is straightforward. In preparation of the proof of point~(ii),
let $R_T:=\{|t|\le T \}$.
We first consider the case that~$f$ is real-valued. Then $A_f$ is selfadjoint and
\begin{align*}
	\|A_f^\varepsilon\|&=\sup_{\|u\|=1}|\la u|A_f^\varepsilon u\ra |=\sup_{\|u\|=1}\left|\int_{R_T}\,f(x)\,\Sl\gR_{\varepsilon} u(x)|\gR_{\varepsilon} u(x)\Sr\, d^4x\right|\le\\
	&\le  \sup_{\|u\|=1}\int_{R_T}\left|f(x)\,\Sl \gR_{\varepsilon} u(x)|\gR_{\varepsilon} u(x)\Sr\right|\, d^4x\le\\
	&\le \|f\|_\infty \sup_{\|u\|=1}\int_{R_T}\left|\Sl \gR_{\varepsilon} u(x)|\gR_{\varepsilon} u(x)\Sr\right|\, d^4x\le\\
	&\le \|f\|_\infty \sup_{\|u\|=1} \int_{R_T}|\gR_{\varepsilon} u(x)|^2
	\, d^4x\stackrel{(*)}{=} \|f\|_\infty\, 2T \sup_{\|u\|=1} \|\gR_{\varepsilon} u\|^2 \le 2T \|f\|_\infty \:,
\end{align*}
where in the last inequality we used $\|\gR_\varepsilon\|\le 1$, and $(*)$ follows from current conservation (see \cite[Lemma 2.7]{oppio}).
	Now, let $f=u+iv$ be arbitrary, with $u,v$ real valued. Then 
	$$
	\|A_f\|\le \|A_u\|+\|A_v\|\le 2T(\|u\|_\infty+\|v\|_\infty)\le 4T \|f\|_\infty \:.
	$$

After these preparations, we come to the proof of point (ii).
	By \cite[Proposition 4.41]{moretti-book} a bounded operator $T$ is trace-class if and only if  the series $\sum_{n\in\N}|\la e_n|T e_n\ra |$ converges for every Hilbert basis $(e_n)_n$. Any operator $ F^\varepsilon(x)$
has finite rank and trace $\mbox{tr}_{\rm{vac}}^\varepsilon$. Now, 
for any given Hilbert basis $(e_n)_{n\in\N}$ and $N\in\N$,
\[ \sum_{n=1}^N |\la e_n| A_f^\varepsilon e_n\ra |\le \sum_{n=1}^N \int_\Omega |f(x)| |\la e_n| F^\varepsilon(x)e_n\ra|\, d^4x=\int_\Omega |f(x)| \sum_{n=1}^N  |\la e_n| F^\varepsilon(x)e_n\ra |\, d^4x \:. \]
Here the function~$|\la e_n| F^\varepsilon(\cdot)e_n\ra |$ is continuous and therefore measurable.
Using the decomposition~\eqref{decompositionF}, we have
	\begin{equation}\label{equationtraceclass}
	\begin{split}
	&\sum_{n=1}^N|\la e_n| F^\varepsilon(x)e_n\ra|\le\sum_{n=1}^N|\la e_n| F_+^\varepsilon(x)e_n\ra|+\sum_{n=1}^N|\la e_n| F_-^\varepsilon(x)e_n\ra |=\\
	&=\sum_{n=1}^N\la e_n| F_+^\varepsilon(x)e_n\ra-\sum_{n=1}^N\la e_n| F_-^\varepsilon(x)e_n\ra \stackrel{N\to\infty}{\longrightarrow} 4\pi\nu^+(\varepsilon)-4\pi\nu^-(\varepsilon) \:.
	\end{split}
	\end{equation}
Applying Beppo Levi's convergence theorem, we conclude that
	$$
	\sum_{n=1}^\infty|\la e_n|A_f^\varepsilon e_n\ra |\le 4\pi\:\big(\nu^+(\varepsilon)-\nu^-(\varepsilon)
	\big)\:\|f\|_{\scL^1}<\infty.
	$$
	Since the basis $(e_n)_n$ is arbitrary, we proved that $A_f^\varepsilon$ is trace-class. In order to prove the last statement of the theorem, in any basis $(e_n)_n$ we have
\[ \sum_{n=1}^N\la e_n|A_f^\varepsilon e_n\ra =\int_\Omega f(x)\sum_{n=1}^N \la e_n| F^\varepsilon(x) e_n\ra\, d^4x \:. \]
	At this point, the result follows directly from~\eqref{equationtraceclass} and Lebesgue's dominated convergence theorem.
\end{proof}

\begin{proof}[Proof of Theorem \ref{theoremeactionAf}]
Given~$\psi\in\hat{P}_-(S(\R^3,\C^4))$, we consider the corresponding solution (see \ref{expressionEonS})
	$$	
	u_\psi:=\hat{\mathrm{E}}(\psi)\in \scH_m^-\cap C^\infty(\R^{1,3},\C^4),\quad u_\psi(x)= \int_{\R^3}\frac{d^3\V{k}}{(2\pi)^{3/2}}\, \psi(\V{k})\,e^{-ik\cdot x} \:,
	$$
	where $k\cdot x$ here is an abbreviation for $-\omega(\V{k})t-\V{k}\cdot\V{x}$. 
Choosing the function~$\varphi$ as in the assumption, we obtain
(using again $k:=(-\omega(\V{k}),\V{k})$ and the fact that $\psi(\V{k})=p_-(\V{k})\psi(\V{k})$)
\begin{align*}
	&\la u_\psi|A^\varepsilon_f\big(P(\,\cdot\,,\varphi)\big)\ra=-\int d^4x\, f(x)\, \Sl\gR_{\varepsilon} u_\psi(x)|\gR_{\varepsilon}\big(P(\,\cdot\,,\varphi)\big)(x)\Sr=\\
	&=-\int_{\R^4}d^4x\, f(x) \int_{\R^3}  \frac{d^3\V{k}}{(2\pi)^{3/2}}\,\mathfrak{g}_\varepsilon(\V{k})\,\psi(\V{k})^\dagger\gamma^0\int_{\R^4}\frac{d^4q}{(2\pi)^2}\, \mathfrak{G}_\varepsilon(q)\,\hat{P}(q)\,\hat{\varphi}(q)\,e^{-i(q-k)\cdot x}=\\
	&=\int_{\R^3}d^3\V{k}\, \psi(\V{k})^\dagger\left(- \frac{\mathfrak{g}_\varepsilon(\V{k})}{(2\pi)^{3/2}}\,  p_-(\V{k})\gamma^0\int_{\R^4}d^4q\, \mathfrak{G}_\varepsilon(q)\,\hat{P}(q)\,\hat{\varphi}(q)\, \hat{f}(k-q)\right)=\\
	&=\int_{\R^3}d^3\V{k}\, \psi(\V{k})^\dagger\underbrace{\left(- \frac{\mathfrak{g}_\varepsilon(\V{k})}{(2\pi)^{3/2}}\,  p_-(\V{k})\gamma^0\,\big(\cF(P^\varepsilon(\,\cdot\,,\varphi))*\cF(f)(-\omega(\V{k}),\V{k})\big)\right)}_{\phi(\V{k})} \:.
\end{align*}
	Note that $\phi\in \hat{P}_-(S(\R^3,\C^4))$. Indeed, $f\,P^\varepsilon(\,\cdot\,\,\varphi)\in C_0^\infty(\R^{1,3},\C^4)$ and therefore
	$$
	\cF(P^\varepsilon(\,\cdot\,,\varphi))*\cF(f)=(2\pi)^2\,\cF(f\,P^\varepsilon(\,\cdot\,\,\varphi))\in\mathcal{S}(\R^{1,3},\C^4) \:.
	$$
	As $\psi$ is arbitrary and the solutions $u_\psi$ span a dense subset of $\scH_m^-$, Parseval's identity yields
	\begin{equation*}
	\begin{split}
	A^\varepsilon_f\big(P(\,\cdot\,,\varphi)\big)(x)&= u_\phi(x)=\int_{\R^3}\frac{d^3\V{k}}{(2\pi)^{3/2}}\, \phi(\V{k})\,e^{-ik\cdot x}=\\
	&=-2\pi\,\int_{\R^3} \frac{d^3\V{k}}{(2\pi)^{2}} \,\mathfrak{g}_\varepsilon(\V{k})\, p_-(\V{k})\gamma^0\, \cF(fP^\varepsilon(\,\cdot\,,\varphi))(k)e^{-ik\cdot x}=\\
	&=2\pi\, P^\varepsilon \big(\,\cdot\,,fP^\varepsilon(\,\cdot\,,\varphi) \big) \:,
	\end{split}
	\end{equation*}
where in the last step we used \eqref{defPf}. This concludes the proof.
\end{proof}

\begin{proof}[Proof of Proposition~\ref{prpregularized}.]
As already mentioned, the proof of this proposition relies on the results of Section \ref{seccommunreg}. We divide the proof in two parts, corresponding to points~(i) and~(ii) in the proposition.
We shall make use of the relation~$A_f^\varepsilon=\gR_{\varepsilon} A_f\gR_{\varepsilon}$ (see Theorem \ref{properreg}~(i)).
\vspace{0.1cm}
\begin{itemize}[leftmargin=2.5em]
	\item[{\rm{(i)}}] \textit{The set $\Omega$ does not intersect the null cone.} \\[0.3em]
We first observe that for any fixed $f\in C_0^\infty(\Omega,\C)$ and $\chi,\zeta\in\C^4$,
the inequality
	\begin{equation*}
	\begin{split}
	|\langle u_{0,\chi}^\varepsilon|A_f^\varepsilon\, u_{0,\zeta}^\varepsilon\rangle|&=|\langle \gR_\varepsilon (u_{0,\chi}^\varepsilon)|A_f^\circ\, \gR_{\varepsilon} (u_{0,\zeta}^\varepsilon)\rangle|=|\langle u_{0,\chi}^{2\varepsilon}|A^\circ_f\, u_{0,\zeta}^{2\varepsilon}\rangle|\le c(f)|\chi||\zeta|
	\end{split}
	\end{equation*}
holds uniformly in~$\varepsilon$ for a positive constant $c(f)$, where in the last step we used Theorem~\ref{teoremalmite2}. Similarly, using Lemma \ref{teoremalmite}, it follows that
	$$
	\|A_f^\varepsilon\, u_{0,\chi}^\varepsilon\|\le \|A^\circ_f\,u_{0,\chi}^{2\varepsilon}\|\le c'(f)|\chi|
	$$
	uniformly in $\varepsilon$ for a positive constant $c'(f)$.
	Therefore, for any choice of $f_1,\dots,f_n$ with $n\ge 2$, we have
	\begin{equation*}
		\begin{split}
			|\langle u_{0,\chi}^\varepsilon|A_{f_1}^\varepsilon\cdots A_{f_n}^\varepsilon u_{0,\zeta}^\varepsilon\rangle|&\le c'(\overline{f_1})|\chi|\,\|A_{f_2}^\varepsilon\|\cdots\|A_{f_{n-1}}^\varepsilon\|\,c'({f_n})|\zeta|\le\\
			&\le c''(f_1,\dots,f_n)|\chi||\zeta| \:,
		\end{split}
	\end{equation*}
	again uniformly in $\varepsilon$. In the last step we used Proposition \ref{proprietaAf}~(ii) in 
	order to estimate the norm of the operators $A_{f_i}^\varepsilon$.\\[-0.5em]
	
\item[{\rm{(ii)}}]  \textit{The set $\Omega$ does intersect the null cone.} \\[0.3em]
Choosing the function $f$ and spinors $\chi,\zeta$ as in Theorem \ref{theoremblowup}, 
the results of this theorem imply that, for a positive constant $c$ and for all sufficiently small $\varepsilon$,
	$$
	|\langle u_{0,\chi}^\varepsilon|A_f^\varepsilon\, u_{0,\zeta}^\varepsilon\rangle|=|\langle u_{0,\chi}^{2\varepsilon}|A^\circ_f\, u_{0,\zeta}^{2\varepsilon}\rangle|\ge \frac{1}{c\,\varepsilon^2}.
	$$
\end{itemize}
\vspace{-0.1em}
\end{proof}

\begin{proof}[Proof of Theorem~\ref{prpiotaes}.]
	Also in this proof we make use of some statements and results of Section \ref{secunregularized}.  Again, note that $A_f^\varepsilon=\gR_{\varepsilon}\,A^\circ_f\,\gR_{\varepsilon}$ as shown in Theorem \ref{properreg}~(i).
For the proof it suffices to consider the monomials~$A=A_{f_1}^\varepsilon\cdots A_{f_n}^\varepsilon$. We have
\begin{align*}
&\frac{\|A\, F^\varepsilon(0)\|}{\|F^\varepsilon(0)\|}=\sup_{u\neq 0}\frac{\|A\, F^\varepsilon(0) u\|}{\|F^\varepsilon(0)\|\|u\|}\le \sup_{u\neq 0}\frac{\|A\, F^\varepsilon(0) u\|}{\|F^\varepsilon(0) u\|}=\sup_{w\in S_0}\frac{\|A\, w\|}{\|w\|}=\\
			&= \sup_{\chi\neq 0}\frac{\|A\, u_{0,\chi}^\varepsilon\|}{\|u_{0,\chi}^\varepsilon\|}=\sup_{\chi\neq 0}\frac{\|A_{f_1}^\varepsilon\cdots A_{f_n}^\varepsilon\,u_{0,\chi}^\varepsilon\|}{\|u_{0,\chi}^\varepsilon\|}\le\|A_{f_1}^\varepsilon\cdots A_{f_{n-1}}^\varepsilon\|\sup_{\chi\neq 0}\frac{\|A_{f_n}^\varepsilon\,u_{0,\chi}^\varepsilon\|}{\|u_{0,\chi}^\varepsilon\|}\le\\
			&\le c(f_1,\dots,f_{n-1})\,\sup_{\chi\neq 0}\frac{\|A_{f_n}\,u_{0,\chi}^{2\varepsilon}\|}{\|u_{0,\chi}^\varepsilon\|}\le\  \sqrt{2\pi}\;\frac{c(f_1,\dots,f_{n-1})\,c(f)\,|\chi|}{\sqrt{|\nu^-(\varepsilon)|}|\chi|} \:,
\end{align*}
where in the last inequality we used Lemma \ref{teoremalmite} and~\eqref{estimatenormlocstate}. The claim follows by noticing that $|\nu^-(\varepsilon)|\sim \varepsilon^{-3}$.
\end{proof}

\section{Proof of Irreducibility of the Spacetime Point Operators} \label{secappreg}
In this appendix, we give the proof of Theorem~\ref{maintheorem}.
Throughout the proof we denote the image of the local correlation operator $F^\varepsilon(x)$ by~$S_x$.
Moreover, we let~$\pi_x$ be the orthogonal projection operator on $S_x$.
Also, $$\mathrm{P}(x,y):=\pi_x\,F^\varepsilon(y)\!\restriction_{S_y},$$ whereas~$P^{n\varepsilon}(x,y)$ denotes the distribution kernel introduced in \eqref{bidistributionP}.
The two objects are related to each other by \eqref{id3}, i.e.
\begin{equation}\label{identificationPs}
\Phi_x\, \mathrm{P}\big(F^\varepsilon(x), F^\varepsilon(y) \big)\,\Phi_y^{-1}=2\pi\,P^{2\varepsilon}(x,y)
\end{equation}
with (see \eqref{ssp} and \eqref{sspMink})
\begin{equation}\label{isometry}
\Phi_z: (S_z,\Sl\cdot|\cdot\Sr_z)\rightarrow (\C^4,\Sl\cdot|\cdot\Sr)
\end{equation}
the isometry of indefinite inner-product spaces defined in \eqref{id2}. Finally, it holds that
\begin{equation}\label{expansionP}
\begin{split}
P^{2\varepsilon}(x,y)=\sum_{j=0}^3v_j(x-y)\: \gamma^j+\beta(x-y) \:,
\end{split}
\end{equation}
for smooth functions $v_j,\beta$ as in \eqref{espressioneP} and \eqref{functionsvbeta}.

Given the length of the proof, in order to make the exposure more digestible, we split the
considerations into several steps.
\begin{Part}{\em
Let~$\Sigma$ be either a Cauchy surface~$\{t=\mbox{const}\}$ or a {\em{connected}} open set. We want to show that
\[ 
(\mycal{X}_\Sigma^\varepsilon)'=\C\,\bI.
\] 
So, consider any $Q\in  ({\mycal X}_\Sigma^\varepsilon)'$.  We need to prove that $Q$
 is a multiple of the identity.  
 \begin{center}
 \textit{We first consider the case that~$Q$ is selfadjoint.}
 \end{center}
Since $Q$ commutes with $F^\varepsilon(z)$ for every $z\in\Sigma$, we see that $Q(S_z)\subset S_z$,
making it possible to consider the restriction $Q_z:=Q\!\restriction_{S_z} : S_z \rightarrow S_z$.
Using the isometry $\Phi_z$ in \eqref{isometry}, we can construct the operators
$$
\tilde{Q}_z:\C^4\rightarrow\C^4\quad \tilde{Q}_z:=\Phi_z\, Q_z\,\Phi_z^{-1}.
$$}
\end{Part}

\begin{Part}
For every $x,y\in\Sigma$,
\begin{equation}\label{equazionedaannullare}
\begin{split}
\sum_{j=0}^3v_j(x-y)(\tilde{Q}_x\gamma^j-\gamma^j \tilde{Q}_y)=\beta(x-y)(\tilde{Q}_y-\tilde{Q}_x) \:.
\end{split}
\end{equation}
\end{Part}
\begin{proof}
	Since $Q$ leaves $S_z$ invariant, we have $Q\,\pi_z=\pi_z\, Q\,\pi_z$ and therefore 
	$$
	\pi_z\, Q = (Q\,\pi_z)^\dagger=(\pi_z\, Q\, \pi_z)^\dagger = \pi_z\, Q\,\pi_x = Q\,\pi_z.
	$$
	In particular, it follows that, for all $x,y\in\Sigma$,
	\begin{equation}\label{commutQP}
	Q_x\, \mathrm{P}(x,y)= Q\,\pi_x\,  F^\varepsilon(y)\,\pi_y= \pi_x\,  F^\varepsilon(y)\, Q\,\pi_y= \mathrm{P}(x,y)\, Q_y
	\end{equation}
	The identities~\eqref{commutQP} and~\eqref{identificationPs} imply that 
	\begin{equation*}
	\begin{split}
	\tilde{Q}_x\, P^{2\varepsilon}(x,y) =P^{2\varepsilon}(x,y)\,\tilde{Q}_y\qquad \mbox{for all } x,y\in\Sigma\:.
	\end{split}
	\end{equation*}
	Applying~\eqref{expansionP}, we get for all $x,y\in\Sigma$
	\begin{equation*}
	\sum_{j=0}^3v_j(x-y)\tilde{Q}_x\gamma^j+\beta(x-y)\tilde{Q}_x=\sum_{j=0}^3v_j(x-y)\gamma^j\tilde{Q}_y+\beta(x-y)\tilde{Q}_y,
	\end{equation*}
	and the claim follows.
\end{proof}
\begin{Part}
	For any $x\in\Sigma$, 
	\begin{equation}\label{commutQgamma}
	\tilde{Q}_x\gamma^0=\gamma^0\tilde{Q}_x\quad\mbox{and}\quad(\tilde{Q}_x)^\dagger=\tilde{Q}_x
	\end{equation}
	(where $^\dagger$ denotes the adjoint with respect to the standard inner product).
\end{Part}
\begin{proof}
	Choosing $x=y$ in~\eqref{equazionedaannullare} we get
\[
	v_0(0) (\tilde{Q}_x\gamma^0-\gamma^0\tilde{Q}_x)=0\ \Longrightarrow\ \tilde{Q}_x\gamma^0=\gamma^0\tilde{Q}_x\quad \mbox{for all }x\in\Sigma \:, \]
	where we used that $v_i(0)=0$ and $v_0(0)\neq 0$ (see Lemma \ref{lemmavbeta}).  Now, choose $\psi,\phi\in\C^4$, then
	\begin{equation*}
	\begin{split}
	\qquad(\tilde{Q}_x\psi)^\dagger\phi&=(\tilde{Q}_x\psi)^\dagger\gamma^0\gamma^0\phi=\Sl\tilde{Q}_x \psi|\,\gamma^0\phi\Sr\stackrel{(1)}{=}\Sl Q_x\,\Phi_x^{-1}(\psi)|\Phi_x^{-1}(\gamma^0\phi)\Sr_x\stackrel{(2)}{=}\\
	&=-\langle Q\,\Phi_x^{-1}(\psi)| F^\varepsilon(x)\,\Phi_x^{-1}(\gamma^0\phi)\rangle=-\langle\Phi_x^{-1}(\psi)| F^\varepsilon(x)\,Q\,\Phi_x^{-1}(\gamma^0\phi)\rangle=\\
	&=\Sl \Phi_x^{-1}(\psi)|\,Q_x\,\Phi_x^{-1}(\gamma^0\phi)\Sr_x=\Sl\psi|\,\tilde{Q}_x\gamma^0 \phi\Sr=\Sl \psi|\, \gamma^0 \tilde{Q}_x\,\phi\Sr=\\
	&=\psi^\dagger (\tilde{Q}_x\phi) \:,
	\end{split}
	\end{equation*}
	where in $(1)$ we used the unitarity of the mapping
	$
	\Phi_z
	$, while in $(2)$ we used {\eqref{ssp}}. Since~$\phi,\psi$ are arbitrary, the result follows.
\end{proof}

\begin{Part}
	It holds that $\tilde{Q}_x=\tilde{Q}_y$ for any $x,y\in\Sigma$.
\end{Part}
\begin{proof}
Taking the adjoint~$^\dagger$ of both sides of~\eqref{equazionedaannullare}, for every $x,y\in\Sigma$ we get
	\begin{equation*}
	\begin{split}
	\overline{v_0(x-y)}(\gamma^0\tilde{Q}_x-\tilde{Q}_y\gamma^0)-\sum_{\alpha=1}^3\overline{v_\alpha(x-y)}(\gamma^\alpha\tilde{Q}_x-\tilde{Q}_y\gamma^\alpha)=\overline{\beta(x-y)}(\tilde{Q}_y-\tilde{Q}_x) \:,
	\end{split}
	\end{equation*}
	where we used that $(\gamma^0)^\dagger=\gamma^0$ and $(\gamma^\alpha)^\dagger=-\gamma^\alpha$. Using again Lemma~\ref{lemmavbeta}, this expression can be rewritten as
	\
	\begin{equation}\label{eqA}
	-v_0(y-x)(\tilde{Q}_y\gamma^0-\gamma^0\tilde{Q}_x)+\sum_{\alpha=1}^3v_\alpha(y-x)(\tilde{Q}_y\gamma^\alpha-\gamma^\alpha\tilde{Q}_x)=-\beta(y-x)(\tilde{Q}_x-\tilde{Q}_y) \:.
	\end{equation}
	This identity can be compared with~\eqref{equazionedaannullare} with~$x$ and~$y$ interchanged,
	\begin{equation}\label{eqB}
	v_0(y-x)(\tilde{Q}_y\gamma^0-\gamma^0\tilde{Q}_x)+\sum_{\alpha=1}^3v_\alpha(y-x)(\tilde{Q}_y\gamma^\alpha-\gamma^\alpha\tilde{Q}_x)=\beta(y-x)(\tilde{Q}_x-\tilde{Q}_y) \:.
	\end{equation}
Adding and the subtracting~\eqref{eqA} and~\eqref{eqB}, we get:
	\begin{equation}\label{identityab}
\mbox{for all }x,y\in\Sigma:\quad 
	\begin{cases}
	\text{(a)}&\quad \sum_{\alpha=1}^3v_\alpha(y-x)(\tilde{Q}_y\gamma^\alpha-\gamma^\alpha\tilde{Q}_x)=0,\\[0.3em]
	 \text{(b)}&\quad v_0(y-x)(\tilde{Q}_y\gamma^0-\gamma^0\tilde{Q}_x)=\beta(y-x)(\tilde{Q}_x-\tilde{Q}_y) \:.
	\end{cases}
	\end{equation}
	Using the fact that $\tilde{Q}_z$ commutes with $\gamma^0$ for every $z\in\Sigma$ (see \eqref{commutQgamma}), identity (b) above can be restated as
	\begin{equation}\label{identity}
	(v_0(y-x)\gamma^0+\beta(y-x))(\tilde{Q}_y-\tilde{Q}_x)=0\quad\mbox{for all $x,y\in\Sigma$\:.}
	\end{equation}
	For any fixed $y\in\Sigma$, the functions $\R^{1,3}\ni x\mapsto v_0(y-x)$ and $x\mapsto \beta(y-x)$ are both continuous.  Moreover, we know that the function $\R^{1,3}\ni x\mapsto |v_0(y-x)|-|\beta(y-x)|$ is strictly positive at~$x=y$ (see Lemma \ref{lemmavbeta}). Therefore, there exists an open ball $B_\delta(y)\subset\R^{1,3}$  such that
	\begin{equation}\label{neighbourproperty}
	\begin{rcases}
\ |v_0(y-x)|-|\beta(y-x)|>0\quad \\[0.1cm]
\ 	v_0(y-x)\neq 0
	\end{rcases} \ \mbox{ for all } x\in B_\delta(y) \:.
	\end{equation}
The next step is to prove that
\beq \label{Qlocal}
\tilde{Q}_x= \tilde{Q}_y \qquad \text{for every $x\in B_\delta(y)\cap\Sigma$}\:.
\eeq
To this end, fix $x\in B_\delta(y)\cap\Sigma$ and assume conversely that $\tilde{Q}_x\neq \tilde{Q}_y$. Then there exists a vector $\psi\in\C^4$ such that $\phi:=(\tilde{Q}_y- \tilde{Q}_x)\psi\neq 0$. In particular, \eqref{identity} implies
that
	$$
	\left(\gamma^0+\frac{\beta(y-x)}{v_0(y-x)}\bI_4\right)\phi=0 \:,
	$$
	which means that $-\beta(y-x)/v_0(y-x)$ is an eigenvalue of $\gamma^0$, i.e.\ $\beta(y-x)=\pm v_0(y-x)$. This is impossible because $|v_0(y-x)|>|\beta(y-x)|$, as shown in \eqref{neighbourproperty}. This implies that $\tilde{Q}_x=\tilde{Q}_y$ for any $x\in B_\delta(y)\cap\Sigma$. 

We point out that the radius of the neighborhood $B_\delta(y)$ can be fixed independently of the point $y$ because the functions~$v_0$ and $\beta$ depend on $x$ and~$y$ only through the difference vector~$y-x$. Exploiting this fact, we now want to infer that actually 
	\begin{equation}\label{equalityQ}
	\tilde{Q}_x=\tilde{Q}_y\quad \mbox{for any pair } x,y\in\Sigma.
	\end{equation}
In order to prove this, notice that, by connectedness, there always exists a continuous path in $\Sigma$ joining $x$ and $y$. The support of this path is compact and thus it can be covered by a finite number of spheres of radius~$\delta$. The claim follows by applying the local identity~\eqref{Qlocal}.
\end{proof}	

\begin{Part}
	For any $x\in\Sigma$ and $\alpha=1,2,3$,
	\begin{equation}\label{commutationalpha}
	\tilde{Q}_x\,\gamma^\alpha=\gamma^\alpha\,\tilde{Q}_x \:.
	\end{equation}
\end{Part}
\begin{proof}	
Substituting~\eqref{equalityQ} into identity (a) in \eqref{identityab}, for any $y\in\Sigma$ we get
	\begin{equation}\label{simplified1}
	\sum_{\alpha=1}^3 v_\alpha(y-x) (\tilde{Q}_y\gamma^\alpha-\gamma^\alpha\tilde{Q}_y)=0\qquad\mbox{for all }x\in\Sigma \:.
	\end{equation}
	Now, let $B_\delta(y)\subset\R^{1,3}$ be a neighborhood of $y$ such that $B_\delta(y)\subset\Sigma$ in the case $\Sigma$ open. Fix $\alpha=1$ and define the vectors
	$$
	x_\lambda :=y-\lambda e_1\in B_\delta(y)\cap\Sigma\qquad \mbox{for every $\lambda\in(-\delta,\delta)$}.
	$$  
Then
	$$
	v_\alpha(y-x_\lambda)=0\qquad \mbox{for every $\lambda\in(-\delta,\delta)$}\quad \mbox{if $\alpha\neq 1$}\:.
	$$ 
	This follows from the fact that the function in the integral expression of $v_\alpha(y-x_\lambda)$ (see \eqref{functionsvbeta}) is odd with respect to the transformation $(k^1,k^2,k^3)\mapsto (k^1,-k^2,-k^3)$.
	Thus, choosing $x=x_\lambda$, identity~\eqref{simplified1} simplifies to
	\begin{equation}\label{factortoremove}
	v_1(y-x_\lambda)(\tilde{Q}_y\gamma^1-\gamma^1\tilde{Q}_y)=0\quad\mbox{for all $\lambda\in (-\delta,\delta)$\:.}
	\end{equation}
	We now show the function $\lambda\mapsto v(y-x_\lambda)$ does not vanish identically in $(-\delta,\delta)$. This will imply that $\tilde{Q}_y$ must commute with $\gamma^1$.  We have
	\begin{equation*}
	\begin{split}
	v_1(y-x_\lambda)&\!=\!-\frac{1}{2}\int_{\R^3}\frac{d^3\V{k}}{(2\pi)^4}\,\frac{k^1}{\omega(\V{k})}\,\mathfrak{g}_\varepsilon^2(\V{k})\,e^{i\lambda k^1}\!=\\
	&=-\frac{1}{2}\!\int_{\R}d k^1 k^1\,e^{i\lambda k^1}\!\int_{\R^2}\frac{d k^2\, d k^3}{(2\pi)^{4}}\frac{e^{-2\varepsilon\sqrt{(k^1)^2+(k^2)^2+(k^3)^2+m^2}}}{\omega(k^1,k^2,k^3)}=\\
	&=\int_{\R}dk^1 k^1 e^{i\lambda k^1} f(k^1) \:,
	\end{split}
	\end{equation*}
	where the function~$f$ is defined as the double integral in the second line above. This function is clearly strictly positive on $\R$.
	At this point notice that
	$$
	v_1(0)=0\ \mbox{ and }\ \frac{d}{d\lambda}\bigg|_0 v_1(y-x_\lambda)=i\int_{\R}dk^1 (k^1)^2 f(k^1) \neq 0 \:,
	$$
because~$f$ is strictly positive. 
	Therefore, there exists at least one $\lambda$ in the vicinity of $0$ such that $v_1(y-x_\lambda)\neq 0$. Using this $\lambda$ in \eqref{factortoremove} yields $\tilde{Q}_y\gamma^1=\gamma^1 \tilde{Q}_y$. 
This calculation applies in an obvious way to $\alpha=2,3$ as well, concluding the proof.
\end{proof}

\begin{Part}\label{finalpart}
	There exists $a\in\R$ such that $Q=a\,\bI$.
\end{Part}
\begin{proof}
	Combining \eqref{commutQgamma} with \eqref{commutationalpha}, we conclude that
	$$
	\tilde{Q}_x \,\gamma^j=\gamma^j \,\tilde{Q}_x\qquad\mbox{for all }j=0,1,2,3\ \mbox{and all }x\in\Sigma.
	$$
	The gamma matrices form an irreducible set of matrices on $\C^4$. Therefore, for every $x\in\Sigma$, there must exist some complex number $a_x\in\C$ such that
	$$
	\tilde{Q}_x=a_x\,\bI_{\C^4}\Longrightarrow Q_x=a_x\, \bI_{S_x}\:.
	$$
	Because the operator~$Q_x$ is selfadjoint, we see that~$a_x \in \R$. We saw above that $\tilde{Q}_x=\tilde{Q}_y$ for every $x,y\in\Sigma$. Therefore there must exist $a\in\R$ such that:
	$$
	a_x=a\in\R\quad \mbox{for all }x\in\Sigma \:.
	$$
	This gives $Qu=au$ for any $u\in\bigcup_{x\in\Sigma}S_x$. Since the latter set is dense in $\scH_m^-$ (see Lemma \ref{lemmasuffiienza}), by continuity we get
	$
	Q=a\bI.
	$
\end{proof}

\begin{Part}
	{\em 
	In order to conclude the main proof, we need to drop the condition of self-adjointness of $Q$. Therefore,
	\begin{center}
		\textit{We now let $Q\in (\mycal{X}_\Sigma^\varepsilon)'$ be arbitrary.}
	\end{center}
	 Any operator $Q\in(\mycal{X}_\Sigma^\varepsilon)'$ can be decomposed as~$Q=Q_1+iQ_2$ with $Q_i$ self-adjoint operators given by
$$
Q_1:=\frac{1}{2}(Q+Q^*)\quad \text{and} \quad Q_2:=\frac{1}{2i}(Q-Q^*) \:.
$$
The relation~$Q F^\varepsilon(x)= F^\varepsilon(x)Q$ also implies that~$Q^* F^\varepsilon(x)= F^\varepsilon(x)Q^*$.
Therefore, both~$Q_1$ and~$Q_2$ commute with the local correlation operator $ F^\varepsilon(x)$. Applying Part \ref{finalpart}
 we get
 $$
Q=(a+bi)\,\bI\quad\mbox{for some $a,b\in\R$}\:.
$$
}
\end{Part}

\begin{Part}
	{\em
	It remains to consider the case that~$\Sigma$ is a {\em{disconnected}} open set.
	In this case, $\Sigma$ clearly contains a connected open subset, say an open ball $B\subset\Sigma$. Therefore, $ {\mycal X}_B^\varepsilon \subset {\mycal X}_\Sigma^\varepsilon$ and therefore $ ({\mycal X}_\Sigma^\varepsilon)'\subset ({\mycal X}_B^\varepsilon)'=\C\bI$.
	This concludes the proof of Theorem~\ref{maintheorem}.
}
\end{Part}

\section{Proof of Irreducibility of the Unregularized Local Algebras} \label{secappunreg}
In this appendix, we give the proof of Theorem~\ref{irreducibilityunreg}.
In order to make the exposure more digestible, we again split the argument into several steps.

\begin{Part}{\em
		Let~$\Omega\subset\R^{1,3}$ be open set. We want to show that
\[
		(L^\circ_\Omega)'=\C\,\bI. \]
				So, consider any $B\in  (L^\circ_\Omega)'$.  We need to prove that $B$
		is a multiple of the identity.  
		\begin{center}
			\textit{We first consider the case that~$B$ is selfadjoint.}
		\end{center}
	}
\end{Part}

\begin{Part}
	There exists a connected open set $U\subset\Omega$ and a measurable weakly-diffe\-ren\-tia\-ble matrix-valued function
	$$
	M:U\ni z\mapsto M(z)\in M(4,\C) 
	$$
	such that, for every $u\in\scH_m^-$,
	\begin{equation}\label{expressionB}
	Bu(z)= M(z)\,u(z)\quad\mbox{a.e. on }U.
	\end{equation}
\end{Part}
\begin{proof}
Let~$x_0\in\Omega$. By regularity (see the end of Section~\ref{subsectionCFSM}), we know that there exist four solutions~$u_\mu\in\scH_m^-\cap C^\infty(\R^{1,3},\C^4)$ such that~$u_\mu(x_0)=\mathfrak{e}_\mu$ with~$\mu=1,2,3,4$. By continuity, there is an open neighborhood $U:=B_r(x_0)\subset\Omega$ such that
	$$
	\{u_\mu(x)\:|\:\mu=1,2,3,4\}\quad\mbox{are linearly independent for every $x\in U$}.
	$$
Assumption~$B\in (L^\circ_\Omega)'$ can be restated as
	\begin{equation}\label{commutation}
	A^\circ_f\,B=B\,A^\circ_f\quad\mbox{for every } f\in C_0^\infty(\Omega,\C).
	\end{equation}
	Choosing an element~$v\in\scH_m^-$, 
	 identity~\eqref{commutation} implies that, for every $f\in C_0^\infty(\Omega,\C)$,
	\begin{equation}\label{identityintegrals}
	\begin{split}
	\int_{\R^4}f(z)\,\Sl u_\mu(z) \,|\, Bv\, (z)\Sr\, d^4z&=-\la u_\mu \,|\, A^\circ_f \,B v\ra =-\la B u_\mu \,|\,A^\circ_f\,v\ra=\\
	&=\int_{\R^4}f(z)\,\Sl Bu_\mu(z) \,|\, v(z)\Sr\, d^4z \:.
	\end{split}
	\end{equation}
	As $f$ is arbitrary, applying Du Bois-Reymond's theorem, we conclude that there exists a set $N_v\subset\Omega$ of zero measure (which depends on $v$) such that
	$$
	\Sl u_\mu(z) \,|\, Bv(z)\Sr=\Sl Bu_\mu (z)\,|\, v(z)\Sr\quad\mbox{for all }\mu=1,2,3,4 \mbox{ and }z\in \Omega\setminus N_v.
	$$ 
	In particular, this identity holds on $U\setminus N_v$ where the vectors $u_\mu(z)$ are linearly independent.
	
	Applying the Gram-Schmidt orthonormalization process, it is possible to obtain at every point $z\in U$ an orthonormal basis of $\C^4$ (with respect to the standard positive inner product):
\[ 
	w_\nu (x)=\sum_{\mu=1}^4 A_{\nu\mu}(z)\,u_\mu(z) \:. \]
	The explicit form of the coefficients $A_{\nu\mu}$ can be obtained by looking at the non-recursive form of the orthonormalization method. It can be seen directly from this  form that that the coefficients $A_{\nu\mu}$ are smooth on $U$. As a consequence, also the functions $w_\nu$ are smooth on $U$.
Now, for any $z\in U\setminus N_v$ we have
\begin{align*}
			w_\nu(z)^\dagger\, \gamma^0\, (Bv(z))&=\sum_{\mu=1}^4(A_{\nu\mu}(z))^*\,u_\mu(z)^\dagger\,\gamma^0\,(Bv(z))= \\ &=\sum_{\mu=1}^4 (A_{\nu\mu}(z))^*\:\Sl u_\mu(z)^\dagger|Bv(z)\Sr=\\
			&=\sum_{\mu=1}^4 (A_{\nu\mu}(z))^* \:\Sl Bu_\mu(z)|v(z)\Sr \:.
\end{align*}
In other words, 
	\begin{equation*}
		\begin{split}
			\gamma^0\, (Bv(z))&=\sum_{\nu=1}^4\big(w_\nu(z)^\dagger\, \gamma^0\, (Bv(z))\big)w_\nu(z)=\sum_{\nu,\mu=1}^4 (A_{\nu\mu}(z))^*\Sl Bu_\mu(z)|v(z)\Sr\,w_\nu(z)=\\
			&=\bigg(\sum_{\nu,\mu=1}^4 (A_{\nu\mu}(z))^*\,\,w_\nu(z)\, (Bu_\mu(z))^\dagger\,\gamma^0\,\bigg)v(z).
		\end{split}
	\end{equation*}
The expression in parentheses is clearly measurable on $U\setminus N_v$. If we set it to zero on $N_v$ and multiply everything from the left by~$\gamma^0$, we conclude that there exists a matrix-valued measurable function $M$ on $U$ such that~$Bv(z)=M(z)\, v(z)$ almost everywhere on $U$. Notice that the function $M$ is locally integrable and also weakly differentiable on $U$. This follows from the local integrability and weak differentiability of $B u_\mu$ (as an element of $\scH_m^-$) and the regular differentiability of $A_{\nu\mu}$ and $w_\mu$ on the entire set $\Omega$.

The above construction depends on the specific choice of the function~$v \in \scH_m^-$
only via the set~$N_v$. Therefore, if $M$ is the matrix-valued function constructed above, we see that, in fact,
\[
	Bu(z)= M(z)\,u(z)\quad\mbox{a.e. on }U\quad\mbox{for every }u\in\scH_m^- \:, \]
concluding the proof.
\end{proof}

\begin{Part}
	The matrix $M$ can be chosen to be symmetric on $U$ with respect to the spin scalar product $\Sl\cdot|\cdot\Sr$, i.e.
	$$
	M(z)^*=M(z)\quad\mbox{for all $z\in U$.}
	$$
\end{Part}
\begin{proof}
The claim is a direct consequence of the self-adjointness of $B$.
Consider again the functions $u_\mu$ defined at the beginning of the proof. Applying the same argument as
in~\eqref{identityintegrals} with $v$ replaced by $u_\nu$ we conclude that, for some set $N$ of measure zero,
\begin{align*}
\Sl u_\mu(z)\,|\,M(z) \,u_\nu(z)\Sr &=\Sl u_\mu(z)\,|\, Bu_\nu(z)\Sr \\
&=\Sl Bu_\mu(z) \,|\, u_\nu(z)\Sr =\Sl M(z)\,u_\mu(z)|u_\nu(z)\Sr
\end{align*}
for every $z\in U\setminus N$. As the vectors $u_\mu(z)$ define a basis at every point $z\in U$, the above identities implies that
$$
M(z)=M(z)^*\ \mbox{on }U\setminus N.
$$
By setting $M(z)=0$ on $N$ we can always arrange that $M(z)=M(z)^*$ for every $x\in U$.	
\end{proof}

\begin{Part}
	Let $\partial_j M$ be a representative of the weak derivative of $M$ on $U$. Then the matrices defined for
	every~$j=0,1,2,3$ and for all~$z\in U$ by 
	\begin{equation}\label{defA}
	A^j(z):=[\gamma^j,M(z)]\quad\mbox{and}\quad B(z):= \slashed{\partial}M(z)
	\end{equation}
	have the properties:
	\begin{equation}\label{antiadjoint}
	A^j(z)^*=-A^j(z),\quad \{\gamma^j,A^j(z)\}=0.
	\end{equation}
Moreover,  there is a set $N\subset U$ of measure zero such that, for all $z\in U\setminus N$,
	\begin{equation}\label{matrixzero}
\big(A^j(z)k_j+iB(z)\big)(\slashed{k}+m)=0 \quad
\mbox{for all }\V{k}\in\R^3,
\end{equation}
where  $k^0=-\omega(\V{k})$.
\end{Part}
\begin{proof}
The relations~\eqref{antiadjoint} follow directly from the properties of the Dirac matrices and the fact that the matrices $M(z)$ and $\gamma^j$ are symmetric with respect to the spin scalar product.

Now consider any $u\in\scH_{m}^-\cap C^\infty(\R^{1,3}\C^4)$. Then the function~$Bu$ is weakly differentiable and satisfies the Dirac equation weakly. Therefore,
	\begin{align}
	0&=i\gamma^j\partial_j (Bu)(z)-m(Bu)=i\gamma^j\partial_j (Mu)-m(Mu)= \notag \\
	&=i\gamma^j\big((\partial_j M)u+M\partial_j u\big)-i M\gamma^j\partial_j u= \notag \\
	&=i[\gamma^j,M]\,\partial_j u+i(\slashed{\partial}M) u \label{diracequation}
	\end{align}
	almost everywhere on $U$.
	Using~\eqref{defA}, equation~\eqref{diracequation} can be restated as follows: For any $u\in\scH_m^-\cap C^\infty(\R^{1,3},\C^4)$ there exists a null-measure set $N_u$ such that
	$$
	A^j(z)\,\partial_j u(z)+B(z)u(z)=0\ \mbox{ for every $z\in U\setminus N_u$}\:.
	$$
	We now consider a countable subset $\mathcal{D}\subset\mathcal{S}(\R^3,\C^4)$ which is dense in $L^2(\R^3,\C^4)$. 
	Choosing~$\varepsilon>0$, the set
	$$
	\mathcal{E}:=\gR_\varepsilon \big(\hat{\mathrm{E}}(\hat{P}_-(\mathcal{D})) \big)\subset \H_m^-\cap C^\infty(\R^{1,3},\C^4)
	$$
	is again countable.
	The elements of this subspace can be written as 	(see \eqref{expressionEonS} and Definition \ref{defreg})
	\begin{equation}\label{explicitexpression}
	u_\varphi(x):=\int_{\R^3}d^3\V{k}\,\mathfrak{g}_\varepsilon(\V{k})\, p_-(\V{k})\,\varphi(\V{k})\, e^{-ik\cdot x},\quad\varphi\in \mathcal{D} \:,
	\end{equation}
 (where $k_0=\omega(\V{k})$, as usual). Since $\mathcal{E}$ is countable, the set 
	$$
	N:=\bigcup\{N_u\:|\: u\in\mathcal{E}\}\subset\Omega
	$$
	has vanishing measure. Therefore, we conclude that, for every $u\in \mathcal{E}$,
	\begin{equation}\label{vanishingidentity}
	A^j(z)\,\partial_j u(z)+B(z)\,u(z)=0\ \qquad \mbox{for all } z\in U\setminus N\:.
	\end{equation}
	At this point, using the explicit expression in \eqref{explicitexpression}, we can restate \eqref{vanishingidentity} as
	\begin{equation}\label{integralvanish}
	\int_{\R^3}d^3\V{k}\,\underbrace{\mathfrak{g}_\varepsilon(\V{k})\big(A^j(z)(-ik_j)+B(z)\big) p_-(\V{k})}_{:= D(\V{k},z)}\varphi(\V{k})\, e^{-ik\cdot x}=0
	\end{equation}
	 for all $z\in U\setminus N$ and all $\varphi\in \mathcal{D}$.
For any~$z$, the rows of the 
matrix $D(\,\cdot\,,z)$ define functions of $\mathcal{S}(\R^3,\C^4)$.
Therefore, using identity \eqref{integralvanish},  the denseness of $\mathcal{D}$ in $L^2(\R^3,\C^4)$, the fact that the regularization factor is strictly positive and the definition of $p_-(\V{k})$ (see \eqref{Pmpdef})
it can be inferred that
	\begin{equation*}
	\big(A^j(z)k_j+iB(z)\big)(\slashed{k}+m)=0 \qquad
\mbox{for all }\V{k}\in\R^3\ \mbox{ and }z\in U\setminus N \:.
	\end{equation*}
This concludes the proof.
\end{proof}

\begin{Part}
	There exist two real-valued measurable functions $\lambda$ and $b$ on $U$ such that
	\begin{equation}\label{finalformM}
	M=b\,\bI_4+i\lambda\,\gamma^5\quad\mbox{a.e. on $U$}.
	\end{equation}
\end{Part}
\begin{proof}
For simplicity of notation, from now on we shall omit the argument~$z$. We fix $\alpha=1,2,3$ and choose $k=-\sqrt{r^2+m^2}\,e_0-r\,e_\alpha$. Then
identity \eqref{matrixzero} reduces to
\[ \big(-\sqrt{r^2+m^2}A^0+rA^\alpha+iB \big) \big(-\sqrt{r^2+m^2}\gamma^0+r\gamma^\alpha+m \big)=0 \:, \]
valid for arbitrary $r\in\R$. We now multiply out and decompose into the symmetric and anti-symmetric
parts under the inversion~$r \to -r$. The anti-symmetric part reads
\[ 0= -r\sqrt{r^2+m^2}\:\big(A^0\gamma^\alpha+A^\alpha\gamma^0 \big)+r\:\big(mA^\alpha+iB \gamma^\alpha \big) \:. \]
Analyzing the $r$-dependence, one sees that both matrix coefficients must vanish, i.e.\
\begin{equation*}
A^0\gamma^\alpha+A^\alpha\gamma^0=0 = mA^\alpha+iB\gamma^\alpha \:.
\end{equation*}
From these identities we obtain
\begin{equation}\label{identityAB}
A^\alpha=A^0\gamma^0\gamma^\alpha\quad\mbox{and}\quad B=-imA^\alpha \gamma^\alpha
\qquad\mbox{for all }\alpha=1,2,3\:.
\end{equation}
The first identity holds trivially if~$\alpha$ is set to zero, giving rise to
\beq \label{Ajid}
A^j=A^0\gamma^0\gamma^j \qquad \text{for $j=0,\ldots, 3$}\:.
\eeq
Using that the matrix~$A^0$ is anti-symmetric with respect to the spin scalar product
and anti-commutes with~$\gamma^0$ (see~\eqref{antiadjoint}), it follows that
\[ (A^0 \gamma^0)^* = -\gamma^0 A^0 = A^0 \gamma^0 \:. \]
Taking the adjoint of~\eqref{Ajid} (again with respect to the spin scalar product)
and using that the~$A^j$ are anti-symmetric, we obtain
\[ -(A^0\gamma^0)\, \gamma^j = -A^j = (A^j)^* = \gamma^j\, (A^0\gamma^0) \:. \]
Hence the matrices~$(A^0\gamma^0)$ and~$\gamma^j$ anti-commute. As a consequence,
the matrices~$\gamma^5 (A^0\gamma^0)$ and~$\gamma^j$ commute,
\[ \big[ \gamma^5 (A^0\gamma^0),\, \gamma^j \big] = 0 \qquad \text{for $j=0,\ldots, 3$}\:. \]
Since the Dirac matrices are irreducible, it follows that the matrix~$\gamma^5 (A^0\gamma^0)$
is a multiple of the identity. As a consequence,
\[ A^0 = a\, \gamma^5 \gamma^0 \qquad \text{with~$a \in \C$}\:, \]
and substituting into~\eqref{Ajid} gives
\beq \label{Ajrel}
A^j= a\, \gamma^5 \gamma^j \qquad \text{for $j=0,\ldots, 3$}\:.
\eeq
Using these relations on the right side of~\eqref{identityAB} gives
\beq \label{formB}
B= im a\,\gamma^5 \:.
\eeq
Likewise, applying~\eqref{Ajrel} on the left side of~\eqref{defA} gives
\[ a\, \gamma^5 \gamma^j = \big[ \gamma^j, M] \:, \]
implying that
\[ \Big[ \gamma^j, M + \frac{a}{2}\: \gamma^5 \Big] = 0 \:. \]
Again using that the Dirac matrices are irreducible, we conclude that
\[ M = -\frac{a}{2}\: \gamma^5 + b\, \bI_4 \qquad \text{with~$a,b \in \C$}\:. \]
The relation~$M^*=M$ implies that $a \in i \R$ and~$b \in \R$. The claim follows by setting $\lambda:=ia/2$.
\end{proof}

\begin{Part}
	There exists $b_0\in\R$ such that $B=b_0\,\bI$.
\end{Part}
\begin{proof}
The functions $\lambda,b$ in \eqref{finalformM} are weakly differentiable on $U$, as a consequence of the weak differentiability of $M$.
Using the explicit expression of $B$ in \eqref{formB} and substituting~\eqref{finalformM} into the general definition of $B$ in \eqref{defA}, we obtain
\begin{align*}
0&=B-\gamma^j\partial_jM=2m \lambda \gamma^5-(\gamma^j\partial_jb)\,\bI_4-i(\gamma^j\partial_j\lambda)\,\gamma^5=\\
&=(-\partial_jb)\,\gamma^j+(2 m\lambda)\,\gamma^5+(-i\partial_j\lambda)\, \gamma^j\gamma^5 \:,
\end{align*}
again almost everywhere on $U$.
Since the matrices $\gamma^j,\gamma^5$ and $\gamma^j\gamma^5$ are linearly independent, we conclude that
\[ \partial_jb=0=\lambda\qquad\mbox{a.e. on $U$ for any $j=0,1,2,3$} \:. \]
As the set $U$ is connected,
we conclude that there exists $b_0\in\R$ such that $b=b_0$ almost everywhere on $U$ and
$$
M=b_0\,\bI_4\qquad\mbox{a.e. on $U$}\:.
$$
If now we go back to \eqref{expressionB}, we see that 
$$
B u=b_0\, u\qquad\mbox{a.e. on $U$ for every $u\in\scH_m^-$}\:.
$$
Since both sides of the above identity define elements of $\scH_m^-$ and $U$ is open, Hegerfeldt's theorem (see Proposition \ref{prphegerfeldt})  implies that 
$$
Bu=b_0\,u\ \text{ on all of Minkowski space $\R^{1,3}$.}
$$ 
The arbitrariness of $u$ gives $B=b_0\,\bI$.	
\end{proof}

\begin{Part}
{\em
		So far, we only considered the special case of a selfadjoint operator $B$. 
		\begin{center}
			\textit{We now let $B\in (L_\Omega^\circ)'$ be arbitrary.}
		\end{center}
		This operator can be decomposed into self-adjoint and anti-selfadjoint components,
	$$
	B=\frac{B+B^\dagger}{2}+i\frac{B-B^\dagger}{2i}:=B_1+iB_2 \:.
	$$
	As the set $L_\Omega$ is $^*$-closed, it follows that also the selfadjoint operators $B_1$ and~$B_2$ commute with the operators $A^\circ_f$. Applying the argument above to the two individual operators, we conclude that there exists a complex number $b_0+ib_1$ such that
	$$
	B=(b_0+ib_1)\bI.
	$$
	This concludes the proof of Theorem~\ref{irreducibilityunreg}.
}
\end{Part}	
\vspace{0.4cm}


\Thanks {{\em{Acknowledgments:}} We would like to thank J\"urg Fr\"ohlich, Maximilian Jokel, Christoph Langer, Valter Moretti, Claudio Paganini and the referee for helpful discussions.

\providecommand{\bysame}{\leavevmode\hbox to3em{\hrulefill}\thinspace}
\providecommand{\MR}{\relax\ifhmode\unskip\space\fi MR }
\providecommand{\MRhref}[2]{%
  \href{http://www.ams.org/mathscinet-getitem?mr=#1}{#2}
}
\providecommand{\href}[2]{#2}

\end{document}